\newtheorem{theorem}{Theorem}[section]
\newtheorem{lemma}[theorem]{Lemma}
\newtheorem{corollary}[theorem]{Corollary}
\newtheorem{remark}[theorem]{Remark}
\newtheorem{proposition}[theorem]{Proposition}
\newtheorem{asm}{Assumption}
\theoremstyle{definition}
\DeclareMathOperator*{\argmin}{arg\,min}
\newcommand{\X}{\mathcal{X}}
\newcommand{\F}{\mathcal{F}}
\newcommand{\R}{\mathbb{R}}
\newcommand{\Q}{\mathbb{Q}}
\newcommand{\N}{\mathbb{N}}
\newcommand{\C}{\mathbb{C}}
\newcommand{\ve}{\varepsilon}
\renewcommand{\d}{\mathrm{d}}
\newcommand{\parr}[2]{\frac{\partial #1}{\partial #2}}
\newcommand{\pars}[2]{\frac{\partial^2 #1}{\partial #2 ^2}}
\newcommand{\floor}[1]{\left\lfloor #1 \right\rfloor}
\newcommand{\hP}{\widehat{P}}
\newcommand{\hg}{\hat{g}}
\DeclarePairedDelimiterX{\infdivx}[2]{(}{)}{%
  #1\;\delimsize\|\;#2%
}
\DeclarePairedDelimiter{\norm}{\lVert}{\rVert}
\newcommand{\inner}[1]{\left\langle #1 \right\rangle}
\renewcommand{\norm}[1]{\left\lVert #1\right\rVert}
\newcommand{\E}[1]{\mathrm{E}\left[ #1 \right]}
\renewcommand{\exp}[1]{\text{exp}\left( #1 \right)}
\renewcommand{\P}[1]{\mathrm{P}\left( #1 \right)}
\newcommand{\one}{\mathbf{1}}
\newcommand{\cd}{\overset{\mathrm{d}}{\rightarrow}}
\newcommand{\ca}{\overset{\mathrm{a.s.}}{\rightarrow}}
\newcommand{\cov}[1]{\text{Cov}\left( #1 \right)}
\newcommand{\var}[1]{\text{Var}\left( #1 \right)}
\newcommand{\EE}[2]{\mathrm{E}_{#1}\left[ #2 \right]}
\renewcommand{\H}{\mathcal{H}}
\newcommand{\supp}[1]{\text{supp}\left( #1 \right)}
\newcommand{\hf}{\hat{f}}
\newcommand{\ra}{\rightarrow}
\renewcommand{\phi}{\varphi} 
\newcommand\indep{\protect\mathpalette{\protect\independenT}{\perp}}
\def\independenT#1#2{\mathrel{\rlap{$#1#2$}\mkern2mu{#1#2}}}
\newcommand{\h}{\widehat}
\newcommand{\hGamma}{\widehat{\Gamma}}
\newcommand{\tg}{\tilde{g}}
\newcommand{\rs}{\rightsquigarrow}
\renewcommand{\tilde}{\widetilde}
\renewcommand{\hat}{\widehat}
\newcommand{\G}{\mathbb{G}}
\newcommand{\eqas}{\overset{\mathrm{a.s.}}{=}}
\begin{document}
\begin{abstract}
In a nonparametric instrumental regression model, we strengthen the conventional moment independence assumption towards full statistical independence between instrument and error term. This allows us to prove identification results and develop estimators for a structural function of interest when the instrument is discrete, and in particular binary. When the regressor of interest is also discrete with more mass points than the instrument, we state straightforward conditions under which the structural function is partially identified, and give modified assumptions which imply point identification. These stronger assumptions are shown to hold outside of a small set of conditional moments of the error term. Estimators for the identified set are given when the structural function is either partially or point identified. When the regressor is continuously distributed, we prove that if the instrument induces a sufficiently rich variation in the joint distribution of the regressor and error term then point identification of the structural function is still possible. This approach is relatively tractable, and under some standard conditions we demonstrate that our point identifying assumption holds on a topologically generic set of density functions for the joint distribution of regressor, error, and instrument. Our method also applies to a well-known nonparametric quantile regression framework, and we are able to state analogous point identification results in that context. 
\end{abstract}

\title{Nonparametric Identification and Estimation with Independent, Discrete Instruments} 
\author{Isaac Loh }
\address{Department of Economics, Northwestern University}
\email{isaacloh2015@u.northwestern.edu}

\maketitle

\tableofcontents

\thanks{\footnotesize The author is greatly indebted to Joel Horowitz for his helpful critique and invaluable input on this work.  }

\section{Introduction} 

In this paper we consider the identification and estimation of a structural function $g$, which satisfies the relation
\begin{align*}
Y = g(X) + U
\end{align*}
in the case where $Y$ and $X$ are observable random variables and $U$ is unobserved. We are concerned with the case where the regressor $X$ is possibly endogenous so that $\E{U|X} \neq 0$. This complicates estimation of $g$, as one does not have the relation $\E{Y|X} = g(X)$, whereby $g(X)$ is identified and may be estimated by a broad range of kernel estimators. The typical approach to this problem is to introduce an instrumental variable $W$ which satisfies: 
\begin{align*}
\E{U|W} = 0.
\end{align*}
When $W$ and $X$ are both continuous random variables this problem has been studied extensively and $g$ can be estimated as the solution to an ill-posed inverse problem, cf.\ \cite{HH2005} and \cite{NP2003} among others. However, instruments in the applied literature are often discrete with few mass points. For instance, \cite{AK1991} use season of birth to instrument in a linear model for the number of years of education. See \cite{FH2015} for several more instances in which researchers have used instruments $W$ with fewer mass points than $X$. Generally $g$ is not point identified if this is the case: see Proposition 1 of \cite{FH2015}. A common solution is to assume that $g$ is a linear function of $X$ which allows it to be identified and estimated with e.g.\ a binary instrument. However, this assumption is rarely justified in practice.

We deal first with the case in which $X$ has $K$ mass points and $W$ has $2$ mass points, with $2 < K$. Most of the results that we display for such binary instruments may be readily generalized for instruments which take on more than two mass points, i.e.\ when $W$ has $L$ mass points with $L < K$. For instance, one can restrict attention to a subpopulation on which $W$ has two points of support. To more conveniently treat this binary instrument we write, without loss of generality, $W \in \{0,1\}$. Our approach to this problem is motivated by an observation from \cite{D2014} that ``in specific econometric applications, the conditional mean assumption is typically established by arguing that the stronger independence assumption holds". In mathematical terms, one typically argues that $\E{U|W} = 0$ by making the stronger claim that in fact $U \indep W$. If this is held to be true, then for any sequence of integrable functions $\{f_m\}$, one has $\E{f_m(U) |W = 0 } = \E{f_m(U) | W = 1}$. This is similar to the observation made by \cite{P2017} that independence actually provides infinitely many moment restrictions which can be used to identify $g$. We show that when $X$ is discrete, only finitely many such moment conditions can be used to partially identify $g$. Later, when we consider a continuously distributed $X$, we use the full power of the independence assumption $U \indep W$. 

When $X$ has $K$ mass points we take $f_m: u \mapsto u^m$ to be the function raising $U$ to the $m^\text{th}$ integer power. In Section \ref{SS:ident} we show that considering the moment restriction $\E{U^m|W = 0} = \E{U^m|W = 1}$ for $m = 1, \ldots , K$ partially identifies the function $g$, which can be considered as a vector in $\R^K$ over the support of $X$, under a light relevance condition for $X$. Partial identification in this case means that $g$ is identified up to a set of size at most $K!$. The first results of this section, Theorem \ref{T:identified} and Corollary \ref{C:Bezout}, only require that the first $K$ moments of $U$ exist and be independent of $W$, which is a consequence of full independence $U \indep W$, provided that the moments of $U$ exist. In other words, we do not require full independence to prove these results. We also show in Theorem \ref{T:2} that, even under our relevance condition for $X$, it is possible that $g$ is not point identified even by the full independence assumption $U \indep W$. Indeed, we exhibit random variables $Y$ and $X$ and functions $g_1$ and $g_2$ such that $Y - g_1(X) \indep W$ and $Y - g_2(X) \indep W$. These counterexamples can be found under very stringent assumptions on the conditional distributions $X|W = 0$ and $X|W = 1$, suggesting that point identification is not possible unless restrictions are also placed upon the distribution of $U$. Section \ref{SS:pointid} takes the additional step making assumptions on the joint distribution of the vector $(X,W,U)$. It describes a condition on the conditional moments of $U$ which implies point identification. Proposition \ref{P:generic} demonstrates that this additional condition is fulfilled by the majority of joint distributions for $(X,W,U)$, in a sense to be described further on. Hence, imposing $\E{U^{m}|W = 0} = \E{U^m | W = 1}$ for $m = 1, \ldots, K+1$ is often enough to ensure that the $K$-vector $g$ is point identified. 

In Section \ref{S:estimate} we use our moment independence relations to present an estimator $\hg$ of the function $g$ when $g$ is point identified. The estimator is shown to be almost surely consistent under light conditions, including an identification condition for $g$. We then show that a modified version of $\hg$, which we denote $\tg$, is $\sqrt{n}$-consistent for $g$ under an invertibility condition for a $K \times K$ matrix, $V$, whose coefficients are polynomials in the conditional moments of $Y$ given $X$ and $W$. \cite{P2017} provided an efficient estimator for $g$ in the case that $X$ takes on finitely many values (as $g$ can be treated as a vector); however, the proposed estimator required the user to provide basis functions satisfying certain regularity and approximation properties for the distributions of $U$ and $W$. In contrast, our estimator $\tg$ requires the user to solve a multivariate polynomial system of fixed dimension with coefficients that are directly calculated from the data (solving multivariate polynomial systems is a well-studied problem and most mathematics packages include toolkits for this application, which are reviewed in\ \cite{C2005}) and then minimize an objective function over the solutions obtained, which usually constitute only a finite set in $\R^K$. Thus, it might be expected that our estimator comes with a reduced computational cost. Theorem \ref{T:clt} establishes asymptotic normality for the quantity $\sqrt{n} (\tg - g)$ with a limiting variance which can be estimated consistently from the data.  

In Section \ref{S:partial}, we address the case where the structural function $g$ is perhaps not point identified but only partially identified, and $X$ is either discrete or continuously distributed. Estimation of the identified set requires some uniform convergence results from empirical process theory, and our assumptions in this section are mostly standard therein (see \cite{VW1996}). Our main result in this section is Proposition \ref{P:donsker2}, which states some convergence properties of an estimator for the identified set of $g$ under some standard integrability assumptions on the covering numbers for the class to which $g$ belongs. It is shown that the estimator enjoys the property of (asymptotically) containing the identified set for $g$ and excluding any element not in the identified set. 

Section \ref{S:continuous} extends our analysis of identification to the case where $X$ is continuously distributed. Existing results on identification in our setting, such as the work contained in \cite{D2014} and more recently \cite{C2019}, have typically focused on local identification (that is, identification in some neighborhood of the true structural function $g$) because the operator which arises out of our independence assumption is nonlinear and difficult to characterize. The problem is similar to that faced when considering identification in quantile regression models like the one discussed in \cite{CH2005}, and is typically addressed by making a nonlinear completeness assumption which is highly intractable. We address this issue by linearizing the operator offered to us by the independence assumption with one higher dimension (see Theorem \ref{T:contident}), which allows for the application of the more traditional theory of linear maps. Our method allows us to give a sufficient condition for global identification which is comparable to a standard instrument completeness condition (Lemma \ref{L:compeleteinstrument}), and to show that this condition holds on dense and topologically generic sets (Corollaries \ref{C:densityapprox} and \ref{C:densityapprox2}) under certain commonplace assumptions. Our method also applies to identification in quantile regression models, and we prove a new result in \S \ref{SS:quantile} which augments the work done in \cite{C2005}. A discussion of our work and a comparison to existing results are included in \S \ref{SS:comparison}.  

All proofs are located in our Appendix, Section \ref{S:proofs}, which concludes.  

\section{Model, Discrete Case}

Suppose that we have the additively separable model 
\begin{align}
Y = g(X) + U \label{E:model}
\end{align}
and an instrument $W$ which is strongly exogenous in that either $\E{U^m|W = 0} = \E{U^m | W = 1}$ for certain values of $m$, to be specified our assumptions, or there is full statistical independence and $U \indep W$. Here we shall consider a subcase where $X$ and $W$ take on finitely many values; $X \in [K]$, where throughout we define $[K] \equiv \{1, \ldots, K\}$, and $W \in \{0,1\}$ so that we have a \textit{binary instrument}. When the support of $W$ contains more than two points, one can always partition $\supp{W}$ into two sets and define a new binary instrument to be an indicator of either partition element, adapting our assumptions to the constructed instrument. One helpful observation is full statistical independence provides us with a number of moment conditions which we can use to identify $g$: 
\begin{align}
\E{ (Y - g(X))^m | W = 0} = \E{ (Y - g(X))^m | W = 1} \label{E:nmoment}
\end{align}
for all $m \in \N$, provided that the moments exist. When the moments of $U$ determine its distribution, \eqref{E:nmoment} is in fact equivalent to independence of $U$ and $W$. As $X$ takes on finitely many values we may regard $g$ as a vector in $\R^K$. Thus, for the sake of brevity let $p_k(\ell) \equiv \P{X = k | W = \ell}$ and $g_k \equiv g(k)$. In this section, we will interchangeably use vector and function notation for functions $h$ over $\supp{X} = [K]$ in this manner. 
Then, with the law of iterated expectations we may rewrite \eqref{E:nmoment} as 
\begin{align}
\sum_{k = 1}^K \Big[p_k(0)\E{ (Y - g_k)^m | W = 0, X = k} - p_k(1) \E{ (Y - g_k)^m | W = 1, X =k } \Big] = 0 \label{E:nmoment2}
\end{align}
for all $m \in \N$ such that $\E{U^m}$ exists. The system \eqref{E:nmoment2} may be viewed as a degree $n$ multivariate polynomial in $g_1, \ldots , g_K$. Note for instance that 
\[\E{(Y - g_k)^m|W = \ell, X = k} = \sum_{j = 0}^m \binom{m}{j} \E{Y^j | W = \ell, X = k} (-g_k)^{m-j}
\]
Importantly, the coefficients of these polynomials (in particular, the conditional moments of $Y$ given the $W$ and $X$) may be estimated directly from the data. 

\section{Identification, Discrete Case} 
\subsection{Partial Identification}\label{SS:ident}

In this section we demonstrate that one can use the system \eqref{E:nmoment2} to obtain some conditions for the partial identification of $\{g_k\}$. The relatively straightforward assumptions we require as as follows: 
\begin{asm} \label{A:1}
\normalfont For any strict subset $J \subsetneq [K]$, $\P{X \in J | W = 0} \neq \P{X \in J| W = 1}$. In other words, $\sum_{k \in J} (p_k(0) - p_k(1))$ is nonvanishing in $J \subsetneq [K]$. 
\end{asm}
It is straightforward to see that identification of $g$ can fail when Assumption \ref{A:1} is not fulfilled, in the absence of any restrictions on the distribution of $U$, as we demonstrate in the following lemma (whose proof allows for a large degree of flexibility in the conditional distributions of $U$ given $X$ and $W$). 

\begin{lemma}\label{L:a1}
Suppose that $K \ge 2$ and for some subset $J \subsetneq K$ one has $\P{X \in J| W = 0} = \P{X \in J | W = 1}$. Then there exist conditional distributions for $U|_{X, W}$ such that the model \eqref{E:model} and restrictions $U \indep W$, $\E{U} = 0$ do not point identify $g$, and in fact the identified set for $g$ contains a continuum of elements. 
\end{lemma}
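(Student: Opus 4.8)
The plan is to produce an explicit counterexample: a structural function $g$ together with conditional laws for $U$ given $(X,W)$ satisfying the maintained restrictions, and a one‑parameter family of functions on $[K]$ all observationally equivalent to $g$. Once the model is fixed, the identified set for $g$ is the set of $h:[K]\to\R$ with $Y-h(X)\indep W$ and $\E{Y-h(X)}=0$, so exhibiting a continuum of such $h$ proves both that $g$ is not point identified and that its identified set contains a continuum. Write $q:=\P{X\in J|W=0}=\P{X\in J|W=1}$ and $J^c:=[K]\setminus J$, so the hypothesis also gives $\P{X\in J^c|W=0}=\P{X\in J^c|W=1}=1-q$. I would first reduce to $q\in(0,1)$: if $q\in\{0,1\}$ then $\P{X\in J}=0$ or $\P{X\in J^c}=0$, and since $J$ and $J^c$ are nonempty this puts some point of $[K]$ outside the support of $X$, where $g$ is patently unidentified, so the identified set is already a continuum.

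For the main case, let $U$ be independent of $(X,W)$ with an arbitrary mean‑zero law (standard normal, say); then $\E U=0$ and $U\indep W$ hold automatically, for any marginal law of $(X,W)$, and these are legitimate --- indeed identical --- conditional distributions $U\mid X,W$. Fix any $g:[K]\to\R$ and set $Y=g(X)+U$. For $c\in\R$ let $\delta_c:[K]\to\R$ be constant on $J$ with value $c$ and constant on $J^c$ with value $-\tfrac{q}{1-q}c$, and put $h_c:=g+\delta_c$. The crux is that $Y-h_c(X)=U-\delta_c(X)$, and that conditionally on $\{W=\ell\}$ the variable $\delta_c(X)$ takes the value $c$ with probability $q$ and $-\tfrac{q}{1-q}c$ with probability $1-q$, for $\ell=0$ and $\ell=1$ alike --- which is exactly what $\P{X\in J|W=0}=\P{X\in J|W=1}$ delivers. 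Since $U$ is independent of $(X,W)$, the conditional law of $U-\delta_c(X)$ given $\{W=\ell\}$ is therefore the same for both $\ell$, i.e.\ $Y-h_c(X)\indep W$; and $\E{Y-h_c(X)}=-\E{\delta_c(X)}=-\big(qc-qc\big)=0$. Hence $h_c$ lies in the identified set for every $c$, and as $q\in(0,1)$ the assignment $c\mapsto h_c$ is injective (the $h_c$ differ on $J$ for distinct $c$), so the identified set contains a continuum.

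The only genuine obstacle is choosing the perturbation correctly: $\delta_c(X)$ must be a function of $X$ whose conditional distribution given $W$ is invariant in $W$, and I would emphasize that the lemma's hypothesis is precisely the statement that functions of $X$ constant on $J$ and on $J^c$ have this invariance; everything else is bookkeeping, and the one place to be careful is verifying that mixing the common law of $U$ against the $\ell$‑free law of $\delta_c(X)$ yields an $\ell$‑free law. I would also note that taking $U\indep(X,W)$ is only for convenience --- the argument survives if the conditional law of $U$ given $(X,W)$ is allowed to depend on $X$ solely through $\one_{\{X\in J\}}$ --- which is the "flexibility in the conditional distributions of $U$" alluded to; but a single model of this kind is all that is needed.
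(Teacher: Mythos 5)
Your proof is correct and uses essentially the same construction as the paper: perturb $g$ by a function $\delta$ that is constant on $J$ and on $J^c$ with the two constants chosen so that $\E{\delta(X)}=0$, and observe that the conditional law of $\delta(X)$ given $W$ is the same for $W=0$ and $W=1$, so $U-\delta(X)\indep W$. The only differences are that you specialize to $U\indep(X,W)$, whereas the paper lets the conditional law of $U$ depend on $\one_{\{X\in J\}}$, and that you explicitly dispose of the degenerate case $q\in\{0,1\}$, which the paper's choice of nonzero $\delta_0,\delta_1$ implicitly assumes away.
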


\bigskip

Lemma \ref{L:a1} illustrates that Assumption \ref{A:1} is fundamental to identification and partial identification of $g$. Note that the next assumption that we make is weaker than specifying full independence $U \indep W$ in the case that $|\E{U^m}| < \infty$ for $m \in [K]$. It provides $K$ (nonlinear, polynomial) equations that aid in partially identifying the $K$-vector $g$:  
\begin{asm}  \label{A:2}
\normalfont $\E{U} = 0$ and $\E{U^m| W} = \E{U^m} < \infty$ for $m = 1, \ldots , K$. 
\end{asm}

With these we may prove the following: 
\begin{theorem}\label{T:identified}
If Assumptions \ref{A:1} and \ref{A:2} hold then the set of possible solutions to the system of equations formed by \eqref{E:nmoment2} for $m = 1, \ldots , K$ and $\E{U} = 0$ in $\R^K$ is finite. 

In particular, the identified set of vectors $\{g_k\}_{k=1}^K$ is finite. 
\end{theorem}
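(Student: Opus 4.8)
The plan is to view the system \eqref{E:nmoment2} for $m = 1, \ldots, K$, together with the normalization $\E{U} = 0$, as a system of $K$ polynomial equations in the $K$ unknowns $g_1, \ldots, g_K$, and to show that this system has only finitely many solutions in $\R^K$ by exhibiting that it is "not degenerate at infinity" — i.e.\ that its leading (highest-degree) homogeneous parts have no common nontrivial zero. This is the standard route (essentially B\'ezout's theorem, which the excerpt's Corollary \ref{C:Bezout} presumably makes quantitative): a polynomial system of $K$ equations in $K$ unknowns whose top-degree forms have only the trivial common zero defines a zero-dimensional variety, hence a finite solution set.

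First I would write out the degree-$m$ equation explicitly. Expanding $(Y - g_k)^m = \sum_{j=0}^m \binom{m}{j} Y^j (-g_k)^{m-j}$ and taking conditional expectations as in the displayed identity, the $m$-th equation becomes $\sum_{k=1}^K \big[ p_k(0)\,\phi_{k,0}(g_k) - p_k(1)\,\phi_{k,1}(g_k)\big] = 0$, where $\phi_{k,\ell}(t) = \sum_{j=0}^m \binom{m}{j}\E{Y^j|W=\ell, X=k}(-t)^{m-j}$ is a univariate polynomial of degree $m$ in $t = g_k$ with leading coefficient $(-1)^m$. Hence the top-degree part of the $m$-th equation is $(-1)^m \sum_{k=1}^K (p_k(0) - p_k(1))\, g_k^m$. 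The key point is that the highest-degree homogeneous component of equation $m$ is the \emph{diagonal} form $\sum_k c_k g_k^m$ with $c_k = (-1)^m(p_k(0) - p_k(1))$ — there are no cross terms, because each summand of \eqref{E:nmoment2} involves only the single variable $g_k$.

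Next I would check that these leading forms, for $m = 1, \ldots, K$, have no common zero other than the origin. Suppose $(a_1, \ldots, a_K) \neq 0$ satisfies $\sum_k (p_k(0) - p_k(1))\, a_k^m = 0$ for all $m = 1, \ldots, K$ (the $\E{U}=0$ equation is already linear, so its leading part is itself; I would fold it in or handle the $m=1$ case directly). Grouping the indices $k$ by the distinct values among $\{a_k\}$, say the distinct nonzero values are $v_1, \ldots, v_r$ with weights $d_s = \sum_{k : a_k = v_s}(p_k(0)-p_k(1))$, these relations say $\sum_{s=1}^r d_s v_s^m = 0$ for $m = 1, \ldots, K \ge r$. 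A Vandermonde argument on the $r$ distinct nonzero nodes $v_s$ forces $d_s = 0$ for every $s$ — but $d_s$ is exactly $\sum_{k \in J_s}(p_k(0) - p_k(1))$ for the nonempty index set $J_s = \{k : a_k = v_s\}$, and by Assumption \ref{A:1} this is nonzero whenever $J_s \subsetneq [K]$ (and for $J_s = [K]$ we'd need all $a_k$ equal and nonzero, killed by the $m=1$/$\E{U}=0$ equation or again by Assumption \ref{A:1} applied to complements). This contradiction shows the only common zero of the leading forms is $0$, so the projective variety at infinity is empty, the affine variety cut out by \eqref{E:nmoment2} is zero-dimensional, and therefore finite. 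Since the identified set of $\{g_k\}$ is contained in this solution set, it too is finite.

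The main obstacle is handling the degenerate configurations of $(a_1, \ldots, a_K)$ carefully: some $a_k$ may be zero, several may coincide, and one must make sure that in every such case the partition into level sets produces a \emph{proper} subset $J \subsetneq [K]$ on which Assumption \ref{A:1} bites — the edge case where all the $a_k$ are equal (so the natural partition is all of $[K]$) has to be dispatched separately, using the linear equation $\E{U}=0$ together with Assumption \ref{A:1} (which also rules out $\sum_k (p_k(0)-p_k(1)) = 0$, since $\emptyset$ and $[K]$ are complementary and the total mass is $1$ under each conditional). Once the combinatorics of the level-set partition is organized, the Vandermonde step and the invocation of B\'ezout's theorem are routine.
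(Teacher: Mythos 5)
Your proposal follows essentially the same route as the paper's proof: extract the top-degree homogeneous forms of the system (which are the diagonal power sums $\sum_k(p_k(0)-p_k(1))g_k^m$), argue that these together with the linear $\E{U}=0$ equation have no common nontrivial zero via a Vandermonde argument over the level-set partition of the $a_k$'s, and invoke Assumption~\ref{A:1} to guarantee the partition weights are nonzero. The paper packages the ``leading forms have only the trivial common zero $\Rightarrow$ finite variety'' step as an explicit lemma proved from the Finiteness Theorem in the graded monomial order, while you appeal to the equivalent projective/B\'ezout characterization (empty variety at infinity implies zero-dimensional affine variety), but the substance is the same.
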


A loose upper bound on the size of the identified set is then provided by the B\'{e}zout Bound of Algebraic Geometry.  
\begin{corollary}[B\'{e}zout's Theorem] \label{C:Bezout}
If Assumptions \ref{A:1} and \ref{A:2} hold, then $\{g_k\}_{k=1}^K$ is identified up to a set of size at most $K!$. In particular, the number of possible solutions to \eqref{E:nmoment2} for $n = 1, \ldots , K$ is bounded above by $K!$. 
\end{corollary}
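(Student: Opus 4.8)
The plan is to combine Theorem \ref{T:identified}, which supplies finiteness of the solution set, with the affine form of B\'{e}zout's theorem, which converts that finiteness into the cardinality bound $K!$. Everything beyond Theorem \ref{T:identified} amounts to reading off the total degrees of the $K$ polynomial equations in \eqref{E:nmoment2}.

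First I would determine the degree of the $m$-th equation. Expanding $(Y-g_k)^m$ via the binomial theorem (exactly as in the expansion of $\E{(Y-g_k)^m|W=\ell,X=k}$ recorded earlier), the equation \eqref{E:nmoment2} with index $m$ becomes a polynomial in $(g_1,\dots,g_K)$ whose highest-degree part is the $j=0$ contribution, namely $(-1)^m\sum_{k=1}^K\big(p_k(0)-p_k(1)\big)g_k^m$. By Assumption \ref{A:1} at least one coefficient $p_k(0)-p_k(1)$ is nonzero, so this polynomial has total degree exactly $m$. Letting $m$ run over $1,\dots,K$ thus yields a square polynomial system in $K$ unknowns with degree sequence $1,2,\dots,K$, and $\prod_{m=1}^K m = K!$.

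Next I would invoke the affine B\'{e}zout bound: a system of $K$ polynomials in $K$ unknowns of degrees $d_1,\dots,d_K$ whose common complex zero set is finite has at most $\prod_m d_m$ common zeros (equivalently, via the refined B\'{e}zout estimate for the projective closures, each zero of a zero-dimensional system is an isolated, hence degree-one, component of the projective intersection, and the sum of the degrees of the components of that intersection is at most $\prod_m d_m$). Theorem \ref{T:identified} delivers precisely the finiteness hypothesis, and since its homogenization argument is algebraic it applies over $\mathbb{C}$ as well, so the solution set of \eqref{E:nmoment2} for $m=1,\dots,K$ has at most $K!$ elements; the identified set, being contained in this set, inherits the same bound.

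The point that needs care, and the reason I phrase the argument through the ``at most $\prod_m d_m$'' form of B\'{e}zout rather than the sharper ``exactly $\prod_m d_m$, with no solutions at infinity'' form, is that these equations genuinely do admit a solution at infinity: the leading forms $\sum_k (p_k(0)-p_k(1))g_k^m$ all vanish along $g_1=\cdots=g_K$, because $\sum_k p_k(0)=\sum_k p_k(1)=1$ forces $\sum_k (p_k(0)-p_k(1))=0$. Hence finiteness cannot be extracted from a ``the top-degree forms have no common nontrivial zero'' argument; it must be imported from Theorem \ref{T:identified} (which additionally leans on the normalization $\E{U}=0$), with B\'{e}zout then serving only to bound the cardinality of an already-finite set. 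I expect this bookkeeping, rather than any new computation, to be the only substantive content of the proof.
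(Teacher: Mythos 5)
Your overall strategy—import finiteness from Theorem~\ref{T:identified}, then invoke the weak-inequality form of B\'{e}zout to cap the cardinality at the product of degrees—is the right one, and your observation that $\sum_k(p_k(0)-p_k(1))=0$ forces the leading forms to share a nontrivial zero (so the ``no solutions at infinity'' version of B\'{e}zout cannot apply) is exactly on point. You are also right that the paper does not appeal to a tight-count argument: its route (see the Remark following Lemma~\ref{L:boundsln}) is to bound the dimension of $\C[x]/I$ by that of $\C[x]/I^{(1)}$ via standard monomials, which is the Gr\"{o}bner-basis-flavored sibling of your Fulton-style refined B\'{e}zout bound.

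There is, however, a concrete gap: you apply the affine B\'{e}zout bound to the square system $P_1=\cdots=P_K=0$ of degrees $1,2,\dots,K$, asserting that ``Theorem~\ref{T:identified} delivers precisely the finiteness hypothesis.'' But the zero set of that square system is not finite. The model's translation invariance (recorded in the paper just after Assumption~\ref{A:5}) gives $P_m(h)=0$ for $m\ge 1$ $\Rightarrow$ $P_m(h+c\one_K)=0$ for all $c$, so the $K$-equation system contains the entire affine line $\{g+c\one_K : c\in\C\}$ in its solution variety; this is a positive-dimensional affine component, not merely a point at infinity. Theorem~\ref{T:identified}'s finiteness statement is for the \emph{augmented} system that also includes the degree-one normalization $\E{U}=0$, and it is exactly that extra hyperplane that kills the translation family. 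As written, your invocation of B\'{e}zout is therefore applied to a variety that is not zero-dimensional, and the cardinality bound does not follow.

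The fix stays entirely within the machinery you already cite. Apply the refined B\'{e}zout estimate to the $(K+1)$-equation system consisting of the normalization together with $P_1,\dots,P_K$: its degree sequence is $(1,1,2,\dots,K)$, whose product is still $K!$, and Theorem~\ref{T:identified} does establish that \emph{this} system has a finite zero set, so every zero is an isolated component and the sum-of-degrees bound gives at most $K!$ of them. Alternatively, drop the redundant degree-one equation $P_1$ and work with the square system of the normalization plus $P_2,\dots,P_K$ (degrees $1,2,\dots,K$): the Vandermonde argument in the proof of Theorem~\ref{T:identified} shows this subsystem's leading forms have $\{0\}$ as their only common zero once the normalization's leading form is in play, so Lemma~\ref{L:boundsln} applies to it directly and the square affine B\'{e}zout bound yields $K!$. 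Either route closes the gap; the only change needed from your draft is to attach the normalization equation before you invoke B\'{e}zout.
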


One immediately asks whether it is possible to extend Assumption \ref{A:1} on the conditional distributions of the regressor $X$ conditional on $W$ to obtain point identification of the vector $\{g_k\}_{k=1}^K$. It turns out that this is not possible if the probability vectors $\{p_k(0)\}$ and $\{p_k(1)\}$ are presumed to be strictly positive, as we show next. 

\begin{theorem} \label{T:2}
Let $K \ge 3$ and $p_k(0), p_k(1) > 0$ for all $k \in [K]$. Then for every pair of probability vectors $\{p_k(0)\}_{k=1}^K, \{p_k(1)\}_{k=1}^K$, there are distributions for $Y$ and functions $g_1 \neq g_2: [K] \ra \R$ such that 
\begin{align*}
&Y - g_1(X) \indep W \\
&Y - g_2(X) \indep W.
\end{align*}
\end{theorem}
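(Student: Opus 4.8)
The plan is to construct, for an arbitrary pair of strictly positive probability vectors, two distinct candidate structural functions together with a joint distribution of $(Y,X,W)$ for which \emph{both} $Y-g_1(X)$ and $Y-g_2(X)$ are independent of $W$. The natural starting point is to pick any $g_1 \neq g_2$ — say, $g_1 \equiv 0$ and $g_2$ arbitrary but nonconstant — and to \emph{design} the conditional laws of $U_j \equiv Y - g_j(X)$ given $X$ and $W$ so that, after integrating out $X$, the marginal law of $U_j$ given $W=0$ agrees with that given $W=1$. Since the two error terms are related deterministically by $U_2 = U_1 + (g_1(X) - g_2(X))$, specifying the joint law of $(U_1, X, W)$ automatically specifies everything, so the real task is to choose a family of conditional distributions $\mathcal{L}(U_1 \mid X=k, W=\ell)$ such that the two mixtures $\sum_k p_k(0)\,\mathcal{L}(U_1 \mid X=k, W=0)$ and $\sum_k p_k(1)\,\mathcal{L}(U_1 \mid X=k, W=1)$ coincide, \emph{and} the corresponding mixtures for $U_2 = U_1 + (g_1(k)-g_2(k))$ also coincide across $W$.

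The key idea to make both conditions hold simultaneously is a Fourier/characteristic-function reformulation. Writing $\phi_{k\ell}(t)$ for the characteristic function of $\mathcal{L}(U_1 \mid X=k, W=\ell)$ and $\delta_k \equiv g_1(k) - g_2(k)$, independence of $U_1$ from $W$ requires $\sum_k p_k(0)\phi_{k0}(t) = \sum_k p_k(1)\phi_{k1}(t)$ for all $t$, and independence of $U_2$ requires $\sum_k p_k(0) e^{it\delta_k}\phi_{k0}(t) = \sum_k p_k(1) e^{it\delta_k}\phi_{k1}(t)$. The plan is to look for a solution of the product form $\phi_{k\ell}(t) = c_{k\ell}(t)\,\psi(t)$, where $\psi$ is a fixed characteristic function (a "carrier," e.g. a Gaussian or Laplace, chosen smooth enough that all moments exist) and the $c_{k\ell}$ are non-negative reweightings; this reduces the two functional equations to, for each $t$, the pair of \emph{finite-dimensional} linear equations $\sum_k (p_k(0) c_{k0}(t) - p_k(1) c_{k1}(t)) = 0$ and $\sum_k (p_k(0) c_{k0}(t) - p_k(1) c_{k1}(t)) e^{it\delta_k} = 0$ in the $2K$ unknowns $\{c_{k\ell}(t)\}$. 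Because $K \ge 3$, for each $t$ this is a system of $2$ equations (plus the normalization that the reweighted measures remain probability measures summing correctly) in $2K \ge 6$ unknowns, so it has a rich solution space; the remaining freedom is used to make $c_{k\ell}(t)\psi(t)$ a genuine characteristic function for each $k,\ell$ — which is where choosing $\psi$ with a fast-decaying, everywhere-positive density and keeping the perturbations $c_{k\ell}$ small (close to $1$) does the job, via a Pólya-type or bounded-density-perturbation argument guaranteeing positivity of the resulting densities.

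Concretely, I would:
(i) fix $g_1 \equiv 0$, choose any nonconstant $g_2$, set $\delta_k = -g_2(k)$, and fix a carrier density $f_\psi$ (say standard normal) with characteristic function $\psi$;
(ii) posit $\mathcal{L}(U_1 \mid X=k, W=\ell)$ to have density $f_\psi$ multiplied by a nonnegative trigonometric-polynomial correction, or equivalently build the measures as small mixtures of shifts of $f_\psi$, parametrized by finitely many weights;
(iii) write down the two independence constraints as linear constraints on those weights — two scalar constraints, holding for all $t$, which after matching coefficients become finitely many linear constraints — and exhibit by a counting/rank argument that a non-trivial solution with the $g_1$-mixture not equal to a $g_2$-mixture degenerate case exists;
(iv) verify the solution yields bona fide probability densities (positivity and integrability, hence all moments finite so $U_j \indep W$ is the operative statement), and check $g_1 \neq g_2$ forces the two solutions to be genuinely distinct structural functions.
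The main obstacle I anticipate is step (iv) coupled with the "for \emph{every} pair of positive probability vectors" quantifier: the linear system in (iii) always has solutions, but one must ensure the correction terms can be taken small enough — uniformly over the (arbitrary, possibly very unbalanced) vectors $\{p_k(0)\}, \{p_k(1)\}$ — that positivity of every conditional density is preserved. I expect this is handled by noting the constraints are homogeneous, so any solution can be scaled down to lie within the positivity region of the perturbed carrier; the genuinely delicate point is simply keeping the construction from collapsing to $g_1 = g_2$, which is ruled out by choosing the $\delta_k$ not all equal at the outset and tracking that choice through the linear algebra.
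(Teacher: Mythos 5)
Your high-level strategy — pass to the Fourier domain, view $\supp X$-indexed conditional densities through their characteristic functions, impose the two independence restrictions as linear constraints, and use a smooth ``carrier'' to manufacture the densities — is the same as the paper's. However, there is a genuine gap in your step (iv): the constraints you obtain on the perturbations are \emph{not} homogeneous, so the ``scale down into the positivity region'' argument fails. Writing $c_{k\ell}(t) = 1 + \ve_{k\ell}(t)$ and expanding, the first independence constraint does become homogeneous in $\ve$ (since $\sum_k p_k(0) = \sum_k p_k(1)$), but the second becomes
\begin{equation*}
\sum_{k=1}^K (p_k(0) - p_k(1))\,e^{it\delta_k} \;+\; \sum_{k=1}^K \big(p_k(0)\,\ve_{k0}(t) - p_k(1)\,\ve_{k1}(t)\big)\,e^{it\delta_k} \;=\; 0 .
\end{equation*}
The first sum is an inhomogeneous term that does \emph{not} vanish for $t \neq 0$ (for nonconstant $\delta$ and generic $p$, $\sum_k (p_k(0)-p_k(1))e^{it\delta_k}$ is a nonzero almost-periodic function of order one). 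Replacing $\ve$ by $c\ve$ with $c$ small breaks the equality, so any solution is forced to have $|\ve| = O(1)$, and positivity of each $c_{k\ell}(t)\psi(t)$ being the Fourier transform of a genuine density is not preserved by your scaling. The same obstruction appears if you instead mix small shifts of $f_\psi$ with a small mixture weight $\theta$: the inhomogeneous term carries the factor $(1-\theta)$, so $\theta$ cannot be taken small.

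The paper sidesteps this by never trying to make the individual $\phi_{k\ell}$ characteristic functions. Instead it works with the signed differences $\gamma_k(t) = p_k(0)\,\hat{f}_k^0(t) - p_k(1)\,\hat{f}_k^1(t)$. The linear constraints on the $\gamma_k$ (that $\sum_k \gamma_k = 0$ and $\sum_k \gamma_k(t)e^{2\pi i h_k t} = 0$) have no positivity constraint attached, only the boundary condition $\gamma_k(0) = p_k(0)-p_k(1)$; the paper produces an explicit solution via a Gram--Schmidt projection and a careful removal of singularities (Lemmas \ref{L:lt1}--\ref{L:lt4}). Positivity is then imposed only at the end when splitting the real $L^1$ function $\mathcal{F}^{-1}\gamma_k$ into $p_k(0)f_k^0 - p_k(1)f_k^1$: one sets $f_k^0$ proportional to $|\mathcal{F}^{-1}\gamma_k|$ and solves for $f_k^1$. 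For this to produce \emph{probability} densities one needs $\|\mathcal{F}^{-1}\gamma_k\|_{L^1}$ close to $|p_k(0)-p_k(1)|$; this is the content of Lemma \ref{L:lt5}, which shows that convolution with $\tfrac{1}{2M}\one_{[-M,M]}$ (i.e., multiplication in frequency by the uniform characteristic function $\psi_M$) squeezes the $L^1$ norm down to the absolute integral as $M \to \infty$. That convolution trick is the crucial technical step your sketch does not anticipate and that replaces the failed scaling argument. A minor further point: your $\delta_k$ should be chosen orthogonal to both $(p_k(0))$ and $(p_k(1))$ so that $\E{U}=0$ holds under both structural functions, as the paper does with its $(h_k)$.
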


\subsection{Conditions for Point Identification when $X$ is Discrete} \label{SS:pointid}

Theorem \ref{T:2} implies that sometimes it is not possible to point identify the function $g: [K] \ra \R$ with only assumptions on the joint distribution of $X$ and $W$, such as Assumptions \ref{A:1} and \ref{A:2}. In this section we establish conditions which allow $g$ to be point identified and attempt to show that these conditions are fulfilled in a wide variety of settings. The general strategy is as follows: first, use the multivariate polynomial equations (with variables $\{g_k\}_{k \in [K]}$) stated in \eqref{E:nmoment2}, for $m \in [K]$ as well as the normalizing relation $\E{U } = 0$ to characterize the identified set down to a finite number of points in $\R^K$. Then, use the polynomial equation supplied by \eqref{E:nmoment2} with $m = K+1$ to show that generally only one of the points in the aforementioned finite set satisfies $\E{U^{K+1}| W = 0} = \E{U^{K+1}|W = 1}$. The motivation of this section is: \textit{independence of the first $K$ moments of $U$ from $W$ typically is enough to obtain partial identification of the vector $g$ (Theorem \ref{T:identified}), whereas independence of the first $K+1$ moments of $U$ is typically enough to point identify $g$. }

To begin, note that for a particular vector $h \in \R^K$ which satisfies \eqref{E:nmoment2} for some $m \in \N$ we may combine equations \eqref{E:model} and \eqref{E:nmoment2} to write equivalently: 
\begin{align}
\label{E:i1}
\sum_{k = 1}^K \Big[p_k(0)\E{ (U +\delta_k)^m | W = 0, X = k} - p_k(1) \E{ (U + \delta_k)^m | W = 1, X =k } \Big] = 0
\end{align}
where we have denoted $\delta_k \equiv h_k - g_k$. Fixing a joint distribution for $X, W$ which satisfies Assumption \ref{A:1}, Theorem \ref{T:identified} and Corollary \ref{C:Bezout} imply that the size of the set of possible solutions to \eqref{E:i1} for $m = 1, \ldots , K$ is bounded above by $K!$ for every fixed set of $U$-moment vectors of the form: 
\begin{align}\label{E:lowermoments}
S_{K-1} \equiv \{ \E{U^m |W = \ell, X = k}: \ell \in \{0,1\},k \in [K], m \in [K-1]\}.
\end{align}
Note that dependence on $\E{U^K| W = \ell, X = K}$ is suppressed because when \eqref{E:i1} is expanded with $m = K$, the terms involving $K^\text{th}$ moments of $U$ vanish by moment independence. A useful observation is that $S_{K-1}$, which may in this context be called a vector of lower moments of $U$, only depends on the condition moments of $U^m$ for $m \le K-1$ (not $K$). 

Denote the finite set of possible $\delta$ vectors permitted under Theorem \ref{T:identified} as a function of $S_{K-1}$ by $A(S_{K-1})$, i.e.\ define 
\begin{align*}
A(S_{K-1}) \equiv \left\{ h - g : \, h \text{ satisfies } \eqref{E:nmoment2} \text{ for }m \in [K] \text{ and } \E{h - g} = 0 \right\},
\end{align*}
where for notational ease we use the convention $\E{h - g} = \E{h(X) - g(X)}$. 
Fixing $S_{K-1}$ and thus $A(S_{K-1})$, we note that in order to satisfy \eqref{E:i1} for $m = K+1$ we must have the relation: 
\begin{align}
&\sum_{k=1}^K \Big[ p_k(0) \E{U^{K}|W = 0, X = k}  - p_k(1) \E{U^K|W = 1 , X = k} \Big] \delta_k \label{E:i2} + P \left( S_{K-1}, \delta_k\right) = 0,
\end{align}
where 
\small
\[
P \left( S_{K-1}, \delta_k\right) \equiv K^{-1}\sum_{k=1}^K \left[ \sum_{j = 0}^{K-1}  \binom{K+1}{j} \delta_k^{K+1 - j} \left(p_k(0)\E{U^j|W = 0, X = k } - p_k(1) \E{U^j|W = 1, X = k }\right) \right]
\]
\normalsize
is the term determined by $S_{K-1}$ and $\delta \in A(S_{k-1})$ in the binomial expansion of \eqref{E:i1}. Suppose that the system formed by \eqref{E:i1} for $m = 1 , \ldots , K$ and the normalization relation $\E{U} = 0$ does not identify the function $g$, so that $A(S_{k-1}) \supsetneq \{0\}$. If in addition the equation \eqref{E:i1} for $m = K+1$ does not identify $g$ then there exists $\delta \neq 0 \in A(S_{K-1}) \subset \R^K$ such that \eqref{E:i2} holds: that is, the existence of nontrivial $\delta$ implies that a specific linear relation must hold among the $K^\text{th}$ moments $\E{U^K|W = 1, X = k }$ whose coefficients are determined by the lower moments of $U$ which are contained in $S_{K-1}$. This leads to the following observation, that point identification is generically fulfilled when one considers the $K^\text{th}$ conditional moments of $U$ in a sense that we make precise below.

The main result of this section is Proposition \ref{P:generic}, which distills our remarks so far into a genericity result on conditions for point identification. Below we give a brief introduction to its result. 

A set $T_0 \subset T$ is commonly said to be meagre and its complement $T\setminus T_0$ generic if (topologically) $T_0$ is the countable union of relatively closed nowhere dense sets, or (measure-theoretically) there exists some well-behaved measure $\mu$ on $T$ such that $\mu(T_0) = 0$ and $\mu(T) > 0$. Fix a joint distribution for $X,W$ satisfying Assumptions \ref{A:1} and \ref{A:2} and let $S_{K-1}$ denote the vector of lower moments in \eqref{E:lowermoments}, assuming that the moments $\E{U^m}$ exist for $m \in [K+1]$. Consider the set $T$ which consists of all possible vectors of $K^\text{th}$ moments of $U$ that extend $S_{K-1}$ and the moment independence assumption $\E{U^K|W = 0} = \E{U^K| W = 1}$. In imposing the first requirement, we mean that there exist actual probability distributions with lower moments in $S_{K-1}$ and $K^\text{th}$ moments in $T$. Formally, $T$ is a set of $2K$-vectors that verifies 
\begin{align}
T \equiv \Big\{& v \in \R^{2K}: \,\sum_{k = 1}^K p_k(0) v_k = \sum_{k=1}^K p_k(1) v_{K + k}  \label{E:once}\\
&\text{and there exists random variables }V_k \text{ satisfying }: \nonumber \\
& \E{V_k^m} = \E{ U^m|W = \floor{(k-1)/K}, X = (k-1)\mod{K} + 1}, \nonumber\\
& \E{V_k^K} = v_k \text{ for all } k \in [2K] \text{ and } m \in [K-1]\Big\}. \nonumber
\end{align}
$T$ is a subset of a hyperplane of $\R^{2K}$ corresponding with the linear restriction in the first line of \eqref{E:once}. Furthermore, define $T_0$ to be the subset of moment vectors in $T$ under which the restrictions $\E{U^m|W = 0} = \E{U^m| W = 1}$ for $m \in [K+1]$ and $\E{U} = 0$ do not point identify $g$, i.e.\ those for which there exist a nonzero $\delta \in A(S_{K-1})$ such that $\E{(U + \delta(X))^m|W = 0} = \E{(U + \delta(X))^m|W = 1}$ for $m \in [K+1]$ and $\E{\delta(X) } = 0$. Note that such a $\delta$ satisfies \eqref{E:i1} for $m \in [K+1]$ and also $\E{\delta(X)} = 0$. The first finding of Proposition \ref{P:generic} is that $T$ is ``large" in the $(2K-1)$-dimensional hyperplane given by the linear restriction in \eqref{E:once} in that it has non-empty interior in this hyperplane and is in fact assigned infinite measure by the unique translation-invariant Haar measure (completely analogous to Lebesgue measure) on that plane. The second is that $T_0$ is ``small" in $T$: it consists of at most the intersection of finitely many $(2K-2)$-dimensional hyperplanes with $T$, which are each assigned Haar measure $0$. Hence, in light of both topological and measure theoretical conditions, point identification of $g$ can be regarded as generic in terms of the conditional $K^\text{th}$ moments of $U$. 

\begin{proposition}\label{P:generic}
Suppose $\E{|U|^{m}} < \infty$ and $\E{U^m|W = \ell} = \E{U^m}$ for $m \le K+1$ and that for all $\ell, k$, $U|_{W = \ell, X = k}$ has at least $\floor{(K-1)/2}+1$ points in its support.
Suppose Assumptions \ref{A:1} and \ref{A:2} hold and let $S_{K-1}$ be defined as in \eqref{E:lowermoments}. Let $T \subset \R^{2K}$ denote the set of possible vectors $\{ \E{U^K|W = \ell, X = k }: \ell \in \{0,1\}, X \in [K]\}$ which satisfy Assumption \ref{A:2} and extend $S_{K-1}$. Let $T_0\subset T$ denote the subset of possible $K^\text{th}$ moment vectors for which $\E{U^m| W = 0} = \E{U^m|W = 1}$ for $m \in [K+1]$ and $\E{U} = 0$ \textit{does not} point identify $g$ (i.e.\ there exists $0 \neq \delta \in A(S_{K-1})$ which satisfies \eqref{E:i1} for $m \in [K+1]$, and $\E{\delta(X)} = 0$). 

Then $T_0$ is contained in a finite intersection of translated subspaces of strictly lower dimension than $T$, and $T$ has nonempty interior in the hyperplane implied by the linear restriction \eqref{E:once}. In particular, the standard $2K$-dimensional Haar measure $\mu$ supported on the hyperplane corresponding with the linear restriction $\E{U^K| W = 0 } = \E{U^K|W = 1}$ satisfies $\mu(T) =\infty $ but $\mu(T_0) = 0$. 
\end{proposition}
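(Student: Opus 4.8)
\emph{Plan.} Fix the joint law of $(X,W)$ and the lower moments $S_{K-1}$, and identify a candidate vector of $K$-th conditional moments with $v \in \R^{2K}$ via $v_k = \E{U^K|W=0,X=k}$ and $v_{K+k} = \E{U^K|W=1,X=k}$, $k \in [K]$. Assumption \ref{A:2}, built into \eqref{E:once}, places $T$ inside the \emph{linear} hyperplane $\mathcal{H} = \{v \in \R^{2K}: \sum_{k=1}^K (p_k(0)v_k - p_k(1)v_{K+k}) = 0\}$, of dimension $2K-1$; let $\mu$ be the translation-invariant measure on $\mathcal{H}$. Throughout we use $p_k(0), p_k(1) > 0$, as elsewhere in this section. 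I would prove the two assertions separately: $\mu(T_0) = 0$ with $T_0$ contained in finitely many affine subspaces of $\mathcal{H}$ of strictly smaller dimension; and $T$ has nonempty interior in $\mathcal{H}$ with $\mu(T) = \infty$. The observation behind the first is that, with $S_{K-1}$ and a vector $\delta$ fixed, \eqref{E:i2} --- the instance $m = K+1$ of \eqref{E:i1} --- is affine-linear in $v$: the term $P(S_{K-1},\delta)$ depends only on $S_{K-1}$ and $\delta$, and the rest is linear in the $K$-th conditional moments.

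\emph{The set $T_0$.} Recall from Theorem \ref{T:identified} that $A(S_{K-1})$ is a \emph{fixed finite} set, depending only on $S_{K-1}$. For any $v \in T$ and any $\delta \in A(S_{K-1})$, the relations \eqref{E:i1} for $m \in [K]$ together with $\E{\delta(X)}=0$ hold automatically: for $m \le K-1$ they involve only $S_{K-1}$, which $v$ extends; and for $m = K$ the terms carrying $K$-th moments of $U$ sum to $\sum_{k=1}^K(p_k(0)v_k - p_k(1)v_{K+k}) = 0$ because $v \in \mathcal{H}$ (this is the suppression of $\E{U^K|W=\ell, X = K}$ noted after \eqref{E:lowermoments}), leaving a relation in $S_{K-1}$ and $\delta$ alone that is part of the definition of $A(S_{K-1})$. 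Hence $v \in T_0$ if and only if $v \in T$ and \eqref{E:i2} holds for some nonzero $\delta \in A(S_{K-1})$; that is,
\[
T_0 \;=\; \bigcup_{0 \neq \delta \in A(S_{K-1})} \big(T \cap \mathcal{H}_\delta\big), \qquad \mathcal{H}_\delta \;:=\; \Big\{ v : \textstyle\sum_{k=1}^K \big(p_k(0)\delta_k v_k - p_k(1)\delta_k v_{K+k}\big) = -P(S_{K-1},\delta)\Big\}.
\]
For $\delta \neq 0$ the linear functional defining $\mathcal{H}_\delta$ is nonzero (some $\delta_j \neq 0$ and $p_j(0) > 0$), so $\mathcal{H}_\delta$ is a genuine affine hyperplane; and $\mathcal{H}_\delta = \mathcal{H}$ would force its normal proportional to that of $\mathcal{H}$, i.e.\ $\delta_k$ constant in $k$, which with $\E{\delta(X)}=0$ gives $\delta = 0$ --- a contradiction. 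Thus for each nonzero $\delta \in A(S_{K-1})$ the set $\mathcal{H} \cap \mathcal{H}_\delta$ is a proper affine subspace of $\mathcal{H}$, of dimension $\le 2K-2 < \dim \mathcal{H}$, hence $\mu$-null; the union being finite, $\mu(T_0) = 0$ and $T_0$ lies inside finitely many affine subspaces of $\mathcal{H}$ of strictly smaller dimension.

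\emph{The set $T$.} Here the support hypothesis is used. Write $T = \big(\prod_{(\ell,k)} I_{\ell k}\big) \cap \mathcal{H}$, where $I_{\ell k} \subseteq \R$ is the set of possible $K$-th moments of a random variable whose first $K-1$ moments match those of $U|_{W=\ell,X=k}$. The hypothesis that $U|_{W=\ell,X=k}$ has at least $\floor{(K-1)/2}+1$ support points is exactly the rank condition making the leading Hankel matrix of its truncated moment sequence of order $K-1$ positive definite --- equivalently, that sequence lies in the interior of the (truncated Hamburger) moment cone --- and standard moment-problem theory then gives that $I_{\ell k}$ is a nondegenerate, upward-closed interval unbounded above (namely $[a_{\ell k}, \infty)$ when $K$ is even and all of $\R$ when $K$ is odd); in particular $\operatorname{int} I_{\ell k} \neq \emptyset$ and $v^\ast_{\ell k} \in I_{\ell k}$, where $v^\ast$ is the true moment vector. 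Since $\mathcal{H}$ is linear and $\mathbf{1} = (1,\dots,1) \in \mathcal{H}$ (as $\sum_k p_k(0) = \sum_k p_k(1) = 1$), the cone $C := \{d \in \mathcal{H}: d \ge 0\}$ has $\mathbf{1}$ in its relative interior and is unbounded; and because each $I_{\ell k}$ is upward-closed and contains $v^\ast_{\ell k}$, we get $v^\ast + C \subseteq T$, so $\mu(T) \ge \mu(v^\ast + C) = \infty$. Finally $v^\ast + \epsilon\mathbf{1}$, for small $\epsilon > 0$, lies in $\mathcal{H}$ and has every coordinate in $\operatorname{int} I_{\ell k}$, so it is an interior point of $T$ relative to $\mathcal{H}$; thus $T$ has nonempty interior there.

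\emph{Main obstacle.} The $T_0$ part is bookkeeping once one sees \eqref{E:i2} is affine-linear in $v$; the real content is the $T$ part --- certifying that the perturbed vectors $v$ are genuinely realizable as conditional-moment vectors, which is where the truncated-moment-problem machinery and the quantitative support condition $\floor{(K-1)/2}+1$ enter, and checking that the realizable region stays full-dimensional and unbounded after imposing the single linear constraint \eqref{E:once}.
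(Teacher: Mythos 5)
Your proof is correct and follows essentially the same route as the paper's: the same hyperplane $\mathcal{H}$, the same Hamburger/Hankel moment-extension argument (cf.\ the paper's Lemma \ref{L:hambuger}) for the $T$ part, and the same finite-union-of-lower-dimensional-affine-subspaces argument (with the same ``$\delta$ nonconstant, hence $\mathcal{H}_\delta \neq \mathcal{H}$'' observation) for $T_0$. Your cone argument $v^*+C\subseteq T$ with $\mathbf{1}\in\operatorname{relint}(C)$ is a slightly cleaner packaging of the $\mu(T)=\infty$ and nonempty-interior claims than the paper's explicit box bound, and it avoids splitting on the parity of $K$, but it is not a different approach; the one unjustified step (that $I_{\ell k}$ is upward-closed with $v^*_{\ell k}\in I_{\ell k}$, rather than merely containing some ray $[L,\infty)$) is standard truncated moment-problem theory and is exactly what the paper's Hankel-determinant argument delivers.
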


\noindent\textbf{Remark}: The requirement that the conditional distributions of $U$ have a lower bounded number of points of support is implied if $U$ has continuous conditional distributions. This requirement is made to ensure that we can flexibly supply $U$ with higher order moments. 

\bigskip

As the elements $\delta \in A(S_{K-1})$ are solutions to nonlinear polynomial equations and $T_0$ is furthermore a function of those elements, there is no closed form expression for $T_0$ in terms of the $S_{K-1}$, which complicates the question of whether the structural function $g$ is point identified in any particular instance. However, Proposition \ref{P:generic} assures us that there are conditions which guarantee that the identified set is a singleton, and that these conditions are fulfilled quite often when, conditional on the vector of lower moments $S_{K-1}$, the $K^\text{th}$ conditional moment vectors $\{\E{U^K|W = \ell, X = k}: \ell \in \{0,1\}, k \in [K] \}$ arise from a prior distribution which is continuous with respect to Lebesgue measure on the hyperplane in $\R^{2K}$ that is consistent with the fundamental linear restriction in \eqref{E:once}. In fact, Fubini's theorem and Proposition \ref{P:generic} clearly imply that, if this is the case, $g$ is point identified with prior probability $1$. Further on, we provide complementary conditions for point identification of the function $g$ which involve the joint distributions of $X$ and $U$ given $W$, and show that these conditions are topologically generic (Proposition \ref{P:resid}). These conditions are specialized to the case where $X$ is continuously distributed, but they may readily be adapted to the discrete case heretofore considered.

\section{Estimation when $g$ is Point Identified} \label{S:estimate}

In this section we adapt our identification result to propose estimators of the function $g$ in \eqref{E:model}. 
We begin by making the observation that the polynomial system expressed in \eqref{E:nmoment2}, associated with the moment independence condition $\E{U^m|W} = \E{U^m}$ for $m = 1, \ldots , K+1$ and the mean independence assumption $\E{U} = 0$, point identify the function $g$ under conditions which are enumerated in Proposition \ref{P:generic} and shown therein to be commonplace. Moreover, these same polynomials can be estimated directly from the data. 

We make the following standard and straightforward assumption to ensure that the necessary laws of large numbers may be invoked. All asymptotic statements (e.g.\ almost surely, eventually) are made as the sample size $n$ goes to infinity. 

\begin{asm}\label{A:3}
$(X_i, W_i, Y_i)_{i=1}^n$ is an iid sample of $X, W, Y$. For all $\ell \in \{0,1\}$ and $k \in [K]$, $\E{|Y|^{K+1} |W = \ell, X =k } < \infty$. $\P{W = 0}, \P{W = 1} > 0$. 
\end{asm}

Recall that our identification result Theorem \ref{T:identified} centered on a system of polynomial equations \eqref{E:nmoment2}. The central insight of this result was that the solution to the polynomial system could be characterized completely by finitely many of the equations, rather than the infinitely many equations utilized by \cite{P2017}. Note that GMM is typically asymptotically efficient only if the true, efficient score function belongs to the closure of linear manifold spanned by the moment conditions, which typically requires the use of an asymptotically diverging number of moments (see \cite{CF2014}). Therefore, our approach, which only uses a bounded number of moment conditions, is not expected to be asymptotically efficient. The advantage of our proposed methods is that they are straightforward to implement, and involve mainly solutions of systems of polynomial equations, around which a significant theory has been developed. 

Our estimation strategy is based on estimating an empirical analogue to the system \eqref{E:nmoment2} and solving for the function $g$ on $[K] = \mathrm{supp}( X)$. As shorthand, we continue to variably denote $g$ as a function $g: [K] \ra \R$ and as a vector in $\R^K$. 
Hence, we may now define $\Gamma : \R^K \ra \R^{K+2}$ to be the following vector valued function: 
\begin{align*}
\Gamma(h) & \equiv \begin{pmatrix}
P_0(h) \\
P_1 (h) \\
\vdots \\
P_{K+1}(h)
\end{pmatrix}
\end{align*}
where $P_0(h) \equiv \sum_{k=1}^K p_k(0) \E{Y - h_k|W = 0, X = k }$ and for $m \ge 1$, $P_m(h)$ is given as the left side of \eqref{E:nmoment2}, which is to say 
\[
P_m(h) \equiv \E{ (Y - h(X))^m | W = 0} - \E{(Y - h(X))^m| W = 1}.
\]
Note that the Binomial theorem implies that for $m \ge 1$ we have 
\begin{align}
P_m(h) &\equiv \sum_{k=1}^K  p_k(0)\E{(Y - g_k)^m | W = 0, X = k } - p_k(1)\E{ (Y - h_k)^m | W = 1, X = k } \label{E:pm} \\
& = \sum_{k=1}^K \sum_{j=0}^m \binom{m}{j} \left(p_k(0) \E{ Y^j | W = 0, X = k } - p_k(1) \E{Y^j | W = 1, X = k } \right) (-h_k)^{m-j} \nonumber \\
& = \sum_{k=1}^K \sum_{j=0}^m \binom{m}{j} (Q_{j,k,0} - Q_{j,k,1})(-h_k)^{m-j}, \nonumber
\end{align}
where we make the denotations
\begin{align}
C_{j,\ell,k}& \equiv \E{ Y^j \one_{W = \ell, X = k } }\nonumber \\
Q_{j, \ell, k}& \equiv \E{Y^j| W = \ell, X = k } p_k(\ell) = \E{Y^j|W = \ell , X = k} \P{X = k| W = \ell} \nonumber \\
& = \E{ Y^j \one_{W = \ell, X = k } } \P{W = \ell}^{-1} \nonumber \\
& = C_{j,\ell,k} \P{W = \ell}^{-1} ; \label{E:Qform}
\end{align}
that is, $P_m(h)$ is indeed an $m^\text{th}$ degree multivariate polynomial in the coordinates of $h \in \R^K$. 

Recall that under Assumptions \ref{A:1} and \ref{A:2}, Corollary \ref{C:Bezout} implies that $\Gamma$ attains at most $K!$ zeros in $\R^K$, one of which corresponds to the true function $g$ represented in \eqref{E:model}. Consider estimation of $\Gamma$ with the empirical function $\h{\Gamma}$: 
\begin{align*}
\hGamma(h) & \equiv \begin{pmatrix}
\hP_0(h) \\
\hP_1 (h) \\
\vdots \\
\hP_{K+1}(h)
\end{pmatrix},
\end{align*}
where for $j = 0, \ldots ,K+1$, $\hP_j$ is the plug-in estimator of $P_j$ with coefficients $Q_{j,\ell,k}$ for $j \le K+1$ replaced with the plug-in estimator: 
\begin{align*}
\h{Q}_{j,\ell,k} & \equiv  n^{-1} \sum_{i =1}^n Y_i^j \one_{W_i = \ell, X_i = k }\left(n^{-1} \sum_{i = 1}^n \one_{W_i = \ell} \right)^{-1}\\
& \equiv \hat{C}_{j,\ell,k}  \left( n^{-1} \sum_{i = 1}^n \one_{W_i = \ell} \right)^{-1} 
\end{align*}
(where $\hat{C}_{j,\ell,k}$ is defined as indicated). Letting $\EE{n}{\cdot}$ denote expectation with respect to the empirical measure $\frac{1}{n}\sum_{i=1}^n \delta_{X_i}$, it is straightforward to verify that 
\[
\hP_m(h) = \EE{n}{(Y - h(X))^m | W = 0} - \EE{n}{(Y - h(X))^m|W = 1}.
\]
The resulting estimator $\hGamma$ enjoys almost sure uniform convergence to $\Gamma$ as a result of the type of functions (polynomial) under consideration. 
\begin{lemma}\label{L:unifconv}
Under Assumption \ref{A:3}, $\hGamma(h)$ converges uniformly almost surely to $\Gamma(h)$ over any bounded subset of $\R^K$. 
\end{lemma}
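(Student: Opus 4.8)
The plan is to show that $\hGamma$ converges to $\Gamma$ uniformly a.s.\ on bounded sets by exhibiting $\hGamma - \Gamma$ as a polynomial (in the coordinates of $h$) whose coefficients are functions of the estimated moments $\h{Q}_{j,\ell,k}$, and then arguing that (i) each such coefficient converges a.s.\ to its population analogue and (ii) a polynomial of fixed degree with a.s.-convergent coefficients converges uniformly a.s.\ on any bounded set. Concretely, from the closed-form expression \eqref{E:pm}, for each $m \in \{1,\ldots,K+1\}$ we have
\[
\hP_m(h) - P_m(h) = \sum_{k=1}^K \sum_{j=0}^m \binom{m}{j}\bigl[(\h{Q}_{j,0,k} - \h{Q}_{j,1,k}) - (Q_{j,0,k} - Q_{j,1,k})\bigr](-h_k)^{m-j},
\]
and similarly $\hP_0(h) - P_0(h)$ is an affine function of $h$ with coefficients $\h{Q}_{0,\ell,k}-Q_{0,\ell,k}$, $\h{Q}_{1,\ell,k}-Q_{1,\ell,k}$. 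So each component of $\hGamma-\Gamma$ is a polynomial in $h$ of degree at most $K+1$ whose coefficients are fixed linear combinations of the estimation errors $\h{Q}_{j,\ell,k}-Q_{j,\ell,k}$.

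The second step is to establish $\h{Q}_{j,\ell,k} \ca Q_{j,\ell,k}$ for each fixed $j \le K+1$, $\ell \in \{0,1\}$, $k \in [K]$. By the recursion in \eqref{E:Qform}, $\h{Q}_{j,\ell,k} = \h{C}_{j,\ell,k}\,(n^{-1}\sum_i \one_{W_i=\ell})^{-1}$. Under Assumption \ref{A:3} the sample is iid and $\E{|Y|^{K+1}\one_{W=\ell,X=k}} \le \E{|Y|^{K+1}|W=\ell,X=k} < \infty$, so the strong law of large numbers gives $\h{C}_{j,\ell,k} = n^{-1}\sum_i Y_i^j \one_{W_i=\ell,X_i=k} \ca \E{Y^j\one_{W=\ell,X=k}} = C_{j,\ell,k}$ (note $|Y^j| \le 1 + |Y|^{K+1}$ for $j \le K+1$, so each summand is integrable). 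Likewise $n^{-1}\sum_i \one_{W_i=\ell} \ca \P{W=\ell} > 0$, again by Assumption \ref{A:3}. The continuous mapping theorem for a.s.\ convergence (applied to $(x,y)\mapsto x/y$ near the nonzero point $(C_{j,\ell,k},\P{W=\ell})$) then yields $\h{Q}_{j,\ell,k}\ca Q_{j,\ell,k}$. Since there are only finitely many triples $(j,\ell,k)$, on a single almost-sure event all the coefficient errors in Step 1 tend to $0$.

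For the third step, fix $R > 0$ and the bounded set $B_R = \{h : \norm{h} \le R\}$; I would bound each component of $\hGamma(h)-\Gamma(h)$ by $\sum_{j,k}|{\text{coefficient error}}|\cdot R^{m-j}$ uniformly over $h \in B_R$ (using $|h_k| \le R$), so that $\sup_{h\in B_R}\norm{\hGamma(h)-\Gamma(h)}$ is dominated by a fixed linear combination of the coefficient errors times powers of $R$; as the coefficient errors go to $0$ a.s.\ (Step 2), this supremum goes to $0$ a.s. The only point requiring mild care is the bookkeeping that ties the finitely many coefficient-level a.s.\ events together and the integrability check making the SLLN applicable to $Y_i^j\one_{W_i=\ell,X_i=k}$ for all $j \le K+1$; there is no genuine analytic obstacle here, since fixing the polynomial degree at $K+1$ avoids any uniformity-in-degree issue and confines everything to elementary SLLN plus the algebra of polynomials.
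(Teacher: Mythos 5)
Your proposal is correct and follows essentially the same route as the paper: expand each $\hP_m - P_m$ as a fixed-degree polynomial in $h$ whose coefficients are (linear combinations of) the estimation errors $\h{Q}_{j,\ell,k} - Q_{j,\ell,k}$, apply the SLLN (using the moment bound in Assumption \ref{A:3} and $\P{W=\ell}>0$) to show these coefficients vanish a.s., and then bound the sup over $\norm{h}\le R$ by the absolute coefficient errors times powers of $R$ via the triangle inequality. Your writeup is merely more explicit than the paper's about the integrability check and the continuous-mapping step for the ratio $\h{C}_{j,\ell,k}/(n^{-1}\sum_i\one_{W_i=\ell})$.
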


We now consider the problem of estimating $g$. We supplant Assumptions \ref{A:1} and \ref{A:2} with an identification assumption on $g$. Namely, we suppose that the system of polynomials $P_0, \ldots, P_{K+1}$ uniquely identify $g$. Conditions for this to be the case are discussed in Proposition \ref{P:generic}, and are shown to hold outside of a very small (Haar measure $0$) subset of cases. Some alternate conditions which are sufficient for point identification may be drawn from the results of Section \ref{S:continuous}, and in particular Theorem \ref{T:contident} and its corollaries. 

\begin{asm}[Identification] \label{A:4}\normalfont
The vector $g \in \R^K$ uniquely satisfies $\Gamma(g) = 0$. Moreover, $\norm{g} < R$ for some known constant $R$ (where $\norm{\cdot}$ denotes the Euclidean norm). 
\end{asm}

We now define a preliminary estimator of $g$ as (unsurprisingly)
\begin{align*}
\hg \equiv \argmin_{\substack{h \in \R^K:\\ \norm{h}  < R}} \norm{ \hGamma(h)}. 
\end{align*}
By Lemma \ref{L:unifconv} and a standard extremum estimation argument, the following is then true:
\begin{lemma}\label{L:clt}
Under Assumptions \ref{A:3} and \ref{A:4}, one has $\hg \ca g$. 
\end{lemma}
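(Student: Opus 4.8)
The plan is to derive $\hg \ca g$ from the standard consistency argument for extremum estimators, using the uniform almost sure convergence supplied by Lemma \ref{L:unifconv} together with the identification condition in Assumption \ref{A:4}. First I would fix a realization in the probability-one event on which $\hGamma$ converges uniformly to $\Gamma$ on the closed ball $\bar{B}_R \equiv \{h \in \R^K : \norm{h} \le R\}$; it suffices to establish the limit pathwise on this event. Throughout, both $\hg$ and $g$ lie in the compact set $\bar{B}_R$ (using Assumption \ref{A:4} for $g$ and the definition of $\hg$).

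Second, I would record the basic sandwich inequality. Since $\Gamma(g) = 0$, for any $h \in \bar{B}_R$ we have
\begin{align*}
0 \le \norm{\Gamma(\hg)} = \norm{\Gamma(\hg) - \hGamma(\hg) + \hGamma(\hg)} \le \norm{\Gamma(\hg) - \hGamma(\hg)} + \norm{\hGamma(\hg)} \le \sup_{h \in \bar{B}_R}\norm{\Gamma(h) - \hGamma(h)} + \norm{\hGamma(g)},
\end{align*}
where the last step uses that $\hg$ minimizes $\norm{\hGamma(\cdot)}$ over the ball and $g$ is a feasible point. Both terms on the right converge to $0$: the first by Lemma \ref{L:unifconv}, and the second because $\norm{\hGamma(g)} \le \norm{\hGamma(g) - \Gamma(g)} + \norm{\Gamma(g)} \le \sup_{h \in \bar{B}_R}\norm{\Gamma(h) - \hGamma(h)} + 0 \to 0$. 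Hence $\norm{\Gamma(\hg)} \to 0$ along the chosen realization.

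Third, I would convert $\norm{\Gamma(\hg)} \to 0$ into $\hg \to g$ using identification and compactness. Suppose, for contradiction, that $\hg \not\to g$; then there is $\ve > 0$ and a subsequence with $\norm{\hg - g} \ge \ve$. By compactness of $\bar{B}_R$, pass to a further subsequence along which $\hg \to h^*$ for some $h^* \in \bar{B}_R$ with $\norm{h^* - g} \ge \ve$, so $h^* \neq g$. Since $\Gamma$ is continuous (each $P_m$ is a polynomial in $h$, hence continuous), $\norm{\Gamma(h^*)} = \lim \norm{\Gamma(\hg)} = 0$, i.e.\ $\Gamma(h^*) = 0$. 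But Assumption \ref{A:4} states that $g$ is the unique zero of $\Gamma$, forcing $h^* = g$, a contradiction. Therefore $\hg \to g$ on the probability-one event, which is precisely $\hg \ca g$.

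The only mildly delicate point is ensuring measurability of $\hg$ (so that the almost-sure statement is well posed) and that the $\argmin$ is attained; attainment follows because $\norm{\hGamma(\cdot)}$ is continuous on the compact set $\bar{B}_R$, and measurability follows from standard measurable-selection arguments for continuous criterion functions on a compact Euclidean ball (or one may simply take $\hg$ to be any measurable selection from the argmin set, which is nonempty). Beyond this, there is no real obstacle: the argument is the textbook consistency proof, with Lemma \ref{L:unifconv} playing the role of the uniform law of large numbers and Assumption \ref{A:4} the role of the identification condition. I would keep the write-up short, citing a standard reference for the extremum-estimator consistency template rather than re-deriving it in full.
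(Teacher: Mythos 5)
Your proof is correct and follows essentially the same route as the paper's: uniform almost-sure convergence from Lemma \ref{L:unifconv} combined with the identification/compactness argument (the paper phrases the latter as a well-separated-minimum bound $\inf_{\|h-g\|\ge\delta,\,\|h\|\le R}\|\Gamma(h)\|\ge\ve(\delta)$, which is exactly what your subsequence-and-contradiction step establishes). The only difference is cosmetic: you spell out the sandwich inequality and the subsequence extraction that the paper compresses into ``with a standard proof,'' and you sensibly work on the closed ball so the $\argmin$ is attained.
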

\noindent Standard arguments which establish the asymptotic normality of extremum estimators cannot be applied \textit{in situ} to the estimator $\hat{g}$ because the components of $\Gamma$ are not population moments but linear combinations of conditional moments.

We now turn our attention to a polynomial-based estimator of $g$ which has an asymptotic normality property. We denote this estimator by $\tilde{g}$. The estimator $\tilde{g}$ is estimated in two stages: first, solve a multivariate polynomial system of equations $\hat{\Lambda}(h)$ for $h$ within the parameter set, and then minimizing the objective function $\hat{\Gamma}(h)$ over the obtained solution set. The idea of the proof of asymptotic normality is to apply the implicit function theorem and make a Delta-method argument on the estimator. To this end, define the vector valued function $\Lambda: \R^K \ra \R^K$ by 
\begin{align*}
\Lambda(h) \equiv \begin{pmatrix}
P_0(h) \\ \vdots \\ P_{K-1}(h)
\end{pmatrix} \quad 
\h{\Lambda}(h) \equiv \begin{pmatrix}
\hP_0(h) \\ \vdots \\ \hP_{K-1}(h) 
\end{pmatrix}
\end{align*}
and for some fixed $R < \infty$, define the zero set $\mathcal{Z}_R$ of any function $\gamma: \R^K \ra \R^K$ to be: 
\begin{align*}
\mathcal{Z}_R(\gamma)\equiv \left\{ \begin{array}{l l}
\left\{ h \in \R^K: \, \norm{h}  \le R \text{ and }\gamma(h) = 0 \right\} & \text{ if such a solution } h \text{ exists} \\
\left\{ h \in \R^K : \, \norm{h}  \le R \right\} & \text{ otherwise }
\end{array}\right.
\end{align*}
For the $R$ used in Assumption \ref{A:4}, 
we now define our estimator $\tg$ to be 
\begin{align*}
\tg & = \argmin_{ \substack{h \in \mathcal{Z}_R(\h{\Lambda})}} \norm{ \hGamma(h)}. 
\end{align*}
\begin{remark}\normalfont
By Bernstein's theorem (see\ \cite{C2005} \S5, Theorem 5.4), a polynomial system of the form $\Gamma(h): \R^K \ra \R^K$ \textit{generically} admits $K! $ solutions in $(\C\setminus\{0\})^K$.\footnote{This may be calculated using standard formulae for mixed volume in terms of normal $K$-dimensional volume, and the formula for the volume of a typical simplex in $\R^K$ formed by the $K$ elementary vectors} Generic in this case means that there is a nonzero polynomial in the coefficients of the polynomials $P_0, \ldots , P_{K-1}$ such that the property holds whenever the nonzero polynomial is nonvanishing for $P_0, \ldots , P_{K-1}$. For any fixed $d \in \N$, it can be shown via induction on $d$ that the set of solutions (variety) for a multivariate polynomial on $\R^d$ has zero Lebesgue measure on $\R^d$. Hence, under the assumption that the coefficients of $P_1, \ldots , P_K$ avoid this zero-measure subset of $\R^d$, $d$ indicating the number of coefficients in those polynomials, then $\mathcal{Z}_R(\h{\Lambda})$ is nonempty for $R$ large enough. In our case this suggests that \textit{typically}, upon solving for $\Lambda$, the user should obtain at most $K!$ solutions.  
\end{remark}

\noindent Application of the implicit function theorem requires that we assume an invertibility condition on the Jacobian matrix of $\Lambda$. 

\begin{asm}\label{A:5} \normalfont
The $K \times K$ matrix $V$ whose $(m,k)^\text{th}$ coordinate is given by 
\begin{align*}
V_{m,k} &= \parr{P_{m-1}}{h_k} \Big|_{h = g} \\
& = \left\{ \begin{array}{ll}
\sum_{j=0}^{m-2}  \binom{m}{j} Q_{j,k,0} (m-j ) (-g_k)^{m-j - 1} & \text{ for } m = 1\\
\sum_{j=0}^{m-2}  \binom{m}{j}\left( Q_{j,k,1} - Q_{j,k,0} \right) (m-j ) (-g_k)^{m-j - 1}
 & \text{ for } m \ge 2
 \end{array}
 \right.
\end{align*}
for $m, k \in [K]$ is invertible. 
\end{asm}

If we had instead put $V_{m,k} = \parr{P_{m}}{h_k}\Big|_{h = g}$ in Assumption \ref{A:C3}, and thus omitted $P_0$, then $V$ would not be invertible; this stems from the fact that if $P_m(g) = 0$ for $m \ge 1$, then also $P_m(g + c \one_K) = 0$ for all $c \in \R$ (where $\one_K$ is the $K$-dimensional vector $(1,\ldots , 1)'$).

Letting $V$ denote the matrix of partial derivatives indicated in Assumption \ref{A:5}, define the functions $\Psi_m: \R^{K} \times \R^{2K^2 + 2}, \, m = 0, \ldots , K-1$ by 
\begin{align*}
\Psi_m(v,w) \equiv\left\{ \begin{array}{ll}
 \sum_{k=1}^K \sum_{j=0}^{1} \binom{m}{j} \left( w_{\iota(j, 0,k)}/w_{2K^2 + 1} \right) (-v_k)^{m-j} & \text{ if } m = 0\\
 \sum_{k=1}^K \sum_{j=0}^{m} \binom{m}{j} \left( w_{\iota(j,0,k)}/w_{2K^2 + 1} - w_{\iota(j, 1,k )}/ w_{2K^2 + 2} \right) (-v_k)^{m-j}& \text{ otherwise }
\end{array}
\right.
\end{align*}
where we define the indexing bijection $\iota: \{0 , \ldots , K - 1\} \times \{0,1\} \times \{1, \ldots , K\}  \ra \{1, \ldots , 2K^2 \}$ by $\iota(j,\ell,k) = 2Kj + 2k + \ell - 1$. Although the definition of $\Psi_m$ introduces substantial notational difficulty, comparison with \eqref{E:pm} reveals that $\Psi_m(v,w)$ just becomes $P_m$ with the proper choice of coefficient vector $w$. The difference from \eqref{E:pm} is that the definition of $\Psi_m$ allows the coefficients of the latter to differ from $Q_{j,k,\ell}$, as should be expected when those coefficients are estimated from data.

To continue, let $w^* \in \R^{2K^2 + 2}$ denote the vector satisfying $w_{\iota(j,\ell,k)}^* = \E{Y^j \one_{W = \ell, X = k }}$ for $j \in \{0, \ldots, K-1\}$, $k \in [K]$, and $\ell \in \{0,1\}$, and also $w^*_{2K^2 + 1} = \P{W = 0}, w^*_{2K^2 + 2} = \P{W = 1}$. By definition, $w^*$ is precisely the ``correct" choice of $w$ which equates $\Psi_m(\cdot, w)$ with $P_m(\cdot)$. Define $\Psi: \R^{K} \times \R^{2K^2 + 2}$ to be the vector valued function 
\begin{align*}
\Psi(v,w) = \begin{pmatrix}
\Psi_0(v,w) \\
\vdots \\
\Psi_{K-1}(v,w)
\end{pmatrix}
\end{align*}
and let $\Delta = D_w \Psi(h, w^*)|_{h = g}$ denote its Jacobian matrix (with respect to $w$) evaluated at $(h,w) = (g, w^*)$. Finally, let the $(2K^2 + 2) \times (2K^2 + 2)$ matrix $\Omega$ by 
\footnotesize
\begin{align*}
\Omega & = \begin{pmatrix}
\var{Y^0 \one_{W = 0, X = 1} } & \cov{ Y^0 \one_{W = 0, X = 1}, Y^0 \one_{W = 1 , X = 1}} & \cdots & \cov{Y^0 \one_{W = 0,X = 1}, \one_{W = 1}} \\
\cov{ Y^0 \one_{W = 0, X = 1}, Y^0 \one_{W = 0 , X = 0}} & \var{Y^0 \one_{W = 1, X = 1} } & \cdots &  \cov{Y^0 \one_{W = 0,X = 1}, \one_{W = 1}}\\
\vdots & \vdots & \ddots & \vdots  \\
\cov{Y^0 \one_{W = 0,X = 1}, \one_{W = 1}} & \cov{Y^0 \one_{W = 0,X = 1}, \one_{W = 1}} & \cdots & \var{\one_{W = 1}} 
\end{pmatrix}
\end{align*}
\normalsize
to be the covariance matrix of the random vector 
\small
\begin{align*}
\big( &Y^0 \one_{W = 0, X = 1}, Y^0 \one_{W = 0, X = 1}, Y^0 \one_{W = 0, X = 2}, Y^0 \one_{W = 1, X = 2}, \ldots,\\
 &\quad Y^{K-1} \one_{W = 0, X = K}, Y^{K-1} \one_{W = 1, X = K}, \one_{W = 0}, \one_{W = 1} \big)'.
\end{align*}
\normalsize

With our invertibility assumption, we may now state the following central limit theorem for the estimator $\tg$:

\begin{theorem}\label{T:clt}
Under Assumptions \ref{A:3}---\ref{A:5}, the convergence
\[
\sqrt{n}  (\tg - g) \cd N(0, \left((\mathrm{D}_w \omega(w^*))\,  \Omega \, (\mathrm{D}_w \omega(w^*))  '\right)^{-1/2} )  
\]
holds with $D_w \omega(w^*) = -V^{-1} \Delta$. 
Moreover, if $\E{Y^{2K - 2} \one_{ W = \ell, X = k}} < \infty$ for all $\ell \in \{0,1\}, X \in [K]$, then there exist consistent estimators
$(\h{\mathrm{D}_w \omega(w^*)})$ and $\h{\Omega}$, for $\mathrm{D}_w(\omega(w^*))$ and $\Omega$, respectively. 
\end{theorem}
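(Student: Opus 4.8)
Here is the approach I would take.

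The plan is to exploit the reparametrization of the system $\Lambda$ by its coefficient vector. Writing $\Psi(v,w)$ for the polynomial map defined above, one has $\Lambda(\cdot)=\Psi(\cdot,w^*)$ and $\h{\Lambda}(\cdot)=\Psi(\cdot,\h{w})$, where $\h{w}\in\R^{2K^2+2}$ collects the sample means $\h{w}_{\iota(j,\ell,k)}=n^{-1}\sum_{i=1}^n Y_i^j\one_{W_i=\ell,X_i=k}$ together with $n^{-1}\sum_i\one_{W_i=\ell}$ in its last two coordinates. These are ordinary averages of i.i.d.\ vectors, so the strong law gives $\h{w}\ca w^*$ and the multivariate central limit theorem gives $\sqrt{n}(\h{w}-w^*)\cd N(0,\Omega)$ (the second moments needed are exactly those guaranteed when $\E{Y^{2K-2}\one_{W=\ell,X=k}}<\infty$, and these are already implied by Assumption \ref{A:3} when $K\le 3$). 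I would then apply the implicit function theorem to $\Psi$ at $(g,w^*)$: by Assumption \ref{A:5} the Jacobian $\mathrm{D}_v\Psi(g,w^*)=V$ is invertible, and $\Psi(g,w^*)=\Lambda(g)=0$, so on a neighborhood $\mathcal{W}\ni w^*$ there is a real-analytic $\omega:\mathcal{W}\ra\R^K$ with $\omega(w^*)=g$, $\Psi(\omega(w),w)\equiv 0$, and, by implicit differentiation, $\mathrm{D}_w\omega(w^*)=-V^{-1}\Delta$. Since $\omega$ is real-valued and $\omega(\h{w})\ra g$ with $\norm{g}<R$ (Assumption \ref{A:4}), with probability tending to one $\omega(\h{w})$ is a genuine real zero of $\h{\Lambda}$ lying in the ball $\{\norm{h}\le R\}$, so $\mathcal{Z}_R(\h{\Lambda})$ is nonempty and $\omega(\h{w})\in\mathcal{Z}_R(\h{\Lambda})$.

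The crux is the selection step: I would show $\tg=\omega(\h{w})$ eventually. First, $\tg$ is consistent. Since $\norm{g}<R$ the point $g$ is a feasible candidate, so $\norm{\h{\Gamma}(\tg)}\le\norm{\h{\Gamma}(g)}\ra\norm{\Gamma(g)}=0$ by Lemma \ref{L:unifconv} (this bound holds whichever branch of $\mathcal{Z}_R$ obtains); hence for any subsequential limit $h^\dagger$ of $\tg$ (which exists because $\tg$ stays in the compact ball) one has $\Gamma(h^\dagger)=\lim[\Gamma(\tg)-\h{\Gamma}(\tg)]+\lim\h{\Gamma}(\tg)=0$, so $h^\dagger=g$ by the uniqueness clause of Assumption \ref{A:4}, giving $\tg\ca g$. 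Eventually, then, $\tg$ lies in the implicit-function-theorem neighborhood of $g$ on which $\omega(\cdot)$ furnishes the \emph{unique} zero of $\Psi(\cdot,w)$; and eventually $\h{w}\in\mathcal{W}$ and (by the first paragraph) $\h{\Lambda}(\tg)=\Psi(\tg,\h{w})=0$. Combining these, $\tg=\omega(\h{w})$ for all $n$ large.

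It remains to assemble the pieces. Applying the delta method to $\tg=\omega(\h{w})$ using $\sqrt{n}(\h{w}-w^*)\cd N(0,\Omega)$ and the $C^1$ map $\omega$ gives $\sqrt{n}(\tg-g)=\mathrm{D}_w\omega(w^*)\,\sqrt{n}(\h{w}-w^*)+o_P(1)\cd N\!\left(0,\,\mathrm{D}_w\omega(w^*)\,\Omega\,\mathrm{D}_w\omega(w^*)'\right)$, which is the stated conclusion with $\mathrm{D}_w\omega(w^*)=-V^{-1}\Delta$. For the variance estimator, take $\h{\Omega}$ to be the sample covariance matrix of the vectors $\left(Y_i^0\one_{W_i=0,X_i=1},\ldots,\one_{W_i=1}\right)'$, which is consistent for $\Omega$ exactly under the stated additional hypothesis $\E{Y^{2K-2}\one_{W=\ell,X=k}}<\infty$; and form $\h{V}$ and $\h{\Delta}$ by substituting $\tg$ and $\h{w}$ into the polynomial (hence continuous) expressions defining $V_{m,k}$ and the entries of $\Delta$. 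Since $\tg\cp g$, $\h{w}\cp w^*$, and matrix inversion is continuous at the invertible matrix $V$, the continuous mapping theorem yields $-\h{V}^{-1}\h{\Delta}\cp-V^{-1}\Delta$, and Slutsky's lemma then gives $(-\h{V}^{-1}\h{\Delta})\,\h{\Omega}\,(-\h{V}^{-1}\h{\Delta})'\cp\mathrm{D}_w\omega(w^*)\,\Omega\,\mathrm{D}_w\omega(w^*)'$.

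The main obstacle is the selection step of the second paragraph. Because $\tg$ minimizes over the data-dependent set $\mathcal{Z}_R(\h{\Lambda})$ — which a priori could be empty, could contain spurious roots far from $g$, or could degenerate to the whole ball $\{\norm{h}\le R\}$ — one must argue simultaneously that the set is eventually nonempty (supplied by the distinguished root $\omega(\h{w})$ from the implicit function theorem), that the minimizer is nonetheless consistent (from uniform convergence of $\h{\Gamma}$ together with the uniqueness in Assumption \ref{A:4}), and that locally near $g$ the only available root of $\h{\Lambda}$ is $\omega(\h{w})$ (local uniqueness in the implicit function theorem). Getting from ``$\tg$ is consistent and is some zero of $\h{\Lambda}$'' to ``$\tg$ equals the distinguished zero $\omega(\h{w})$'' is where the care is needed; the remainder is a routine central-limit-plus-delta-method calculation.
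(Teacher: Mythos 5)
Your overall route is the same as the paper's: reparametrize $\Lambda$ and $\h\Lambda$ as $\Psi(\cdot,w^*)$ and $\Psi(\cdot,\h w)$, invoke the implicit function theorem at $(g,w^*)$ using Assumption \ref{A:5} to obtain the distinguished root $\omega$ with $\mathrm{D}_w\omega(w^*)=-V^{-1}\Delta$, show $\tg=\omega(\h w)$ eventually, and finish with the CLT for $\h w$ plus the delta method and plug-in variance estimators. The one step that fails as written is the feasibility claim in your consistency argument: you assert that ``since $\norm{g}<R$ the point $g$ is a feasible candidate, so $\norm{\h\Gamma(\tg)}\le\norm{\h\Gamma(g)}$, whichever branch of $\mathcal{Z}_R$ obtains.'' That is false in the relevant branch. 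Once you have established (as you do in your first paragraph) that $\omega(\h w)$ is a real zero of $\h\Lambda$ in the ball, $\mathcal{Z}_R(\h\Lambda)$ is the \emph{zero set} of $\h\Lambda$, and $g$ belongs to it only if $\h\Lambda(g)=0$, which generically fails in finite samples. The repair is immediate and uses material you already have: compare against $\omega(\h w)$ instead, so that $\norm{\h\Gamma(\tg)}\le\norm{\h\Gamma(\omega(\h w))}\ra\norm{\Gamma(g)}=0$ by uniform convergence (Lemma \ref{L:unifconv}) and $\omega(\h w)\ca g$; the rest of your selection argument (consistency of $\tg$ plus local uniqueness of the zero from the implicit function theorem) then goes through. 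The paper handles this step slightly differently, via the separation inequality \eqref{E:convergenceball}, which shows directly that any point of $\mathcal{Z}_R(\h\Lambda)$ outside $B(g,\delta)$ has a strictly larger value of $\norm{\h\Gamma}$ than $\omega(\h{w})\in B(g,\delta')$, so the minimizer must coincide with the distinguished root; both selection arguments are valid once your feasibility step is fixed, and the paper's version yields the almost-sure (rather than in-probability) identification of $\tg$ with $\omega(\h w)$ claimed there.

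Two smaller remarks. First, your observation that the CLT for $\sqrt{n}(\h w-w^*)$ requires $\E{Y^{2K-2}\one_{W=\ell,X=k}}<\infty$, which Assumption \ref{A:3} only delivers for $K\le 3$, is correct and in fact points at an implicit assumption in the paper's own statement (the paper invokes the extra moment condition only for estimating $\Omega$, but finiteness of $\Omega$ itself needs it). Second, your limiting covariance $\mathrm{D}_w\omega(w^*)\,\Omega\,\mathrm{D}_w\omega(w^*)'$ is the correct one; the exponent $-1/2$ in the theorem's display is a typo carried over from the studentized form at the end of the paper's proof.
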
  

\section{Estimation when $g$ is Partially Identified} \label{S:partial}

In this section we suppose that $g$ is possibly only partially identified. In other words, we consider the case where $g \in \mathcal{H}$ for some space of continuous functions $\mathcal{H}$ over the support of $X$, $\mathcal{X} = \supp X$. We let $X$ be either discrete or continuously distributed in this section. Let $\norm{\cdot}_\infty$ indicate the supremum norm over $\mathcal{H}$, i.e.\ $\norm{h}_\infty = \sup_{x \in \X} |h(x)|$. Assume that
\begin{asm} \normalfont \label{A:nine}
The standing model $Y = g(X) + U$ \eqref{E:model} holds with $\supp W = \{0,1\}$, and 
\begin{enumerate}
\item 
$U \indep W$ and $\E{U} = 0$
\item 
$\P{W = 0} \P{W = 1} \neq 0$ 
\item 
$g \in \mathcal{H}$ and $\mathcal{H}$ is bounded in $\norm{\cdot}_\infty$
\item 
$\limsup_{m \ra \infty} \left(   \frac{\E{U^m}}{m!}\right)^{1/m} < \infty$. 
\end{enumerate}
\end{asm}
Parts (1) and (2) of the assumption is familiar, (2) restricts $g$ to lie in our class $\mathcal{H}$ (to be defined shortly) and (3) is a regularity condition on the moments of $U$ which is satisfied by probability distributions whose Fourier transform exists in a complex neighborhood of the origin, or alternately whose Laplace transform exists in a neighborhood of the origin. Under Assumption \ref{A:nine}, the partially identified set for $g$ is: 
\begin{align*}
\mathcal{H}_0 \equiv \{h \in \mathcal{H}: \, \E{Y - h(X)} = 0 \text{ and } (Y - h(X)) \indep W \}.
\end{align*}
A common object of interest is not the entire function $g$ but the value of $T g$, where $T$ is some complex-valued functional defined on $\mathcal{H}$. Then the partially identified set for $Lg$ is $L \mathcal{H}_0 = \{Lh: h \in \mathcal{H}_0\}$. Any set-valued estimator for $\mathcal{H}_0$ may readily extended to an estimator for $T\mathcal{H}_0$ by considering its image under $L$. 

Our first objective is to more tractably characterize the identified set $\mathcal{H}_0$, which we do in the following lemma: 

\begin{lemma}\label{L:unitinterval}
Under Assumption \ref{A:nine}, the characteristic function $\E{e^{it(Y - h(X))}}$ is holomorphic in a neighborhood of the origin in $\C$ for any $h \in \H$, and
\small
\begin{align*}
\mathcal{H}_0 &= \left\{ h \in \mathcal{H}: \, \E{Y - h(X)} = 0 \text{ and } \sup_{t \in [0,1]} \left| \E{ e^{it(Y - h(X))}| W = 0} - \E{ e^{it (Y - h(X)}|W = 1}\right|= 0 \right\}. 
\end{align*}
\normalsize
\end{lemma}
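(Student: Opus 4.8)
The plan is to establish the equality of sets $\mathcal{H}_0 = \mathcal{H}_0'$, where $\mathcal{H}_0'$ denotes the set on the right-hand side involving the supremum over $t \in [0,1]$. The inclusion $\mathcal{H}_0 \subseteq \mathcal{H}_0'$ is immediate: if $(Y - h(X)) \indep W$ then the conditional characteristic functions $t \mapsto \E{e^{it(Y-h(X))} | W = \ell}$ coincide for $\ell = 0, 1$ at every real $t$, so in particular their difference vanishes on $[0,1]$. The content of the lemma is therefore the reverse inclusion, and the key tool is the holomorphy claim made in the first part of the statement.

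First I would prove the holomorphy assertion. Fix $h \in \mathcal{H}$; since $\mathcal{H}$ is bounded in $\norm{\cdot}_\infty$, there is a constant $M$ with $|h(X)| \le M$ a.s., so $|Y - h(X)| \le |Y| + M$. Writing $Z = Y - h(X)$, one has $\E{e^{itZ}} = \sum_{m \ge 0} \frac{(it)^m}{m!}\E{Z^m}$ formally; to justify term-by-term convergence for complex $t$ in a neighborhood of the origin I would bound $\E{|Z|^m} \le \sum_{j} \binom{m}{j} M^{m-j}\E{|Y|^j}$ and use part (4) of Assumption \ref{A:nine}, which gives $\limsup_m (\E{|U|^m}/m!)^{1/m} < \infty$ — note $\E{Y^m}$ and $\E{U^m}$ differ only by terms controlled by $\E{|U|^j}$ with lower $j$ and the bounded quantity $g(X)$, so the same Cauchy–Hadamard radius-of-convergence estimate transfers to $\E{Z^m}$. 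Hence $\E{e^{itZ}}$, and likewise each conditional version $\E{e^{itZ}|W=\ell}$ (the conditional moments inherit the same growth bound up to the constant $\P{W=\ell}^{-1}$), extends to a holomorphic function on a common complex disc $D_\rho = \{|t| < \rho\}$ for some $\rho > 0$.

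Now suppose $h \in \mathcal{H}_0'$, so $\E{Y - h(X)} = 0$ and the holomorphic function $\phi(t) \equiv \E{e^{itZ}|W=0} - \E{e^{itZ}|W=1}$ vanishes identically on the real interval $[0,1]$ — or rather on $[0,1] \cap D_\rho$, but by shrinking we may take $[0,\rho/2]$, which is a subset of $[0,1]$ once $\rho \le 2$, and in any case a real interval of positive length. Since $\phi$ is holomorphic on the connected open set $D_\rho$ and vanishes on a set with an accumulation point inside $D_\rho$, the identity theorem forces $\phi \equiv 0$ on all of $D_\rho$, in particular on the real segment $(-\rho, \rho)$. Thus the conditional characteristic functions of $Z$ given $W=0$ and $W=1$ agree on a neighborhood of $0$ in $\mathbb{R}$; because the moment growth condition (4) implies $Z$'s conditional distributions are each determined by their moments (the characteristic function being analytic at $0$ forces moment-determinacy), agreement of characteristic functions on a neighborhood of the origin forces equality of the two conditional laws, i.e.\ $(Y - h(X)) \indep W$. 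Together with $\E{Y - h(X)} = 0$ this gives $h \in \mathcal{H}_0$.

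The main obstacle I anticipate is the careful handling of the conditional objects: verifying that the radius-of-convergence bound from part (4) of Assumption \ref{A:nine}, stated for the unconditional moments of $U$, genuinely transfers to the \emph{conditional} moments of $Z = Y - h(X)$ given $W = \ell$ uniformly enough to get holomorphy on a fixed disc. Since $U \indep W$, we have $\E{U^m | W = \ell} = \E{U^m}$, so the conditional moments of $U$ are identical to the unconditional ones; the only additional work is absorbing the bounded term $g(X) - h(X)$ (using the law of iterated expectations and boundedness of $\mathcal{H}$) and dividing by $\P{W = \ell} > 0$, which rescales but does not affect the radius. Once that bookkeeping is in place, the identity theorem argument is routine. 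A secondary point to state cleanly is the moment-determinacy step, but this is a standard consequence of the characteristic function being real-analytic at the origin.
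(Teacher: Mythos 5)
Your proof is correct and follows essentially the same strategy as the paper: bound the conditional moments of $Y - h(X)$ using Assumption \ref{A:nine}(4) and boundedness of $\mathcal{H}$ to establish analyticity of the conditional characteristic functions near the origin, then apply the identity theorem to the difference. The only real divergence is at the final step. The paper invokes a result of Lukacs--Sz\'asz (their Theorem 2) to show that the conditional characteristic functions are holomorphic on a full horizontal strip $|\mathfrak{Im}(\xi)| < (4B\rho)^{-1}$, so that vanishing of the difference on $[0,1]$ propagates by analytic continuation to the entire strip and hence to all of $\R$, giving equality of the two conditional characteristic functions everywhere and thus independence. You only establish holomorphy on a disc, so the identity theorem gives agreement of the conditional characteristic functions merely on a real neighborhood of the origin, and you then close the gap by observing that analyticity of the characteristic function at $0$ implies moment-determinacy of the conditional laws, so equal derivatives at $0$ force equal distributions. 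Both endings are valid; the paper's route is more self-contained once the strip result is cited, while yours avoids the strip result at the cost of a (standard but separate) moment-determinacy fact. One place where your writeup is looser than the paper's: Assumption \ref{A:nine}(4) controls $\E{U^m}$ rather than $\E{|U|^m}$, and the paper handles this via the interpolation bound $\E{|U|^m} \le \E{U^{m+1}}^{m/(m+1)}$ and Stirling; you gesture at this step without spelling it out, so you should verify that your Cauchy--Hadamard estimate really is in terms of absolute moments.
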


\noindent Lemma \ref{L:unitinterval} provides a convenient characterization of $\mathcal{H}_0$ which we exploit. Notably, we may restrict attention to only $t$ within a compact subset of $\R$. To employ some of the tools of empirical process theory, we now make the following assumption on the class $\mathcal{H}$. 
\begin{asm} \normalfont\label{A:ten}
The covering numbers $N(\ve, \mathcal{H}, \norm{\cdot}_\infty)$ satisfy the integrability condition 
\begin{align*}
\int_0^\infty \sqrt{ \log N(\ve, \mathcal{H}, \norm{\cdot}_\infty )} \, \mathrm{d} \ve < \infty.
\end{align*}
\end{asm}
Assumption \ref{A:ten} is a uniform bound on the entropy of $\mathcal{H}$. It is satisfied if, for example, $\mathcal{X}$ is a  bounded and convex subset of $\R^d$ with nonempty interior, and 
\begin{align} \label{E:holderclass}
\mathcal{H} = \left\{ h: 
\max_{|k| \le \alpha} \sup_{x \in \X}| D^k h(x)| + \max_{|k| = \alpha} \sup_{x,y \in \X} \frac{|D^k h(x) - D^k h(y)|}{\norm{x - y}}\le M
\right\}
\end{align}
for some constants $M$ and $\alpha > d/2$, where $k = (k_1, \ldots , k_d)$ is a multi-index of $d$ integers, $|k| = \sum_{j = 1}^d k_j$, and $D^k \equiv \frac{\partial^k}{\partial x_1^{k_1} \cdots \partial x_d^{k_d}}$ (see \cite{VW1996}, Theorem 2.7.1, which is more general and applies to H\"{o}lder-continuous derivatives as well as Lipschitz in the case where $\alpha$ is not an integer). In fact, \eqref{E:holderclass} implies that one has $\log N(\ve, \H, \norm{\cdot}_\infty) \le K \left( \frac{1}{\ve} \right)^{d/\alpha}$ for some constant $K$. \cite{S2012} gives conditions, in particular restrictions on the weighted Sobolev norms of functions in $\mathcal{H}$, which guarantee that this is the case. It is clear that if $\X$ is finite then Assumption \ref{A:ten} is immediately satisfied as long as $\mathcal{H}$ is bounded. 

Define the classes of functions from the sample space $\R \times \mathcal{X} \times \{0,1\}$ to $\C$ which we will subsequently consider by
\begin{align*}
&\mathcal{E} \equiv \left\{ f(y,x) = y - h(x): \, h \in \H\right\}\\
&\F \equiv \left\{ f(y,x,w) = \exp{ i t (y - h(x))} \one_{w = \ell} : \, t \in [0,1], h \in \mathcal{H}, \ell \in \{0,1\} \right\} 
\end{align*}
When the uniform entropy condition holds for $\mathcal{H}$ under $\norm{\cdot}_\infty$, we can infer that the class $\mathcal{F}$ obeys a Donsker theorem, in the sense that
\begin{lemma}\label{L:covering}
Let Assumptions \ref{A:nine} and \ref{A:ten} hold and let $P$ denote the distribution of $(Y,X,U,W)$. Then the classes $\mathcal{E}$ and $\mathcal{F}$ are Glivenko-Cantelli and $P$-Donsker. 
\end{lemma}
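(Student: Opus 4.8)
The plan is to verify the two standard sufficient conditions for the Glivenko--Cantelli and Donsker properties: a uniform entropy (or bracketing entropy) bound, and the existence of a square-integrable envelope. I would treat $\mathcal{E}$ first and then bootstrap to $\mathcal{F}$ via stability of the Donsker property under Lipschitz transformations.

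First I would handle the envelope. For $\mathcal{E}$, since $\mathcal{H}$ is bounded in $\norm{\cdot}_\infty$ by Assumption \ref{A:nine}(3), say by $M_0$, the function $F_\mathcal{E}(y,x) = |y| + M_0$ is an envelope; it is square-integrable because $\E{Y^2} < \infty$ (which follows from Assumption \ref{A:nine}(4), since finiteness of all moments of $U$ of the stated growth rate plus boundedness of $g(X)$ gives $\E{Y^2} = \E{(g(X)+U)^2} < \infty$). For $\mathcal{F}$, every element has modulus bounded by $1$, so the constant function $1$ is a (trivially square-integrable) envelope.

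Next, the entropy bounds. For $\mathcal{E}$: the map $h \mapsto (y,x)\mapsto y - h(x)$ is an isometry from $(\mathcal{H},\norm{\cdot}_\infty)$ onto $\mathcal{E}$ equipped with the sup-norm over the sample space, so $N(\ve,\mathcal{E},\norm{\cdot}_\infty) = N(\ve,\mathcal{H},\norm{\cdot}_\infty)$, and Assumption \ref{A:ten} gives a finite uniform entropy integral; in particular the bracketing numbers with respect to $L^2(P)$ are controlled by the sup-norm covering numbers times the envelope, so $\mathcal{E}$ is $P$-Donsker (hence also $P$-Glivenko--Cantelli) by the standard bracketing CLT (e.g.\ \cite{VW1996}, Theorem 2.5.6). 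For $\mathcal{F}$: write a generic element as $\phi_{t,h,\ell}(y,x,w) = e^{it(y-h(x))}\one_{w=\ell}$. The index set is $[0,1]\times\mathcal{H}\times\{0,1\}$; the last factor is finite, so it suffices to bound the entropy of $\{(y,x)\mapsto e^{it(y-h(x))}: t\in[0,1], h\in\mathcal{H}\}$. Using $|e^{ia}-e^{ib}|\le |a-b|$, one has $|e^{it(y-h(x))} - e^{it'(y-h'(x))}| \le |t-t'|\,|y-h(x)| + |t'|\,|h(x)-h'(x)| \le F_\mathcal{E}(y,x)\,|t-t'| + \norm{h-h'}_\infty$, so the map $(t,h)\mapsto \phi_{t,h,\ell}$ is Lipschitz from $[0,1]\times\mathcal{H}$ (with the $\ell^\infty$ product metric, the $\mathcal{H}$-coordinate carrying $\norm{\cdot}_\infty$) into $L^2(P)$ with Lipschitz constant controlled by $\norm{F_\mathcal{E}}_{L^2(P)}$. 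Covering $[0,1]$ by $O(1/\ve)$ intervals and $\mathcal{H}$ to precision $\ve$, we get $\log N(\ve,\mathcal{F},L^2(P)) \lesssim \log(1/\ve) + \log N(c\ve,\mathcal{H},\norm{\cdot}_\infty)$, whose square root is integrable near $0$ by Assumption \ref{A:ten}. Since $\mathcal{F}$ has envelope $1$ and finite uniform (indeed bracketing, after the same envelope argument) entropy integral, the bracketing/uniform-entropy Donsker theorem applies and $\mathcal{F}$ is $P$-Donsker and $P$-Glivenko--Cantelli. The complex-valued case reduces to the real case by treating real and imaginary parts separately, each of which inherits the entropy and envelope bounds.

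The main obstacle is the bookkeeping around the $t$-parameter in $\mathcal{F}$: one must confirm that adjoining the extra index $t\in[0,1]$ to the already-controlled class $\mathcal{H}$ only costs a logarithmic factor in the entropy, which is exactly what the Lipschitz estimate above delivers, and that this is compatible with the constant envelope. A secondary point requiring care is deducing $\E{Y^2}<\infty$ — or whatever moment is needed for the $\mathcal{E}$-envelope — from Assumption \ref{A:nine}(4) together with Assumption \ref{A:nine}(3); this is routine but should be stated. Everything else is an application of the standard preservation results for Donsker classes under Lipschitz maps (\cite{VW1996}, Theorem 2.10.6) and the bracketing/uniform-entropy central limit theorems cited above.
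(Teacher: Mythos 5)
Your proof takes essentially the same route as the paper's: the pointwise Lipschitz-with-envelope parametrization of $\mathcal{F}$ by $[0,1]\times\mathcal{H}\times\{0,1\}$, an envelope linear in $|Y|$ with finite second moment, and the bracketing Donsker theorem (VW Theorem 2.5.6) after converting the sup-norm covering of $\mathcal{H}$ into $L^2(P)$-brackets --- a step the paper executes by citing VW Theorem 2.7.11 where you sketch the covering by hand. Your remark that the constant $1$ also serves as an envelope for $\mathcal{F}$ is true but inessential; the $L^2$-integrable Lipschitz weight $|Y|+C$ is what actually carries the bracketing argument, exactly as in the paper.
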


As we are dealing with classes of potentially complex valued random functions, we designate an empirical process $\mathbb{G}_n = \sqrt{n} ( \mathbb{P}_n - \mathbb{P})$ indexed by a class, say $\F$, $P$-Donsker if it converges in the weak sense to a complex valued Gaussian process $\mathbb{G}$ (that is, a process whose marginals are all joint complex Gaussian distributions) which takes on values in $L^\infty(\F)$. All of the standard results from empirical process theory apply to complex-valued functions, which can be readily seen by limiting consideration to the real and complex parts of functions in $\F$ individually, and then noting that the union of two Glivenko-Cantelli classes is clearly Glivenko-Cantelli, whilst the union of two $P$-Donsker classes is also $P$-Donsker (\cite{K2008}, Corollary 9.31). 

From Lemma \ref{L:covering}, conclude (continuing to use the convention that $0/0 = 0$ when calculating conditional expectations with respect to empirical measure $\EE{n}{\cdot}$): 
\begin{proposition}\label{P:donsker} Let $\F_0 \equiv [0,1] \times \H$. Then for all $t \in [0,1]$ and $h \in \H$, 
one has the convergence 
\begin{align} \label{E:y0}
&\sqrt{n} \Bigg[\EE{n}{e^{it(Y - h(X))}|W = 0}  - \EE{n}{e^{it(Y - h(X))}|W = 1} \\
& \qquad \qquad \qquad - \E{e^{it(Y - h(X))}|W = 0}  + \E{e^{it(Y - h(X))}|W = 1}  \Bigg]  \rs \mathbb{D}(t,h), \nonumber
\end{align}
where $\mathbb{D}$ is a tight mean-zero Gaussian process in $\ell^\infty(\F_0)$. 
\end{proposition}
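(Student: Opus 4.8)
The plan is to derive \eqref{E:y0} from the Donsker property of the class $\mathcal{F}$ established in Lemma \ref{L:covering}, by expressing the conditional-expectation differences as continuous functionals of the empirical process $\mathbb{G}_n$ indexed by $\mathcal{F}$ and then invoking the functional delta method. First I would fix the relevant subclasses: for each $\ell \in \{0,1\}$ write $\mathcal{F}^\ell \equiv \{ (y,x,w) \mapsto e^{it(y - h(x))} \one_{w = \ell} : t \in [0,1], h \in \mathcal{H} \} \subset \mathcal{F}$, and also consider the (finite, hence trivially Donsker) class $\{ \one_{w = 0}, \one_{w = 1}\}$. For $h \in \mathcal{H}$, $t \in [0,1]$, $\ell \in \{0,1\}$ let
\[
A_n^\ell(t,h) \equiv \mathbb{P}_n\!\left[ e^{it(Y - h(X))} \one_{W = \ell}\right], \qquad B_n^\ell \equiv \mathbb{P}_n[\one_{W = \ell}],
\]
with population analogues $A^\ell(t,h)$, $B^\ell = \P{W = \ell}$, so that $\EE{n}{e^{it(Y-h(X))} \mid W = \ell} = A_n^\ell(t,h)/B_n^\ell$ whenever $B_n^\ell > 0$ (which happens eventually almost surely since $B^\ell > 0$ by Assumption \ref{A:nine}(2)).

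Next I would apply Lemma \ref{L:covering}, together with the fact (cited in the text, \cite{K2008} Cor.\ 9.31) that a finite union of $P$-Donsker classes is $P$-Donsker, to conclude that the joint process
\[
\big( \sqrt{n}(A_n^0(t,h) - A^0(t,h)),\ \sqrt{n}(A_n^1(t,h) - A^1(t,h)),\ \sqrt{n}(B_n^0 - B^0),\ \sqrt{n}(B_n^1 - B^1) \big)
\]
converges weakly in $\ell^\infty(\mathcal{F}_0) \times \ell^\infty(\mathcal{F}_0) \times \mathbb{R} \times \mathbb{R}$ to a tight jointly Gaussian limit $(\mathbb{A}^0, \mathbb{A}^1, \beta^0, \beta^1)$ with mean zero. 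Here $\mathcal{F}_0 = [0,1] \times \mathcal{H}$, and joint convergence follows because all four coordinates are pushforwards of the single empirical process indexed by the union class. Then I would write the quantity in \eqref{E:y0} as
\[
\sqrt{n}\left[ \frac{A_n^0(t,h)}{B_n^0} - \frac{A_n^1(t,h)}{B_n^1} - \frac{A^0(t,h)}{B^0} + \frac{A^1(t,h)}{B^1}\right]
\]
and apply the delta method for the map $\phi: (a^0, a^1, b^0, b^1) \mapsto a^0/b^0 - a^1/b^1$, which is Hadamard-differentiable (uniformly in $(t,h)$, since the $A^\ell$ are uniformly bounded by $1$ and the $b^\ell$ are bounded away from $0$ near the truth) with derivative $(\delta a^0, \delta a^1, \delta b^0, \delta b^1) \mapsto \delta a^0/B^0 - A^0/(B^0)^2 \delta b^0 - \delta a^1/B^1 + A^1/(B^1)^2 \delta b^1$. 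This yields \eqref{E:y0} with
\[
\mathbb{D}(t,h) = \frac{\mathbb{A}^0(t,h)}{B^0} - \frac{A^0(t,h)}{(B^0)^2}\,\beta^0 - \frac{\mathbb{A}^1(t,h)}{B^1} + \frac{A^1(t,h)}{(B^1)^2}\,\beta^1,
\]
which is a tight mean-zero Gaussian element of $\ell^\infty(\mathcal{F}_0)$ as a bounded linear image of a tight Gaussian process (continuous linear images preserve tightness and Gaussianity). The holomorphy remark in the parallel to Lemma \ref{L:unitinterval} is not needed here beyond ensuring the objects are well-defined.

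The main obstacle I anticipate is not the delta-method mechanics but verifying that everything can be indexed \emph{uniformly over} $(t,h) \in \mathcal{F}_0$ — i.e.\ that the Hadamard differentiability of $\phi$ holds in the appropriate uniform (tangentially to $\ell^\infty(\mathcal{F}_0)$) sense, and that the ratio map remains well-defined along the approximating sequences. This is handled by noting $|A_n^\ell(t,h)| \le B_n^\ell$ pointwise so the ratios lie in the closed unit disc, $B_n^\ell \to B^\ell > 0$ uniformly (indeed $B_n^\ell$ does not depend on $(t,h)$), and the derivative of $\phi$ is a bounded linear operator with norm controlled uniformly in $(t,h)$; standard versions of the functional delta method (e.g.\ \cite{VW1996}, §3.9) then apply directly. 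The one genuinely careful point is the joint weak convergence of the $A_n^\ell$-processes with the scalars $B_n^\ell$, which I would secure by exhibiting all of them as images of the single $\mathbb{G}_n$ on the union class $\mathcal{F} \cup \{\one_{W=0}, \one_{W=1}\}$ and appealing to the continuous mapping theorem for the coordinate projections.
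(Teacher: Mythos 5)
Your proof is correct and rests on the same inputs as the paper's (Lemma~\ref{L:covering} for the Donsker property of $\F$, the functional delta method, and the fact that continuous linear images preserve tightness and Gaussianity), but the decomposition is genuinely different and, in fact, cleaner. You treat the ratio map $(a^0,a^1,b^0,b^1)\mapsto a^0/b^0 - a^1/b^1$ as a single Hadamard-differentiable map from $\ell^\infty(\F_0)^2\times\R^2$ to $\ell^\infty(\F_0)$, apply the delta method once, and read off $\mathbb{D}$ explicitly as the derivative applied to the joint Gaussian limit. The paper instead builds a map $\phi$ into $L^\infty(\F\cup\F')$ in which the $\F'$-coordinates carry only the variation of the denominator applied to the \emph{population} numerators, applies the delta method to this $\phi$, multiplies by an auxiliary process $Y_n$ (which holds the empirical inverse probabilities and converges to a constant) via Slutsky's theorem, and then collapses $\F\cup\F'$ down to $\F_0$ with a further linear map $\rho$ followed by algebraic cancellation and Bayes' rule to recognize the left side of \eqref{E:y0}. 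Put differently: the paper splits $\frac{A_n}{B_n}-\frac{A}{B}$ as $\frac{A_n - A}{B_n} + A\big(\frac{1}{B_n}-\frac{1}{B}\big)$ and handles the two pieces by Slutsky and the delta method respectively, whereas you linearize the whole ratio at once. Both are valid; yours is the more standard and more transparent route, and it avoids the bookkeeping with the auxiliary index set $\F'$. One small observation: the class $\{\one_{w=0},\one_{w=1}\}$ you adjoin to obtain joint convergence of $(A_n^\ell, B_n^\ell)$ is already contained in $\F$ (take $t=0$), so no genuine union-class step is needed; the paper exploits exactly this fact implicitly by evaluating $X_n$ at $(0,g,\ell')$.
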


As a consequence of Lemma \ref{L:covering} and Proposition \ref{P:donsker}, we are able to state an estimator for the identified set $\H_0$ by substituting for moment equalities with their finite sample analogues. Let $\eta_n \ra 0$ be a positive sequence, and define 
\begin{align*}
\hat{\H}_n = \Bigg\{ &h \in \mathcal{H}: \,| \EE{n}{Y - h(X)}| \le \eta_n \\
&\text{ and } \sup_{t \in [0,1]} \left| \EE{n}{ e^{it(Y - h(X))}| W = 0} - \EE{n}{ e^{it (Y - h(X)}|W = 1}\right|\le \eta_n \Bigg\}, 
\end{align*}
where $\EE{n}{\cdot}$ denotes the sample mean. Then we have the following convergence result: 
\begin{proposition} \label{P:donsker2} Under Assumptions \ref{A:nine} and \ref{A:ten},
if $\eta_n \ra 0$ and $\eta_n \sqrt{n} \ra \infty$ then for all sequences $(\alpha_n)_{n \in \N}$ satisfying $\liminf_{n \ra \infty} \frac{\alpha_n}{2 \eta_n } > 1$,
\begin{align*}
&\P{\H_0 \subset \hat{\H}_n} \ra 1 \text{ and } 
\P{\H_{\alpha_n} \cap \hat{\H}_n = \emptyset} \ra 1,
\end{align*}
where  \small
\[\H_\alpha = \left\{ h \in \mathcal{H}: \, \E{Y - h(X)} \ge \alpha \text{ or } \sup_{t \in [0,1]} \left| \E{ e^{it(Y - h(X))}| W = 0} - \E{ e^{it (Y - h(X))}|W = 1}\right| \ge \alpha \right\}. 
\]
\normalsize
Moreover, if $\eta_n \propto n^{-\gamma}$ and $\log N(\ve, \F, \norm{\cdot}_\infty) \le K \left( \frac{1}{\ve} \right)^\omega$ for some $\gamma \in (0,1/2)$ and $\omega \in (0,2)$ then 
\begin{align*}
\one_{\H_0 \subset \hat{\H}_n} \ca 1 \text{ and }  \one_{\H_{\alpha_n} \cap \hat{\H}_n = \emptyset} \ca 1.
\end{align*}
\end{proposition}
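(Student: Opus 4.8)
The plan is to collapse the two moment discrepancies defining $\hat\H_n$ and $\H_0$ into a single scalar functional and then reduce the whole proposition to a rate bound on one uniform empirical error. For $h\in\H$ set
\[
\Phi(h)\equiv\bigl|\E{Y-h(X)}\bigr|\vee\sup_{t\in[0,1]}\bigl|\E{e^{it(Y-h(X))}|W=0}-\E{e^{it(Y-h(X))}|W=1}\bigr|,
\]
and let $\Phi_n$ be the same object with the population (conditional) expectations replaced by their empirical analogues (convention $0/0=0$). By Lemma \ref{L:unitinterval}, $\H_0=\{h\in\H:\Phi(h)=0\}$; by construction $\hat\H_n=\{h\in\H:\Phi_n(h)\le\eta_n\}$; and since $\E{Y-h(X)}\ge\alpha_n$ forces $\Phi(h)\ge\alpha_n$, we have $\H_{\alpha_n}\subseteq\{h\in\H:\Phi(h)\ge\alpha_n\}$. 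Writing $r_n\equiv\sup_{h\in\H}|\Phi_n(h)-\Phi(h)|$, two deterministic implications do all the work: (i) if $r_n\le\eta_n$ then $\H_0\subseteq\hat\H_n$, since $\Phi_n(h)=|\Phi_n(h)-\Phi(h)|\le r_n$ on $\H_0$; and (ii) if $r_n\le\eta_n$ and $2\eta_n<\alpha_n$ then $\H_{\alpha_n}\cap\hat\H_n=\emptyset$, since $h\in\hat\H_n$ forces $\Phi(h)\le\Phi_n(h)+r_n\le2\eta_n<\alpha_n$. The hypothesis $\liminf_n\alpha_n/(2\eta_n)>1$ gives $2\eta_n<\alpha_n$ for all large $n$. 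Because $|\cdot|$, $\vee$, and $\sup$ are $1$-Lipschitz, $r_n$ is dominated by the sum of the uniform error of the mean piece $\sup_h|\EE{n}{Y-h(X)}-\E{Y-h(X)}|$ and that of the characteristic-function piece $\sup_{h,t,\ell}|\EE{n}{e^{it(Y-h(X))}|W=\ell}-\E{e^{it(Y-h(X))}|W=\ell}|$.

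For the in-probability statement I would invoke Lemma \ref{L:covering} (or directly Proposition \ref{P:donsker}): the classes $\mathcal{E}$ and $\F$ are $P$-Donsker, so the unconditional versions of both pieces are of order $n^{-1/2}$ in probability, uniformly; using $\EE{n}{\one_{W=\ell}}=\P{W=\ell}+O_P(n^{-1/2})$ with $\P{W=\ell}>0$ (Assumption \ref{A:nine}(2)) and the identity $A_n/B_n-A/B=((A_n-A)B-A(B_n-B))/(B_nB)$, valid with $|A_n|,|A|\le1$, one passes from the unconditional to the conditional expectations at the same $O_P(n^{-1/2})$ rate. Hence $r_n=O_P(n^{-1/2})$, and since $\eta_n\sqrt n\to\infty$ we get $r_n/\eta_n\cp0$, so $\Pr(r_n\le\eta_n)\to1$. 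Combined with implications (i) and (ii) this yields $\P{\H_0\subset\hat\H_n}\to1$ and $\P{\H_{\alpha_n}\cap\hat\H_n=\emptyset}\to1$.

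For the almost-sure statement the extra hypotheses ($\eta_n\propto n^{-\gamma}$ with $\gamma\in(0,1/2)$, polynomial sup-norm entropy $\log N(\ve,\F,\norm{\cdot}_\infty)\le K\ve^{-\omega}$ with $\omega\in(0,2)$) are used to upgrade the bound to $r_n=o(\eta_n)$ a.s. The entropy bound makes the uniform entropy integral $\int_0^1\ve^{-\omega/2}\d\ve$ finite, so by the maximal inequality $\E{\norm{\G_n}_{\F}}=O(1)$, and since the members of $\F$ have modulus $\le1$ (bounded envelope) Talagrand's concentration inequality gives sub-Gaussian tails for $\norm{\G_n}_{\F}$ above its mean up to the scale $\sqrt n$. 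Taking deviations of order $\delta_n n^{1/2-\gamma}$ with $\delta_n=n^{-\kappa}$ for a small $\kappa<(1-2\gamma)/2$ makes the tail probabilities summable ($\exp(-c\,n^{1-2\gamma-2\kappa})$), so Borel--Cantelli gives $\norm{\G_n}_{\F}\le\E{\norm{\G_n}_{\F}}+\delta_nn^{1/2-\gamma}$ eventually a.s., i.e.\ the characteristic-function piece of $r_n$ is $o(n^{-\gamma})$ a.s.\ (again via the ratio bound, using $\EE{n}{\one_{W=\ell}}\to\P{W=\ell}$ a.s.). The mean piece is split as $\EE{n}{Y-h(X)}-\E{Y-h(X)}=(\EE{n}{Y}-\E{Y})-(\EE{n}{h(X)}-\E{h(X)})$: the first term is a sample mean of $Y=g(X)+U$, whose moment generating function is finite near the origin (Assumption \ref{A:nine}(4), $g$ bounded), so Bernstein plus Borel--Cantelli gives $o(n^{-\gamma})$ a.s.; the second is the empirical process over the bounded class $\{h(X):h\in\H\}$, handled by the same Talagrand/Borel--Cantelli argument. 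Thus $r_n=o(\eta_n)$ a.s., and implications (i), (ii) give $\H_0\subseteq\hat\H_n$ and $\H_{\alpha_n}\cap\hat\H_n=\emptyset$ for all large $n$ a.s., i.e.\ the two indicators tend to $1$ almost surely.

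The main obstacle is the almost-sure rate $r_n=o(\eta_n)$: the Donsker property used for the in-probability claim only delivers $O_P(n^{-1/2})$, not an almost-sure bound, so one must pass to Talagrand-type concentration and a Borel--Cantelli argument, and one must exploit that $\F$ and $\{h(X):h\in\H\}$ are \emph{bounded} classes --- so that the Bernstein term in Talagrand's inequality is negligible at the $n^{1/2-\gamma}$ deviation scale when $\gamma<1/2$ --- whereas the mean class $\mathcal{E}$ has only an integrable envelope and must be peeled off and treated via a one-dimensional exponential inequality. A secondary, routine but unavoidable, step is the conversion of control of the \emph{unconditional} empirical process over $\F$ into control of the conditional-expectation discrepancies, which rests on $\P{W=\ell}>0$ and the uniform bound $|e^{it(\cdot)}|\le1$.
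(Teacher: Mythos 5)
Your high-level reduction --- collapse the two discrepancies into one scalar $\Phi(h)$, pass to the uniform error $r_n$, and hand the whole proposition to two deterministic implications plus a probabilistic bound on $r_n$ --- is a clean reorganization of what the paper does implicitly (the paper argues the two pieces directly and in parallel rather than via $r_n$, but the underlying estimate is the same). For the in-probability statement both proofs use Donsker/Portmanteau in essentially the same way and you are entirely consistent with the paper. For the almost-sure statement the technical route differs noticeably: the paper truncates $Y$ inside the exponent, bounds the sup-norm covering numbers of the truncated, rescaled class $\F_n$ in terms of $N(\ve,\H,\norm{\cdot}_\infty)$, and invokes the exponential-tail maximal inequality Theorem 2.14.10 of \cite{VW1996}; you instead exploit that $\F$ is uniformly bounded by $1$ to invoke Talagrand's concentration around $\E{\norm{\G_n}_\F}$, peel the mean class $\mathcal{E}$ into a single sample mean of $Y$ (handled by Bernstein, using the exponential moment of $Y$) and a bounded class $\{h(X):h\in\H\}$ (handled by Talagrand again). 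Both routes are valid and both use the same ingredients ($\gamma<1/2$ so the deviation scale $\sqrt{n}\eta_n\to\infty$ polynomially, and the exponential moment of $Y$), but your split-and-conquer treatment of $\mathcal{E}$ is somewhat cleaner than a ``similar application'' of the truncation argument to an unbounded class. One point worth being careful about, which mirrors an imprecision in the paper's own statement: the hypothesis ``$\log N(\ve,\F,\norm{\cdot}_\infty)\le K\ve^{-\omega}$'' cannot hold literally when $Y$ is unbounded (two distinct values of $t$ are sup-norm distance $2$ apart), and the paper's proof in fact reads the hypothesis as a bound on $N(\ve,\H,\norm{\cdot}_\infty)$; your step that goes from that hypothesis to $\E{\norm{\G_n}_\F}=O(1)$ should therefore be grounded in Lemma \ref{L:covering} (which controls the bracketing numbers of $\F$ via the envelope $F(y)=C+y$ and $N(\ve,\H,\norm{\cdot}_\infty)$) rather than a raw Dudley bound with sup-norm entropy for $\F$, and the polynomial entropy condition with $\omega<2$ is what guarantees that conversion produces a finite integral. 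With that modest clarification your argument is complete.
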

\noindent The set $\hat{\mathcal{H}}_n$ is thus a consistent estimator for $\mathcal{H}_0$ under the stated assumptions. Given that the bound $K!$ stated on the identified set in Theorem \ref{T:identified} is large, the reader may find it practical even in the discrete case to reduce consideration of $\mathcal{H}_0$ down to its image under a linear map $L$ as previously discussed, and then estimate it by $L \hat{\mathcal{H}}_n$ or a perhaps an interval containing that set.

\section{Identification Results when $X$ is a Continuous Random variable}\label{S:continuous}

We now turn to the case where $X$ is a continuous random variable, considering first a scalar $X$ and binary instrument $W$ satisfying our standing model \eqref{E:model} with $U \indep W$. We state conditions which allow the function $g$ to be point identified by the joint distribution of observables $(Y,X, W)$. Our main strategy here is to linearize the nonlinear operator imposed by the independence restriction, which allows us to more tractably state some sufficient conditions for identification. This avoids some of the difficulties encountered when dealing with strictly nonlinear operators (see for instance \cite{C2019}, \S 2, and \cite{CH2005}). Our approach, which is encapsulated in Theorem \ref{T:contident}, allows us to construct examples in which $g$ is point identified and then to prove that the conditions which enable point identification hold on a topologically generic set (Proposition \ref{P:resid}) of density functions. This result reinforces Proposition \ref{P:generic} in suggesting that our standing model with independence assumption might typically be enough to point identify $g$. Interestingly, our method generalizes to quantile regression models, and we are able to give a new identification result in that setting which extends the result of \cite{CH2005} (\S \ref{SS:quantile}). A discussion of our results is given in \S \ref{SS:comparison}.

The main assumption that we will make in this section (Assumption \ref{A:C3}) is in the familiar form of a restriction on the kernel of a linear operator, which is determined by the joint distribution of observables. To begin, consider the following base assumptions, which are regularity conditions on the joint distribution of $X$ and $U$:

\begin{asm}\normalfont \label{A:C1}
$X$ is supported on a compact set $\mathcal{X} \subset \R^{d_X}$, $d_X \in \N$. Moreover, $g \in C(\mathcal{X})$ and $\norm{g}_\infty =\sup_{x \in \mathcal{X}}|g(x)| \le B$ for some known constant $B \le \infty$. 
\end{asm}

\begin{asm} \normalfont \label{A:C2}
The joint distribution of $(X,U)|W = w$ is continuous with respect to (product) Lebesgue measure $\lambda$ on $\mathcal{X} \times \R$ with continuous densities $f_\ell(x,u)\equiv f_{W = \ell}(x,u)$, for $\ell \in \{0,1\}$.
\end{asm}

Assumption \ref{A:C1} is used mainly to provide notational simplicity and to ensure the existence and convergence of certain integrals. One could most expediently address the case of an $X$ with unbounded support by considering the pushforward of $X$ by a continuous and invertible mapping $\psi$, say the probability integral transform. If $\psi$ maps $\supp{X}$ into a bounded subset of $\X\subset \R^{d_X}$ then $\psi(X)$ may be considered instead of $X$ and assumptions on $g$ may be presumed to hold for $g \circ \psi$. Our notation and assumptions are for $X$ continuously distributed with respect to Lebesgue measure, but this regularity condition could readily be dropped in favor of another dominating measure (in the discrete $X$ case, consider counting measure). Note that Assumption \ref{A:C1} allows the choice $B = \infty$, which places no restrictions on $g$ beyond continuity. Assumption \ref{A:C2} is standard. 

The constant $B$, which is the provided bound on the sup-norm of $g$, is fundamental in what follows. For any $t \in \R$ denote by $f_w^t(x,u)$ the transformed function $f_w(x, t + u)$. Define now the operator $T$ on $L^2(\mathcal{X} \times \R) $ by
\begin{align*}
(Th)(t)& \equiv \int_\mathcal{X} \int_0^{2B} h(x,u) (f_0( x, t + u) - f_1(x, t + u)) \, \mathrm{d} u \, \mathrm{d}x \\
& = \inner{h, f_0^t - f_1^t}_{L^2(\mathcal{X} \times [0,2B]))} \\
& = \E{ h(X, U-t) \one_{U \in[ t, 2B + t]}|W = 0 } - \E{ h(X, U-t) \one_{U \in [t, 2B + t ]} | W = 1},
\end{align*} 
where the last line follows by substituting $v = t + u$ in the definition of $(Th)(t)$. The most important aspect of $T$ is that it takes the familiar form of a linear operator between Banach spaces. 

To develop some insight on $T$, we characterize the range of our operator $T$ in the following lemma. All $L^p$ spaces are taken with respect to Lebesgue measure unless otherwise noted, and spaces of continuous functions $C(\cdot)$ are equipped naturally with the sup-norm, which makes them complete metric spaces. 
\begin{lemma} \label{L:operator} Suppose Assumptions \ref{A:C1} and \ref{A:C2} hold. 
Then the linear mapping $T:L^\infty(\mathcal{X} \times \R)\ra  C(\R)$ is bounded. If $B < \infty$, then additionally $T: L^2(\mathcal{X} \times \R)\ra C(\R)$ is bounded. Moreover, if $f_0 - f_1 \in L^2(\mathcal{X} \times \R)$ and $B< \infty$ then $T$ is a bounded linear map from $L^2(\mathcal{X} \times \R)$ to $L^2(\R)$. 
\end{lemma}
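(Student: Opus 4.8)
\textbf{Proof proposal for Lemma \ref{L:operator}.}

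The plan is to establish boundedness of $T$ in each of the three claimed mapping settings by direct estimates, exploiting the fact that the $u$-integration in $(Th)(t)$ is over the \emph{finite} interval $[0,2B]$ (when $B < \infty$) and that $f_0, f_1$ are continuous, hence locally bounded on compact sets. I will treat continuity of the image function and boundedness of the operator norm separately, since these are logically independent verifications.

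\emph{Step 1: $T : L^\infty(\mathcal{X} \times \R) \to C(\R)$ is bounded.} For $h \in L^\infty$, write $(Th)(t) = \int_{\mathcal{X}} \int_0^{2B} h(x,u)(f_0(x,t+u) - f_1(x,t+u))\,\d u\,\d x$. Since $\int_{\mathcal{X}}\int_{\R} f_\ell(x,v)\,\d v\,\d x = 1$ (they are densities of $(X,U)\mid W = \ell$), Fubini and a crude bound give $|(Th)(t)| \le \norm{h}_\infty \int_{\mathcal{X}}\int_0^{2B}(f_0 + f_1)(x,t+u)\,\d u\,\d x \le 2\norm{h}_\infty$, so $\norm{T}_{L^\infty \to L^\infty} \le 2$ uniformly in $t$ (this also covers $B = \infty$). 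Continuity of $t \mapsto (Th)(t)$ follows from dominated convergence: as $t' \to t$, $f_\ell(x, t' + u) \to f_\ell(x, t+u)$ pointwise by continuity of $f_\ell$, and for $t'$ in a bounded neighborhood of $t$ the integrand is dominated by $\norm{h}_\infty$ times a fixed integrable function (using that $f_\ell$ restricted to $\mathcal{X} \times [\text{compact}]$ is bounded, or more simply that the tail mass of $f_\ell$ over $\mathcal{X} \times ([t-1,t+1+2B])^c$ is controlled). Hence $Th \in C(\R)$.

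\emph{Step 2: $T : L^2(\mathcal{X} \times \R) \to C(\R)$ is bounded when $B < \infty$.} Here the key point is that when $B < \infty$ the effective domain of integration in $h$ is $\mathcal{X} \times [0,2B]$, a set of finite Lebesgue measure, so $L^2$ controls the relevant part of $h$ via Cauchy--Schwarz. Write $(Th)(t) = \inner{h, f_0^t - f_1^t}_{L^2(\mathcal{X}\times[0,2B])}$ and bound $|(Th)(t)| \le \norm{h}_{L^2(\mathcal{X}\times[0,2B])} \cdot \norm{f_0^t - f_1^t}_{L^2(\mathcal{X}\times[0,2B])}$. It remains to bound $\sup_t \norm{f_0^t - f_1^t}_{L^2(\mathcal{X}\times[0,2B])}$: since $\int_{\mathcal{X}}\int_0^{2B}|f_\ell(x,t+u)|^2\,\d u\,\d x \le \bigl(\sup_{x,v} f_\ell(x,v)\bigr)\int_{\mathcal{X}}\int_0^{2B} f_\ell(x,t+u)\,\d u\,\d x \le \sup_{x,v} f_\ell(x,v) =: M_\ell < \infty$ (continuous density on a compact set in $x$ --- though one needs $f_\ell$ bounded on all of $\mathcal{X}\times\R$; this follows if $f_\ell$ is continuous and, say, vanishes at infinity in $u$, or one restricts to the realistic case of bounded densities), the $L^2$-norm of the kernel is uniformly bounded in $t$. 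Continuity of $Th$ follows as in Step 1 by dominated convergence, now with the $L^2$ Cauchy--Schwarz bound providing the dominating control.

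\emph{Step 3: $T : L^2(\mathcal{X}\times\R) \to L^2(\R)$ is bounded when additionally $f_0 - f_1 \in L^2(\mathcal{X}\times\R)$ and $B < \infty$.} View $(Th)(t)$ as a convolution-type integral: $(Th)(t) = \int_{\mathcal{X}}\int_\R h(x,u)\one_{[0,2B]}(u)\,(f_0 - f_1)(x, t+u)\,\d u\,\d x$. Extending $h$ by $\tilde h(x,u) = h(x,u)\one_{[0,2B]}(u)$ and $\phi = f_0 - f_1$, one has $(Th)(t) = \int_{\mathcal{X}}(\tilde h(x,\cdot) \star \check\phi(x,\cdot))(t)\,\d x$ where $\check\phi(x,u) = \phi(x,-u)$; by Minkowski's integral inequality and Young's convolution inequality $\norm{f \star g}_{L^2(\R)} \le \norm{f}_{L^2(\R)}\norm{g}_{L^1(\R)}$, $\norm{Th}_{L^2(\R)} \le \int_{\mathcal{X}} \norm{\tilde h(x,\cdot)}_{L^2(\R)} \norm{\phi(x,\cdot)}_{L^1(\R)}\,\d x$. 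Applying Cauchy--Schwarz in $x$ and using $\norm{\phi(x,\cdot)}_{L^1(\R)} \le \norm{f_0(x,\cdot)}_{L^1} + \norm{f_1(x,\cdot)}_{L^1}$ which is integrable over $\mathcal{X}$ (it integrates to $2$), together with $\int_{\mathcal{X}}\norm{\tilde h(x,\cdot)}_{L^2(\R)}^2\,\d x = \norm{\tilde h}_{L^2}^2 \le \norm{h}_{L^2}^2$, yields $\norm{Th}_{L^2(\R)} \le C\norm{h}_{L^2}$. Alternatively, one may avoid convolution machinery entirely by noting that the finite-measure support $\mathcal{X}\times[0,2B]$ lets us bound the Hilbert--Schmidt norm of the integral kernel $(t; x,u) \mapsto \phi(x,t+u)\one_{[0,2B]}(u)$ directly via $\int_\R\int_{\mathcal{X}}\int_0^{2B}|\phi(x,t+u)|^2\,\d u\,\d x\,\d t = 2B\,\norm{\phi}_{L^2(\mathcal{X}\times\R)}^2 < \infty$ (by Tonelli, substituting $v = t+u$), so $T$ is even Hilbert--Schmidt, hence bounded.

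\emph{Main obstacle.} The chief subtlety is uniform control in $t$ of the kernel norms $\norm{f_0^t - f_1^t}$, which requires knowing the densities $f_\ell$ are \emph{globally} bounded (or at least have sufficiently controlled tails in $u$), something that continuity on $\mathcal{X}\times\R$ alone does not give --- so one must either fold a mild uniform-boundedness hypothesis into Assumption \ref{A:C2} or observe that the $L^2$- and $L^1$-based arguments in Steps 2--3 sidestep this by invoking only $f_\ell(x,\cdot) \in L^1(\R)$ and the $L^2$ assumption on $f_0 - f_1$. The convolution/Hilbert--Schmidt route in Step 3 is the cleanest and I would present that; the local-boundedness argument is only needed for the pointwise $C(\R)$ claims in Steps 1--2, where a bounded-neighborhood domination suffices.
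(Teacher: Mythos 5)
Your proposal matches the paper's argument in its essential structure: the bound for $L^\infty \to C(\R)$ is elementary, the $L^2 \to C(\R)$ case is Cauchy--Schwarz over the finite-measure set $\X \times [0,2B]$, and the $L^2 \to L^2$ case is the Hilbert--Schmidt computation $\int_\R \int_\X \int_0^{2B} |\gamma(x,t+u)|^2\,\d u\,\d x\,\d t = 2B\,\norm{\gamma}_{L^2}^2$ via Tonelli and the change of variable $v = t+u$, which is precisely what the paper does (after an initial Cauchy--Schwarz in $(x,u)$). Two small points of divergence are worth noting. First, for the continuity of $t \mapsto (Th)(t)$ in the $L^\infty$ case the paper is careful about the possibility $B = \infty$: it truncates at a large $K$, controls the tail $\int_K^\infty |f_0(x,t'+u) - f_1(x,t'+u)|\,\d u$ uniformly near $t$, applies dominated convergence on the compact piece, and then invokes Scheff\'e's lemma to pass from convergence of the $L^1$ norms to convergence of the integrals; your direct dominated-convergence argument is cleaner when $B < \infty$ (where $\X\times[0,2B]$ has finite measure) but needs the same tail-truncation device when $B = \infty$, a case the first assertion must cover. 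Second, your Young's-inequality/convolution route in Step 3 is a genuine alternative to the paper's Hilbert--Schmidt bound and requires only $f_\ell(x,\cdot) \in L^1$ rather than $f_0 - f_1 \in L^2$; you correctly identify the Hilbert--Schmidt version as the one matching the hypotheses stated. Finally, your observation that uniform-in-$t$ control of $\norm{f_0^t - f_1^t}_{L^2(\X\times[0,2B])}$ is not an automatic consequence of continuity alone is a fair critique of the argument as written --- the paper simply asserts ``boundedness of the operator is clear'' and concentrates on continuity of the image, so your flag applies equally to the original.
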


The main assumption that we make is that the joint distribution of $X,U$ is sufficiently rich, in that the linear operator $T$ has small enough kernel. It will turn out that the stipulation $U \indep W$ implies that $\ker T$ is always nontrivial. Now, to proceed, we define the subspace of $x$-invariant functions $\mathcal{V}$ on $\X \times \R$ by 
\begin{align*}
\mathcal{V} \equiv \Big\{ h \in L^2(\mathcal{X} \times \R) :& \, h(x,u) = H(u) \text{ on } \R \text{ for some }H:\R\ra \R \\
& \text{ and } \supp {h} \subset \X \times [0,2B]
\Big\}.
\end{align*}
Define also the set of functions $\mathcal{W}$ by 
\begin{align*}
\mathcal{W} \equiv \left\{h \in \mathcal{V}: h(x,u) = \one_{u \in [0,\delta(x)]} \text{ for some nonconstant }\delta(x) \in C(\mathcal{X})_+,\, \sup_{x \in \X} |\delta(x)| \le 2B
\right\}
\end{align*}
where we have used $C(\mathcal{X})_+$ to denote the set of positive continuous real-valued functions over $\mathcal{X}$. Note that $\mathcal{V}$ forms a closed linear subspace of $L^2(\X \times \R)$, and the restriction of its elements to $\X \times [0,2B]$ is a closed linear subspace of $L^2(\X \times [0,2B])$. Therefore, projection onto $\mathcal{V}$ is a well-defined and bounded operator.

Our key assumption, which is given in two (nonequivalent) forms, is as follows. 
\begin{asm} \normalfont\label{A:C3}
When $T$ is viewed as an operator mapping $L^2(\X \times [0,2B]) \ra C(\R)$, either:
\begin{enumerate}
\item[(i)]
$\ker{T} \cap \mathcal{W} = \emptyset$,
\item[(ii)] or more specifically
$\ker{T}\subset  \mathcal{V}$.
\end{enumerate}
\end{asm}
Note that we impose the constraint that $\delta(x)$ is nonconstant in the definition of $\mathcal{W}$. Hence, it may be readily be seen that $\mathcal{V} \cap \mathcal{W} = \emptyset$ so that Assumption \ref{A:C3}(ii) is stronger than Assumption \ref{A:C3}(i). It will turn out that Assumption \ref{A:C3}(i) is necessary and sufficient for our purposes of identification, but Assumption \ref{A:C3}(ii) has a more meaningful interpretation that we now turn to.

It is a fact that our standing assumption that $U \indep W$ implies that $\mathcal{V} \subset \ker{T}$; indeed, for any $h \in \mathcal{V}$ there exists by definition some function $H$ such that: 
\begin{align*}
Th(t) = \E{ H(U - t) \one_{U \in [t, 2B +  t]} | W = 0 } - \E{ H(U - t) \one_{U \in [t, 2B + t]} | W = 1} = 0.
\end{align*}
So under independence Assumption \ref{A:C3}(ii) amounts to the condition that $\ker{T}$ is \textit{precisely} $\mathcal{V}$. Indeed, have the following result clarifying the relationship between Assumption \ref{A:C3}(ii) and independence $U \indep W$.

\begin{lemma} \label{L:indepequiv}
Under Assumptions \ref{A:C1} and \ref{A:C2}, independence $U \indep W$ is equivalent to the inclusion $\mathcal{V} \subset \ker{T}$. 
\end{lemma}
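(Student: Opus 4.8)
\textbf{Proof plan for Lemma \ref{L:indepequiv}.}

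The plan is to establish the equivalence in two directions. The forward direction, that $U \indep W$ implies $\mathcal{V} \subset \ker T$, has essentially already been carried out in the paragraph preceding the statement: for any $h \in \mathcal{V}$ there is a function $H$ with $h(x,u) = H(u)\,\one_{u \in [0,2B]}$, and then by the law of iterated expectations $Th(t) = \E{H(U-t)\one_{U \in [t, 2B+t]} \mid W = 0} - \E{H(U-t)\one_{U \in [t,2B+t]} \mid W = 1}$; under independence each conditional expectation equals the unconditional $\E{H(U-t)\one_{U \in [t,2B+t]}}$, so $Th \equiv 0$. I would just restate this cleanly, noting that the integrability needed to pass the expectations through is guaranteed by Assumption \ref{A:C2} (continuous densities on a set that is bounded in the $x$-direction) together with $H \in L^2[0,2B]$.

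For the converse, suppose $\mathcal{V} \subset \ker T$. Fix $H \in C_c(\R)$ supported in $[0,2B]$ and set $h(x,u) = H(u)$; then $h \in \mathcal{V}$, so $Th \equiv 0$, which reads
\begin{align*}
\int_0^{2B} H(u)\, f_0(x,t+u)\,\d u \,\d x = \int_0^{2B} H(u)\, f_1(x,t+u)\,\d u\,\d x \qquad \text{for all } t \in \R,
\end{align*}
i.e.\ $\int_\R H(v-t)\, \bar f_0(v)\,\d v = \int_\R H(v-t)\, \bar f_1(v)\,\d v$ for all $t$, where $\bar f_\ell(v) \equiv \int_\mathcal{X} f_\ell(x,v)\,\d x$ is the conditional density of $U$ given $W = \ell$. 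Since translates of a single nonzero $H \in C_c$ whose support can be taken arbitrarily small are enough to separate points (one may let $H$ approximate a delta and use continuity of $\bar f_0, \bar f_1$, or equivalently note that $H \ast \check{\bar f_0} = H \ast \check{\bar f_1}$ identically forces $\bar f_0 = \bar f_1$ by e.g.\ taking Fourier transforms, since $\widehat{H}$ is not identically zero), we conclude $\bar f_0 \equiv \bar f_1$ on $\R$. That is exactly the statement that the conditional density of $U$ given $W$ does not depend on $W$, hence $U \indep W$.

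The one technical point deserving care — and the step I expect to be the main (minor) obstacle — is the last implication: deducing equality of the marginal densities $\bar f_0, \bar f_1$ from equality of $\int H(v-t)\bar f_\ell(v)\,\d v$ over all translates $t$, using only that $h$ ranges over $\mathcal{V}$ (functions supported in $[0,2B]$ in $u$). One has to be a little careful that the window $[0,2B]$ does not obstruct the argument: since $t$ ranges over all of $\R$, the family $\{H(\cdot - t)\one_{[t,2B+t]}: t \in \R, H \in C_c[0,2B]\}$ still has dense linear span in, say, $L^2$ against which we test $\bar f_0 - \bar f_1$, so $\bar f_0 = \bar f_1$ a.e., and then everywhere by continuity. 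I would phrase this via convolution: fixing any single bump $H$ with $\widehat H \not\equiv 0$, the identity $(\bar f_0 - \bar f_1)\ast \tilde H \equiv 0$ with $\tilde H(s) = H(-s)$ and the non-vanishing of $\widehat H$ on a neighborhood forces $\widehat{\bar f_0 - \bar f_1} \equiv 0$ there, and analyticity of $\widehat{\bar f_\ell}$ (the $\bar f_\ell$ have compact support in the relevant direction? — not necessarily, so instead simply vary $H$ over a countable family whose transforms have no common zero) gives $\bar f_0 = \bar f_1$. Everything else is bookkeeping with the law of iterated expectations and Fubini, justified by Assumptions \ref{A:C1}--\ref{A:C2}.
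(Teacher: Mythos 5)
Your forward direction is exactly the paper's, so let's focus on the converse. The paper's proof is the \emph{delta-approximation} route you mention parenthetically: it takes $h_\ve = \ve^{-1}\one_{u \in [t+B,\,t+B+\ve]}$, passes to the limit $\ve \to 0$ using continuity of $f_0, f_1$ and dominated convergence (boundedness on the compact $\X \times [t,t+2B]$), and directly reads off $\bar f_0(t+B) = \bar f_1(t+B)$, then lets $t$ vary. That argument is local and entirely elementary: no Fourier analysis, no need to worry about a family of test functions with jointly nonvanishing transforms, and it stays comfortably in $L^2(\X \times [0,2B])$ since the shrinking indicator is square-integrable. Your preferred route via convolution and $\widehat{H}$ is also correct — fixing $H \in C_c[0,2B]$, the identity $(\bar f_0 - \bar f_1) \ast \tilde H \equiv 0$ does hold, and since $\bar f_0 - \bar f_1 \in L^1$ its transform is continuous, so the patch you anticipate (covering $\R$ by zero sets of $\widehat{H_k}$ for a countable family $\{H_k\}$) closes the gap and $\bar f_0 = \bar f_1$ a.e., hence everywhere by continuity. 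What the Fourier route buys you is a slightly more structural viewpoint (and it avoids having to worry about which side of a boundary point of $\supp U$ you mollify toward), but it does require you to be explicit about the nonvanishing-transform family, a bookkeeping step the mollification argument sidesteps entirely. The paper chose the shorter, more self-contained version; yours is a valid, slightly heavier-machinery alternative that you have correctly flagged and repaired.
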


Now, we are able to clarify the conditions necessary and sufficient to obtain point identification of $g$ under our stated regularity assumptions. The following theorem is an application of Green's theorem whose proof makes clear why we have introduced the operator $T$:
\begin{theorem}\label{T:contident}
Suppose Assumptions \ref{A:C1}, \ref{A:C2}, and the restrictions $U \indep W$ and $\E{U} = 0$ hold. Then Assumption \ref{A:C3}(i) holds if and only if $g$ is point identified in the set $\mathcal{G} \equiv \left\{ h \in C(\mathcal{X}): \norm{h}_\infty \le B \right\}$. 
In particular, Assumption \ref{A:C3}(ii) implies point identification of $g$. 
\end{theorem}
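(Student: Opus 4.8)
The plan is to unwind the independence restrictions into the operator $T$ and then read off identification from the kernel condition. Suppose $h \in \mathcal{G}$ is observationally equivalent to $g$, i.e.\ writing $U_h \equiv Y - h(X)$ we have $U_h \indep W$ and $\E{U_h} = 0$. Set $\delta(x) \equiv h(x) - g(x)$, so that $U_h = U - \delta(X)$. The first step is to translate the statement ``$U - \delta(X) \indep W$ and $U \indep W$'' into a family of moment equalities of the form $Tv = 0$ for a suitable $v$ built from the indicator of the region between $0$ and $\delta(x)$. Concretely, for any test function $\phi$ one has $\E{\phi(U - \delta(X))\mid W=0} = \E{\phi(U-\delta(X))\mid W=1}$ and $\E{\phi(U)\mid W=0}=\E{\phi(U)\mid W=1}$; subtracting and applying the fundamental theorem of calculus along the segment from $u$ to $u - \delta(x)$ (this is the ``Green's theorem'' step advertised before the statement) rewrites the difference as an integral of $f_0 - f_1$ against the indicator $\mathbbm{1}_{u \in [0,\delta(x)]}$ (with a sign flip where $\delta(x)<0$). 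After the change of variables $v = t+u$ built into the definition of $T$, this is exactly the assertion that the function $w_\delta(x,u) \equiv \mathbbm{1}_{u \in [0,\delta(x)]}\,\mathrm{sgn}(\delta(x))$ lies in $\ker T$.

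The second step handles the normalization. One needs the translates $\phi = \phi_t$ to sweep out enough information; here the compact-support bound $\norm{g}_\infty, \norm{h}_\infty \le B$ guarantees $\delta(x) \in [-2B,2B]$, so that $w_\delta$ has support in $\mathcal{X}\times[-2B,2B]$ and, after a harmless shift, can be taken in $\mathcal{X}\times[0,2B]$, i.e.\ in the domain $L^2(\mathcal{X}\times[0,2B])$ on which Assumption~\ref{A:C3} is posed. The condition $\E{U}=\E{U_h}=0$ forces $\E{\delta(X)}=0$, which in particular rules out $\delta$ being a nonzero constant; combined with $w_\delta \in \ker T$, if $\delta \not\equiv 0$ then $w_\delta$ is (up to sign on the pieces where $\delta$ changes sign) an element of $\mathcal{W}$ — a nonconstant indicator of $[0,\delta(x)]$ — contradicting Assumption~\ref{A:C3}(i) $\ker T \cap \mathcal{W} = \emptyset$. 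Hence $\delta \equiv 0$ and $h = g$, giving point identification. For the converse, one shows that if Assumption~\ref{A:C3}(i) fails, i.e.\ there is $w = \mathbbm{1}_{u\in[0,\delta(x)]} \in \ker T$ with $\delta$ nonconstant, then (possibly after recentering $\delta$ by a constant so that $\E{\delta(X)}=0$, which keeps $w$ in the kernel because $\mathcal{V}\subset\ker T$ by Lemma~\ref{L:indepequiv}) the function $h \equiv g + \delta$ lies in $\mathcal{G}$ when $B$ is chosen appropriately and satisfies $Y - h(X) \indep W$, $\E{Y-h(X)}=0$, so $g$ is not identified. Finally, since $\mathcal{V}\cap\mathcal{W}=\emptyset$ (noted in the text right after the statement of Assumption~\ref{A:C3}), Assumption~\ref{A:C3}(ii) $\ker T \subset \mathcal{V}$ implies (i), so the ``in particular'' clause is immediate.

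The delicate points, and where I expect the real work to be, are: (a) making the sign bookkeeping in the Green's-theorem step completely rigorous when $\delta(x)$ changes sign over $\mathcal{X}$ — one must be careful that the ``indicator of the region between $0$ and $\delta(x)$'' carries the correct orientation, and that the resulting $w_\delta$, with its sign, is genuinely an element of $\mathcal{W}$ (or, if one only gets $\ker T \cap \mathcal{W} = \emptyset$ for the \emph{unsigned} indicator, arguing that this still suffices); and (b) verifying that $w_\delta \in L^2(\mathcal{X}\times[0,2B])$ and that the identity $Tw_\delta = 0$ holds pointwise in $t$ for \emph{all} $t\in\R$, not merely a.e.\ — this uses continuity of $f_0,f_1$ (Assumption~\ref{A:C2}) and the boundedness in Lemma~\ref{L:operator}. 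Establishing that a nonconstant continuous $\delta$ with $\E{\delta(X)}=0$ really does produce an element of $\mathcal{W}$, rather than something in the closure of $\mathcal{W}$ or a signed combination, is the single step I would be most careful about; everything else is a matter of assembling Lemmas~\ref{L:operator} and~\ref{L:indepequiv} and the elementary observation $\mathcal{V}\cap\mathcal{W}=\emptyset$.
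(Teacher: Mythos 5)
Your architecture matches the paper's, and you have correctly located the one step on which the whole argument turns — but you have not closed it, and the signed-indicator route you sketch does not work as stated. The function $w_\delta(x,u) = \one_{u\in[0,\delta(x)]}\,\sgn(\delta(x))$ is \emph{not} an element of $\mathcal{W}$: that class consists only of unsigned indicators $\one_{u\in[0,\delta(x)]}$ with $\delta \in C(\X)_+$ nonconstant, so when $\delta = h - g$ changes sign on $\X$ (which it must, since $\E{\delta(X)}=0$ and $\delta\not\equiv 0$), Assumption \ref{A:C3}(i) says nothing about $w_\delta$. No shift of the support in the $u$-variable repairs this, because shifting the support does not remove the sign.

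The paper's resolution exploits the free threshold parameter rather than the support. Because $Y-g(X)\indep W$ and $Y-h(X)\indep W$ each yield a \emph{one-parameter family} of CDF equalities, one is free to compare $\P{U+\delta(X)\le t\mid W=\ell}$ against $\P{U\le s\mid W=\ell}$ at \emph{different} thresholds; taking $s = t - m_\delta$ with $m_\delta \equiv \max_{x\in\X}\delta(x)$ and writing
\begin{align*}
\P{U+\delta(x)\le t\mid W=\ell, X=x} = \P{U \le s + \delta_0(x)\mid W=\ell, X=x}, \qquad \delta_0(x) \equiv m_\delta - \delta(x) \ge 0,
\end{align*}
converts the comparison into one involving the \emph{nonnegative}, nonconstant, continuous function $\delta_0$ with $\norm{\delta_0}_\infty\le 2B$. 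The resulting region between the two CDFs is then the honest unsigned indicator $\one_{u\in[0,\delta_0(x)]}\in\mathcal{W}$, and the identity $T\bigl(\one_{u\in[0,\delta_0(x)]}\bigr)(t)=0$ for all $t$ delivers the contradiction with Assumption \ref{A:C3}(i). This translation-by-$m_\delta$ device is precisely why $T$ carries the shift argument $t$ and why $\mathcal{W}$ is defined over $C(\X)_+$; it is the actual content of the proof, not a bookkeeping detail, and your proposal leaves it open (your ``harmless shift'' moves the support, not the sign). Your converse direction and the deduction of the ``in particular'' clause from $\mathcal{V}\cap\mathcal{W}=\emptyset$ agree with the paper and are fine, modulo the same recentering-by-a-constant step, which the paper handles by setting $h = g - \delta + \E{\delta(X)}$ and invoking $U\indep W$ to absorb the constant.
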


Assumption \ref{A:C3} is clearly central and merits further investigation. 
To further understand it we can rephrase our requirement in terms more closely resembling typical completeness assumptions, which typically appear as:
\begin{align} \label{E:C4}
\E{f(X)|W} \overset{\mathrm{a.s.}}{=} 0  \implies f(X) \overset{\mathrm{a.s.}}{ = } 0, 
\end{align}
where $W$ is some instrument for an endogenous regressor $X$. This particular assumption has been addressed in varying forms; see \cite{A2017} for a recent treatment. 

To place our Assumption \ref{A:C3}(ii) in terms of the more familiar condition \eqref{E:C4}, let $V$ be a random variable distributed as uniform $\mathcal{U}[0,2B]$, independently of $(U,X)$, which we assume has a distribution with density $f(x,u)$ on $\mathcal{X} \times \R$ in accordance with Assumption \ref{A:C2}.

\begin{lemma} \label{L:compeleteinstrument}
Let $\tilde{U} \equiv U + V$ where $V\sim \mathcal{U}[0,2B]$ is independent of $(U,X,W)$. Then  under Assumptions \ref{A:C1} and \ref{A:C2}, Assumption \ref{A:C3}(ii) is equivalent to the following assertion: 
\[
\E{ h(X,V) | \tilde{U}, W = 0} \overset{\mathrm{a.s.}}{ = } \E{h(X,V )| \tilde{U}, W = 1} \implies h \in \mathcal{V}
\]
whenever $h \in L^2(\mathcal{X}\times [0,2B])$. 
\end{lemma}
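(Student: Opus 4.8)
The plan is to rewrite the defining relation of $\ker T$ directly in the language of conditional expectations given $\tilde U$. Fix $h \in L^2(\mathcal{X}\times[0,2B])$. Since $V \sim \mathcal{U}[0,2B]$ is independent of $(U,X,W)$, the conditional density of $(X,V)$ given $W=\ell$ is $\frac{1}{2B}f_{X|W=\ell}(x)\one_{v\in[0,2B]}$, and $\tilde U = U+V$ has, conditional on $W=\ell$, the density obtained by convolving $f_{U|W=\ell}$ with $\frac{1}{2B}\one_{[0,2B]}$. The key computation is to express $\E{h(X,V)\one_{\{W=\ell\}} \mid \tilde U = t}$ as an integral against $f_\ell$. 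Writing $U = \tilde U - V$ and integrating out $(X,U)$, one gets that the numerator of this conditional expectation (i.e.\ its density-weighted version) is proportional to
\[
\int_{\mathcal{X}}\int_0^{2B} h(x,v)\, f_\ell\big(x, t - v\big)\, \d v\, \d x .
\]
After the substitution $u = -v$ (or comparing with the definition of $T$ using $f_\ell^t(x,u)=f_\ell(x,t+u)$), this is exactly $\pm(T h)(t)$ up to the common strictly positive factor $\tfrac{1}{2B}$ and up to the orientation convention in the definition of $(Th)(t)$; I would double-check the sign/reflection bookkeeping here, as that is the one genuinely fiddly point. The upshot is: $(Th)(t) = 0$ for all $t$ if and only if the ($W$-indexed) density-weighted conditional expectations $\E{h(X,V)\one_{\{W=0\}}\mid \tilde U=t}$ and $\E{h(X,V)\one_{\{W=1\}}\mid\tilde U=t}$ agree for (Lebesgue-)a.e.\ $t$.

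Next I would convert this statement about density-weighted conditional expectations into the stated statement about ordinary conditional expectations given $\{W=\ell\}$. This is where Assumption \ref{A:nine}(2)/\ref{A:C3}'s standing hypothesis $\P{W=0}\P{W=1}\neq 0$ enters: dividing the $\ell$-th density-weighted conditional expectation by the marginal density of $\tilde U$ given $W=\ell$, times $\P{W=\ell}$, recovers $\E{h(X,V)\mid \tilde U, W=\ell}$; equality of the two density-weighted versions for a.e.\ $t$ is then equivalent (on the set where the relevant densities are positive, which is where the conditional expectations are defined, and using continuity of $f_\ell$ from Assumption \ref{A:C2}) to $\E{h(X,V)\mid\tilde U, W=0} \eqas \E{h(X,V)\mid\tilde U, W=1}$. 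Thus the relation displayed in the lemma holds for this particular $h$ if and only if $h \in \ker T$ (as an operator on $L^2(\mathcal{X}\times[0,2B])$). Since the implication "$\E{h(X,V)\mid\tilde U, W=0}\eqas\E{h(X,V)\mid\tilde U, W=1}\implies h\in\mathcal{V}$ for all $h\in L^2(\mathcal{X}\times[0,2B])$" is by the above equivalent to "$h\in\ker T \implies h\in\mathcal{V}$", i.e.\ $\ker T\subset\mathcal{V}$, this is precisely Assumption \ref{A:C3}(ii), completing the proof.

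I expect the main obstacle to be purely the measure-theoretic bookkeeping of the "null set" issues: the conditional expectations $\E{h(X,V)\mid\tilde U, W=\ell}$ are only defined up to sets of measure zero with respect to the law of $\tilde U\mid W=\ell$, and these two laws have different null sets in general, so I need to argue on the (open) set where both $\tilde U$-densities are strictly positive, and separately check that the behavior off that set is irrelevant because $h$ restricted there contributes nothing to $Th$. Continuity of the $f_\ell$ (Assumption \ref{A:C2}) and compactness of $\mathcal{X}$ make the convolution densities continuous, so this is manageable but needs to be stated carefully. The other mild point is confirming that $h\mapsto (x,v)\mapsto h(x,v)\one_{\{v\in[0,2B]\}}$ being $L^2(\mathcal{X}\times[0,2B])$ is exactly the right domain, which is immediate from the definitions of $\mathcal{V}$ and of $T$ given just above Assumption \ref{A:C3}.
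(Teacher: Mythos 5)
Your plan follows essentially the same route as the paper's proof: compute the conditional law of $(X,V)$ given $\tilde U = t$ and $W=\ell$ in terms of $f_\ell$, identify the resulting kernel as $T$ (up to a reflection in the $v$-variable), and then pass from "numerators agree" to "conditional expectations agree." Two small points you flagged as fiddly are worth pinning down precisely, since they are exactly where the paper is careful. First, the change of variables you want is not $u=-v$ (that shifts the $u$-range to $[-2B,0]$); it is $v\mapsto 2B-v$ together with replacing $t$ by $t+2B$, which converts $\int_0^{2B} h(x,v)\,f_\ell(x,t-v)\,\d v$ into $\int_0^{2B} h(x,2B-v')\,f_\ell(x,(t-2B)+v')\,\d v'$ — i.e.\ $T$ applied to the reflected function $h(x,2B-\cdot)$; the paper encodes the same bookkeeping via the substitution $\widetilde V = 2B - V$ at the end. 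Since reflection in $v$ is a linear bijection of $L^2(\mathcal{X}\times[0,2B])$ onto itself which preserves $\mathcal{V}$, this causes no loss. Second, when you divide through to pass from the density-weighted equality (which is what $Th=0$ is) to equality of $\E{h(X,V)\mid\tilde U,W=0}$ and $\E{h(X,V)\mid\tilde U,W=1}$, you need the two denominators $\int_0^{2B} f_{U\mid W=\ell}(t-v)\,\d v = \P{U\in[t-2B,t]\mid W=\ell}$ to coincide for $\ell=0,1$; this is where the standing independence $U\indep W$ enters and should be invoked explicitly (the paper does so when it writes $\P{U\in[t,t+2B]\mid W=0}=\P{U\in[t,t+2B]\mid W=1}$). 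With those two adjustments your argument matches the paper's.
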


\subsection{More on Assumption \ref{A:C3}}

Given its utility, we wish to explore conditions under which the stronger Assumption \ref{A:C3} holds, in particular with respect to the conditional density functions $f_0(x,u)$ and $f_1(x,u)$. To this end, define $\Gamma$ as the set of functions $\gamma(x,u)$ over $\mathcal{X} \times \R$ which are of the form $f_0(x,u)-f_1(x,u)$, where $f_0$ and $f_1$ are proper density functions, i.e.\
\[
\Gamma \equiv \left\{ \gamma : \, \gamma = f_0 - f_1, \text{ } f_0, f_1 \text{ are density functions over }\mathcal{X} \times \R
\text{ satisfying } U \indep W \right\}.
\]
We use $U \indep W$ to indicate that one should have the relation $\int_\X f_0(x,u) \, \mathrm{d}x = \int_\X f_1(x,u) \, \mathrm{d} x$, for a.e.\ $u$, i.e.\ equality of the conditional distributions of $U$ given $W$, almost everywhere.
It can be seen that 
\begin{align} \label{E:closedset}
\Gamma = \left\{ \gamma: \norm{\gamma}_{L^1(\X \times \R)} \le 2, \, \int_\X \gamma(x,u) \, \mathrm{d} x  = 0 \text{ for a.e. }u,  \right\}.
\end{align}
An alternate formulation is to impose the moment condition that $f_0$ satisfies $\int_\R \int_\X u f_0(x,u) \, \mathrm{d}x \, \mathrm{d} u = 0$ which has been employed throughout the paper. It may readily be seen that: 
\small
\begin{align*}
&\left\{ \gamma : \, \gamma = f_0 - f_1, \text{ } f_0, f_1 \text{ are density functions over }\mathcal{X} \times \R
\text{ satisfying } U \indep W \text{ and } \E{U} = 0\right\}\\
& = \left\{ \gamma: \norm{\gamma}_{L^1(\X \times \R)} \le 2, \, \int_\X \gamma(x,u) \, \mathrm{d} x  = 0 \text{ for a.e. }u, \, \int_\R \int_\X  u\, \gamma^+(x,u) \, \mathrm{d}x \, \mathrm{d}u = 0 \text{ if } \norm{\gamma}_{L^1(\X \times \R)} = 2 \right\},
\end{align*}
\normalsize
where inclusion in one direction is clear and inclusion in the other direction follows from the fact that, given $\gamma$ satisfying $\norm{\gamma}_{L^1(\X \times \R)} \le 2$ and $\int_\X \gamma(x,u) \, \mathrm{d} x \, \mathrm{d} u = 0$ for \text{a.e.} $x$, one may set
\begin{align}
f_0 &= \gamma^+ + \left( 1 - \int_\X \int_\R \gamma(x,u)^+ \, \mathrm{d} u \, \mathrm{d} x \right) \rho \label{E:reformulation}\\ 
f_1 &= \gamma^-  + \left( 1 - \int_\X \int_\R \gamma(x,u)^+ \, \mathrm{d} u \, \mathrm{d} x \right) \rho, \nonumber
\end{align}
where $\gamma = \gamma^+- \gamma^-$, $\gamma^+, \gamma^- \ge 0$, and $\rho$ is an arbitrary probability density function defined on $\X \times \R$ chosen to satisfy $\int_\R \int_\X uf_0(x,u)\, \mathrm{d} x \, \mathrm{d} u = 0$ (and if necessary to ensure integrability of the functions). Then $\gamma = f_0 - f_1$. Note that \eqref{E:reformulation} only slightly differs from \eqref{E:closedset} and only in those elements $\gamma$ for which $\norm{\gamma}_{L^1(\X \times \R)} = 2$. As the results established in this section concern density and genericity in the $L^1(\X \times \R)$ norm and are not affected by the immaterial change from \eqref{E:closedset} to \eqref{E:reformulation}, we work with the more convenient definition \eqref{E:closedset}. 
Note that \eqref{E:closedset} implies that $\Gamma$ is a closed set in $L^1(\X\times \R)$; if for instance $\gamma_n$ is a sequence occurring in $\Gamma$ such that $\gamma_n \ra \gamma$ in $L^1(\X \times \R)$ then the Lebesgue differentiation theorem implies
\begin{align*}
\int_\X \gamma(x,u) \, \mathrm{d} x & \eqas \lim_{b \ra 0^+} (2b)^{-1}|\X|^{-1} \int_\R \one_{-b \le u \le b } \int_\X \gamma(x,u) \, \mathrm{d} x \, \mathrm{d} u \\
& = \lim_{b \ra 0^+} (2b)^{-1} |\X|^{-1}\lim_{n \ra \infty} \int_\R \int_\X \one_{-b \le u \le b } \gamma_n(x,u) \, \mathrm{d} x \, \mathrm{d} u = 0. 
\end{align*}

For any $\gamma \in \Gamma$, define the linear operator $T_\gamma: L^2(\mathcal{X} \times \R ) \ra C(\R)$ (see Lemma  \ref{L:operator}) by 
\begin{align*}
T_\gamma h(t) \equiv \int_\X \int_0^{2B} h(x,u) \gamma(x,t + u) \, \mathrm{d} u \, \mathrm{d} x. 
\end{align*}
Then let $\Gamma_0 = \left\{ \gamma \in \Gamma: \, \gamma \text{ is continuous and } \ker{T_\gamma} = \mathcal{V} \right\}$. Then we have the following density result: 
\begin{proposition}
In the preceding notation, $\Gamma_0$ is dense in $\Gamma$ in the $L^1(\X \times \R )$-norm.  \label{P:densityapprox}
\end{proposition}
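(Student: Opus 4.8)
The plan is to show that an arbitrary $\gamma \in \Gamma$ can be $L^1$-approximated by continuous elements of $\Gamma$ whose associated operator $T_\gamma$ has kernel exactly $\mathcal{V}$. By Theorem \ref{T:contident} (more precisely the equivalence there, read with $f_0 - f_1 = \gamma$ playing the role of the density difference), having $\ker T_\gamma = \mathcal{V}$ is the strong form (ii) of Assumption \ref{A:C3}, so it suffices to manufacture, near any given $\gamma$, a continuous perturbation that kills all of $\ker T_\gamma$ except the unavoidable $x$-invariant part $\mathcal{V}$ (which is forced into the kernel by Lemma \ref{L:indepequiv} as soon as the $U \indep W$ constraint in the definition of $\Gamma$ holds). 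First I would fix $\gamma \in \Gamma$ and $\ve > 0$ and, by standard mollification, replace $\gamma$ by a continuous $\gamma_1 \in \Gamma$ with $\norm{\gamma - \gamma_1}_{L^1(\X \times \R)} < \ve/2$; one must check the mollified function still satisfies $\int_\X \gamma_1(x,u)\,\d x = 0$ for a.e.\ $u$ and $\norm{\gamma_1}_{L^1} \le 2$, which is immediate since convolution in the $u$-variable (and, with a little care near $\partial\X$, in the $x$-variable, e.g.\ by first extending by reflection) preserves both the zero-$x$-average property and the $L^1$ bound.

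The heart of the argument is the perturbation step: given the continuous $\gamma_1 \in \Gamma$, I would add a small continuous correction $\psi$ with $\int_\X \psi(x,u)\,\d x = 0$ for a.e.\ $u$ (so that $\gamma_1 + s\psi \in \Gamma$ for small $s$, after rescaling to keep the $L^1$-norm at most $2$) chosen so that $\ker T_{\gamma_1 + s\psi} = \mathcal{V}$. The natural choice is to build $\psi$ out of a countable family that ``sees'' every non-$x$-invariant direction: note that if $h \in L^2(\X \times [0,2B]) \setminus \mathcal{V}$, writing $\bar h(u) = |\X|^{-1}\int_\X h(x,u)\,\d x$ for its $x$-average and $h^\perp = h - \bar h$, the pairing $\langle h, \gamma^t \rangle$ defining $T_\gamma h(t)$ only depends on $h$ through its values, and $h \notin \mathcal{V}$ means $h^\perp \not\equiv 0$. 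I would fix a countable dense set $\{\phi_j\}$ in the orthogonal complement of $\mathcal{V}$ inside $L^2(\X \times [0,2B])$ and seek $\psi$ of the form $\sum_j 2^{-j} c_j \chi_j(x,u)$ where each $\chi_j$ is a continuous bump, with zero $x$-average, localized so that $T_{\chi_j}\phi_j \ne 0$ as a function of $t$; summing over $j$ with rapidly decaying coefficients, and using that $\gamma \mapsto \ker T_\gamma$ behaves upper-semicontinuously, one arranges that $T_{\gamma_1 + s\psi}$ annihilates no $\phi_j$, hence (by density of $\{\phi_j\}$ and continuity of $h \mapsto T_{\gamma_1 + s\psi}h$ in a suitable topology) annihilates nothing in $\mathcal{V}^\perp$, giving $\ker T_{\gamma_1 + s\psi} \subset \mathcal{V}$; the reverse inclusion is automatic. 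Finally choose $s$ small enough (and renormalize) that $\norm{s\psi}_{L^1} < \ve/2$, so the resulting element of $\Gamma_0$ is within $\ve$ of $\gamma$ in $L^1(\X \times \R)$.

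The main obstacle I anticipate is the perturbation step — specifically, making the ``$T_{\gamma_1 + s\psi}$ annihilates no nonzero $\mathcal{V}^\perp$ direction'' argument rigorous uniformly, rather than one direction at a time. Killing a single $\phi_j$ is easy (choose $\chi_j$ so the pairing is nonzero at one value of $t$, an open condition), but one needs the simultaneous conclusion over the whole infinite-dimensional space $\mathcal{V}^\perp$; the fix is to exploit continuity of $h \mapsto T_{\gamma_1+s\psi}h$ from $L^2$ into $C(\R)$ (Lemma \ref{L:operator}, since $B < \infty$ here for this step, or else truncate) together with a compactness/Baire-category reduction on the sphere of $\mathcal{V}^\perp$: it is enough to ensure $\norm{T_{\gamma_1+s\psi}h}_{C(\R)} > 0$ on a dense set of the unit sphere and then upgrade to all of it by a limiting argument, or alternatively to argue by contradiction that a persistent nonzero kernel element would force a continuum of coincidences ruled out by the explicit bumps $\chi_j$. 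A secondary, more routine, difficulty is the careful treatment of the boundary of $\X$ and of the $L^1$-norm constraint $\norm{\gamma}_{L^1} \le 2$ under both mollification and perturbation, handled by reflection-extension and an arbitrarily small rescaling, respectively.
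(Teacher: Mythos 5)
Your overall plan---approximate $\gamma$ by a continuous element of $\Gamma$ and then perturb to shrink the kernel of $T_\gamma$ down to $\mathcal{V}$---matches the paper's in outline, but the mechanism you propose for the perturbation step has a genuine gap, one you yourself flag but do not resolve. Choosing bumps $\chi_j$ so that $T_{\gamma_1+s\psi}\phi_j \neq 0$ for a dense countable family $\{\phi_j\}$ in $\mathcal{V}^\perp$ does not imply $\ker T_{\gamma_1 + s\psi} \cap \mathcal{V}^\perp = \{0\}$. A closed proper subspace of a Hilbert space is nowhere dense, so a dense countable set will generically miss it no matter what that subspace is; ``no $\phi_j$ lies in the kernel'' is therefore compatible with the kernel being any nontrivial closed subspace of $\mathcal{V}^\perp$. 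Neither the upper-semicontinuity of $\gamma\mapsto\ker T_\gamma$ nor a Baire-category reduction on the sphere of $\mathcal{V}^\perp$ gives you the upgrade: continuity of $h \mapsto T_\gamma h$ tells you $\ker T_\gamma$ is closed, which is exactly the reason density alone is not enough.

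The paper closes this gap with a construction your proposal does not contain. Rather than superposing bumps, it appends an \emph{entire orthonormal basis} $\{e_j\}$ of $L^2(\X\times[0,2B])$ (with zero $x$-average after projecting by $I-P_{\mathcal{V}}$) in pairwise disjoint $u$-intervals $[K+j(2B),\,K+(j+1)(2B)]$, damped by a factor $\psi(u)$ to keep the $L^1$-norm small. The point is that $T$ integrates $h(x,u)$ against $\gamma(x,t+u)$ only over the window $u\in[0,2B]$; evaluating $T_{\gamma_\ve}h$ at the specific shifts $t = K + j(2B)$ therefore isolates exactly the $j$th block and yields
\[
T_{\gamma_\ve}h\bigl(K + j(2B)\bigr) \;=\; \bigl\langle \psi\, e_j,\ (I-P_{\mathcal{V}})h\bigr\rangle_{L^2(\X\times[0,2B])}.
\]
Thus $T_{\gamma_\ve}h \equiv 0$ forces \emph{all} these inner products to vanish simultaneously, and since $\{e_j\}$ spans, $(I-P_\mathcal{V})h = 0$ and $h\in\mathcal{V}$. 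No density-plus-limiting argument is needed: the windowed structure of $T$ converts ``$T_{\gamma_\ve}h=0$'' directly into a complete family of orthogonality conditions at once. If you want to repair your proof you would need to replace the bump-by-bump construction with something that achieves this simultaneity, which is precisely the role of the disjoint-interval layout in the paper's argument.
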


As an immediate corollary, we obtain: 
\begin{corollary}\label{C:densityapprox}
For any continuous probability distributions $f_0$, $f_1$ on $\X \times \R$ satisfying $U \indep W$, and $\ve > 0$, there exist probability densities $f_0^\ve$ and $f_1^\ve$ such that $\norm{f_\ell^\ve - f_\ell}_{L^1(\X \times  \R )} < \ve$ for $\ell =0,1$, $f_0^\ve - f_1^\ve \in \Gamma_0$, and $\int_{\R} \int_\X u f_0^\ve(x,u) \, \mathrm{d}x \, \mathrm{d} u = 0$. 
\end{corollary}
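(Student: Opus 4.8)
The plan is to derive Corollary \ref{C:densityapprox} directly from Proposition \ref{P:densityapprox} together with the reformulation of $\Gamma$ recorded in \eqref{E:reformulation}. Given continuous densities $f_0,f_1$ on $\X\times\R$ with $U\indep W$, set $\gamma\equiv f_0-f_1$. By the discussion preceding \eqref{E:closedset}, $\gamma\in\Gamma$, and in fact $\gamma$ is continuous. Fix $\ve>0$. First I would invoke Proposition \ref{P:densityapprox} to obtain $\gamma^\ve\in\Gamma_0$ with $\norm{\gamma^\ve-\gamma}_{L^1(\X\times\R)}<\ve/2$; thus $\gamma^\ve$ is continuous and $\ker T_{\gamma^\ve}=\mathcal{V}$.

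The remaining task is to realize $\gamma^\ve$ as a difference $f_0^\ve-f_1^\ve$ of genuine probability densities that are individually close to $f_0,f_1$ and satisfy the normalization $\int_\R\int_\X u f_0^\ve(x,u)\,\d x\,\d u=0$. Here I would not use the Jordan-decomposition recipe \eqref{E:reformulation} verbatim, since it need not keep $f_\ell^\ve$ near $f_\ell$; instead I would write $f_0^\ve = f_0 + (\gamma^\ve-\gamma)^+$ and $f_1^\ve = f_1 + (\gamma^\ve-\gamma)^-$ modulo a small correction. More precisely, since $\int_\X(\gamma^\ve-\gamma)(x,u)\,\d x=0$ for a.e.\ $u$ (both $\gamma^\ve$ and $\gamma$ lie in $\Gamma$), the perturbation $\eta\equiv\gamma^\ve-\gamma$ has $\int_{\X\times\R}\eta=0$ but its positive and negative parts each have $L^1$-mass at most $\norm{\eta}_{L^1}<\ve/2$. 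Adding $\eta^+$ to $f_0$ and $\eta^-$ to $f_1$ gives nonnegative continuous functions with total masses $1+a$ and $1+a$ respectively, where $a=\norm{\eta^+}_{L^1}=\norm{\eta^-}_{L^1}\le\ve/2$; dividing each by $1+a$ restores unit mass, changes each in $L^1$ by $O(\ve)$, and preserves the difference up to a factor — so a final rescaling argument, or equivalently absorbing the discrepancy into a smooth bump of mass $O(\ve)$ supported away from the relevant region, yields $f_0^\ve-f_1^\ve=\gamma^\ve$ exactly with $\norm{f_\ell^\ve-f_\ell}_{L^1}<\ve$. One then adjusts $\rho$ (the free probability density in the construction, cf.\ \eqref{E:reformulation}) so that the first-moment condition $\int_\R\int_\X u f_0^\ve\,\d x\,\d u=0$ holds; this only affects $f_0^\ve$ and $f_1^\ve$ on a set of arbitrarily small mass and does not disturb the $\ve$-closeness or the continuity.

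The one point that requires care — and the main obstacle — is reconciling the \emph{exact} identity $f_0^\ve-f_1^\ve=\gamma^\ve$, the \emph{exact} normalization $\E{U}=0$ under $f_0^\ve$, and the \emph{approximate} closeness $\norm{f_\ell^\ve-f_\ell}_{L^1}<\ve$ simultaneously, while keeping everything continuous and genuinely a probability density (nonnegative, unit mass). This is a bookkeeping exercise: the rescaling by $1+a$ and the moment correction each introduce an $O(\ve)$ perturbation, so after shrinking the tolerance in the appeal to Proposition \ref{P:densityapprox} (say to $\ve/C$ for a suitable absolute constant $C$) all three requirements can be met. Note that $\ker T_{\gamma^\ve}=\mathcal{V}$ is a property of $\gamma^\ve$ alone, which is unchanged by the passage to $(f_0^\ve,f_1^\ve)$, so $f_0^\ve-f_1^\ve\in\Gamma_0$ follows immediately, completing the proof.
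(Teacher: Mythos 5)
Your proposal is essentially the paper's own proof: write $\gamma=f_0-f_1\in\Gamma$, apply Proposition \ref{P:densityapprox} to get $\gamma^\ve\in\Gamma_0$ close to $\gamma$, Jordan-decompose $\eta=\gamma^\ve-\gamma$, attach $\eta^+$ to $f_0$ and $\eta^-$ to $f_1$, renormalize, and use a small free density $\rho$ added to both to enforce the first-moment condition. The one place where your framing is off is the ``main obstacle'' you identify — forcing the exact identity $f_0^\ve-f_1^\ve=\gamma^\ve$. That exact identity is neither achieved by your rescaling (dividing both $f_0+\eta^+$ and $f_1+\eta^-$ by $1+a$ yields $f_0^\ve-f_1^\ve=(1+a)^{-1}\gamma^\ve$, a scalar multiple) nor needed: $\Gamma_0$ membership is a property of the kernel of $T_{(\cdot)}$, and $\ker T_{c\gamma^\ve}=\ker T_{\gamma^\ve}=\mathcal{V}$ for any $c>0$, while $c\gamma^\ve$ with $0<c\le 1$ still satisfies the $L^1$-bound and the slice condition defining $\Gamma$. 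The paper notes exactly this at the end of its proof (``$f_0^\ve-f_1^\ve$ is a scalar multiple of $\gamma_\ve$''), which lets it skip the nonnegativity bookkeeping that a genuine ``subtract a bump'' construction would require.
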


\subsubsection{Topological genericity of Assumption \ref{A:C3}(i) under a Lipschitz restriction} \label{SSS:generic}

By making an additional assumption on the smoothness of the function $g$, one can comment on the topologically genericity of Assumption \ref{A:C3}(i). Recall that in a given topological space $(\mathfrak{X}, \mathcal{T})$ a set is called \textbf{residual} or \textbf{comeagre} if it contains a countable intersection of open dense sets (a dense $G_\delta$ set). When $(\mathfrak{X}, \mathcal{T})$ is a complete metric space, the Baire Category theorem implies that any residual set contains a dense set (additionally, any residual set is not countable), and residuality is used to define a generic property on a given topological space. For instance, the irrational numbers comprise a residual set in $\R$ whereas the rationals do not. Recall that $\Gamma$ is a closed set in $L^1(\X \times \R )$, and thus it is a complete metric space with the topology induced by $L^1(\X \times \R )$.

In this section, we will confine $g$ to belong to the class of Lipschitz-continuous functions on $\X$. Then if $h$ is also in the identified set and Lipschitz continuous, the difference $\delta = g - h$ is Lipschitz continuous. By contrapositive, if there are no Lipschitz continuous functions $\delta$ such satisfy $\one_{u \in [0, \delta(x)]} \in \ker{T}$, then it follows straightforwardly by the method of Theorem \ref{T:contident} that $g$ is point identified under the Lipschitz restriction. Thus, analogously to $\mathcal{W}$, define 
\[
\mathcal{W}_\mathrm{Lip} \equiv \left\{ \one_{u \in [0, \delta(x)]} \text{ for some Lipschitz-continuous, nonconstant } \delta \in C(\X)_+\right\} 
\]
$\mathcal{W}_\mathrm{Lip}$ is the restriction of $\mathcal{W}$ to the Lipschitz case. 
Equipped with these definitions, we have the following result. 

\begin{proposition} \label{P:resid}
Let $\Gamma_1 \equiv \left\{ \gamma \in \Gamma: \, \ker{T_\gamma } \cap \mathcal{W}_{\mathrm{Lip}} = \emptyset \right\}$; then $\Gamma_1$ is a residual set in $\Gamma$ in the topology induced from $L^1(\X \times \R  )$. 
\end{proposition}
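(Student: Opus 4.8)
The plan is to exhibit $\Gamma_1$ as a countable intersection of open dense subsets of $\Gamma$ (equipped with the complete metric induced by $L^1(\X \times \R)$). The first step is to decompose the ``bad set'' $\Gamma \setminus \Gamma_1 = \{\gamma \in \Gamma: \ker T_\gamma \cap \mathcal{W}_{\mathrm{Lip}} \neq \emptyset\}$ along a countable exhaustion of $\mathcal{W}_{\mathrm{Lip}}$. For each pair of positive rationals $(L, \epsilon)$ with $\epsilon \le 2B$, let $\mathcal{W}_{L,\epsilon}$ be the set of indicators $\one_{u \in [0,\delta(x)]}$ where $\delta \in C(\X)_+$ is $L$-Lipschitz, satisfies $\epsilon \le \delta(x) \le 2B$, and has oscillation $\ge \epsilon$ (this last quantifies ``nonconstant''). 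Every element of $\mathcal{W}_{\mathrm{Lip}}$ lies in some $\mathcal{W}_{L,\epsilon}$, and by Arzel\`a--Ascoli each $\mathcal{W}_{L,\epsilon}$ is a \emph{compact} subset of $C(\X)$ (hence the associated family of indicators is compact in $L^2(\X \times [0,2B])$, using that $\delta \mapsto \one_{u \in [0,\delta(x)]}$ is continuous from $C(\X)$ into $L^2$ because the symmetric difference of the regions has Lebesgue measure $\le |\X| \cdot \norm{\delta - \delta'}_\infty$). Define $\Gamma_{L,\epsilon} \equiv \{\gamma \in \Gamma : \ker T_\gamma \cap \mathcal{W}_{L,\epsilon} = \emptyset\}$; then $\Gamma_1 = \bigcap_{(L,\epsilon) \in \Q_+^2,\, \epsilon \le 2B} \Gamma_{L,\epsilon}$, a countable intersection.

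The second step is to show each $\Gamma_{L,\epsilon}$ is \textbf{open} in $\Gamma$. Suppose $\gamma \notin \Gamma_{L,\epsilon}$ is a limit of $\gamma_n \in \Gamma_{L,\epsilon}$ in $L^1$; pick $w_n = \one_{u \in [0,\delta_n(x)]} \in \mathcal{W}_{L,\epsilon} \cap \ker T_{\gamma_n}$. Wait—that is the wrong direction; instead, to prove openness, suppose $\gamma \in \Gamma_{L,\epsilon}$ but some sequence $\gamma_n \to \gamma$ with $\gamma_n \notin \Gamma_{L,\epsilon}$, so there are $w_n \in \mathcal{W}_{L,\epsilon}$ with $T_{\gamma_n} w_n = 0$. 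By compactness of $\mathcal{W}_{L,\epsilon}$, pass to a subsequence with $w_n \to w \in \mathcal{W}_{L,\epsilon}$ in $L^2$. Using the bound from Lemma \ref{L:operator} (applied to the difference operator, or directly estimating $\norm{(T_{\gamma_n} - T_\gamma) h}_{\sup} \lesssim \norm{\gamma_n - \gamma}_{L^1} \norm{h}_\infty$ for $h$ a bounded indicator) together with $\norm{T_\gamma w_n - T_\gamma w}_{\sup} \to 0$ (by $L^2$-continuity of $T_\gamma$ when $B < \infty$, or by dominated convergence since $w_n \to w$ pointwise a.e.\ along a further subsequence and the $w_n$ are uniformly bounded), conclude $T_\gamma w = \lim (T_{\gamma_n} w_n) = 0$, so $w \in \ker T_\gamma \cap \mathcal{W}_{L,\epsilon}$, contradicting $\gamma \in \Gamma_{L,\epsilon}$. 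Hence $\Gamma_{L,\epsilon}$ is open.

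The third and hardest step is \textbf{density} of each $\Gamma_{L,\epsilon}$ in $\Gamma$. Given $\gamma \in \Gamma$ and $\eta > 0$, I need $\gamma' \in \Gamma$ with $\norm{\gamma' - \gamma}_{L^1} < \eta$ such that no $L$-Lipschitz nonconstant $\delta$ (with $\epsilon \le \delta \le 2B$) has $T_{\gamma'}\one_{u \in [0,\delta(x)]} = 0$. The natural route is to invoke Proposition \ref{P:densityapprox}: $\Gamma_0$ (continuous $\gamma$ with $\ker T_\gamma = \mathcal{V}$) is dense in $\Gamma$, and since $\mathcal{V} \cap \mathcal{W}_{\mathrm{Lip}} = \emptyset$ (as $\delta$ nonconstant forces $\one_{u\in[0,\delta(x)]}$ to genuinely depend on $x$), every $\gamma \in \Gamma_0$ already lies in $\Gamma_{L,\epsilon}$; thus $\Gamma_{L,\epsilon} \supset \Gamma_0$ is dense. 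This is the crux and it reduces the proposition almost entirely to Proposition \ref{P:densityapprox} — the main obstacle is really just confirming that the perturbation in \ref{P:densityapprox} can be taken to stay inside $\Gamma$ (which it does by construction there) and that no further care is needed for the Lipschitz subclass, which is immediate since $\mathcal{W}_{\mathrm{Lip}} \subset \mathcal{W}$ and $\ker T_\gamma = \mathcal{V}$ already avoids all of $\mathcal{W}$. Finally, assembling: $\Gamma_1 = \bigcap_{L,\epsilon \in \Q_+} \Gamma_{L,\epsilon}$ is a countable intersection of open dense sets in the complete metric space $\Gamma$, hence residual by the Baire category theorem. $\qed$
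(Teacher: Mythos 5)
Your proof is correct and follows essentially the same route as the paper's: the same countable exhaustion of $\mathcal{W}_{\mathrm{Lip}}$ into equi-Lipschitz, uniformly bounded, uniformly-nonconstant classes, the same Arzel\`a--Ascoli-plus-continuity argument (the paper phrases it as closedness of the complements $\Gamma_{a,b}$ rather than openness of the $\Gamma_{L,\epsilon}$), and the same appeal to Proposition \ref{P:densityapprox} for density via $\Gamma_0 \subset \Gamma_{L,\epsilon}$. The only cosmetic difference is that the paper works with $\|\delta\|_\infty \le b$ for arbitrary $b$ and invokes the $B=\infty$ case of Proposition \ref{P:densityapprox}, whereas you cap $\delta$ at $2B$; both are consistent with the paper's reading of $\mathcal{W}_{\mathrm{Lip}}$ as the Lipschitz restriction of $\mathcal{W}$.
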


The genericity result is also relevant when we consider densities, and not functions which are the difference of densities. For let $\mathfrak{X}$ denote the set of probability density functions over $\X \times \R$ equipped with the $L^1(\X \times \R )$ norm, and $\mathfrak{F} \subset \mathfrak{X} \times \mathfrak{X}$ the set of pairs of densities $(f_0, f_1)$ such that $f_0 - f_1 \in \Gamma$, with the induced product topology. Note that $\mathfrak{F}$ is manifestly closed therein. Let $\mathfrak{F}_1$ denote the set of pairs $(f_0, f_1)$ such that $f_0 - f_1 \in \Gamma_1$. Then:
\begin{corollary}\label{C:densityapprox2}
When $\mathfrak{F}$ is equipped with its induced product topology, $\mathfrak{F}_1$ is a residual set in $\mathfrak{F}$. 
\end{corollary}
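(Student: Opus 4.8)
The plan is to deduce Corollary \ref{C:densityapprox2} from Proposition \ref{P:resid} by transporting the residuality statement across the natural map $\Phi: \mathfrak{F} \to \Gamma$ given by $\Phi(f_0, f_1) = f_0 - f_1$. First I would observe that $\Phi$ is continuous, indeed $1$-Lipschitz, from $(\mathfrak{F}, \|\cdot\|_{L^1}\times\|\cdot\|_{L^1})$ into $(\Gamma, \|\cdot\|_{L^1})$, since $\|\Phi(f_0,f_1) - \Phi(g_0,g_1)\|_{L^1} \le \|f_0 - g_0\|_{L^1} + \|f_1 - g_1\|_{L^1}$; by the characterization \eqref{E:closedset} the image lands in $\Gamma$, and $\mathfrak{F}_1 = \Phi^{-1}(\Gamma_1)$ by definition. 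Since continuous preimages of open sets are open, and $\Gamma_1$ contains a dense $G_\delta$ set $\bigcap_n G_n$ (with each $G_n$ open dense in $\Gamma$), it suffices to show that $\Phi^{-1}(G_n)$ is open dense in $\mathfrak{F}$ for each $n$; openness is immediate from continuity of $\Phi$, so the whole content reduces to showing $\Phi^{-1}(G)$ is dense in $\mathfrak{F}$ whenever $G$ is open dense in $\Gamma$.

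The density step is where the real work sits, and it amounts to showing that $\Phi: \mathfrak{F} \to \Gamma$ is both surjective and, in a suitable sense, open — or at least that preimages of dense sets are dense. Surjectivity is essentially the reformulation argument already carried out in the discussion preceding Proposition \ref{P:densityapprox}: given $\gamma \in \Gamma$, the explicit construction in \eqref{E:reformulation} (with $\gamma = \gamma^+ - \gamma^-$ and an auxiliary density $\rho$ absorbing the residual mass) produces a pair $(f_0, f_1) \in \mathfrak{F}$ with $\Phi(f_0,f_1) = \gamma$, after noting that $\|\gamma\|_{L^1} \le 2$ guarantees $1 - \int\!\!\int \gamma^+ \ge 0$ so the construction yields genuine probability densities. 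For density of $\Phi^{-1}(G)$: fix $(f_0,f_1) \in \mathfrak{F}$ and $\varepsilon > 0$; since $G$ is dense in $\Gamma$, pick $\gamma' \in G$ with $\|\gamma' - (f_0 - f_1)\|_{L^1} < \varepsilon$; now I would modify the lifting construction so that the chosen pair $(f_0', f_1')$ with $\Phi(f_0',f_1') = \gamma'$ stays close to $(f_0, f_1)$ — concretely, set $f_0' = \gamma'^+ + c'\rho$ and $f_1' = \gamma'^- + c'\rho$ where $\rho$ is now chosen to be (close to) $f_0 \wedge f_1$ normalized, or more carefully $f_0' = (f_0 - f_1)^+ \wedge \text{(relevant piece)} + \dots$ so that when $\gamma' = f_0 - f_1$ one recovers $(f_0, f_1)$ itself, and perturbs continuously. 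The cleanest route is: write $f_0 = (f_0-f_1)^+ + m$, $f_1 = (f_0-f_1)^- + m$ where $m = f_0 \wedge f_1 \ge 0$ is the common overlap, and define $f_0' = \gamma'^+ + m + \big(\int(\gamma^+ - \gamma'^+)\big)\rho$, $f_1' = \gamma'^- + m + \big(\int(\gamma^- - \gamma'^-)\big)\rho$ for a fixed density $\rho$; then $\Phi(f_0',f_1') = \gamma' + \big(\int(\gamma^+-\gamma'^+) - \int(\gamma^--\gamma'^-)\big)\rho = \gamma'$ since $\int\gamma = \int\gamma' = 0$ forces the bracket to vanish, and $\|f_\ell' - f_\ell\|_{L^1} \le \|\gamma'^\pm - \gamma^\pm\|_{L^1} + |\int(\gamma^\pm - \gamma'^\pm)| \le 2\|\gamma'-\gamma\|_{L^1} < 2\varepsilon$ using $\|\gamma'^+ - \gamma^+\|_{L^1} \le \|\gamma' - \gamma\|_{L^1}$. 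One must also verify $f_0', f_1'$ are nonnegative and integrate to one and, in the $\|\gamma'\|_{L^1} = 2$ boundary case, satisfy the mean-zero constraint — the latter handled by the same choice of $\rho$ as in \eqref{E:reformulation}, noting this boundary set is $L^1$-negligible and the genericity statement is unaffected exactly as remarked in the paper.

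Putting the pieces together: $\mathfrak{F}_1 = \Phi^{-1}(\Gamma_1) \supset \Phi^{-1}\big(\bigcap_n G_n\big) = \bigcap_n \Phi^{-1}(G_n)$, each $\Phi^{-1}(G_n)$ is open (continuity of $\Phi$) and dense (by the lifting-with-control argument above), so $\mathfrak{F}_1$ contains a dense $G_\delta$ in $\mathfrak{F}$; since $\mathfrak{F}$ is closed in the complete metric space $\mathfrak{X} \times \mathfrak{X}$ it is itself a complete metric space and the Baire category theorem applies, making $\mathfrak{F}_1$ residual in $\mathfrak{F}$, which is the claim. The main obstacle I anticipate is the density step — specifically arranging the lift $\Phi^{-1}(\gamma')$ to remain $L^1$-close to a prescribed pair $(f_0, f_1)$ rather than merely existing; the overlap decomposition $f_0 = (f_0-f_1)^+ + (f_0 \wedge f_1)$ is the device that makes the lift depend continuously on $\gamma$ and keeps the perturbation controlled, and one should double-check that the small mass-correction terms (which can be negative) do not destroy nonnegativity, which holds for $\varepsilon$ small since $m$ and $\rho$ supply a fixed positive buffer only if $\int(\gamma^\pm - \gamma'^\pm)$ has the right sign — if not, one instead redistributes the deficit by shrinking $m$ slightly, a routine but slightly fussy adjustment that does not affect the $L^1$ estimates.
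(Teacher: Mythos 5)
Your proposal follows essentially the same route as the paper: pull back the dense $G_\delta$ set from $\Gamma_1 \subset \Gamma$ along the continuous map $(f_0,f_1)\mapsto f_0-f_1$, with openness of preimages automatic and density of preimages handled by a lifting argument in the spirit of Corollary \ref{C:densityapprox}. You spell out the lifting more explicitly (via the overlap decomposition $f_0=(f_0-f_1)^++f_0\wedge f_1$, which achieves $f_0'-f_1'=\gamma'$ exactly rather than the paper's scalar-multiple $c\gamma'$), and correctly flag the nonnegativity bookkeeping; the remainder matches the paper's two-line proof, which simply cites that earlier corollary for the density step.
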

In Corollary \ref{C:densityapprox2} and the definition of $\mathfrak{F}$ we do not impose the additional moment requirement that $\int_\R \int_\X uf_0(x,u) \, \mathrm{d} x \, \mathrm{d} u = 0$ considered in \eqref{E:reformulation} because the set of densities which satisfy this condition, and indeed the weaker condition of mere integrability of $uf_0(x,u)$, is not a closed subset of $L^1(\X\times \R)$. Imposing this requirement would require us to consider a stronger topology on $\mathfrak{F}$; results in this direction could certainly be made along the lines of Proposition \ref{P:densityapprox}, but the $L^1$ topology is arguably the most natural when discussing the $L^1$-closed set of probability density functions. 

\subsubsection{Identification when $U$ has Compact Support}

Given that the examples produced in Proposition \ref{P:densityapprox} and Corollary \ref{C:densityapprox} of conditional density functions which point identified $g$ had unbounded support, it may come as a surprise to the reader that there exist examples of density functions which point identify $g$ and have bounded support. For suppose that $\supp{U} \subset [-C_1, C_2]$ for fixed constants $C_1, C_2 > 0$. We show the following density result which is a corollary of Proposition \ref{P:densityapprox} and Corollary \ref{C:densityapprox}: 
\begin{corollary} \label{C:boundedsppt}
If $C_1 + C_2 > 2B$ then for any continuous probability densities $f_0, f_1$ on $\X \times [-C_1 , C_2]$ satisfying $U \indep W$ and $\E{U} = 0$ and any $\ve > 0$ there exist probability densities $f_0^\ve, f_1^\ve \in L^1(\X \times [-C_1, C_2])$ such that $\norm{f_\ell^\ve - f_\ell}_{L^1(\X \times \R)} < \ve$ for $\ell = 0 , 1$, $f_0^\ve - f_1^\ve \in \Gamma_0$, and $\int_\R \int_\X uf_0^\ve (x,u) \, \mathrm{d} x \, \mathrm{d} u = 0$. 
\end{corollary}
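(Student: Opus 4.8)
The plan is to deduce Corollary \ref{C:boundedsppt} directly from Proposition \ref{P:densityapprox} (and the reformulation in Corollary \ref{C:densityapprox}) by checking that the approximating densities produced there can be taken to have support inside $\X \times [-C_1, C_2]$ whenever $C_1 + C_2 > 2B$. The key observation is that the operator $T$ (and each $T_\gamma$) only ever ``sees'' the values of $f_0 - f_1$ on the slab $\X \times [\,t, 2B + t\,]$ as $t$ ranges over $\R$; more precisely, $\ker T_\gamma$ and its intersection with $\mathcal{V}$ depend on $\gamma$ only through the family of translates $\gamma^t$ restricted to $u \in [0,2B]$. Since the relevant functions $\one_{u\in[0,\delta(x)]}\in\mathcal{W}$ have $\sup_x|\delta(x)|\le 2B$, the entire identification argument of Theorem \ref{T:contident} takes place within a $u$-window of length $2B$. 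The hypothesis $C_1 + C_2 > 2B$ is exactly what guarantees that such a window of length $2B$ fits strictly inside $[-C_1, C_2]$, leaving room to perturb $f_0, f_1$ without leaving the support constraint.

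Concretely, first I would fix continuous densities $f_0, f_1$ on $\X \times [-C_1, C_2]$ with $U \indep W$ and $\E{U} = 0$, and set $\gamma = f_0 - f_1 \in \Gamma$, noting $\supp \gamma \subset \X \times [-C_1, C_2]$. Applying Proposition \ref{P:densityapprox}, for any $\ve > 0$ there is $\gamma^\ve \in \Gamma_0$ with $\norm{\gamma^\ve - \gamma}_{L^1(\X\times\R)} < \ve$. The point is that the construction underlying Proposition \ref{P:densityapprox} perturbs $\gamma$ only on the slab $\X \times [0, 2B]$ (this is where the candidate kernel elements in $\mathcal{V}$ and $\mathcal{W}$ live); after a harmless translation in $u$ we may assume the perturbation is supported in $\X \times I$ for an interval $I$ of length $2B$ with $I \subset (-C_1, C_2)$, which is possible precisely because $C_1 + C_2 > 2B$. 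Hence $\gamma^\ve$ still has support in $\X \times [-C_1, C_2]$, is continuous, and satisfies $\ker T_{\gamma^\ve} = \mathcal{V}$. Then I would recover densities $f_0^\ve, f_1^\ve$ from $\gamma^\ve$ via the reformulation \eqref{E:reformulation}, choosing the auxiliary density $\rho$ in that construction to itself be supported on $\X \times [-C_1, C_2]$ and to enforce the moment condition $\int_\R\int_\X u f_0^\ve(x,u)\,\d x\,\d u = 0$; this is possible since $[-C_1, C_2]$ has nonempty interior and we have freedom in $\rho$ beyond a single linear constraint. Finally $\norm{f_\ell^\ve - f_\ell}_{L^1} < \ve$ (after adjusting $\ve$ by the fixed constant $2$ from \eqref{E:reformulation}), $f_0^\ve - f_1^\ve = \gamma^\ve \in \Gamma_0$, and the support constraint holds.

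The main obstacle is verifying that the approximation in Proposition \ref{P:densityapprox} genuinely can be localized to a length-$2B$ window in the $u$-variable and then slid inside $(-C_1, C_2)$ — i.e.\ that one does not need to perturb $\gamma$ on $u$ outside $[0,2B]$ to kill the spurious kernel elements. This requires a careful reading of how $\Gamma_0$ is achieved: one must check that enforcing $\ker T_\gamma = \mathcal{V}$ is a condition on the behaviour of $\gamma$ only on slabs of width $2B$, and that a translate of such a slab can be placed strictly inside $[-C_1, C_2]$ under the gap hypothesis. A secondary, routine point is confirming that the moment normalization $\int u f_0^\ve = 0$ and the nonnegativity/integrability of $f_0^\ve, f_1^\ve$ can be simultaneously arranged by the choice of $\rho$; this is a linear-constraint counting argument and poses no real difficulty once the support localization is in hand.
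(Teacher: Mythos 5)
There is a genuine gap at the step you yourself flag as ``the main obstacle.'' Your argument rests on the claim that the construction in Proposition \ref{P:densityapprox} ``perturbs $\gamma$ only on the slab $\X \times [0,2B]$'' and that enforcing $\ker T_\gamma = \mathcal{V}$ is a condition on $\gamma$ only through a single window of width $2B$ that can be slid inside $(-C_1,C_2)$. That is not how the proposition works: its construction plants an entire orthonormal basis $\{e_j\}$ of $L^2(\X\times[0,2B])$ on the \emph{infinitely many} consecutive slabs $[K+j(2B),\,K+(j+1)(2B)]$, $j\ge 0$, and the kernel condition is extracted by evaluating $T_{\gamma_\ve}h$ at \emph{every} $t=K+j(2B)$ to recover all the inner products $\inner{\psi e_j,(I-P_\mathcal{V})h}$. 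The resulting $\gamma_\ve$ therefore has unbounded $u$-support by design. More fundamentally, while each single value $T_\gamma h(t)$ does only see a width-$2B$ slab of $\gamma$, the condition $\ker T_\gamma=\mathcal{V}$ quantifies over all $t\in\R$; once $\supp\gamma\subset\X\times[-C_1,C_2]$, only $t$ in the bounded interval $[-C_1, C_2-2B]$ yields nontrivial information, and it is precisely the nonobvious content of the corollary that this bounded family of inner products still pins $h$ down modulo $\mathcal{V}$. Your proposal asserts this without supplying a mechanism.

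The paper's proof supplies that mechanism and is structured quite differently. It first applies Corollary \ref{C:densityapprox} to get unbounded-support approximations $f_0',f_1'$ with $f_0'-f_1'\in\Gamma_0$, then mollifies them with a Gaussian kernel so that $u\mapsto f_\ell^\delta(x,u)$, and hence $t\mapsto T_{f_0^\delta-f_1^\delta}h(t)$, extends holomorphically to a strip. The hypothesis $C_1+C_2>2B$ enters exactly to make $[-C_1,\,C_2-2B]$ a nondegenerate interval: on that interval the truncated densities $f_\ell''=f_\ell^\delta\one_{u\in[-C_1,C_2]}$ give the same operator values as the untruncated ones, so $T_{f_0''-f_1''}h=0$ forces the analytic function $T_{f_0^\delta-f_1^\delta}h$ to vanish on a set with an accumulation point, hence identically; a deconvolution step (Fourier transform and Plancherel) then transfers this back to $T_{f_0'-f_1'}h=0$, and membership of $f_0'-f_1'$ in $\Gamma_0$ gives $h\in\mathcal{V}$. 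If you want to complete your argument you would need to replace the localization claim with something playing the role of this analyticity-plus-deconvolution bridge; as written, the proof does not go through.
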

Retaining the notation of \S \ref{SSS:generic},
let $\mathfrak{F}^{C_1,C_2}$ denote the set of elements in $\mathfrak{F}$ with support in $\X \times [-C_1, C_2]$. Similarly, let $\mathfrak{F}_1^{C_1, C_2}$ denote the elements $(f_0,f_1) \in \mathfrak{F}^{C_1,C_2}$ such that $\ker{T_{f_0 - f_1}} \cap \mathcal{W}_{\mathrm{Lip}} \cap \mathcal{W} = \emptyset$ (note now the dependence on the bound $B$ via $\mathcal{W}$). Then exactly the same arguments which led to Proposition \ref{P:resid} and Corollary \ref{C:densityapprox2} imply in light of Corollary \ref{C:boundedsppt} that
\begin{corollary}
If $C_1 + C_2 > 2B$ then $\mathfrak{F}_1^{C_1, C_2}$ is a residual set in $\mathfrak{F}^{C_1, C_2}$. 
\end{corollary}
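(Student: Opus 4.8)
The plan is to deduce this final corollary by transporting the residuality argument already established for $\mathfrak{F}_1 \subset \mathfrak{F}$ (Corollary \ref{C:densityapprox2}) down to the subspace of density pairs with support in $\X \times [-C_1, C_2]$, using Corollary \ref{C:boundedsppt} in place of Corollary \ref{C:densityapprox} as the density input. First I would observe that $\mathfrak{F}^{C_1, C_2}$, being defined by the closed constraint $\supp{(f_0, f_1)} \subset \X \times [-C_1, C_2]$ together with the closed constraints defining $\mathfrak{F}$, is itself a closed subset of $\mathfrak{F}$ and hence a closed (thus complete) metric space in the $L^1(\X \times \R)$-induced topology; so the Baire category framework applies verbatim. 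The role of the hypothesis $C_1 + C_2 > 2B$ is exactly to guarantee, via Corollary \ref{C:boundedsppt}, that $\Gamma_0 \cap \{\gamma : \supp{\gamma} \subset \X \times [-C_1, C_2]\}$ is dense in $\{\gamma \in \Gamma : \supp{\gamma}\subset \X \times [-C_1,C_2]\}$ in the $L^1$-norm — the same density statement that powered Proposition \ref{P:resid}, now localized to compactly supported perturbations.

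Next I would retrace the proof of Proposition \ref{P:resid} with this localized density statement substituted in. Recall that the argument there writes $\Gamma_1$ (or here its compact-support analogue) as a countable intersection of open sets: for each element of a countable dense family of Lipschitz functions $\delta$ with $\sup_x|\delta(x)| \le 2B$, the set of $\gamma$ for which $\one_{u \in [0,\delta(x)]}$ is \emph{not} in $\ker{T_\gamma}$ is open (since $\gamma \mapsto T_\gamma \one_{u \in [0,\delta]}$ is continuous in the relevant norms, by Lemma \ref{L:operator} and the $L^1$-to-operator-norm bound), and an approximation step shows each such set is dense using the continuous kernels produced by Proposition \ref{P:densityapprox}/Corollary \ref{C:boundedsppt}. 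The only modification is that every perturbing density must be chosen with support in $\X \times [-C_1, C_2]$, which is precisely what Corollary \ref{C:boundedsppt} supplies, and that the relevant condition is now $\ker{T_{f_0 - f_1}} \cap \mathcal{W}_{\mathrm{Lip}} \cap \mathcal{W} = \emptyset$ — the extra intersection with $\mathcal{W}$ simply restricts attention to $\delta$ with $\sup_x|\delta(x)| \le 2B$, which is harmless since we only ever needed to rule out such bounded $\delta$ to invoke Theorem \ref{T:contident}. Taking the countable intersection over the dense family of Lipschitz $\delta$ then exhibits $\mathfrak{F}_1^{C_1, C_2}$ as a dense $G_\delta$ in $\mathfrak{F}^{C_1, C_2}$, i.e.\ residual.

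The step I expect to require the most care is verifying that the \emph{density} of the continuous-kernel perturbations genuinely survives the restriction to compact support — that is, that Corollary \ref{C:boundedsppt} is strong enough to replace Corollary \ref{C:densityapprox} at every point where the latter was invoked in proving Proposition \ref{P:resid}. Concretely, in the density (as opposed to openness) half of the Baire argument one must, given an arbitrary $(f_0, f_1) \in \mathfrak{F}^{C_1, C_2}$ and a target Lipschitz $\delta$, produce a nearby pair still supported in $\X \times [-C_1, C_2]$ whose kernel excludes $\one_{u \in [0,\delta(x)]}$; the content of $C_1 + C_2 > 2B$ is that there is enough ``room'' in the $u$-direction beyond the interval $[0, 2B]$ on which the indicator lives to carry out the perturbation without leaving $[-C_1, C_2]$. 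Once that is granted the remainder is a mechanical transcription. I would therefore state the corollary's proof as: ``$\mathfrak{F}^{C_1,C_2}$ is closed in $\mathfrak{F}$, hence a complete metric space; Corollary \ref{C:boundedsppt} gives the density of $\Gamma_0$-type perturbations within the compactly supported class; the argument of Proposition \ref{P:resid} and Corollary \ref{C:densityapprox2} then applies mutatis mutandis, with $\mathcal{W}_{\mathrm{Lip}}$ replaced by $\mathcal{W}_{\mathrm{Lip}} \cap \mathcal{W}$, to exhibit $\mathfrak{F}_1^{C_1,C_2}$ as a dense $G_\delta$.''
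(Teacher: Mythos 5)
Your bottom-line plan — observe $\mathfrak{F}^{C_1,C_2}$ is a closed subspace of $\mathfrak{F}$ (so Baire applies), substitute Corollary \ref{C:boundedsppt} for Corollary \ref{C:densityapprox} as the density input, and then run the argument of Proposition \ref{P:resid} and Corollary \ref{C:densityapprox2} \emph{mutatis mutandis} with $\mathcal{W}_{\mathrm{Lip}}$ replaced by $\mathcal{W}_{\mathrm{Lip}}\cap\mathcal{W}$ — is exactly what the paper asserts, and it is correct. The paper gives no further detail than that, so your final summarized proof is the intended one.

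However, your reconstruction of the \emph{internal} structure of Proposition \ref{P:resid} is off, and the mis-statement would be a real gap if you tried to write the argument out in full. You describe the dense $G_\delta$ as arising from a countable dense family of Lipschitz $\delta$'s, with $\Gamma_1$ (or its compactly supported analogue) the intersection of the corresponding open ``avoidance'' sets. That does not work: avoiding the kernel for every $\delta$ in a countable dense family does not preclude $\one_{u\in[0,\delta^*]}\in\ker T_\gamma$ for some $\delta^*$ outside the family, so the intersection of those open sets need not be contained in the target set. What the paper actually does is decompose $\mathcal{W}_{\mathrm{Lip}}$ into the countably many \emph{compact} (Arzel\`{a}--Ascoli) classes $\mathcal{W}_{a,b}$ (Lipschitz constant $\le a$, sup-norm $\le b$, nonconstancy quantified by $\inf_c\norm{\delta-c}_\infty\ge b^{-1}$), writes the complement $\Gamma_1^c=\bigcup_{a,b}\Gamma_{a,b}$, and proves each $\Gamma_{a,b}$ is \emph{closed} by extracting a uniformly convergent subsequence of witnessing $\delta_n$'s and passing to the limit in $T_{\gamma_n}\one_{u\in[0,\delta_n]}=0$, then nowhere dense via the density of $\Gamma_0$. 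Compactness of the $\delta$-classes is essential precisely where your dense-family substitution fails, so if you expand the ``mutatis mutandis'' you will need that decomposition (now applied to $\mathcal{W}_{\mathrm{Lip}}\cap\mathcal{W}$, i.e.\ additionally requiring $\norm{\delta}_\infty\le 2B$) rather than a dense family of $\delta$'s.
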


\subsection{Connection with Identification in Nonparametric Instrumental Variables Quantile Regression} \label{SS:quantile}

Interestingly, it is possible to extend the methods used in this section to identification in a quantile regression model as considered by \cite{CH2005} and later by \cite{HL2007}. Consider the framework 
\begin{align}
\label{E:quantilemodel} &Y = g(X) + U \\
&\P{U \le 0 | W } \eqas \gamma(W) \nonumber
\end{align}
adopted in \cite{HL2007}, where $\gamma$ is some known function mapping $\supp{W}$ into $[0,1]$ and variables retain their interpretation from our standing model \eqref{E:model}. The model displayed in \eqref{E:quantilemodel} nests the model considered in \cite{HL2007} (consider the constant function $\gamma(w) = q$, $q$ fixed), who show that their model subsumes the setup considered by \cite{CH2005}. For a random variable $Z$, say that $W$ is \textit{boundedly complete } for $(X,U)$ if for all bounded functions $h: \supp{(X,U)} \ra \R$ one has $\E{h(X,U)|W} \overset{\mathrm{a.s.}}{=} 0$ if and only if $h \eqas 0$. In the spirit of Assumption \ref{A:C3}, say that $W$ is \textit{boundedly} $X$\textit{-complete} for $(X,U)$ if for all bounded functions $h: \supp{(X,U)} \ra \R$ one has $\E{h(X,U)|W} \overset{\mathrm{a.s.}}{=} 0$ only if $h(X,U) \overset{\mathrm{a.s.}}{=} H(U)$ for some function $H$, i.e.\ $h$ does not depend on $X$. It should be clear that if $Z$ is boundedly complete for $(X,U)$, then it is boundedly $X$-complete for $(X,U)$.

Equipped with these definitions, we derive the following identification result for the structural function $g$ completely along the lines of Theorem \ref{T:contident}: 

\begin{proposition}\label{P:quantile}
Suppose model \eqref{E:quantilemodel} holds. If $W$ is boundedly complete for $(X,U)$, then $g$ is point identified. If $0$ is in the interior of $\supp{U}$ and $W$ is boundedly $X$-complete for $(X,U)$, then $g$ is also point identified. 
\end{proposition}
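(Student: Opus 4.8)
The plan is to mimic the structure of Theorem \ref{T:contident}: suppose $g_1, g_2$ both lie in the identified set, set $\delta = g_1 - g_2$, and show $\delta \eqas 0$. The key observation is that in model \eqref{E:quantilemodel}, the restriction $\P{U \le 0 \mid W} \eqas \gamma(W)$ can be re-expressed for the competing structural function $g_2 = g_1 - \delta$: writing $Y = g_1(X) + U = g_2(X) + \tilde U$ with $\tilde U = U + \delta(X)$, the requirement that $g_2$ also belong to the identified set is $\P{\tilde U \le 0 \mid W} = \P{U + \delta(X) \le 0 \mid W} \eqas \gamma(W)$. Subtracting the corresponding statement for $g_1$ gives
\[
\E{ \one_{U \le -\delta(X)} - \one_{U \le 0} \mid W} \eqas 0.
\]
Now $\one_{U \le -\delta(X)} - \one_{U \le 0}$ is a bounded function $h(X,U)$ of $(X,U)$ (its values lie in $\{-1,0,1\}$), so bounded completeness of $W$ for $(X,U)$ forces $h(X,U) = \one_{U \le -\delta(X)} - \one_{U \le 0} \eqas 0$; a pointwise argument then shows this is impossible unless $\delta(X) \eqas 0$ (for any $x$ with $\delta(x) \neq 0$, the indicator difference is nonzero on an interval of positive $U$-mass near $0$, using that the conditional law of $U$ places mass near $0$ — or, if not, one argues directly that $\one_{U \le -\delta(x)} = \one_{U \le 0}$ a.s.\ already forces $\delta(x)$ to lie outside the support gaps, and iterating/using $\E{U}=0$ type normalization pins it down). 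This proves the first claim.

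For the second claim, the point is that $h(X,U) = \one_{U \le -\delta(X)} - \one_{U \le 0}$ depends genuinely on $X$ only through $\delta(X)$, and when $0$ is in the interior of $\supp U$, the function $u \mapsto \one_{u \le -c} - \one_{u \le 0}$ is a nonzero element of $L^\infty$ for every $c \neq 0$ (it is supported on an interval adjacent to $0$ that meets the support of $U$). So if $W$ is merely boundedly $X$-complete, we may only conclude $h(X,U) \eqas H(U)$ for some $H$ not depending on $X$; but for this to hold we need $\one_{U \le -\delta(X)} - \one_{U \le 0}$ to be a.s.\ independent of $X$, which — since $0 \in \mathrm{int}\,\supp U$ — forces $\delta(X)$ to be a.s.\ constant, say $\delta(X) \eqas c$. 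It then remains to rule out a nonzero constant shift: a uniform shift $g_2 = g_1 - c$ would change $\P{U \le 0 \mid W}$ to $\P{U \le -c \mid W}$, and with $0$ interior to $\supp U$ this strictly changes the value (the conditional CDF of $U$ is strictly increasing near $0$), contradicting that both equal $\gamma(W)$; alternatively one invokes the $\E{U} = 0$ normalization carried through the standing model. Hence $c = 0$ and $g$ is point identified.

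The main obstacle I anticipate is the last pointwise/measure-theoretic step — translating ``$\one_{U \le -\delta(X)} - \one_{U \le 0} \eqas 0$'' (or ``$\eqas H(U)$'') into ``$\delta \eqas 0$'' (resp.\ ``$\delta$ constant''). This is where one genuinely uses that $0$ lies in the interior of $\supp U$ (for the second part) versus getting it for free from the stronger bounded completeness hypothesis (for the first part, where even a support gap at $0$ would be handled because completeness already kills any $X$-dependence). One must be careful that the conditional distribution of $U$ given $X$ can be arbitrary, so the cleanest route is: fix $x$ with $\delta(x) = c \neq 0$; WLOG $c > 0$; then $\one_{u \le -c} - \one_{u \le 0} = -\one_{-c < u \le 0}$, which is $-1$ on $(-c,0]$; if this set has positive mass under the conditional law of $U$ given $X = x$ (which holds on a positive-measure set of such $x$ when $0$ is interior to the \emph{overall} support, after a covering/Fubini argument), we contradict $h \eqas 0$ (resp.\ $h \eqas H(U)$ with $H$ also forced to vanish by integrating against $X$). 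I would state this carefully but not belabor the routine integration-theory bookkeeping.
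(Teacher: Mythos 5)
Your proposal follows essentially the same route as the paper: both reduce membership of a competing $g'$ in the identified set to $\E{\one_{U+\delta(X)\le 0}-\one_{U\le 0}\mid W}\eqas 0$, apply bounded (resp.\ boundedly $X$-) completeness to the bounded function $h(X,U)=\one_{U+\delta(X)\le 0}-\one_{U\le 0}$, and then rule out a nonzero constant shift using that $0$ lies in the interior of $\supp U$ (the paper runs this last step in contrapositive form, first excluding constant nonzero $\delta$ and then invoking $X$-completeness, but it is the same argument). The measure-theoretic step you flag — passing from $h\eqas 0$ (or $h\eqas H(U)$) to $\delta\eqas 0$ (or $\delta$ constant) — is treated just as tersely in the paper, so your caution there is warranted but does not constitute a deviation.
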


One helpful aspect of Proposition \ref{P:quantile} is that it sidesteps some issues faced when considering identification of nonlinear operators, which is faced by \cite{CH2005}. A number of sufficient conditions for bounded completeness have been developed by e.g.\ \cite{D2011}, to which we refer the interested reader. Roughly speaking, if 
\begin{align*}
(X,U) = \mu(\nu(W) + \ve)
\end{align*}
for some random disturbance $\ve$ which is independent of $W$, where $\mu$ and $\nu$ are possibly vector-valued functions, then there are light conditions which can be made (see Assumptions 1-3 and 4 of \cite{D2011}) to ensure that $W$ is boundedly complete for $(X,U)$.

\subsection{Discussion, Comparison with Local Identification}
\label{SS:comparison}
The condition obtained by \cite{D2014} (see their equation (16)) and more recently considered by \cite{C2019} (see their Assumption 2.1) for local identification in our model is the relation: for $h$ satisfying $\E{h(X)} = 0$ and $\E{|h(X)|^2} < \infty$,
\begin{align} \label{E:localid}
\E{h(X)|U , W = 0} - \E{h(X)|U, W = 1} \eqas 0 \implies h \eqas 0 
\end{align}
Comparison with Assumption \ref{A:C3}(ii) shows that the stronger assumption we make in order to obtain point identification of $g$ is stronger than \eqref{E:localid}, as should be expected. For suppose that \eqref{E:localid} does not hold for some square integrable $h$: then for all $t \in \R$
\begin{align*}
Th(t) &\equiv \E{ h(X) \one_{U \in [t,2B + t]} | W = 0} - \E{h(X) \one_{U \in [t,2B + t]} | W = 1}  \\
& = \E{ \E{h(X) |U,W = 0} \one_{U \in [t, 2B + t]} | W = 0   } - \E{ \E{h(X) |U,W = 1} \one_{U \in [t, 2B + t]} | W = 1   } \\
& = \E{ (\E{h(X)|U, W = 0} - \E{h(X)|U, W = 1}) \one_{U \in [t,2B + t]}} = 0.
\end{align*}
Hence, $h \in \ker{T} \setminus \mathcal{V}$ and Assumption \ref{A:C3}(ii) is violated. The relation of \eqref{E:localid} with the necessary and sufficient condition Assumption \ref{A:C3}(i) is more difficult to ascertain, which may suggest that \eqref{E:localid} is not a necessary condition for local identification. 

A typical completeness condition puts $\dim X = \dim W$ and asks that, conditional on some restrictions on the function $h$, $\E{h(X)|W} \eqas 0$ if and only if $h(X) \eqas 0$. One of the most studied examples where the completeness condition is fulfilled puts $X = \mu(\nu(W) + \ve)$, as in \cite{D2011}; in this case, it is somewhat essential that $\dim W \ge \dim X$ and that $W$ satisfies a large support condition. Interestingly, in both of the settings we have discussed, identification has been shown to arise when a form of completeness condition holds for an instrument which has possibly lower dimension than its regressor. For instance, Lemma \ref{L:compeleteinstrument} shows that our Assumption \ref{A:C3} is tantamount to the assertion that a random variable $\tilde{U} = U + V$ is complete for the vector $(X,V)$ in a sense defined there, and within the class of functions $\mathcal{V}$. Of course, $\dim(X,V) > \dim \tilde{U}$, which imposes some difficulties when attempting to view our identification assumptions through the typical lens of instrument completeness. Moreover, in Proposition \ref{P:quantile} we require $W$ to act as a complete instrument for $(X,U)$, so that in order to apply conventional examples of completeness one would have to have $\dim W \ge \dim X + \dim U$. Hence, while the conditions enumerated in Lemma \ref{L:compeleteinstrument} and Proposition \ref{P:quantile} are not necessary for identification, they suggest that to state examples of identified models in our framework is also to make progress on finding sufficient conditions for the completeness condition when the instrument has strictly lower degree than the regressor (and vice-versa).



\pagebreak

\section{Appendix: Proofs}\label{S:proofs}

\subsection{Proof of Theorem \ref{T:identified}}

 We begin our proof by stating a result from algebraic geometry which characterizes the solution set (variety) of a system of multivariate polynomials when it is finite. 
\begin{theorem}
[Finiteness Theorem, \cite{C2005}] Let $I \subset K[x_1, \ldots , x_K]$ be a polynomial ideal over a field $K \subset \C$. Then the following are equivalent: 
\begin{enumerate}
\item The variety $V(I)$ is a finite set 
\item For each $k$, $1\le k \le K$, there is some $m_k$ such that $x_k^{m_k} \in \mathrm{LT}(I)$.
\end{enumerate}
\end{theorem}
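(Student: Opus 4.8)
This is a classical result from the theory of Gröbner bases (it is quoted from \cite{C2005}); the plan is to prove the stated equivalence by threading it through the standard intermediate condition that the quotient algebra $A := \C[x_1,\dots,x_K]/I$ is a finite-dimensional $\C$-vector space, i.e.\ I would establish the cycle $(2)\Rightarrow(\dim_\C A<\infty)\Rightarrow(1)\Rightarrow(2)$. At the outset I would reduce to the case where the coefficient field is $\C$ (otherwise pass to the algebraic closure), so that $V(I)\subseteq\C^K$ and the Nullstellensatz is available, and I would fix once and for all an arbitrary monomial order, with respect to which $\mathrm{LT}(I)$ is the monomial ideal generated by the leading terms of all nonzero elements of $I$ and $G=\{g_1,\dots,g_s\}$ is a Gröbner basis of $I$.

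For $(2)\Rightarrow(\dim_\C A<\infty)$: the input I would use is the defining property of a Gröbner basis, namely that the division algorithm carries any $f\in\C[x_1,\dots,x_K]$ to a remainder $\bar f$ none of whose monomials lies in $\mathrm{LT}(I)$, with $f-\bar f\in I$; hence the residues of the standard monomials (those not lying in $\mathrm{LT}(I)$) span $A$, and in fact form a basis. If $x_k^{m_k}\in\mathrm{LT}(I)$ for every $k$, then every standard monomial $x^\alpha$ must satisfy $\alpha_k<m_k$ for all $k$ — for otherwise $x_k^{m_k}\mid x^\alpha$ would force $x^\alpha\in\mathrm{LT}(I)$. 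There are at most $\prod_k m_k$ such monomials, so $\dim_\C A\le\prod_k m_k<\infty$.

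For $(\dim_\C A<\infty)\Rightarrow(1)$: I would fix a coordinate $k$ and note that, $A$ being finite-dimensional, the residues $1,x_k,x_k^2,\dots$ are $\C$-linearly dependent, which produces a nonzero univariate polynomial $h_k(x_k)\in I$. Every point of $V(I)$ then has its $k$-th coordinate among the finitely many complex roots of $h_k$, so letting $k$ range over $1,\dots,K$ confines $V(I)$ to a finite subset of $\C^K$ (with $|V(I)|\le\prod_k\deg h_k$). For $(1)\Rightarrow(2)$: assuming $V(I)$ finite, I would fix $k$, let $a_1,\dots,a_r$ be the distinct $k$-th coordinates of the points of $V(I)$, and form $p_k(x_k):=\prod_{j=1}^r(x_k-a_j)$, which vanishes identically on $V(I)$; by the Nullstellensatz $p_k^N\in I$ for some $N\ge1$, and since a monomial order restricted to a single variable is just the degree order, $\mathrm{LT}(p_k^N)=x_k^{Nr}$, so $x_k^{Nr}\in\mathrm{LT}(I)$ and $m_k:=Nr$ works. (If the original field is a proper subfield of $\C$, the set $\{a_j\}$ is stable under the relevant Galois action, so $p_k$ still has coefficients there.)

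The main obstacle — the one step that is more than bookkeeping — is the Gröbner-basis/division-algorithm input used in $(2)\Rightarrow(\dim_\C A<\infty)$: that reduction modulo a Gröbner basis always yields a remainder supported off $\mathrm{LT}(I)$, so that the standard monomials form a $\C$-basis of $A$. This is the basic structural dictionary between the combinatorics of the monomial ideal $\mathrm{LT}(I)$ and the algebra $A$, and I would simply quote it from \cite{C2005} rather than reprove it. The only other point requiring care is that $(1)\Rightarrow(2)$ genuinely uses algebraic closedness via the Nullstellensatz, so $V(I)$ must be interpreted over $\C$ throughout; with that understood, the passage onward to Theorem \ref{T:identified} reduces to checking condition (2) for the ideal generated by \eqref{E:nmoment2} under Assumptions \ref{A:1}--\ref{A:2}.
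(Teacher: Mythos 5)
The paper does not prove this statement; it is quoted verbatim from \cite{C2005} and used as a black box inside the proof of Theorem \ref{T:identified}. Your argument is the standard textbook proof (essentially the one in \cite{C2005} itself), and it is correct: the cycle $(2)\Rightarrow(\dim_\C A<\infty)\Rightarrow(1)\Rightarrow(2)$ is the usual decomposition, the standard-monomial basis for $A$ is the right input for the first implication, linear dependence of $1,x_k,x_k^2,\dots$ in $A$ gives the univariate elements of $I$ needed for the second, and the Nullstellensatz applied to $p_k(x_k)=\prod_j(x_k-a_j)$ gives the third. The one hypothesis worth flagging — which you handle correctly, if tersely — is that $(1)\Rightarrow(2)$ requires $V(I)$ to be read over an algebraically closed field, and when $K\subsetneq\C$ one must check both that $p_k$ has coefficients in $K$ (Galois stability of $\{a_j\}$, using char $0$) and that $p_k^N\in I\cdot\overline{K}[x]\cap K[x]=I$ (faithful flatness of the field extension); the vacuous case $V(I)=\emptyset$ is also covered, since then $1\in I$ by the Nullstellensatz and $\mathrm{LT}(I)$ is the whole ring. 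There is no gap.
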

Here, $\mathrm{LT}(I)$ is the monomial ideal generated by the leading terms of polynomials in $I$. The determination of a leading term requires a fixed monomial order. We use the \textit{graded lexicographic order} (or graded reverse lexicographic order), which satisfies $x^\alpha >_{\mathrm{grlex}} x^\beta$ if $\sum_{i=1}^n \alpha_i > \sum_{i=1}^n \beta_i$ or if equality in the total degrees of the monomials holds and $x^{\alpha} >_{\mathrm{lex}} x^\beta$. Now consider the following system of polynomials: 
\begin{align}
0 = P_1(x_1, \ldots , x_K) & = P_1^{(1)}(x_1, \ldots , x_K) + P_1^{(0)}(x_1, \ldots , x_K) \label{E:system2} \\
& \vdots \nonumber \\
0 = P_n(x_1, \ldots , x_K) & = P_n^{(1)} (x_1, \ldots , x_K) + P_n^{(0)} (x_1, \ldots, x_K) \nonumber
\end{align}
where for all $i$ we have let $P_i^{(1)}$ denote the polynomial consisting of all monomials of $P_i$ with highest total degree, and $P_i^{(0)}$ the polynomial consisting of all of the remaining monomials (with strictly smaller total degree). As a corollary to the finiteness theorem, we obtain the following: 
\begin{lemma} \label{L:boundsln}
Suppose that the number of solutions to the reduced polynomial system: 
\begin{align*}
P_1^{(1)}(x_1, \ldots, x_k) & = 0 \\
& \vdots \\
P_n^{(1)}(x_1, \ldots , x_k) & = 0
\end{align*}
is finite. Then the variety of the full system \eqref{E:system2} is finite. 
\end{lemma}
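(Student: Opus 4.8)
The plan is to deduce the statement from the Finiteness Theorem quoted above, working throughout with the graded (reverse) lexicographic monomial order, so that the leading term of each $P_i$ equals the leading term of its top-degree part: $\mathrm{LT}(P_i) = \mathrm{LT}(P_i^{(1)})$ (in a graded order the monomials of maximal total degree dominate, and those are precisely the monomials of $P_i^{(1)}$).

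First I would reinterpret the hypothesis. Each $P_i^{(1)}$ is homogeneous, so the common zero set of $P_1^{(1)}, \ldots, P_n^{(1)}$ in $\C^K$ is a cone: if it contains a nonzero point it contains the whole line through that point, hence infinitely many points. Therefore the assumption that the reduced system has finitely many solutions is equivalent to $V(P_1^{(1)}, \ldots, P_n^{(1)}) = \{0\}$. By Hilbert's Nullstellensatz over $\C$ this gives $\langle x_1, \ldots, x_K\rangle = I(\{0\}) \subseteq \sqrt{\langle P_1^{(1)}, \ldots, P_n^{(1)}\rangle}$, so for each $k \in [K]$ there is an exponent $N_k$ with $x_k^{N_k} \in \langle P_1^{(1)}, \ldots, P_n^{(1)}\rangle$.

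Next I would lift these memberships to the ideal $I = \langle P_1, \ldots, P_n\rangle$ of the full system. Write $x_k^{N_k} = \sum_i h_i P_i^{(1)}$ and, setting $d_i = \deg P_i$, replace each $h_i$ by its homogeneous component of degree $N_k - d_i$; the identity survives because its left-hand side is homogeneous of degree $N_k$. Then $Q_k \equiv \sum_i h_i P_i \in I$, and the degree-$N_k$ part of $Q_k$ is exactly $\sum_i h_i P_i^{(1)} = x_k^{N_k} \neq 0$, while every remaining term of $Q_k$ has strictly smaller total degree. Hence $Q_k \neq 0$ and, in the graded order, its leading monomial is $x_k^{N_k}$, so $x_k^{N_k} = \mathrm{LT}(Q_k) \in \mathrm{LT}(I)$. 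Since this holds for every $k$, condition (2) of the Finiteness Theorem is met, and we conclude that $V(I) = V(P_1, \ldots, P_n)$, the variety of \eqref{E:system2}, is finite.

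I do not expect a serious obstacle here; the one point that genuinely matters is the use of a \emph{graded} monomial order, which is what forces $\mathrm{LT}(P_i) = \mathrm{LT}(P_i^{(1)})$ and makes the lifting step work (for a non-graded order the argument collapses). As a sanity check and alternative route, one can argue geometrically: homogenizing the $P_i$ by an extra coordinate $x_0$, the points at infinity of $V(I)$ are exactly the nonzero common zeros of the $P_i^{(1)}$, so the hypothesis says the projective closure of $V(I)$ is disjoint from the hyperplane $\{x_0 = 0\}$; by the projective dimension theorem a variety of positive dimension meets every hyperplane, so $V(I)$ is zero-dimensional, hence finite.
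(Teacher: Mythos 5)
Your proof is correct and follows essentially the same route as the paper's: both verify condition (2) of the Finiteness Theorem by lifting a power of $x_k$ from the top-degree (homogeneous) system to the full ideal, extracting the appropriate homogeneous components of the multipliers and using the fact that a graded monomial order keeps the top-degree part as the leading term. The only real difference is in the entry point: the paper applies the Finiteness Theorem to the reduced system to obtain $x_k^{m_k} \in \mathrm{LT}(\langle P_1^{(1)},\ldots,P_n^{(1)}\rangle)$, whereas you use the cone observation plus the Nullstellensatz to get the exact ideal membership $x_k^{N_k} \in \langle P_1^{(1)},\ldots,P_n^{(1)}\rangle$, which makes the subsequent degree bookkeeping marginally cleaner but does not change the substance of the argument.
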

\begin{proof}
Assume that our reduced system has a finite number of solutions. Then the finiteness theorem implies that for every $k \in [K]$ there are polynomials $g_1, \ldots , g_n \in K[x_1, \ldots , x_K]$ satisfying: 
\begin{align} \label{E:leadingterm}
\mathrm{LT}(g_1 P_1^{(1)} + \cdots + g_n P_n^{(1)} ) = x_k^{m_k}
\end{align}
for some $k$. Let $\ell_1, \ldots , \ell_K$ denote the total degrees of the monomials in $P_1^{(1)}, \ldots, P_K^{(1)}$ respectively. For each $k$, let $g_k^{(2)}$ denote the terms in $g_k$ of total degree exceeding $m_k - \ell_k$, $g_k^{(1)}$ the terms of total degree exactly $m_k - \ell_k$, and $g_k^{(0)}$ the terms of total degree less than $m_k - \ell_k$. Note that $\sum_{k=1}^K g_k^{(2)} P_k^{(1)}$ is necessarily a polynomial consisting of monomials of total degree greater than $m_k$; this must be the zero polynomial or else $x_k^{m_k}$ is not the leading term of \eqref{E:leadingterm}. Moreover $\sum_{k=1}^K g_k^{(0)}P_k^{(1)}$ is a polynomial consisting of monomials of total degree strictly less than $m_k$. Hence we may assume without loss of generality that $g_k^{(2)} = g_k^{(0)} = 0$ for all $k$ without affecting the equality of \eqref{E:leadingterm}. This essentially concludes, because 
\begin{align*}
\sum_{k = 1}^K g_k P_k & = \sum_{k=1}^K g_k^{(1)} P_k^{(1)} + \sum_{k=1}^K g_k^{(1)} P_k^{(0)};
\end{align*}
notice that the second summand is a polynomial with monomials having total degree strictly less than $m_k$, and so our choice of monomial order one has 
\[
\mathrm{LT} \left( \sum_{k=1}^K g_k P_k \right) = \mathrm{LT}\left( \sum_{k=1}^K g_k^{(1)} P_k^{(1)} \right) = x_k^{m_k}
\]
which follows by \eqref{E:leadingterm}. So $x_k^{m_k} \in \mathrm{LT}(\langle P_1, \ldots , P_n \rangle)$ and the finiteness theorem concludes. 
\end{proof}

\noindent\textbf{Remark}: Let $I$ be the ideal generated by the system \eqref{E:system2} and $I^{(1)}$ the ideal generated by the polynomials $P_i^{(1)}$. Our argument has shown that any monomial appearing in $\mathrm{LT}(I^{(1)})$ must also appear in $\mathrm{LT}(I)$. Combined with the fact that the size of the solution set, counting multiplicities, of \eqref{E:system2} is given by the number of standard monomials not appearing in  $\mathrm{LT}(I)$ (see \cite{C2005}) our argument has also established an upper bound on the solution set of \eqref{E:system2} in terms of the solution set of the reduced system, counting multiplicities. 

\bigskip

Now consider our particular system of polynomial equations \eqref{E:nmoment2}. It is easy to see that the polynomial system formed of terms of highest total degree in this system are: 
\begin{align*}
\sum_{k=1}^K (p_k(0) - p_k(1)) (-1)^n g_k^n = 0
\end{align*}
for all $n$. By rescaling this is equivalent to 
\begin{align} \label{E:leadingterms}
\sum_{k=1}^K (p_k(0) - p_k(1)) g_k^n = 0,
\end{align}
for all $n$. 

Note that Assumption \ref{A:2} and independence imply that $\{g_k\}$ must also satisfy 
\begin{align} \label{E:leadingterms2}
\sum_{k=1}^K p_k(0)\E{ Y - g_k| X = k, W = 0}  = \E{ U| W = 0} = \E{U} = 0.
\end{align}
Hence the vector $\{g_k\}_{k=1}^K$ must satisfy \eqref{E:leadingterms2} in addition to \eqref{E:nmoment2}. Extract the terms of highest total degree from \eqref{E:leadingterms2} and combine with \eqref{E:leadingterms} to determine by Lemma \ref{L:boundsln} that the number of solutions to the polynomial system of equations formed by \eqref{E:nmoment2} and \eqref{E:leadingterms2} is finite if the variety of the system: 
\begin{align} \label{E:lt3}
&\sum_{k=1}^K p_k(0) g_k = 0 \\
&\sum_{k=1}^K (p_k(0) - p_k(1)) g_k^n = 0 \nonumber
\end{align}
is finite. We will show that the number of such solutions in $\C^k$ is finite, using only $n = 1, \ldots , K$. In particular we will show that only the trivial solution $\{g_k\}_{k=1}^K = 0_k$ satisfies \eqref{E:lt3} under Assumption \ref{A:1}. We proceed by contradiction, supposing that $g_k$ is a  nonzero solution of \eqref{E:lt3}. Because $\sum_{k=1}^K p_k(0) = 1$ we may exclude solutions of the form $g_k = c$, $c \neq 0$, from consideration. Hence the $g_k$ take on at least $2$ distinct values. Now suppose that $\{g_k: k = 1, \ldots, K\} \setminus \{0\} = \{z_1, \ldots , z_M\}$ where $0<M \le K$, and let $K_m = \{k: g_k = z_m\}$ denote the set of indices on which $g_k$ equals $z_m \neq 0$, for $1 \le m \le M$. By assumption at least $K_1$ must be nonempty. Then for all $n = 1, \ldots , K$ \eqref{E:lt3} implies that 
\begin{align}
\sum_{m = 1}^M \sum_{k \in K_m} (p_k(0) - p_k(1)) z_m^n = \sum_{k=1}^K (p_k(0) - p_k(1)) g_k^n = 0. \label{E:one}
\end{align}
We have already shown that $K_m \subsetneq [K]$ for all $m$ and that $M \ge 2$. Moreover, Assumption \ref{A:1} implies that $\sum_{k \in K_m} (p_k(0) - p_k(1))$ is nonvanishing for all $m$. 
Now using the fact that $M \le K$ we have the following linear relation: 
\begin{align*}
&\begin{pmatrix}
1 & \cdots & 1 \\
z_1 & \cdots & z_M \\
& \vdots & \\
z_1^{M-1} & \cdots & z_M^{M-1}
\end{pmatrix}
\begin{pmatrix}
z_1 & 0 & \cdots & 0 \\
0 & z_2 & \cdots & 0 \\
 &  & \ddots  & \\
0 & 0 & \cdots & z_M
\end{pmatrix}
\begin{pmatrix}
\sum_{k \in K_1} (p_k(0) - p_k(1)) \\
\vdots \\
\sum_{k \in K_M} (p_k(0) - p_k(1))
\end{pmatrix}\\
&\quad  = 
\begin{pmatrix}
z_1 & \cdots & z_M \\
z_1^2 & \cdots & z_M^2 \\
& \vdots & \\
z_1^M & \cdots & z_M^M
\end{pmatrix}
\begin{pmatrix}
\sum_{k \in K_1} (p_k(0) - p_k(1)) \\
\vdots \\
\sum_{k \in K_M} (p_k(0) - p_k(1))
\end{pmatrix}
 = \begin{pmatrix}
0 \\
\vdots \\
0
\end{pmatrix}.
\end{align*}
One recognizes the matrix on the left as the transpose of a Vandermonde matrix whose determinant can be calculated explicitly as $\prod_{1 \le m < \ell \le M} (z_m - z_\ell)$, which is nonzero as the $z_m$ are distinct. Moreover, the diagonal matrix multiplying it is clearly invertible as the $z_m$ were specified to be nonzero. Hence we have our contradiction, and the trivial solution uniquely satisfies \eqref{E:lt3}. Lemma \ref{L:boundsln} concludes.  \qed  

\subsection{Proof of Theorem \ref{T:2}}

Suppose that Assumptions \ref{A:1} and \ref{A:2} hold along with the independence assumption $U \indep W$ in our discrete framework \eqref{E:model}. Fix vectors $\{p_k(0)\}$ and $\{p_k(1)\}$ for the remainder of the proof such that $p_k(0), p_k(1) > 0$ for all $k$. We claim that it is sufficient to show the existence of probability density functions $f_k^0$ and $f_k^1$ for $k = 1, \ldots , K$ as well as a nonzero vector $\{h_k\}_{k=1}^K$ such that $(h_k) \perp (p_k(0))$, $(h_k) \perp (p_k(1))$ satisfying: 
\begin{align}\label{E:ft1}
&\sum_{k=1}^K f_k^0(y) p_k(0) = \sum_{k=1}^K f_k^1(y) p_k(1) \\
&\sum_{k=1}^K f_k^0(y + h_k) p_k(0) = \sum_{k=1}^K f_k^1(y + h_k) p_k(1). \nonumber 
\end{align}
To see that this is the case, let $[Y|X = k, W = \ell] \sim f_k^\ell$. Note that in \eqref{E:ft1} we may assume without loss of generality that: 
\begin{align*}
\int_{-\infty}^\infty y\sum_{k=1}^K f_k^0(y) p_k(0)\, \mathrm{d}y = 0,
\end{align*}
by translating the density functions $f_k^\ell$ simultaneously by a constant $c$ if necessary. Hence, the first line of \eqref{E:ft1} implies $\E{Y|W = 0} = \E{Y|W = 1} = 0$ and also $Y \indep W$ so one may freely take $g(X) = 0$, $Y = U$ to satisfy Assumptions \ref{A:1} and \ref{A:2}. Moreover, by orthogonality of $h_k$ and $p_k(0)$ we have 
\begin{align*}
\int_{-\infty}^\infty \sum_{k=1}^K y f_k^0 (y + h_k) p_k(0) \, \mathrm{d}y  = \int_{-\infty}^\infty \sum_{k=1}^K (y - h_k) f_k^0( y) p_k(0) \, \mathrm{d} y = 0,
\end{align*}
so also $\E{ Y - h_X | W = 0 } = \E{ Y - h_X | W = 1} = 0$ and moreover $Y - h_X \indep W$. So one may also set $g(k) = h_k$ for $1 \le k \le K$ and still satisfy Assumptions \ref{A:1} and \ref{A:2} along with independence, whence $g$ is not point identified. With \eqref{E:ft1} in hand note that it is sufficient to consider the case $p_k(0) \neq p_k(1)$ for all $k$; else, set $f_k^0(y) = f_k^1(y)$ and drop the index $k$ from consideration. 

Proceed by fixing a nonconstant vector $h_k$ with the necessary orthogonality properties, assuming that $K \ge 3$ so that such an $h$ vector exists. By taking Fourier transforms (assuming certain regularity conditions, which we will prove) \eqref{E:ft1} is equivalent to 
\begin{align}
\sum_{k=1}^K \left(\hf_k^0 (t) p_k(0) - \hf_k^1(t) p_k(1)\right) & = \F\left[ \sum_{k=1}^K \left( f_k^0(y) p_k(0) -f_k^1(y) p_k(1) \right) \right](t)  \label{E:ft2}\\
& = 0 \nonumber \\
\sum_{k=1}^K  e^{2\pi i h_k t}\left[ \hf_k^0(t) p_k(0) - \hf_k^1(t) p_k(1) \right] & = \F\left[ \sum_{k=1}^K \left( f_k^0(y + h_k) p_k(0) -f_k^1(y + h_k) p_k(1) \right) \right](t) \nonumber \\
& = 0.  \nonumber
\end{align}
We proceed by exhibiting Schwartz functions $\gamma_k$ in the frequency domain for which: 
\begin{align} \label{E:ft3} 
&\gamma_k(0) = p_k(0) - p_k(1)\\ 
&\sum_{k=1}^K \gamma_k(t) = 0 \nonumber \\
&\sum_{k=1}^K  \gamma_k(t) e^{2\pi i h_k t}= 0 \nonumber
\end{align}
and then reconstructing the density functions $f_k^\ell$ by Fourier inversion. To save on summation notation, let $K \mu$ be counting measure on $X \equiv [K]$ and note that the second two lines of \eqref{E:ft3} are equivalent to $\int_X \gamma_x \, \d\mu(x) = 0$, $\int_X e^{2\pi i h_x t} \gamma_x \, \d\mu(x)$; proceeding with this notation will have the benefit of establishing our results for more general distributions of $X$ (e.g.\ continuous). Now, using the fact that $K \ge 3$ so that no two vectors span $\R^K$ the main component of $\gamma_x$ is derived from the Gram-Schmidt procedure as: 
\begin{align*}
\alpha_x(t) \equiv p_x(0) - p_x(1) - \frac{\int_X (p_y(0) - p_y(1)) e^{2\pi ih_y t} \, \d\mu(y) }{1 - \int_X e^{2\pi i h_y t} \, \d\mu(y) \int_X e^{-2\pi i h_y t} \, \d\mu(y)} \left( e^{-2\pi i h_x t} - \int_X e^{-2\pi i h_z t} \, \d \mu(z) \right).
\end{align*}
We verify a few properties of the function $\alpha$: 
\begin{lemma} \label{L:lt1}
$\alpha_x(t)$ satisfies $\int_X \alpha_x(t) \, \d\mu(x) = \int_X \alpha_x(t) e^{2\pi i h_x t} = 0$
\end{lemma}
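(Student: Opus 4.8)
The plan is to read $\alpha_x(t)$ as the output of a single step of Gram--Schmidt orthogonalization in the Hilbert space $L^2(\X,\mu)$, with $t$ held fixed, equipped with the \emph{bilinear} pairing $\inner{f,g}\equiv\int_\X f(x)g(x)\,\d\mu(x)$, and then to verify the two claimed identities by the bookkeeping this structure forces. Write $d_x\equiv p_x(0)-p_x(1)$, $e_t(x)\equiv e^{2\pi i h_x t}$, $\bar e_t(x)\equiv e^{-2\pi i h_x t}$, let $\mathbf 1$ be the constant function $1$ (note $\mu(\X)=1$, since $K\mu$ is counting measure), and set $\psi_t(x)\equiv\bar e_t(x)-\inner{\mathbf 1,\bar e_t}$. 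With this notation the definition of $\alpha$ reads
\[
\alpha_\cdot(t)=d-\frac{\inner{d,e_t}}{\,1-\inner{\mathbf 1,e_t}\inner{\mathbf 1,\bar e_t}\,}\,\psi_t,
\]
which is well defined exactly when the denominator is nonzero.

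First I would record the elementary facts driving the cancellations. Since $\{p_k(0)\}$ and $\{p_k(1)\}$ are both probability vectors, $\inner{d,\mathbf 1}=\int_\X(p_x(0)-p_x(1))\,\d\mu(x)=0$; this alone gives the first identity once one pairs $\alpha_\cdot(t)$ against $\mathbf 1$, because $\psi_t$ is by construction $\bar e_t$ with its $\mathbf 1$-component removed, so $\inner{\psi_t,\mathbf 1}=\inner{\bar e_t,\mathbf 1}-\inner{\mathbf 1,\bar e_t}\inner{\mathbf 1,\mathbf 1}=0$, whence $\int_\X\alpha_x(t)\,\d\mu(x)=\inner{d,\mathbf 1}-(\text{scalar})\inner{\psi_t,\mathbf 1}=0$. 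For the second identity the key pointwise fact is $e^{2\pi i h_x t}e^{-2\pi i h_x t}=1$, which gives $\inner{\bar e_t,e_t}=\int_\X 1\,\d\mu(x)=1$ and hence $\inner{\psi_t,e_t}=\inner{\bar e_t,e_t}-\inner{\mathbf 1,\bar e_t}\inner{\mathbf 1,e_t}=1-\inner{\mathbf 1,e_t}\inner{\mathbf 1,\bar e_t}$ — precisely the denominator in the formula for $\alpha$. Therefore
\[
\int_\X\alpha_x(t)\,e^{2\pi i h_x t}\,\d\mu(x)=\inner{d,e_t}-\frac{\inner{d,e_t}}{\inner{\psi_t,e_t}}\,\inner{\psi_t,e_t}=0,
\]
which is exactly the second claimed equality. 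So the whole lemma is a short computation once the Gram--Schmidt interpretation is in place.

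The only point requiring any care — and it is genuinely mild — is the denominator $1-\inner{\mathbf 1,e_t}\inner{\mathbf 1,\bar e_t}=1-|\int_\X e^{2\pi i h_x t}\,\d\mu(x)|^2$, which vanishes at $t=0$ and nowhere else except on a countable set: since $h$ was chosen nonconstant and $\mu$ charges every point of $\X$, the equality case of the triangle inequality shows $|\int_\X e^{2\pi i h_x t}\,\d\mu(x)|<1$ unless $h_x t$ is constant mod $1$ across the (finite) support, which happens only for $t$ in a countable set. Thus $\alpha_\cdot(t)$ is defined off that set, and the two identities hold wherever $\alpha$ is defined; everything else is pure linear-algebra bookkeeping dictated by the orthogonalization construction.
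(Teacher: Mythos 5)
Your computation is correct and is exactly the ``straightforward calculation'' the paper's one-line proof alludes to: the first identity follows from $\int_X(p_x(0)-p_x(1))\,\d\mu(x)=0$ together with $\int_X\bigl(e^{-2\pi i h_x t}-\int_X e^{-2\pi i h_z t}\,\d\mu(z)\bigr)\d\mu(x)=0$, and the second from the fact that pairing the subtracted term against $e^{2\pi i h_x t}$ reproduces precisely the denominator. The Gram--Schmidt framing and the remark on where the denominator vanishes (handled in the paper by Lemma \ref{L:lt2}) are consistent with the paper's treatment, so this is essentially the same argument, just written out.
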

\begin{proof}
This follows from the observation that $\int_X (p_x(0) - p_x(1))\, \d \mu(x) = 0$ and straightforward calculation.
\end{proof}
\begin{lemma}\label{L:lt2}
$\alpha_x(t) \in C^\infty(\R)$ up to removable singularities, $\mu$-almost surely.
\end{lemma}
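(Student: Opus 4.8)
The plan is to recognize $\alpha_x$ as a quotient of restrictions to $\R$ of entire functions, to show the denominator's zero set is discrete, and to check that the forced vanishing of the numerator there is exactly enough to make the singularities removable. Write $\phi(t)\equiv\int_X e^{2\pi i h_y t}\,\d\mu(y)$, $\psi(t)\equiv\int_X e^{-2\pi i h_y t}\,\d\mu(y)$, $N(t)\equiv\int_X(p_y(0)-p_y(1))e^{2\pi i h_y t}\,\d\mu(y)$, $M_x(t)\equiv e^{-2\pi i h_x t}-\psi(t)$, and $D(t)\equiv 1-\phi(t)\psi(t)$, so that $\alpha_x(t)=(p_x(0)-p_x(1))-N(t)M_x(t)/D(t)$. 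Since each of $N,M_x,D$ is a finite sum — or, in the continuous-$X$ case, a $\mu$-integral over the compact support of $X$ — of exponentials $e^{ct}$ with bounded exponents, each extends to an entire function of $t$ and is in particular real-analytic on $\R$; hence $\alpha_x$ is real-analytic (a fortiori $C^\infty$) wherever $D\neq 0$, and the only candidates for singularities are the points of $Z\equiv\{t\in\R:D(t)=0\}$. First I would show $Z$ is discrete, hence countable and Lebesgue-null: for real $t$ one has $\psi=\overline{\phi}$, so $D(t)=1-|\phi(t)|^2$, and $D''(0)=8\pi^2\,\mathrm{Var}_\mu(h)>0$ because $h$ is nonconstant, so the real-analytic function $D$ is not identically zero on $\R$.

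Next I would analyze the vanishing orders at a fixed $t_0\in Z$. From $D(t_0)=1-|\phi(t_0)|^2=0$ and the equality case of the triangle inequality for integrals, $e^{2\pi i h_y t_0}$ equals a single constant $e^{i\theta}$ for $\mu$-a.e.\ $y$; hence $\phi(t_0)=e^{i\theta}$ and $\psi(t_0)=e^{-i\theta}$. Differentiating under the integral sign and substituting these identities gives $\phi'(t_0)=2\pi i\bar h\,e^{i\theta}$, $\psi'(t_0)=-2\pi i\bar h\,e^{-i\theta}$ and the analogous second-derivative expressions (with $\bar h=\int_X h\,\d\mu$), from which a short computation yields $D'(t_0)=0$ and $D''(t_0)=8\pi^2(\E{h^2}-\E{h}^2)=8\pi^2\,\mathrm{Var}_\mu(h)\neq 0$ — that is, $D$ has a zero of order exactly $2$ at every point of $Z$. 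On the numerator side, the same identity $e^{2\pi i h_y t_0}=e^{i\theta}$ for $\mu$-a.e.\ $y$ gives $N(t_0)=e^{i\theta}\int_X(p_y(0)-p_y(1))\,\d\mu(y)=0$ (the two probability vectors integrate to the same value), so $N$ vanishes at $t_0$; and $M_x(t_0)=e^{-2\pi i h_x t_0}-\psi(t_0)=e^{-i\theta}-e^{-i\theta}=0$ for every $x$ in the full-$\mu$-measure set where $e^{2\pi i h_x t_0}=e^{i\theta}$. Hence $NM_x$ vanishes to order at least $2$ at $t_0$ for $\mu$-a.e.\ $x$, matching the order of the zero of $D$.

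Finally I would assemble these facts: for $\mu$-a.e.\ $x$, at each $t_0\in Z$ the real-analytic numerator $NM_x$ vanishes to at least the order of the real-analytic denominator $D$, so $NM_x/D$ extends real-analytically across $t_0$ (dividing out the common factor $(t-t_0)^2$), i.e.\ the singularity is removable; taking the countable union over $t_0\in Z$ of the corresponding $\mu$-null exceptional sets of $x$, one concludes that $\alpha_x$ is real-analytic, and in particular $C^\infty$, on all of $\R$ up to removable singularities for $\mu$-a.e.\ $x$. The one delicate point — and the step I expect to be the crux — is the order matching at $Z$: the automatic vanishing of $NM_x$ there is only a double zero (a product of two simple zeros), so one genuinely needs the cancellation $D'(t_0)=0$ together with $D''(t_0)\neq 0$ to rule out a higher-order zero of $D$, and the nonvanishing of $D''(t_0)$ is precisely where the hypothesis that $h$ is nonconstant (equivalently $\mathrm{Var}_\mu(h)>0$) enters.
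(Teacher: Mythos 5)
Your proof is correct, but it takes a genuinely different and more explicit route than the paper's. The paper writes $\alpha_x^*(t)=P_x(t)/Q(t)$ with $P_x,Q$ entire, notes that each singularity of a meromorphic function is either a pole or removable, and then rules out poles for $\mu$-a.e.\ $x$ by a soft argument: the Cauchy--Schwarz bound $\sup_{t}\norm{\alpha_x^*(t)}_{L^2(X,\mu)}\le \norm{p_\cdot(0)-p_\cdot(1)}_{L^2(X,\mu)}<\infty$ forces the set of $x$ with $|\alpha_x^*(t)|\to\infty$ at any fixed real zero $t_0$ of $Q$ to be $\mu$-null, and one unions over the countable zero set. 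You instead carry out the order-of-vanishing bookkeeping directly: equality in the triangle inequality at a real zero $t_0$ of $D=Q$ forces $e^{2\pi i h_y t_0}=e^{i\theta}$ for $\mu$-a.e.\ $y$, whence $D'(t_0)=0$, $D''(t_0)=8\pi^2\,\mathrm{Var}_\mu(h)\neq 0$ (a double zero exactly, using nonconstancy of $h$), while $N$ and $M_x$ each acquire a simple zero there, so the numerator matches the denominator's order and the singularity is removable. Your computations check out (including $N(t_0)=0$ from $\sum_k p_k(0)=\sum_k p_k(1)$ and the identification of the exceptional $\mu$-null set of $x$ at each $t_0$, unioned over the countable set $Z$). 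What your approach buys is precision: it shows the exceptional set is exactly $\bigcup_{t_0\in Z}\{x:\,e^{2\pi i h_x t_0}\neq e^{i\theta(t_0)}\}$ (empty in the discrete case, so no exceptional $x$ at all), and the same second-derivative computation essentially contains the L'H\^{o}pital argument the paper must redo separately in Lemma \ref{L:lt3} to evaluate $\alpha_x$ at $t=0$. What the paper's argument buys is brevity and robustness: it needs no information about the zeros of $Q$ beyond discreteness and no order matching, only the uniform $L^2(\mu)$ bound. Both arguments use nonconstancy of $h$ in the same place, namely to guarantee $Q\not\equiv 0$.
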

\begin{proof}
It suffices to prove the claim for $\alpha^*_x(t) \equiv \alpha_x(t) - (p_x(0) - p_x(1))$. Note first that by Cauchy-Schwarz, for $t \in \R$, 
\begin{align*}
\norm{\alpha^*_x(t) }_2 &\le \left| \frac{\int_X (p_y(0) - p_y(1)) \left(e^{2\pi ih_y t} - \int_X e^{2\pi i h_z t} \, \d\mu(z)\right) \, d\mu(y) }{1 - \int_X e^{2\pi i h_y t} \, \d\mu(y) \int_X e^{-2\pi i h_y t} \, \d\mu(y)} \right| \norm{e^{-2\pi i h_x t} - \int_X e^{-2\pi i h_z t} \, \d \mu(z)  }_2 \\
& \le \frac{\norm{p_y(0) - p_y(1)}_2\norm{e^{2\pi ih_y t} - \int_X e^{2\pi i h_z t} \, \d\mu(z)  }_2 }{\norm{e^{2\pi ih_y t} - \int_X e^{2\pi i h_z t} \, \d\mu(z)  }^2_2}\norm{e^{2\pi ih_y t} - \int_X e^{2\pi i h_z t} \, \d\mu(z)  }_2 \\
& \le \norm{p_y(0) - p_y(1)}_2 < \infty.
\end{align*}
where all norms are taken in $L^2(X, \mu)$. Now write $\alpha^*_x(t) = \frac{P_x(t)}{Q(t)}$ where 
\[Q(t) = 1 - \int_X e^{2\pi i h_y t} \, \d\mu(y) \int_X e^{-2\pi i h_y t} \, \d\mu(y) = 1 - \int_X \int_X e^{2\pi i (h_y - h_z) t} \, \d\mu(y) \d\mu(z).
\]
By the dominated convergence theorem it is clear that both $P_x$ and $Q$ are entire functions in $t$, whence $\alpha_x^*(t)$ has either removable singularities or poles of finite order.\footnote{Let $t \in Z_Q$ be such that $Q(t) = 0$. At $t$ we may factor $Q$ as $Q(z ) = (z - t)^m G(z )$, where $m \ge 1$ and $G(t) \neq 0$. Hence $ \frac{P(z)}{Q(z)} = (z - t)^{-m} \frac{P(z)}{G(z)}$ where $\frac{P}{G}$ is holomorphic in a neighborhood of $t$.} Moreover $Q$ is nonconstant as 
\begin{align*}
\pars{}{t} Q(t) & = \int_X \int_X  (h_y - h_z)^2 e^{2\pi i (h_y - h_z) t} \, \d\mu(y) \d\mu(z) \neq 0
\end{align*}
(evaluate at $t = 0$ to see the nonequivalence). Hence, the set of zeroes $Z_Q$ of $Q$ is at most a discrete set, whence countable. Let $t_0 \in Z_Q\cap \R$. One can either have $\lim_{t \ra t_0} |\alpha_x^*(t) |= \infty$ (pole) or $\limsup_{t \ra t_0} |\alpha_x^*(t) | < \infty$ (removable singularity). However, the bound $\norm{\alpha_x^*(t)}_2 < \infty$ implies that the measure of $x$ on which $\lim_{t \ra t_0} | \alpha_x^*(t)| = \infty$ is zero. Deleting at most countably many such null sets (one for each point in $Z_Q \cap \R$) we may assume that for all $x$ we have $\limsup_{t \ra t_0} |\alpha^*_x(t)| < \infty$, whence for all $x$ the point $t_0$ is a removable singularity. Modify $\alpha_x^*(t)$ on these at most countably many points so that it is holomorphic on the real line. This is an immaterial change for the Fourier transform. Then for every $t \in \R$, there is some $\ve > 0$ such that $\alpha_x^*(t)$ is holomorphic on $B(t, \ve) \subset \C$. This implies that $\alpha_x^*(t)$ is infinitely (complex) differentiable within this ball, which implies the desired result. 
\end{proof}

One of the singularities of $\alpha_x(t)$ is at $t = 0$. In Lemma \ref{L:lt3} we derive explicitly the value of $\alpha_x(t)$ (with singularities removed) at $t = 0$.

\begin{lemma} \label{L:lt3}
One has $\lim_{t \ra 0} \alpha_x(t) = p_x(0) - p_x(1)$.
\end{lemma}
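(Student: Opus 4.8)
The plan is to compute the limit by analyzing the singular term $\alpha_x^*(t) = \alpha_x(t) - (p_x(0) - p_x(1))$ as $t \to 0$, since the claimed limit is exactly $p_x(0) - p_x(1)$, so it suffices to show $\alpha_x^*(t) \to 0$. Writing $\alpha_x^*(t) = P_x(t)/Q(t)$ as in the proof of Lemma \ref{L:lt2}, the numerator is
\[
P_x(t) = \int_X (p_y(0) - p_y(1)) e^{2\pi i h_y t} \, \d\mu(y) \cdot \left( e^{-2\pi i h_x t} - \int_X e^{-2\pi i h_z t}\, \d\mu(z) \right),
\]
(using $\int_X(p_y(0)-p_y(1))\,\d\mu(y)=0$ to drop the subtracted constant inside the first integral), and $Q(t) = 1 - \int_X\int_X e^{2\pi i(h_y-h_z)t}\,\d\mu(y)\,\d\mu(z)$.

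First I would establish the leading-order behavior of $Q$ near $t=0$. Since $Q(0)=0$ and, as computed in Lemma \ref{L:lt2}, $Q''(0) = \int_X\int_X (h_y-h_z)^2 e^{0}\,\d\mu(y)\,\d\mu(z) = 2\operatorname{Var}_\mu(h) \ne 0$ (because $h$ is nonconstant), while $Q'(0) = 2\pi i \int_X\int_X (h_y - h_z)\,\d\mu(y)\,\d\mu(z) = 0$ by antisymmetry, Taylor expansion gives $Q(t) = \operatorname{Var}_\mu(h)\, t^2 + o(t^2)$ as $t\to 0$, so $Q$ has a zero of exact order $2$ at the origin. Next I would Taylor-expand $P_x$: the first factor $\int_X(p_y(0)-p_y(1)) e^{2\pi i h_y t}\,\d\mu(y)$ vanishes at $t=0$ (again by $\int_X(p_y(0)-p_y(1))\,\d\mu(y)=0$) and hence equals $2\pi i\, t\int_X h_y(p_y(0)-p_y(1))\,\d\mu(y) + O(t^2) = O(t)$; the second factor $e^{-2\pi i h_x t} - \int_X e^{-2\pi i h_z t}\,\d\mu(z)$ also vanishes at $t=0$ and equals $-2\pi i t\,(h_x - \int_X h_z\,\d\mu(z)) + O(t^2) = O(t)$. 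Therefore $P_x(t) = O(t^2)$ as $t\to 0$.

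Combining these, $\alpha_x^*(t) = P_x(t)/Q(t) = O(t^2)/(\operatorname{Var}_\mu(h)\,t^2 + o(t^2))$, which is bounded near $t=0$; but in fact the numerator is a product of two factors each of which is $o(1)\cdot O(t)$, so writing $P_x(t) = (2\pi i t)^2 \big(\int_X h_y(p_y(0)-p_y(1))\,\d\mu(y)\big)\big(-(h_x - \bar h)\big) + o(t^2)$ and dividing by $Q(t) = \operatorname{Var}_\mu(h)t^2 + o(t^2)$ shows $\lim_{t\to 0}\alpha_x^*(t)$ exists and equals a finite constant; I would then observe that this limiting value is the residue-type quantity, but since Lemma \ref{L:lt2} already guarantees $t=0$ is a \emph{removable} singularity of $\alpha_x$ (hence of $\alpha_x^*$), and the explicit leading expansion shows the constant term of $P_x/Q$ at $0$ — the ratio of the $t^2$-coefficients — is well-defined; a direct check that this ratio is actually $0$ requires only noting the alternative: that the value claimed in the lemma statement is $p_x(0)-p_x(1)$, i.e. $\alpha_x^*(0)=0$. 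Here I would argue more carefully: the simplest route is to evaluate $\alpha_x(t)$ directly from its defining formula by taking $t\to 0$ with a single application of l'Hôpital's rule twice (or equivalently the Taylor argument above), and the subtracted-off cross terms all vanish to second order while $Q$ vanishes to exactly second order with nonzero coefficient, yielding a finite limit; the assertion $\alpha_x^*(0)=0$ then follows because the $t^2$-coefficient of $P_x$ is a scalar multiple of $(h_x-\bar h)$ times $\int_X h_y(p_y(0)-p_y(1))\,\d\mu(y)$, which — wait, this need not be zero.

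The main obstacle is precisely this last point: naively the Taylor computation gives a \emph{nonzero} finite limit for $\alpha_x^*(t)$, not $0$, unless one is more careful about which terms survive. I expect the resolution is that one must retain the $O(t^2)$ corrections inside the \emph{first} factor of $P_x$ as well and observe a cancellation, or — more likely — that the intended computation keeps track of the fact that $\int_X e^{2\pi i h_y t}\,\d\mu(y)\int_X e^{-2\pi i h_y t}\,\d\mu(y) = |\widehat{\mu_h}(t)|^2$ where $\mu_h$ is the pushforward of $\mu$ under $h$, so that $Q(t) = 1 - |\widehat{\mu_h}(t)|^2$, and similarly rewrite the numerator; pairing real and imaginary parts then exposes that the apparent nonzero constant is an artifact and the true limit is $0$. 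I would carry out the bookkeeping by substituting the expansions $e^{2\pi i h_y t} = 1 + 2\pi i h_y t - 2\pi^2 h_y^2 t^2 + o(t^2)$ everywhere, collect the $t^2$ terms in both $P_x$ and $Q$ with full care, and verify the $t^2$-coefficient of $P_x$ divided by that of $Q$ equals $0$ (equivalently, that $\alpha_x^*$ actually vanishes to higher order than naively expected, consistent with its singularity at $0$ being removable with value $0$). Once that algebraic identity is confirmed, the lemma follows immediately. \qed
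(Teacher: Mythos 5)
You correctly reduce the problem to showing $\alpha_x^*(t)\to 0$, correctly identify that $Q$ vanishes to exactly second order at $t=0$ with $Q''(0)\neq 0$, and correctly compute that the $t^2$-coefficient of $P_x$ is a scalar multiple of $(h_x-\bar h)$ times $\int_X h_y\,(p_y(0)-p_y(1))\,\d\mu(y)$. But at this point you stall, writing ``wait, this need not be zero'' and then speculating about real/imaginary cancellations or higher-order bookkeeping. None of those speculations closes the gap, because without an additional hypothesis the quantity $\int_X h_y\,(p_y(0)-p_y(1))\,\d\mu(y)$ is genuinely nonzero and the claimed limit is false.

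The missing ingredient is the orthogonality condition imposed on $h$ in the proof of Theorem \ref{T:2} (within which this lemma sits): the vector $(h_k)$ is chosen so that $(h_k)\perp(p_k(0))$ and $(h_k)\perp(p_k(1))$. Consequently
\[
\int_X h_y\,(p_y(0)-p_y(1))\,\d\mu(y)=\int_X h_y\,p_y(0)\,\d\mu(y)-\int_X h_y\,p_y(1)\,\d\mu(y)=0-0=0,
\]
so the $t^2$-coefficient of $P_x$ vanishes, and hence $\alpha_x^*(t)\to 0$. This is exactly how the paper closes the L'Hôpital computation: after observing that $P_x, P_x', Q, Q'$ all vanish at $0$, the paper evaluates $P_x''(0)$ and shows the only potentially surviving term contains the factor $\int_X i h_y(p_y(0)-p_y(1))\,\d\mu(y)$, which is zero ``by orthogonality of $h_y$ and $p_y(0),p_y(1)$.'' Your Taylor-expansion route is mathematically equivalent to the paper's double-L'Hôpital route; what is missing is simply invoking this standing orthogonality assumption instead of searching for a hidden cancellation that does not exist.
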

\begin{proof}
We use the notation of Lemma \ref{L:lt2}. It is sufficient to show that $\lim_{t \ra 0} \alpha_x^*(t) = 0$. By application of the dominated convergence theorem it is straightforward to see that $\lim_{t \ra 0} P_x(t) = \lim_{t \ra 0} Q(t) = 0$. We proceed via L'H\^{o}pital's Rule. One has
\begin{align*}
\parr{}{t} P_x(t) = & \int_X i h_y (p_y(0) - p_y(1)) e^{2\pi i h_y t}\, \d\mu(y) \left( e^{-2\pi i h_x t} - \int_X e^{-2\pi i h_z t} \, \d \mu(z) \right) \\
& + \int_X (p_y(0) - p_y(1)) e^{2\pi i h_y t}\, \d\mu(y) \left(- i h_x e^{-2\pi i h_x t} + \int_X i h_z e^{-2\pi i h_z t} \, \d \mu(z) \right)\\
\parr{}{t} Q(t) = & \int_X \int_X i (h_y - h_x) e^{2\pi i (h_y - h_x) t}\, \d \mu(y) \d \mu(z),
\end{align*}
whence once more $\lim_{t \ra 0} \parr{}{t} P_x(t) = \lim_{t \ra 0} \parr{}{t} Q(t) = 0$. However, one final calculation yields: 
\begin{align*}
\pars{}{t} P_x(t) = & -\int_X  h_y^2 (p_y(0) - p_y(1)) e^{2\pi i h_y t}\, \d\mu(y) \left( e^{-2\pi i h_x t} - \int_X e^{-2\pi i h_z t} \, \d \mu(z) \right) \\
& + \int_X (p_y(0) - p_y(1)) e^{2\pi i h_y t}\, \d\mu(y) \left(- h_x^2 e^{-2\pi i h_x t} + \int_X i h_z^2 e^{-2\pi i h_z t} \, \d \mu(z) \right) \\
& + 2 \int_X  i h_y (p_y(0) - p_y(1)) e^{2\pi i h_y t}\, \d\mu(y) \left(- i h_x e^{-2\pi i h_x t} + \int_X i h_z e^{-2\pi i h_z t} \, \d \mu(z) \right).
\end{align*}
The first two lines of the preceding display vanish at $t = 0$. As for the third line, note that by the dominated convergence theorem, 
\begin{align*}
\lim_{t \ra 0} \int_X i h_y (p_y(0) - p_y(1)) e^{2\pi i h_y t} \, \d\mu(y) = \int_X i h_y (p_y(0) - p_y(1)) \, \d\mu(y) = 0
\end{align*}
by orthogonality of $h_y$ and $p_y(0), p_y(1)$. Hence $\lim_{t \ra 0} \pars{}{t} P_x(t) = 0$ but we have already shown in Lemma \ref{L:lt2} that $\lim_{t \ra 0} \pars{}{t} Q(t) \neq 0$. So $\lim_{t \ra 0} \alpha_x^*(t) = 0$ as we require. 
\end{proof}

\begin{lemma}
\label{L:lt4} One has $\alpha_x(t) = \overline{\alpha_x(-t)}$ for all $t \in \R$
\end{lemma}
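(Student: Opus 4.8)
The plan is to exploit the fact that $\alpha_x(t)$ is built entirely from the real numbers $p_x(0)$, $p_x(1)$, $\{h_y\}_{y\in X}$ and the complex exponentials $e^{\pm 2\pi i h_y t}$, each of which is invariant under the operation $F(t)\mapsto\overline{F(-t)}$. Concretely, since $h_y$ and $t$ are real, $\overline{e^{2\pi i h_y(-t)}}=\overline{e^{-2\pi i h_y t}}=e^{2\pi i h_y t}$, and likewise $\overline{e^{-2\pi i h_y(-t)}}=e^{-2\pi i h_y t}$.

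First I would isolate the three building blocks of $\alpha_x$: the numerator $N_x(t)\equiv\int_X (p_y(0)-p_y(1))e^{2\pi i h_y t}\,\d\mu(y)$, the denominator $Q(t)\equiv 1-\int_X e^{2\pi i h_y t}\,\d\mu(y)\int_X e^{-2\pi i h_y t}\,\d\mu(y)$, and the bracketed factor $R_x(t)\equiv e^{-2\pi i h_x t}-\int_X e^{-2\pi i h_z t}\,\d\mu(z)$. For each, I would push complex conjugation inside the integrals---legitimate because $\mu$ and the weights $p_y(0)-p_y(1)$ are real-valued---and apply the exponential identities above to obtain $\overline{N_x(-t)}=N_x(t)$, $\overline{Q(-t)}=Q(t)$, and $\overline{R_x(-t)}=R_x(t)$. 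Since $p_x(0)-p_x(1)\in\R$, assembling these gives $\overline{\alpha_x(-t)}=\alpha_x(t)$ at every $t$ with $Q(t)\neq 0$.

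To upgrade this to all of $\R$, including the (countably many, by Lemma \ref{L:lt2}) points where $Q$ vanishes, I would argue by continuity: $\alpha_x$ extends continuously over these points once the removable singularities are removed, the map $t\mapsto\overline{\alpha_x(-t)}$ is then continuous as well, and two continuous functions agreeing on the dense set $\{t:Q(t)\neq 0\}$ must agree everywhere. I do not anticipate a genuine obstacle in this argument; the only step deserving any care is this final passage across the removable singularities, and even that is routine.
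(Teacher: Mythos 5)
Your argument is correct and is essentially the paper's own (very terse) proof: the paper likewise observes that the denominator $Q(t)$ is real and that the property then only needs to be checked for the numerator, which follows by conjugating under the integral exactly as you do. Your extra remark about extending across the removable singularities of $Q$ by continuity is a harmless refinement the paper does not bother with.
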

\begin{proof}
This is clear from the definition of $\alpha_x(t)$; note in particular that 
\begin{align*}
Q(t) = 1 - \int_X e^{2\pi i h_y t} \d\mu(y) \int_X e^{-2\pi i h_y t } \d\mu(y) = 1 - \int_X e^{2\pi i h_y t} \d\mu(y) \overline{\int_X e^{2\pi i h_y t} \d\mu(y) } \in \R,
\end{align*}
so one only has to check the property for the numerator $P_x(t)$. 
\end{proof}

We continue building our function $\gamma$ by introducing two new functions. Let $\phi(t)$ be a smooth, real valued, and compactly supported function such that $\phi(0) = 1$. For some fixed and positive $M \in \R$ let $\psi_M(t)$ be the Fourier transform of a uniform distribution on $[-M, M]$, i.e.\ 
\[ 
\psi_M(t) \equiv \frac{1}{2M} \F\left[ \one_{[-M,M]} \right](t)
\]
We then let $\gamma_x(t) =(\phi(t) \alpha_x(t))\psi_M(t)$. Lemma \ref{L:lt1} applies when $\alpha_x(t)$ is replaced with $\gamma_x(t)$. Note that $\psi_M(0) = \frac{1}{2M}\int_{-M}^M  \, \mathrm{d}w = 1$ so $\gamma_x(0) = p_x (0) - p_x(1)$. Moreover, 
\begin{align*}
\F^{-1} (\gamma_x(t) ) &= \F^{-1} ((\phi(t) \alpha_x(t)) \cdot \psi_M(t) ) \\
& = \F^{-1} (\phi(t) \alpha_x(t)) * \F^{-1} (\phi_M(t)) \\
& = \F^{-1} (\phi(t) \alpha_x(t)) * \left( \frac{1}{2M} \one_{[-M,M]} \right),
\end{align*}
where we have used the fact that the Fourier transform of a convolution is the the product of the Fourier transforms. Because we have the inclusion $\phi(t) \alpha_x(t) \in C_0^\infty(\R)$, $\phi(t) \alpha_x(t)$ is in the Schwartz space so that $\F^{-1} (\phi(t) \alpha_x(t))$ is as well; it follows that $\F^{-1} (\phi(t) \alpha_x(t)) \in L^p(\R)$ for all $p \ge 1$. In particular, $\F^{-1} (\gamma_x(t)) \in L^1(\R)$ because it is the convolution of two $L^1(\R)$ functions. Because $\psi_M(t)$ is the Fourier transform of a real valued function one has $\psi_M(-t) = \overline{\psi_M(t)}$ for every $t \in \R$; as $\phi(t)$ is real valued this implies with Lemma \ref{L:lt4} that $\gamma_x(-t) = \overline{\gamma_x(t)}$ for all real $t$. Hence
\begin{align*}
\overline{\F^{-1} (\gamma_x(t))} & = \overline{ \int_{-\infty}^\infty \gamma_x(t) e^{2\pi i t w} \, \d t} = \int_{-\infty}^\infty \gamma_x(-t) e^{-2 \pi i t w } \, \d t \\
& = \int_{-\infty}^\infty \gamma_x(t) e^{2\pi i w} \, \d t = \F^{-1}(\gamma_x(t)),
\end{align*}
whence $\F^{-1}(\gamma_x(t)) \in \R$. By taking Fourier transforms it is also the case that: 
\begin{align*}
\int_{-\infty}^\infty \F^{-1}[ \phi(t) \alpha_x(t)](w)\, \d w = \F\big[\F^{-1} [ \phi(t) \alpha x(t)]\big](0) = p_x(0) - p_x(1). 
\end{align*}
Now we prove the following result on convolutions:
\begin{lemma} \label{L:lt5}
Let $g \in L^1(\R)$ be a real valued function; then 
\[\lim_{M \ra \infty} \int_{-\infty}^\infty \left| g * \left( \frac{1}{2M} \one_{[-M, M]} \right) (w) \right| \, \d w \ra \left|\int_{-\infty}^\infty g(w) \, \d w \right|.
\]
\end{lemma}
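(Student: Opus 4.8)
The plan is to prove Lemma \ref{L:lt5} by a direct computation of the $L^1$-norm of the convolution, exploiting the fact that averaging against $\frac{1}{2M}\one_{[-M,M]}$ is an approximate-identity-like operation \emph{at scale $M \to \infty$}, not $M \to 0$. Write $g_M(w) \equiv \left( g * \frac{1}{2M}\one_{[-M,M]} \right)(w) = \frac{1}{2M}\int_{w-M}^{w+M} g(s)\,\d s$. The key observation is that for $|w|$ small relative to $M$, the interval $[w-M, w+M]$ contains essentially all of the mass of $g$ (since $g \in L^1$, its tails are negligible), so $g_M(w) \approx \frac{1}{2M}\int_{-\infty}^{\infty} g(s)\,\d s$ on an interval of length about $2M$; integrating $|g_M|$ over that interval recovers $\left|\int g\right|$ in the limit. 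Outside that range $g_M$ is small in $L^1$.

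First I would fix $\ve > 0$ and choose $R$ so large that $\int_{|s| > R} |g(s)|\,\d s < \ve$. Then I would split $\int |g_M(w)|\,\d w$ into the region $|w| \le M - R$ and the region $|w| > M - R$. On the inner region, $[w-M, w+M] \supseteq [-R, R]$, so $g_M(w) = \frac{1}{2M}\left( \int_{-\infty}^\infty g(s)\,\d s + E_M(w) \right)$ where the error term $E_M(w) = \int_{(w-M,w+M)^c} g(s)\,\d s - \int_{[-R,R]^c} g(s)\,\d s$ — actually more simply, $g_M(w) = \frac{1}{2M}\int_{w-M}^{w+M} g$, and I'd write this as $\frac{1}{2M}\left( \int_{\R} g - \int_{(w-M,w+M)^c} g \right)$, bounding $\left| \int_{(w-M,w+M)^c} g \right| \le \int_{|s|>R}|g| < \ve$ whenever $|w| \le M - R$. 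Hence on this region $\left| g_M(w) - \frac{1}{2M}\int_\R g \right| \le \frac{\ve}{2M}$, so $\int_{|w|\le M-R} |g_M(w)|\,\d w = \left|\int_\R g\right| \cdot \frac{2(M-R)}{2M} + O(\ve)$, which tends to $\left|\int_\R g\right| + O(\ve)$ as $M \to \infty$. On the outer region $|w| > M - R$, I would use Fubini/Tonelli: $\int_{|w|>M-R} |g_M(w)|\,\d w \le \frac{1}{2M}\int_{|w|>M-R}\int_{w-M}^{w+M}|g(s)|\,\d s\,\d w \le \frac{1}{2M}\int_\R |g(s)| \cdot |\{w : |w|>M-R,\ |w-s|<M\}|\,\d s$, and the inner measure is at most $2R$ for... hmm, actually for $|s| \le R$ one checks $|w| > M-R$ and $|w-s| < M$ forces $w$ into a set of measure $\le$ something like $2R$; for large $|s|$ the bound needs more care but the total is controlled by $\frac{1}{2M}(2R)\int_\R|g|$ plus a tail term $\lesssim \int_{|s|>R}|g| < \ve$, both of which vanish or are $O(\ve)$.

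The main obstacle I anticipate is bookkeeping in the outer-region estimate: getting a clean uniform-in-$M$ bound on $\int_{|w| > M-R}|g_M(w)|\,\d w$ that goes to zero. The cleanest route is probably to avoid the regionwise split for the upper bound and instead note directly, via Tonelli, that $\int_\R |g_M(w)|\,\d w \le \frac{1}{2M}\int_\R |g(s)| \cdot 2M \,\d s = \|g\|_1$ (Young's inequality, the convolution of $L^1$ with a probability density), so $g_M$ stays bounded in $L^1$; then combine this with the lower-semicontinuity-type fact that $\liminf_M \int |g_M| \ge \left|\int g\right|$... but that inequality also needs an argument. So I think the honest approach is: prove the inner-region estimate gives $\liminf \int |g_M| \ge \left|\int_\R g\right| - O(\ve)$ and a matching $\limsup \int|g_M| \le \left|\int_\R g\right| + O(\ve)$; for the $\limsup$ half, dominate $\int |g_M|$ by $\int_{|w|\le M-R}|g_M| + \int_{|w|>M-R}|g_M|$, where the first piece is $\le \frac{M-R}{M}\left(\left|\int_\R g\right| + \ve\right)$ and the second, by Tonelli as above, is bounded by $\frac{1}{M}\big(R\|g\|_1 + M\!\cdot\!\ve\big)$ after checking that $|\{w : |w| > M-R,\ s - M < w < s+M\}| \le 2R + 2|s|\one_{|s|>R}$... this is the spot requiring patience. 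Letting $M \to \infty$ then $\ve \to 0$ closes both inequalities and yields the claim.
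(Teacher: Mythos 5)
Your argument is correct and rests on the same region-splitting idea as the paper's proof: isolate a middle band of $w$ (your $|w|\le M-R$) on which the averaged function is within $\ve/(2M)$ of the constant $\frac{1}{2M}\int_\R g$, recover $\left|\int_\R g\right|$ from integrating $|g_M|$ over that band, and show via Tonelli that the two outer bands contribute $o(1)+O(\ve)$. The one structural difference is that the paper first replaces $g$ by a \emph{compactly supported} approximant $g_0$ with $\|g-g_0\|_1<\ve$ (justified by Young's inequality, since $\frac{1}{2M}\one_{[-M,M]}$ is a probability density, so the $L^1$ error in the convolution is at most $\ve$); after that substitution the middle-band identity $g_0 * \frac{1}{2M}\one_{[-M,M]}(w)=\frac{1}{2M}\int g_0$ is \emph{exact} and the outer-band bound is a one-line estimate of the form $\frac{2B}{2M}\|g_0\|_1$. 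You instead keep $g$ and absorb the tail mass $\int_{|s|>R}|g|<\ve$ in-place, which is why your outer-region Tonelli estimate feels fussier; note you do not need the refined bound $|\{w:|w|>M-R,\,|w-s|<M\}|\le 2R+2|s|\one_{|s|>R}$ — the crude bound of $2R$ for $|s|\le R$ and $2M$ for $|s|>R$ already gives $\frac{R}{M}\|g\|_1+\ve$, which suffices. Both routes close cleanly; the paper's preliminary truncation just trades one density argument for less bookkeeping downstream.
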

\begin{proof}
Fix any $\ve > 0$. By the density of $C_0^\infty(\R)$ in $L^1(\R)$ we may find a smooth compactly supported function $g_0$ such that $\norm{g - g_0}_1 < \ve$. Hence, 
\begin{align*}
&\left| \int_{-\infty}^\infty \left| g * \frac{1}{2M} \one_{[-M,M]} \right| \, \d w - \int_{-\infty}^\infty \left|g_0 * \frac{1}{2M} \one_{[-M,M]}\right| \, \d w \right| \\
& \quad  = \left| \int_{-\infty}^\infty \left| \frac{1}{2M} \int_{-\infty}^\infty g(y) \one_{[-M,M]}(w-y) \, \d y \right|\, \d w - \int_{-\infty}^\infty \left| \frac{1}{2M} \int_{-\infty}^\infty g_0(y) \one_{[-M,M]}(w-y) \, \d y\right| \, \d w\right| \\
& \quad \le \int_{-\infty}^\infty |g(y) - g_0(y)| \frac{1}{2M} \int_{-\infty}^\infty \one_{-M,M} (w-y) \, \d w \, \d y < \ve. 
\end{align*}
Suppose that $g_0$ is supported on $[-B,B]$. Then for sufficiently large $M$ one has 
\begin{align*}
&\int_{-\infty}^\infty  \left|g_0* \frac{1}{2M} \one_{[-M,M]}\right| \, \d w \\ & \quad = \int_{-\infty}^\infty \frac{1}{2M} \left|\int_{w - M}^{w+M} g_0(y)  \, \d y \right| \, \d w \\
& \quad = \frac{1}{2M}\int_{-B- M}^{B - M}\left|  \int_{w - M}^{w + M } g_0(y) \, \d y\right| \, \d w + \frac{1}{2M} \int_{-B + M }^{B+ M} \left|\int_{w - M}^{w + M } g_0(y)\right| \, \d y \, \d w \\
& \quad \quad + \frac{1}{2M}  \int_{B - M }^{-B+ M} \left| \int_{w - M}^{w+ M } g_0(y) \, \d y \right| \, \d w. 
\end{align*}
Note that $\left|\frac{1}{2M} \int_{-B - M}^{B- M} \left|\int_{w - M}^{w+ M } g_0(y) \, \d y \right| \, \d w\right| \le  \frac{2B}{2M}\norm{g_0(y)}_1 \ra 0 $. Applying similar reasoning to the last two lines of the previous display and the fact that 
\begin{align*}
\frac{1}{2M} \int_{B - M}^{-B + M} \left| \int_{w - M}^{w + M} g_0(y) \, \d y \right| \, \d w  =  \frac{2M - 2B}{2M} \left|\int_{-\infty}^\infty g_0(y) \, \d y\right| \, \d w \ra \left|\int_{-\infty}^\infty g_0(y) \, \d y \right|.
\end{align*}
implies the desired result 
\end{proof}

Now we revert to our original notation, replacing $x$ with $k$ and $\mu$ with scaled counting measure. Let $K_1 \subset K$ denote the set of indices $k$ for which $p_k(0) > p_k(1)$ and $K_2 \subsetneq K$ the subset on which $p_k(0) < p_k(1)$ (recall that we have reduced to the case $p_k(0) \neq p_k(1)$, all $k$). Using Lemma \ref{L:lt5} we may take $M$ so high that for all $k \in K_1$ the inequality 
\[\left| \int_{-\infty}^\infty |\F^{-1}\gamma_k (w)| \, \d w - \left|\int_{-\infty}^\infty \F^{-1}\gamma_k (w) \, \d w\right|  \right| < p_k(1)
\]
so that $\int_{-\infty}^\infty |\F^{-1} \gamma_k(w)| \, \d w < p_k(0)$. Similarly for $k \in K_2$ we may arrange for the integral bound $\int_{-\infty}^\infty |\F^{-1} \gamma_k(w) | \, \d w < p_k(1)$. Now for $k \in K_1$ let the density for $y$ given $X = k$, $W = 0$ be given as
\begin{align*}
f^0_k(w) \equiv \frac{|\F^{-1} \gamma_k(w) | }{\int_{-\infty}^\infty |\F^{-1} \gamma_k(w') | \, \d w'} \ge \frac{|\F^{-1} \gamma_k(w) | }{p_k(0)}
\end{align*}
and let 
\begin{align*}
f^1_k(w) \equiv \frac{1}{p_k(1)} \left( p_k(0)f_k^0(w) - \F^{-1} \gamma_k(w)  \right).
\end{align*}
Immediately one has that $f_k^0(w)$ is a proper density function; moreover, $f_k^1(w)$ is nonnegative and 
\begin{align*}
\int_{-\infty}^\infty f_k^1(w) \, \d w = \frac{1}{p_k(1)} \left( p_k(0) - (p_k(0) - p_k(1))\right) = 1
\end{align*}
so it is a proper density. Repeat the process for $k \in K_2$; set 
\begin{align*}
&f_k^1(w) \equiv \frac{|\F^{-1}\gamma_k(w)|}{\int_{-\infty}^\infty |\F^{-1} \gamma_k(w') | \, \d w} \ge \frac{| \F^{-1} \gamma_k(w)|}{p_k(1)} \\
& f_k^0(w) \equiv \frac{1}{p_k(0)} \left( p_k(1) f_k^1(w) + \F^{-1} \gamma_k(w) \right),
\end{align*} 
where again both functions are proper densities on $\R$. Finally, notice that we have arranged these densities so that for all $k$, 
\begin{align*}
&p_k(0) f_k^0(w) - p_k(1) f_k^1(w) = \F^{-1} \gamma_k(w) \\
&p_k(0) \hf_k^0 - p_k(1) \hf_k^1 = \F\left[ p_k(0) f_k^0 - p_k(1) f_k^1 \right] = \gamma_k. 
\end{align*}
Hence, our densities satisfy the conditions in \eqref{E:ft2}, which is equivalent to \eqref{E:ft1} under the integrability conditions satisfied by our densities, and we are done. 

\subsection{Additional Proofs, Discrete Case}

\begin{proof}[Proof of Lemma \ref{L:a1}]
Let $\Gamma^0$ and $\Gamma^1$ denote arbitrary probability distribution functions for mean $0$ continuously distributed random variables, and let $U\big|(X \in J, W = w) \sim \Gamma_0$ and $U\big|(X \not\in J, W = w) \sim \Gamma_1$ for $w \in \{0,1\}$. Moreover, choose $\delta_0, \delta_1$ nonzero such that $\P{X \in J | W = 0} \delta_0 + \P{X \in J^c | W  = 0} \delta_1 = 0$ (where $J^c$ is the complement of $J$ in $[K]$) and define $\tilde{U}\big|(X \in J, W = w) \sim \tilde{\Gamma}_0$ and $\tilde{U}\big|(X \not\in J, W = w) \sim \tilde{\Gamma}_1$ for $w \in \{0,1\}$, where $\tilde{\Gamma}_\ell (u) \equiv \Gamma_\ell(u +\delta_\ell)$ for $\ell = 0 , 1$ and all $u \in \R$.  Then, one has for all $u \in \R$: 
\begin{align*}
\P{U \le u |W = 0 } = & \Gamma_0(u)\P{X \in J | W = 0} + \Gamma_1(u) \P{X \in J^c | W = 0} \\
= & \P{U \le u | W = 1} \\
\P{\tilde{U} \le u | W = 0} = & \Gamma_0(u + \delta_0) \P{X \in J | W= 0} + \Gamma_1(u + \delta_1) \P{X \in J^c | W = 0} \\
= &  \P{\tilde{U} \le u | W = 1},
\end{align*}
so that $\tilde{U}$ and $U$ are independent of $W$. Moreover, $\E{U} = 0$ and $\E{\tilde{U}} = 0$ by choice of $\delta_0, \delta_1$. Hence, letting $Y = g(X) + U$, one has $Y - g(X) \indep W$ but also $Y - g(X) - \left( \one_{X \in J} \delta_0 + \one_{X \in J^c} \delta_1 \right) = \tilde{U} \indep W$. Hence, $g$ is not point identified by the full independence restriction $Y - g(X) \indep W$, and moreover there are a continuum of $g$ in the identified set corresponding to all possible choices of $\delta_0, \delta_1$. 
\end{proof}

\begin{proof}[Proof of Proposition \ref{P:generic}]
Let $H$ denote the hyperplane of vectors which are orthogonal to the $2K$-vector $v \equiv (p_1(0), \ldots ,p_K(0), -p_1(1), \ldots, -p_K(1))'$. If the lower conditional moments of $U$ lie in $S_{K-1}$ and the vector $\{(\E{U^K| W = 0, X = k})_{k = 1}^K, (\E{U^K|W = 0, X = k })_{k = 1}^K\}$ lies in $H$ then the law of iterated expectations implies that Assumption \ref{A:2} holds. Without loss of generality let $p_K(1)> 0$. Let $U: \R^{2K-1} \ra H$ be defined by 
\begin{align*}
U(x_1, \ldots , x_{2K-1}) = \left(x_1, \ldots , x_{2K-1},  \frac{\sum_{\ell = 1}^{K }p_\ell(0) y_\ell - \sum_{\ell=1}^{K-1} p_\ell(1) y_{K+\ell}}{p_K(1)} \right).
\end{align*}
Clearly $U$ is bijective. Let $\mu \equiv U_* \lambda$ be the pushforward of Lebesgue measure on $\R^{2K-1}$ under $U$. Note that $\mu$ is translation invariant and finite on compact (bounded) sets. Moreover, by equipping $H$ with its relative topology as a subspace of $\R^{2K}$ one readily verifies that $\mu$ is both inner and outer regular on $H$ (see\ \cite{B2007}, Theorem 7.1.7). Importantly, the relative topology on $H$ agrees with the topology on $H$ generated by the metric $d(x, y )^2 = \sum_{\ell = 1}^{2K - 1} (x_\ell - y_\ell)^2$, under which $U$ is an isometry. Hence, Haar's theorem implies that $\mu$ is up to some multiplicative constant the unique Haar measure on $H$. 

Now we show that $\mu(T) = \infty$. This is a consequence of the following lemma: 
\begin{lemma} \label{L:hambuger}
Fix moments $m_0 = 1, \ldots , m_N$ corresponding to a real-valued probability distribution $\mu$ whose support contains at least $\floor{N/2}+1$ points. If $N$ is even then for every $m_{N+1} \in \R$ there is a probability distribution $\mu'$ on $\R$ such with corresponding moments $m_0, \ldots, m_{N+1}$. If $N$ is odd then there is some $L \in \R_+$ such that for all $m_{N+1} \ge L$ there is a probability distribution $\mu'$ on $\R$ corresponding with $m_0, \ldots , m_{N+1}$. 
\end{lemma}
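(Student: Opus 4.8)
The plan is to recognize the statement as an instance of the truncated Hamburger moment problem and prove it from the classical Hankel-matrix solvability criterion together with a Schur-complement computation. Write $n := \floor{N/2}$ and, for a real sequence $a_0, a_1, \ldots$, let $H_k(a) := (a_{i+j})_{0 \le i, j \le k}$ denote its $(k+1) \times (k+1)$ Hankel matrix. I would first record two classical facts. (i) If $\nu$ is a positive measure on $\R$ with at least $k+1$ points in its support and finite moments up to order $2k$, then $H_k$ of those moments is \emph{positive definite}, since $v^\top H_k v = \int |p|^2 \, \mathrm{d}\nu$ for the polynomial $p$ of degree $\le k$ with coefficient vector $v$, and a nonzero such $p$ cannot vanish at $k+1$ points. (ii) Conversely, if a real sequence $m_0, \ldots, m_{2k}$ has $H_k \succ 0$, then it is the moment sequence of some measure on $\R$ — indeed of a $(k+1)$-atomic one (take the nodes and weights of the Gaussian quadrature attached to the moment functional, which are real, simple, and positive when $H_k \succ 0$, and which integrate exactly all polynomials of degree $\le 2k+1$), and of a probability measure whenever $m_0 = 1$. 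Since $\floor{N/2} + 1 = n+1$ in both parities and the entries $m_0, \ldots, m_{2n}$ lie among the given data ($2n \le N$), fact (i) applied to $\mu$ gives $H_n := H_n(m) \succ 0$.

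For the even case $N = 2n$, I would fix an arbitrary $m_{N+1} \in \R$ and then choose one further value $m_{N+2}$ so that the bordered matrix $H_{n+1}$ built from $m_0, \ldots, m_{2n+2}$ is positive definite; fact (ii) then yields a probability measure $\mu'$ with moments $m_0, \ldots, m_{2n+2}$, which in particular realizes the prescribed $m_0, \ldots, m_{N+1}$. Concretely, $H_{n+1}$ has leading block $H_n$, bordering column $b := (m_{n+1}, \ldots, m_{2n+1})^\top$ (depending only on the data and on $m_{N+1}$), and corner $m_{2n+2}$, so the Schur-complement identity gives $H_{n+1} \succ 0$ iff $H_n \succ 0$ and $m_{2n+2} > b^\top H_n^{-1} b$; since $H_n \succ 0$, taking $m_{N+2} := b^\top H_n^{-1} b + 1$ suffices.

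For the odd case $N = 2n+1$, the given data is $m_0, \ldots, m_{2n+1}$ and again $H_n \succ 0$. I would look for $m_{N+1} = m_{2n+2}$ making $m_0, \ldots, m_{2n+2}$ a moment sequence, i.e.\ (the even-order criterion at order $2(n+1)$) making $H_{n+1} \succeq 0$. The same block decomposition — now with $b := (m_{n+1}, \ldots, m_{2n+1})^\top$ entirely determined by the data — gives $H_{n+1} \succeq 0$ iff $m_{2n+2} \ge L$, where $L := b^\top H_n^{-1} b$; note $L \ge 0$ since $H_n^{-1} \succ 0$, so $L \in \R_+$. For every $m_{2n+2} > L$ one in fact has $H_{n+1} \succ 0$, so fact (ii) supplies the desired probability measure, establishing the claim for all $m_{N+1} > L$; enlarging $L$ slightly (or invoking the flat-extension theorem to cover the boundary $m_{2n+2} = L$, legitimate since $b$ lies trivially in the range of the invertible $H_n$) delivers the statement exactly as phrased.

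The main obstacle is not a new argument but pinning down the classical inputs: one must invoke that strict positive-definiteness of the Hankel matrix is \emph{sufficient} (not merely necessary) for solvability of the truncated Hamburger problem on all of $\R$, that a measure with enough support points forces positive-definiteness of the appropriate Hankel block, and that the hypothesis $\floor{N/2}+1$ is precisely the threshold controlling $H_n$ in both parities. The Schur-complement thresholds are then routine linear algebra, and passing from a representing measure for the extended sequence back to one for the original sequence is immediate.
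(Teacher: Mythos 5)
Your proof is correct and shares the essential ingredients with the paper's---in particular the observation that having at least $\floor{N/2}+1$ support points forces $H_{\floor{N/2}}\succ0$, proved by exactly the same $v^\top H v = \int p^2\,\mathrm{d}\mu$ argument---but it exits the problem earlier and by a different door. The paper extends the prescribed data to an \emph{infinite} sequence, inductively choosing $m_{N+m}$ for $m$ even large enough to keep every Hankel determinant strictly positive (its recursion $\det(\Delta_{N/2+1}) = m_{N+2}\det(\Delta_{N/2}) + \gamma_1 m_{N+1} + \gamma_0$ is the determinant form of your Schur complement), and then invokes solvability of the full Hamburger moment problem via the Sylvester positivity criterion. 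You instead stop after a single bordering step, establish $H_{n+1}\succ0$ by the Schur complement, and appeal to solvability of the \emph{truncated} Hamburger problem (flat extension / Gaussian quadrature) to produce a finitely atomic representing measure directly. Your route is more finitary and avoids the infinite iteration, at the cost of needing the sharper input that strict positive definiteness of a single Hankel block already guarantees a finite-atomic solution; the paper's route needs only the classical infinite-sequence Hamburger criterion but must check positivity at every order. One small care point in your write-up: constructing the $(k+1)$-point Gaussian quadrature requires moments through order $2k+1$, so in your even case you implicitly need to choose $m_{2n+3}$ as well (any value will do once $H_{n+1}\succ0$); the flat-extension formulation you mention in parentheses sidesteps this cleanly and is the tidier citation.
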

\begin{proof}
From the Hamburger moment problem and Sylvester Criterion (see\ \cite{C1994}, \S X.7) it is well known that $\mu'$ exists if $m_0, \ldots, m_N$ may be extended into a sequence $(m_n)_{n \ge 0}$ such that the Hankel matrices 
\begin{align*}
\Delta_{n} \equiv \begin{pmatrix}
m_0 & m_1 & \cdots  & m_n \\
m_1 & m_2 & \cdots & m_{n+1} \\
\vdots & \vdots & \ddots & \vdots \\
m_n & m_{n+1} & \cdots & m_{2n} 
\end{pmatrix}
\end{align*}
all satisfy $\det(\Delta_{n}) \ge 0$. We claim that for all $n \le \floor{N/2}$ one must have $\det(\Delta_{n}) > 0$. Suppose that this is not the case; then there is a nontrivial vector $x \in \R^{n+1}$ such that
\begin{align*}
0 &= x' \Delta_{n} x = \sum_{i = 0}^{n} \sum_{j = 0}^{n} x_i x_j \int_\R y^{i + j } \, \mathrm{d}\mu(y) = \int_\R \left( \sum_{i=0}^n x_i y^i \right)^2 \, \mathrm{d}\mu(y) ,
\end{align*}
which is impossible, because $\sum_{i=0}^n x_i y^i$ is a polynomial which attains at most $n$ zeros on the $\floor{N/2} + 1 > n$ points of support of $\mu$. Now suppose that $N$ is even; fix any $m_{N+1} \in \R$. One finds that $\det( \Delta_{N/2+1}) = m_{N+2} \det ( \Delta_{N/2}) + \gamma_1 m_{N+1} + \gamma_0$, where $\gamma_1$ and $\gamma_0$ are constants determined by $m_0, \ldots , m_N$. Because $\det(\Delta_{N/2}) > 0$, a large enough choice of $m_{N+2}$ guarantees that $\det( \Delta_{N/2 + 1}) > 0$. Similarly, by choosing iteratively $m_{N+m} \in \R$ for $m$ odd and $m_{N+m}$ large enough for $m$ even, we may extend to a sequence $(m_n)_{n \ge 0}$ satisfying the positivity condition and corresponding to a probability measure on the real number line. Similarly, if $N$ is odd, $\det( \Delta_{N/2 - 1/2})> 0$ and there exists some $L \in \R_+$ such that $\det( \Delta_{N/2 + 1/2}) \ge 0$ whenever $m_{N+1} \ge L$. 

\end{proof}

Now suppose that $K-1$ is even; then for a fixed $\ell, k$ and moment vector $S_{K-1}$, Lemma \ref{L:hambuger} implies that for every $m_{K} \in \R$, there is a distribution for $U|_{W = \ell, X = k }$ with the given moments and also $\E{U^K|W = \ell, X= k } = m_K$. Hence $T = \R^{2K} \cap H = H$ and $\mu(T) = \lambda(\R^{2K-1}) = \infty$. On the other hand suppose that $K-1$ is odd, so that Lemma \ref{L:hambuger} implies that there is a vector $\mathbf{L} \in \R_+^{2K}$ so that $T \supset H \cap \left\{x\in \R^{2K} \ge \mathbf{L}\right\}$ (this fact establishes that $T$ has nonempty interior when it is viewed as a subset of $H$ with its relative topology). Then for positive real numbers $A\le B$, 
\begin{align*}
\mu(T) & = \int_H \one_{x \in T } \, \mathrm{d}U_* \lambda = \int_{\R^{2K - 1}} \one_{U(y) \in T} \, \mathrm{d} \lambda(y)  \\
& = \int_{L_1}^\infty \cdots \int_{L_{2K-1}}^\infty \one\left\{\frac{ \sum_{\ell = 1}^{K }p_\ell(0) y_\ell - \sum_{\ell=1}^{K-1} p_\ell(1) y_{K+\ell} }{p_K(1)} \ge L_{2K} \right\}\, \mathrm{d} \lambda(y_1, \ldots, y_{2K-1})\\
& \ge \lambda \left( \left[A,B\right]^K \times \left[\max\{L_k\}, A - p_k(1) L_{2K} \right]^{K-1} \right)
\end{align*}
and whenever $A,B$ are chosen suitably large the quantity on the last line is strictly positive, and bounded below by $(B-A)^K(A - p_k(1) L_{2K} - \max\{L_k\} )^{K-1}$. Taking $A,B \ra \infty$ it follows that in fact $\mu(T) = \infty$. 

Now we prove the final statement that $\mu(T_0) = 0$ and that $T_0$ is contained in a finite union of translated subspaces of strictly lower dimension than $T$. Let $H_0^\delta$ denote the hyperplane in $\R^{2K}$ which is orthogonal to the vector $(p_1(0) \delta_1, \ldots , p_K(0) \delta_K, -p_1(1) \delta_1, \ldots , -p_K(1) \delta_K)$, for some fixed $\delta \in A(S_{K-1}) \setminus \{0\}$. Recall from the proof of Theorem \ref{T:identified} that the vector is $(\delta_1, \ldots, \delta_K)$ is nonconstant; this can also be seen from the fact that if $\delta_k = \delta$ for all $k$ then $ 0 - 0 = \E{ Y - h_X } - \E{Y - g_X} = \E{\delta_X} = \delta$ for some vector $h \in \R^K$. Hence $H_0^\delta$ is not equal to $H$, and $H_0^\delta \cap H$ is a subspace of dimension $2K - 2$ defined by orthogonality to two linearly independent vectors. Because $U^{-1}$ is a linear map from $H$ to $\R^{2K - 1}$, $U^{-1}( H \cap H_0^\delta)$ is also a linear subspace of dimension at most $2K - 2$, whence it has zero Lebesgue measure. So $\mu(H_0^\delta) = \mu(H_0^\delta \cap H) = 0$. Now from \eqref{E:i2}, $T_0 = \bigcup_{\delta \in A(S_{K-1})} ( H_0^\delta + \alpha_\delta) \cap H$ for some fixed $\alpha_\delta$ which are functions of $P(S_{K-1}, \delta)$. Finally, by shift invariance of $\mu$, $\mu(T_0) \le \sum_{\delta \in A(S_{K-1})} \mu(H_0^\delta + \alpha) = 0$, as $A(S_{K-1})$ is a finite set by Theorem \ref{T:identified}. 
\end{proof}

\begin{proof}[Proof of Lemma \ref{L:unifconv}]
The proof follows straightforwardly from the fact that 
\[
\h{Q}_{m,\ell,k} \ca \E{Y^m|W = \ell, X = k } p_k(\ell)
\]
for fixed $\ell$ and $k$ by the Strong Law of Large Numbers. Let $F \subset \R^K$ be a fixed bounded set. Then we have $\sup_{h \in F}\norm{h} \le R$ for some fixed $R$. The proof is established if we show that $\hP_m$ converges uniformly almost surely to $P_m$ over $F$ for all $m = 0, \ldots , K+1$. Note that we may write
\begin{align*}
\hP_m(h) - P_m(h) = \sum_{s = 1}^K \sum_{t = 0}^k A_{s,t} h_s^t,
\end{align*}
where $A_{s,t} \ca 0$ for all $s,t$ by the strong consistency of the estimators $\h{Q}_{m,\ell,k}$. Hence, the triangle inequality implies 
\begin{align*}
\sup_{h \in F} | \hP_m(h) - P_m(h)| \le  \sum_{s = 1}^K \sum_{t = 0}^k |A_{s,t}| R^t  \ca 0. 
\end{align*}
\end{proof}

\begin{proof}[Proof of Lemma \ref{L:clt}]
By Lemma \ref{L:unifconv}, $\hGamma \ca \Gamma$ uniformly over compact sets. Because $\Gamma$ is a smooth function over $\R^K$, Assumption \ref{A:4} implies that for every $\delta > 0$ there is some $\ve(\delta) > 0$ such that $\inf_{\substack{h: \norm{h- g} \ge \delta\\ \norm{h}  \le R}} \norm{\Gamma(h)} \ge \ve(\delta)$ (if this is not the case, then compactness implies the existence of a zero for $\Gamma$ which is not equal to $g$, a contradiction). With a standard proof, uniform convergence then implies that $\limsup_{M \ra \infty} \norm{\hg - g } \overset{\mathrm{a.s.}}{\le} \delta$; because this must be true for all $\delta > 0$, $\hg \ca g$. 
\end{proof}

\begin{proof}[Proof of Theorem \ref{T:clt}]
Recall that we have defined the bijection $\iota: \{0 , \ldots , K - 1\} \times \{0,1\} \times \{1, \ldots , K\}  \ra \{1, \ldots , 2K^2 \}$ by $\iota(j,\ell,k) = 2Kj + 2k + \ell - 1$. 
In addition, we have defined the functions $\Psi_m: \R^{K} \times \R^{2K^2 + 2}, \, m = 0, \ldots , K-1$ by 
\begin{align*}
\Psi_m(v,w) \equiv\left\{ \begin{array}{ll}
 \sum_{k=1}^K \sum_{j=0}^{1}  \left( w_{\iota(j, 0,k)}/w_{2K^2 + 1} \right) (-v_k)^{m-j} & \text{ if } m = 0\\
 \sum_{k=1}^K \sum_{j=0}^{m} \binom{m}{j} \left( w_{\iota(j,0,k)}/w_{2K^2 + 1} - w_{\iota(j, 1,k )}/ w_{2K^2 + 2} \right) (-v_k)^{m-j}& \text{ otherwise }
\end{array}
\right.
\end{align*}
Letting $w^*$ (respectively $\h{w^*}$) denote the $2K^2 + 2$ vector whose $\ell^\text{th}$ coordinate is given by $C_{\iota^{-1}(\ell)}$ (respectively $\hat{C}_{\iota^{-1}(\ell)}$) for $\ell = 1, \ldots , 2K^2$ and which also satisfies $w^*_{2K^2 + 1} = \P{W = 0}, w^*_{2K^2 + 2} = \P{W = 1}$ (respectively, $\h{w}^*_{2K^2 + 1} = n^{-1} \left( \sum_{i=1}^n \one_{W_i = 0}\right)$ and $\h{w}^*_{2K^2 + 2} = n^{-1} \left( \sum_{i=1}^n \one_{W_i = 1 }\right)$), we have $P_m(h) = \Psi_m(h,w^*)$ and $\h{P}_m(h) = \Psi_m(h,\h{w}^*)$ for $h \in \R^K$ and $m \in\{ 0, \ldots , K-1\}$, by \eqref{E:Qform}. 

Now define $\Psi: \R^{K} \times \R^{2K^2 + 2} \ra \R^K$ to be the vector valued function whose $m^\text{th}$ coordinate is given by $\Psi_m$. We recall that under Assumption \ref{A:3}, $\P{W = 0}, \P{W = 1} > 0$ so that in a neighborhood of $(g,w^*)$, $\Psi(v,w)$ is a well defined rational function, whence continuously differentiable, in all of its $2K^2 + K + 2$ arguments. Moreover, $\Psi_m(g,w^*) = 0$ for all $m \in \{0, \ldots , K-1\}$, and the $K \times K$ matrix $\mathrm{D}_h \Psi(h,w^*)\Big|_{\substack{h = g }} = V$ is invertible by Assumption \ref{A:5}. We now obtain from the Implicit Function theorem (cf.\ \cite{MC1995}, Theorem M.E.1.) that:

\begin{lemma}\label{L:ift}
There exist open neighborhoods $A \subset \R^K$ and $B\subset \R^{2K^2 + 2}$ of $g$ and $w^*$, respectively, and a continuously differentiable vector valued function $\omega$ from $B$ to $A$ satisfying: 
\begin{align*}
&\omega(w^*) = g \\
&\Psi(\omega(w), w ) = 0 \text{ for all } w \in B \\
&\mathrm{D}_w \omega(w)\Big|_{w = w^*} = - \Big(\underbrace{ \mathrm{D}_h \Psi(h,w^*)\Big|_{\substack{h = g }} }_{V}\Big)^{-1} \underbrace{\left( \mathrm{D}_w \Psi(h,w^*) \Big|_{h = g} \right)}_{\Delta},
\end{align*}
and moreover $\omega$ is uniquely determined in that, for $w \in B$, $\Psi(v, w) = 0$ for $v \in A$ only if $v = \omega(w)$. 
\end{lemma}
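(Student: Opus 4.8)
The plan is to first establish Lemma \ref{L:ift} as a routine application of the implicit function theorem, and then use it to finish the proof of Theorem \ref{T:clt} by a delta-method argument. For Lemma \ref{L:ift} I would simply verify the three hypotheses of the implicit function theorem for $\Psi$ at $(g,w^*)$: (a) in a neighborhood of $(g,w^*)$ the map $\Psi$ is a vector of rational functions whose only denominators are $w_{2K^2+1},w_{2K^2+2}$, which equal $\P{W=0},\P{W=1}$ and are strictly positive by Assumption \ref{A:3}, so $\Psi$ is $C^1$ there; (b) $\Psi_m(\cdot,w^*)=P_m(\cdot)$ by the choice of $w^*$, and $P_m(g)=0$ for $m=0,\dots,K-1$ since these are the defining equations $\Lambda(g)=0$, so $\Psi(g,w^*)=0$; (c) the partial Jacobian $\mathrm{D}_h\Psi(h,w^*)|_{h=g}=V$ is invertible by Assumption \ref{A:5}. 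The implicit function theorem (\cite{MC1995}, Theorem M.E.1) then delivers the neighborhoods $A\ni g$, $B\ni w^*$, the $C^1$ map $\omega:B\to A$ with $\omega(w^*)=g$, $\Psi(\omega(w),w)\equiv0$ on $B$, the local uniqueness statement, and, by implicit differentiation, $\mathrm{D}_w\omega(w^*)=-V^{-1}\Delta$.

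The remaining work to prove Theorem \ref{T:clt} splits into consistency, an asymptotic linear representation, and consistent variance estimation. For consistency: since $\h{w}^*\ca w^*$ by the strong law (Assumption \ref{A:3}), eventually $\h{w}^*\in B$, so $\omega(\h{w}^*)$ is defined, lies in $A$ (which we may shrink so that $A\subset\{\norm{h}\le R\}$, legitimate because $\norm{g}<R$ by Assumption \ref{A:4}), and solves $\h{\Lambda}(\cdot)=\Psi(\cdot,\h{w}^*)=0$; hence $\omega(\h{w}^*)\in\mathcal{Z}_R(\h{\Lambda})$ and $\mathcal{Z}_R(\h{\Lambda})$ is eventually the genuine zero set, not the fallback ball. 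By the definition of $\tg$, Lemma \ref{L:unifconv}, continuity of $\Gamma$, and $\omega(\h{w}^*)\ca g$, we get $\norm{\hGamma(\tg)}\le\norm{\hGamma(\omega(\h{w}^*))}$, and the right side tends almost surely to $\norm{\Gamma(g)}=0$; hence $\norm{\Gamma(\tg)}\ca0$ after one more use of Lemma \ref{L:unifconv}, and Assumption \ref{A:4} together with the usual compactness argument (for each $\delta>0$, $\inf_{\norm{h-g}\ge\delta,\ \norm{h}\le R}\norm{\Gamma(h)}>0$) gives $\tg\ca g$, exactly as in the proof of Lemma \ref{L:clt}.

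For the linear representation, eventually $\tg\in A$ and $\h{w}^*\in B$, and $\h{\Lambda}(\tg)=\Psi(\tg,\h{w}^*)=0$, so the local uniqueness in Lemma \ref{L:ift} forces $\tg=\omega(\h{w}^*)$. The multivariate central limit theorem applied to the i.i.d.\ summand vector underlying $\h{w}^*$ gives $\sqrt{n}(\h{w}^*-w^*)\cd N(0,\Omega)$ (this uses that the coordinates of that vector are square-integrable, which is where a moment bound enters), so the delta method applied to the $C^1$ map $\omega$ yields $\sqrt{n}(\tg-g)=\sqrt{n}(\omega(\h{w}^*)-\omega(w^*))\cd N(0,(\mathrm{D}_w\omega(w^*))\,\Omega\,(\mathrm{D}_w\omega(w^*))')$ with $\mathrm{D}_w\omega(w^*)=-V^{-1}\Delta$. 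Finally, under $\E{Y^{2K-2}\one_{W=\ell,X=k}}<\infty$ the sample covariance matrix $\h{\Omega}$ of the vectors forming $\Omega$ is strongly consistent; and $\h{V}$ (defined as $V$ with $g,Q_{j,\ell,k}$ replaced by $\tg,\h{Q}_{j,\ell,k}$) and $\h{\Delta}\equiv\mathrm{D}_w\Psi(h,\h{w}^*)|_{h=\tg}$ are consistent by continuity of these rational functions together with $\tg\ca g$, $\h{w}^*\ca w^*$, and $\h{Q}_{j,\ell,k}\ca Q_{j,\ell,k}$; since $V$ is invertible, $\h{V}^{-1}\ca V^{-1}$ eventually, so $-\h{V}^{-1}\h{\Delta}\ca-V^{-1}\Delta$ and the plug-in variance is consistent by the continuous mapping theorem.

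The main obstacle is the consistency step. The estimator $\tg$ minimizes $\norm{\hGamma(\cdot)}$ over the random zero set $\mathcal{Z}_R(\h{\Lambda})$, which one cannot control pointwise; the key is that Lemma \ref{L:ift} supplies a specific near-$g$ element $\omega(\h{w}^*)$ of this set, which simultaneously rules out the fallback-ball branch in the definition of $\mathcal{Z}_R$ and forces $\norm{\hGamma(\tg)}\to0$. Once $\tg\ca g$ is secured, local uniqueness in Lemma \ref{L:ift} collapses $\tg$ to $\omega(\h{w}^*)$, and the remaining CLT, delta-method, and Slutsky steps are routine.
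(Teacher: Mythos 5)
Your proof of Lemma \ref{L:ift} is correct and matches the paper's argument exactly: both verify the three hypotheses of the implicit function theorem — that $\Psi$ is $C^1$ near $(g,w^*)$ because its only denominators $w_{2K^2+1},w_{2K^2+2}$ equal $\P{W=0},\P{W=1}>0$ (Assumption \ref{A:3}), that $\Psi(g,w^*)=\Lambda(g)=0$, and that $\mathrm{D}_h\Psi(h,w^*)|_{h=g}=V$ is invertible (Assumption \ref{A:5}) — and then invoke the cited implicit function theorem to obtain $\omega$, local uniqueness, and $\mathrm{D}_w\omega(w^*)=-V^{-1}\Delta$.
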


Now we claim that $\tilde{g} \ca \omega(\h{w^*})$. Assumption \ref{A:3} implies that $\h{w^*} \ca w^*$, so that $\one_{\h{w^*} \in B} \ca 1$. Hence, Lemma \ref{L:ift} implies that $\one_{\omega(\h{w^*}) \in \mathcal{Z}_R(\hat{\Lambda})} \ca 1$, which is to say that almost surely $\omega(\h{w^*}) \in A$ eventually constitutes a zero of the function $\hat{\Lambda}$. The uniqueness part of Lemma \ref{L:ift} also implies that $\mathcal{Z}_R(\hat{\Lambda}) \cap A \ca \{\omega(\h{w^*})\}$, in the sense that the zero set eventually almost surely collapses down to a singleton supplied by the implicit function $\omega$. It remains to show that this particular zero almost surely eventually minimizes the quantity $\norm{\hGamma(x) }$. By Assumption \ref{A:4}, $g$ uniquely minimizes $\Gamma$ in the closed ball $\overline{B}(0, R) \subset \R^K$. Moreover, as in the proof of Lemma \ref{L:unifconv}, for every $\delta > 0$ there is some $\ve > 0$ such that for $x \in \overline{B}(0,R) \setminus B(g,\delta)$ one has $\norm{\Gamma(x) } > \ve$. Let $\delta$ be sufficiently small so that $B(g, \delta) \subset A$. By continuity of $\Gamma$ we may take $\delta'> 0$ sufficiently small relative to $\delta$ so that $\sup_{x \in B(g,\delta')} \norm{\Gamma(x)} < \frac{\ve}{3}$. Now, the uniform convergence result of Lemma \ref{L:unifconv} implies that 
\begin{align} \label{E:convergenceball}
\liminf_{n \ra \infty}\left( \inf_{\norm{x - g} \ge \delta } \norm{\hGamma(x)} - \sup_{ \norm{x - g } \le \delta'} \norm{\hGamma(x)} \right) \overset{\mathrm{a.s.}}{\ge} \frac{\ve}{3} > 0.  
\end{align}
We have already shown that $\mathcal{Z}_R(\hat{\Lambda}) \cap B(g,\delta) \subset \mathcal{Z}_R(\hat{\Lambda}) \cap A \ca \{\omega(\h{w^*})\}$, and in particular the continuity of $\omega(w)$ around $w^*$ and convergence $\h{w^*} \ca w^*$ imply via the Continuous Mapping Theorem that $\one_{\omega(\h{w^*}) \in B(g,\delta')} \ca 1$. Indeed, it is true that $\omega(\h{w^*}) \ca \omega(w^*) = g$. Collecting our results, we have shown that asymptotically and almost surely, $\mathcal{Z}_R(\h{\Lambda})$ contains the element $\{\omega(\h{w^*})\}$ and no other points in the set $B(g, \delta)$, that $\{\omega(\h{w^*})\} \in B(g, \delta')$, and that $\h{\Gamma}$ converges uniformly to $\Gamma$ so that $\omega(\h{w^*})$ is eventually the unique minimizer of $\h{\Gamma}$ in $\mathcal{Z}_R(\h{\Lambda})$, which is to say $\tilde{g} = \omega(\hat{w^*})$. The claim is thus established, and we have only to show that asymptotic normality holds for $\omega(\h{w^*})$. Moreover, the proof of the claim establishes $\tilde{g} \ca g$. 

The asymptotic normality now follows by applying the multivariate delta method and invoking the continuous differentiability of the function $\omega$. To do this we must show that asymptotic normality holds for $\sqrt{n} (\h{w^*} - w^*)$. We write that
\begin{align*}
\sqrt{n}(\h{w^*} - w^*) & = \frac{1}{\sqrt{n}}\sum_{i=1}^n
\begin{pmatrix}
Y_i^0 \one_{W_i = 0, X_i = 1} - \E{Y^0 \one_{W = 0,X = 1} }  \\
Y_i^0 \one_{W_i = 1, X_i = 1} - \E{Y^0 \one_{W = 0, X = 1}} \\
\vdots \\
Y_i^{K-1} \one_{W_i = 1, X_i = K} - \E{Y^{K-1} \one_{W = 1, X = K}   } \\
\one_{W_i = 0} - \P{W = 0}  \\
\one_{W_i = 1}  - \P{W = 1}\\
\end{pmatrix}\\
& \cd N(0,\Omega),
\end{align*}
where we have already defined
\footnotesize
\begin{align*}
\Omega & = \begin{pmatrix}
\var{Y^0 \one_{W = 0, X = 1} } & \cov{ Y^0 \one_{W = 0, X = 1}, Y^0 \one_{W = 1 , X = 1}} & \cdots & \cov{Y^0 \one_{W = 0,X = 1}, \one_{W = 1}} \\
\cov{ Y^0 \one_{W = 0, X = 1}, Y^0 \one_{W = 0 , X = 0}} & \var{Y^0 \one_{W = 1, X = 1} } & \cdots &  \cov{Y^0 \one_{W = 0,X = 1}, \one_{W = 1}}\\
\vdots & \vdots & \ddots & \vdots  \\
\cov{Y^0 \one_{W = 0,X = 1}, \one_{W = 1}} & \cov{Y^0 \one_{W = 0,X = 1}, \one_{W = 1}} & \cdots & \var{\one_{W = 1}} 
\end{pmatrix}.
\end{align*}

\normalsize 

Hence, the multivariate delta method implies that 
\begin{align*}
\sqrt{n} ( \omega(\h{w^*}) - \omega(w^*)) & \cd N\left(0, (\mathrm{D}_w \omega(w^*))\,  \Omega \, (\mathrm{D}_w \omega(w^*))  '\right),
\end{align*}
whence 
\begin{align*}
\sqrt{n} ( \tilde{g} - g) & \cd N\left(0, (\mathrm{D}_w \omega(w^*))\,  \Omega \, (\mathrm{D}_w \omega(w^*))  '\right). 
\end{align*}
Suppose in addition that $\E{Y^{2K-2} \one_{ W = \ell, X = k}}$ exists for all $W \in \{0,1\}, X \in [K]$. By smoothness of $\Psi$ and its derivatives in a neighborhood of $(g, w^*)$ as well as the convergence $\tilde{g} \ca g$, $\h{w^*} \ca w^*$, Lemma \ref{L:ift} implies that we may define the consistent estimator
\begin{align*}
\h{  \mathrm{D}_w \omega(w^*) } \equiv  -\Big(\mathrm{D}_h \Psi(h,\h{w^*})\big|_{\substack{h = \tg }}\Big)^{-1} \left( \mathrm{D}_w \Psi(h,\h{w^*}) \Big|_{h = \tg} \right) \ca \mathrm{D}_w \omega(w^*)
\end{align*}
Moreover, letting $\h{\Omega}$ be the plug-in estimator of $\Omega$ given by
\scriptsize 
\begin{align*}
\h{\Omega}  \equiv & \frac{1}{n}\begin{pmatrix}
\sum_{i=1}^n (Y^0_i \one_{W_i = 0, X_i= 1} )^2 & \sum_{i=1}^n Y_i^0 \one_{W_i = 0, X_i = 1} Y_i^0 \one_{W_i = 1 , X_i = 1} & \cdots & \sum_{i=1}^n Y_i^0 \one_{W_i = 0,X_i = 1}\one_{W_i = 1} \\
\sum_{i=1}^n Y_i^0 \one_{W_i = 0, X_i = 1} Y_i^0 \one_{W_i = 1 , X_i = 1} & \sum_{i=1}^n (Y_i^0 \one_{W_i = 1, X_i = 1})^2 & \cdots &  \sum_{i=1}^n Y_i^0 \one_{W_i = 0,X_i = 1}\one_{W_i = 1}\\
\vdots & \ddots & & \vdots  \\
\sum_{i=1}^n Y_i^0 \one_{W_i = 0,X_i = 1}\one_{W_i = 1} & \sum_{i=1}^n Y_i^0 \one_{W_i = 0,X_i = 1}\one_{W_i = 1} & \cdots & \sum_{i=1}^n\one_{W_i = 1}^2
\end{pmatrix} \\
& - \frac{1}{n^2} \begin{pmatrix}
\left(\sum_{i=1}^n Y^0_i \one_{W_i = 0, X_i= 1} \right)^2 & \sum_{i=1}^n Y_i^0 \one_{W_i = 0, X_i = 1} \sum_{i=1}^n Y_i^0 \one_{W_i = 1 , X_i = 1} &  &  \\
\sum_{i=1}^n Y_i^0 \one_{W_i = 0, X_i = 1} \sum_{i=1}^n Y_i^0 \one_{W_i = 1 , X_i = 1} & \left(\sum_{i=1}^n Y_i^0 \one_{W_i = 1, X_i = 1}\right)^2 &  & \\
\vdots & \ddots & &  \\
\sum_{i=1}^n Y_i^0 \one_{W_i = 0,X_i = 1}\sum_{i=1}^n \one_{W_i = 1} & \sum_{i=1}^n Y_i^0 \one_{W_i = 0,X_i = 1}\sum_{i=1}^n \one_{W_i = 1} & \cdots & \left(\sum_{i=1}^n\one_{W_i = 1}\right)^2
\end{pmatrix},
\end{align*}
\normalsize 
the SLLN implies that $\h{\Omega} \ca \Omega$. Thus, Slutsky's theorem implies the desired result that 
\begin{align*}
\sqrt{n} \left((\h{\mathrm{D}_w \omega(w^*)})\,  \h{\Omega} \, (\h{\mathrm{D}_w \omega(w^*)})  '\right)^{-1/2} (\tg - g) \ca N(0, \mathrm{I}_{K \times K}). 
\end{align*}

\end{proof}

\subsection{Proofs for Partial Identification}
\begin{proof}[Proof of Lemma \ref{L:unitinterval}]
Necessity of the condition 
\begin{align} \label{E:neceq}
\sup_{t \in [0,1]} \left| \E{ e^{it(Y - h(X))}| W = 0} - \E{ e^{it (Y - h(X)}|W = 1}\right|= 0 
\end{align}
for $(Y - h(X)) \indep W$
is obvious so we demonstrate sufficiency. 
By Assumption \ref{A:nine} and $\ell \in \{0,1\}$, $\E{U^m|W =\ell} = \E{U^m} < m!\, \rho^m$ for some fixed constant $\rho$. Hence, for $m$ odd,
\begin{align*}
\E{|U|^m} \le \E{U^{m+1}}^{\frac{m}{m+1}} \le  ((m+1)!)^{\frac{m}{m+1}} \rho^m,
\end{align*}
where by Stirling's formula, 
\begin{align*}
\frac{((m+1)!)^{\frac{m}{m+1}}}{m!} \sim  \sqrt{ \frac{(m+1)^\frac{m}{m+1}}{m} }\frac{(m+1)^m}{m^m} \sim e,
\end{align*}
so that by taking $\rho$ sufficiently large we may in fact suppose that $\E{ |U|^m| W = \ell} = \E{|U|^m} < m! \, \rho ^m$ for all $m$. 
Let $B = \max\{ \sup_{h \in \mathcal{H}} \norm{h}_\infty, 1 \}$, so that for any $h \in \mathcal{H}$ one has the bound 
\begin{align*}
\E{|Y - h(X)|^m | W = \ell } & = \E{ |(U + \delta(X))^m| | W = \ell} \le  \sum_{j = 0}^m \binom{m}{j} \norm{\delta}_\infty^{m - j}\E{ |U|^j | W = \ell}\\
& \le (4B)^m \sup_{j \le m } \E{|U|^j|W = \ell} \le m!\, (4B \rho)^m. 	
\end{align*}
where we have let $\delta = g - h$. 

Now note that for any $h \in \mathcal{H}$, $M \in \N$, and $\xi \in \C$ with $|\xi| < (4B \rho)^{-1}$, one has 
\begin{align*}
\left|\sum_{m = 0}^M \frac{i^m |\xi|^m(Y - h(X)^m }{m!} \right|  & \le  \sum_{m = 0}^\infty \frac{|\xi|^m|Y - h(X)|^m }{m!} 
\end{align*}
where by the monotone convergence series the expected value of the right hand side may be evaluated as a convergent geometric series:
\begin{align*}
\E{\sum_{m = 0}^\infty \frac{|\xi|^m|Y - h(X)|^m }{m!} \, \big| \, W = \ell } = \sum_{m = 0}^\infty  |\xi|^m\E{ \frac{|Y - h(X)|^m}{m!}  \, \big| \, W = \ell} < \infty.
\end{align*}
Hence the dominated convergence theorem implies that on the domain $|\xi| < (4B \rho)^{-1}$ the following is true:
\begin{align*}
\E{ e^{i \xi (Y - h(X))} | W = \ell} & = \sum_{m = 0}^\infty \frac{i^m \xi^m\E{(Y - h(X))^m| W = \ell}}{m!}.
\end{align*}
Theorem 2 of \cite{LS1952} implies that the function $\E{ e^{i \xi (Y - h(X))} \, | \, W = \ell}$ is holomorphic on the horizontal strip $-(4 B \rho)^{-1} < \mathfrak{Im}(\xi) < (4B \rho)^{-1}$. Suppose that \eqref{E:neceq} holds; then $\E{e^{i \xi (Y - h(X))}| W = 0 }-\E{ e^{i \xi ( Y - h(X)) } | W = 1}$ is holomorphic on the same horizontal strip and equal to $0$ on the unit interval $[0,1] \subset \R$. Conclude by power series expansion and analyticity that it vanishes on the strip and therefore on the real line, whence the characteristic functions of $(Y- h(X))|_{W = 0}$ and $(Y - h(X))|_{W = 1}$ are equal, and indeed $(Y- h(X)) \indep W$.

\end{proof}

\begin{proof}
We begin by bounding the bracketing number for $\mathcal{F}$ using Assumption \ref{A:ten}, as almost the same proof suffices for $\mathcal{E}$. Note that for $(t,h,\ell), (t', h', \ell') \in [0,1] \times \mathcal{H}\times \{0,1\}$, we have the Lipschitz bound 
\begin{align*}
&\left| \exp{it (y - h(x))} \one_{w = \ell} - \exp{ i t' (y - h'(x))} \one_{w = \ell'} \right| \\
& \qquad \le 2 | \ell - \ell'| + |t (y - h(x)) - t'(y - h'(x))| \\
& \qquad \le 2 | \ell - \ell'| + |t - t'| y + |t h(x) - t' h'(x)| \\
& \qquad \le 2 |\ell - \ell'| + |t - t'| y + |t - t'| \sup_{h \in \mathcal{H}} \norm{h}_\infty + \norm{h - h'}_\infty \\
& \qquad \le (C + y) ( |t - t'| + \norm{h - h'}_\infty + |\ell - \ell'| ),
\end{align*}
where $C = \max\{2, \sup_{h \in \mathcal{H}} \norm{h}_\infty \}$ is a constant, and we have employed the inequality $|e^{ix} - e^{i y}| \le \int_x^y |i e^{i \xi}| \, \mathrm{d} \xi \le |x - y|$. Let $F(y) \equiv y + C$ and note that $F$ is an envelope funtion for $\F$. By considering the parameter space $[0,1] \times \mathcal{H} \times \{0,1\}$ equipped with metric $d((t,h,\ell), (t',h', \ell')) = |t - t'| + \norm{ h - h'}_\infty + |\ell - \ell'| $ it follows from Theorem 2.7.11 of \cite{VW1996} that for any norm $\norm{\cdot}$, 
\begin{align*}
N_{[\,]}(2\ve \norm{F}, \mathcal{F}, \norm{\cdot}) & \le N(\ve, [0,1] \times \mathcal{H} \times \{0,1\}, d) \\
& \le 2 \ve^{-1} N(\ve/2, \mathcal{H}, \norm{\cdot}_\infty) < \infty. 
\end{align*}
Immediately Theorem 2.4.1 of \cite{VW1996} implies that $\F$ is Glivenko-Cantelli. 
Recall that $Y = g(X) + U$ where $g$ is bounded (as $g \in \mathcal{H}$ was assumed) and all moments of $U$ exist so $\norm{F^2}_{P,2} = \E{(c + Y)^2}^{1/2} < \infty$. Hence, $F$ has a second moment. Let $D = 2\norm{F}_{P,2}$, so that for $\ve > D$ one has $N_{[\, ] }(\ve, \F, L_2(P)) = 1$ (namely, take the bracket $[-F, F]$). This allows us to write: 
\begin{align*}
&\int_0^\infty \sqrt{ \log N_{[\,]}(\ve , \F, L_2(P) )} \, \mathrm{d} \ve + \int_0^\infty \sqrt{\log N(\ve, \F, L_2(P))   } \, \mathrm{d} \ve \\
& \qquad \le 2\int_0^D \sqrt{ \log N_{[\,]}(\ve , \F, L_2(P) )} \, \mathrm{d} \ve \\
& \qquad  \le 2 \int_0^D \sqrt{ \log 4 \norm{F}_{P,2} \ve^{-1} N\big( \frac{\ve}{4 \norm{F}_{P,2}}, \mathcal{H}, \norm{\cdot}_\infty \big)} \, \mathrm{d} \ve  \\
& \qquad \lesssim 1 + \int_0^\infty \sqrt{ \log N(\ve , \mathcal{H}, \norm{\cdot}_\infty)}  \, \mathrm{d} \ve  < \infty ,
\end{align*}
where we have used the inequalities $N(\ve, \F, \norm{\cdot}) \le N_{[\, ]}(2 \ve, \F, \norm{\cdot})$ and $\sqrt{a + b} \le \sqrt{a} + \sqrt{b}$, for $a,b \ge 0$. Theorem 2.5.6 of \cite{VW1996} thus concludes for $\F$. 
\end{proof}

\begin{proof}
By Lemma \ref{L:covering}, the convergence 
\begin{align}
\sqrt{n} ( X_n  - \theta) \equiv \sqrt{n} \left( n^{-1} \sum_{i= 1}^n e^{it(Y_i - h(X_i))} \one_{W_i = \ell} - \E{e^{it(Y - h(x))} \one_{W = \ell}  } \right) \rs \G
\end{align}
 holds over all $t,h, \ell$, where the element $\theta \in \ell^\infty(\F)$ is defined as $\phi(t, h, \ell) = \E{ e^{it (Y - h(X))} \one_{W = \ell}}$, and $X_n \equiv n^{-1} \sum_{i= 1}^n e^{it(Y_i - h(X_i))} \one_{W_i = \ell}$ is a random variable taking values in $\ell^\infty(\F)$. $\G$ is a tight Borel measurable zero-mean Gaussian element in $\ell^\infty(\F)$ and $\F = [0,1] \times \H \times \{0,1\}$. Let $\F'$ be a copy of $\F$ and define a function $\phi: L^\infty(\F) \supset D_\phi \ra L^\infty(\F \cup \F' )$ by the piecewise relation:
\begin{align*}
\phi(f)(\gamma) = \left\{ \begin{array}{l l}
f(t, h , \ell)&  \text{ if }\gamma = (t, h , \ell) \in \F \\
\frac{\E{ e^{it' (Y - h'(X))} \one_{W = \ell'} }}{ f(0,g,\ell') }& \text{ if } \gamma =  ( t', h', \ell') \in \F'
\end{array} \right. 
\end{align*}
where the domain of $\phi$ is defined as $D_\phi = \{ f \in L^\infty(\F): \min_{\ell \in \{0,1\}} | f(0,g,\ell) | > 0 \}$. Set 
\begin{align*}
\phi'_\theta(f)(\gamma) & = \left\{ \begin{array}{ l l}
f(t, h, \ell) & \text{ if } \gamma = (t, h , \ell) \in \F \\
- \frac{\E{e^{i t' (Y - h'(X))} \one_{W = \ell'}}}{\P{W = \ell'}^2}f(0,g,\ell') & \text{ if } \gamma = (t', h', \ell') \in \F'
\end{array}\right. ,
\end{align*}
which is clearly a bounded linear map (in its first argument). 
We claim that $\phi$ is Hadamard differentiable at $\theta$ (see \cite{VW1996} \S 3.9) with derivative $\phi_\theta'$. Indeed, for any bounded set $K \subset \ell^\infty(\F)$ and $\alpha \in \R^+$ with $\alpha \cdot  \sup_{f \in K} \norm{f}_{\ell^\infty} < \min \{ \P{W = 0}, \P{W = 1}\}$, we have 
\small
\begin{align*}
&\sup_{f \in K} \norm{ \frac{\phi(\theta + \alpha f ) - \phi(\theta)}{\alpha} - \phi'_\theta(f)   }_{\ell^\infty(\F \cup \F')} \\
&\quad \le  \sup_{f \in K} \sup_{\gamma \in \F} \left|\frac{\phi(\theta + \alpha f )(\gamma) - \phi(\theta)(\gamma)}{\alpha} - \phi'_\theta(f)  \right|  + \sup_{f \in K} \sup_{ \gamma \in \F'} \left| \frac{\phi( \theta + \alpha f) (\gamma) - \phi(\theta)(\gamma)}{\alpha} - \phi_\theta'(f) \right|  \\
& \quad \le \sup_{f \in K} \sup_{\gamma \in \F'} \Big| \E{e^{it' (Y - h'(X))} \one_{W = \ell'} }\Big| \left| \alpha^{-1} \left( \frac{1}{ \theta( 0 , g, \ell') + \alpha f(0, g, \ell')} - \frac{1}{ \theta( 0 , g, \ell')} \right) + \frac{f(0,g,\ell')}{\P{W = \ell'}^2} \right| \\
&\quad \le \sup_{\gamma \in \F'} \left| \alpha^{-1} \left( \frac{1}{\P{W = \ell'} + \alpha f(0, g, \ell')} - \frac{1}{\P{W = \ell'}} \right) + \frac{f(0, g, \ell')}{\P{W = \ell'}^2} \right| = O( \alpha),
\end{align*} 
\normalsize
where in the last line we have employed Taylor expansion of the function $x \mapsto \frac{1}{\P{W = \ell'} + x}$ for $\ell' \in \{0,1\}$. Taking $\alpha \ra 0$, equation (3.9.1) of \cite{VW1996} implies that $\phi$ is Hadamard differentiable at $\theta$, and hence Theorem 3.9.4 yields the following convergence: 
\begin{align*}
\sqrt{n} (\phi(X_n) - \phi(\theta)) \rs \phi'_\theta ( \G),
\end{align*}
where $\phi_\theta'(\G)$ is a tight Borel-measurable and zero-mean Gaussian process in $\ell^\infty(\F \cup \F')$, owing to continuity and linearity of $\phi_\theta'$. 

Now letting $Y_n$ be a $\ell^\infty(\F \cup \F')$-valued random variable with values given by 
\begin{align*}
Y_n (\gamma) = \left\{ \begin{array}{ l l}
\left( n^{-1} \sum_{i=1}^n \one_{W_i = \ell} \right)^{-1} & \text{ if } \gamma = (t, h, \ell) \in \F \\
1 & \text{ if } \gamma \in \F'
\end{array}\right.
\end{align*}
It is clear that for $\ell \in \{0,1\}$ we have the convergence
\begin{align*}
Y_n \rs \psi,
\end{align*}
where $\psi(\gamma) \equiv \left\{\begin{array}{l l}
\P{W = \ell}^{-1} & \text{ if } \gamma \in \F \\
1 & \text{ if } \gamma \in \F'
\end{array}\right.$ and $\psi$ is a constant element of $\ell^\infty(\F \cup \F')$. Hence Slutsky's Theorem (\cite{VW1996}, pp.\ 32) implies that, under pointwise multiplication denoted by $\cdot$,
\begin{align} \label{E:y1}
\sqrt{n} \, Y_n \cdot ( \phi(X_n) - \phi(\theta) ) \rs \psi \cdot \phi'_\theta (\G) 
\end{align}
where the right side remains a tight zero-mean Gaussian process (pointwise multiplication is a bounded and continuous operator on $\ell^\infty(\F)$). Finally, define the continuous (by the triangle inequality) linear map $\rho: \ell^\infty(\F\cup \F') \ra \ell^\infty(\F_0)$ by
\begin{align*}
\rho(f)(t, h) \equiv f(t,h,0)|_{\F} - f(t,h,1)|_{\F}  + f(t,h,0)|_{\F'} - f(t,h,1)|_{\F'},
\end{align*}
where $f|_{\F}$ indicates that $f$ is to be evaluated as a function over $\F$, and similarly $f|_{\F'}$ is the restriction of $f$ to $\F'$. One at last has the convergence 
\begin{align*}
\rho \left(\sqrt{n} \, Y_n \cdot ( \phi(X_n) - \phi(\theta))\right) \rs \rho ( \psi \cdot \phi_\theta'(\G))\equiv \mathbb{D},
\end{align*}
where the right side is tight mean-zero Gaussian process in $\ell^\infty(\F_0)$ as desired. Unwinding our notation, find that 
\small
\begin{align*}
&\rho \left(\sqrt{n} \, Y_n \cdot ( \phi(X_n) - \phi(\theta))\right) (h,t) \\
&\quad = \sqrt{n} \Big( \frac{\phi(X_n)(t,h,0)|_\F - \phi(\theta)(t,h,0) }{n^{-1} \sum_{i=1}^n \one_{W_i = 1}} -  \frac{\phi(X_n)(t,h,1)|_\F - \phi(\theta)(t,h,1)}{n^{-1} \sum_{i=1}^n \one_{W_i = 0}}   \\
& \qquad  \quad + \phi(X_n)(t,h,0)|_{\F'} - \phi(\theta)(t, h, 0)|_{\F'} - \phi(X_n) (t, h , 1)|_{\F'}  + \phi(\theta)(t,h,1)|_{\F'}\Big)  \\
&\quad = \sqrt{n} \Bigg( \frac{ \EE{n}{ e^{it (Y - h(X))} \one_{W = 0}} - \E{e^{it (Y- h(X))}\one_{W = 0}}}{ n^{-1} \sum_{i=1}^n \one_{W_i = 0} } - \frac{ \EE{n}{e^{it (Y - h(X))} \one_{W = 1}} - \E{e^{it (Y- h(X))}\one_{W = 1}}}{ n^{-1} \sum_{i=1}^n \one_{W_i = 1} }  \\
&\qquad \quad + \frac{\E{ e^{it (Y - h(X))} \one_{W = 0}}}{n^{-1} \sum_{i=1}^n \one_{W_i = 0}} - \frac{\E{e^{it(Y - h(X))} \one_{W = 0}}}{\P{W = 0}} - \frac{\E{ e^{it (Y - h(X))} \one_{W = 1}}}{n^{-1} \sum_{i=1}^n \one_{W_i = 1}} + \frac{\E{e^{it(Y - h(X))} \one_{W = 1}}}{\P{W = 1}} \Bigg).
\end{align*}
\normalsize
Canceling like terms and applying Bayes' rule results in the left side of \eqref{E:y0}, so the proof is complete. 

\end{proof}

\begin{proof}
Let $\tilde{\H}$ denote a countable $\norm{\cdot}_\infty$-dense set in $\H$ (which exists by Assumption \ref{A:ten}) and note that the dominated convergence theorem implies that 
\small
\begin{align*}
&\left\{ \H_0 \subset \hat{\H}_n \right\}\\
&\quad = \left\{ \sup_{h \in \H_0} |\EE{n}{Y - h(X)}| \le \eta_n \text{ and } \sup_{\substack{t \in [0,1]\\ h \in \H_0}} \left| \EE{n}{e^{it (Y - h(X))}|W = 0} - \EE{n}{e^{it (Y - h(X))}|W = 1} \right| \le \eta_n \right\} \\
& \quad = \bigcap_{h \in \H_0 \cap \tilde{\H}} \left\{ |\EE{n}{Y -h(X)} | \le \eta_n \right\} \cap \bigcap_{\substack{t \in [0,1]\cap \Q \\ h \in \H_0 \cap \tilde{\H}}}  \left\{ \left| \EE{n}{e^{it (Y - h(X))}|W = 0} - \EE{n}{e^{it (Y - h(X))}|W = 1} \right| \le \eta_n \right\},
\end{align*}
\normalsize
which is a measurable set by Assumption \ref{A:ten}. Similar arguments show that we may use probability notation $\mathrm{P}$ in the place of outer integration notation $\mathrm{P}^*$, utilizing measurability owing to separability. 

To prove the first claim, apply Lemma \ref{L:covering} to infer that $\sqrt{n} \left( \EE{n}{Y - h(X)} - \E{Y - h(X)} \right) \rs \mathbb{G}_{\mathcal{E}}$ for some tight Gaussian process $\mathcal{E}$ supported in $\ell^\infty(\mathcal{E})$. By the continuous mapping theorem and Portmanteau theorem (\cite{VW1996}, Theorem 1.3.4),
\begin{align*}
&\liminf_{n \ra\infty}\P{ \sup_{h \in \H} \left|\EE{n}{Y - h(X)}  - \E{Y - h(X)}\right| < \eta_n } \\
& \qquad \ge \sup_{\delta > 0}\liminf_{n \ra \infty} \P{  \sup_{h \in \H} \sqrt{n}\left| \EE{n}{Y - h(X)} - \E{Y - h(X)} \right| < \delta} \\
& \qquad  \ge \sup_{\delta > 0} \P{ \norm{\mathbb{G}_\mathcal{E}}_{\ell^\infty(\mathcal{E})}< \delta } = 1,
\end{align*}
with the latter equality owing to tightness of $\mathbb{G}_{\mathcal{E}}$. Similar application of Proposition \ref{P:donsker} shows that 
\begin{align*}
&\lim_{n \ra \infty} \mathrm{P}\Bigg( \sup_{\substack{t \in [0,1] \\ h \in \H}}  \Bigg|\EE{n}{e^{it(Y - h(X))}|W = 0}  - \EE{n}{e^{it(Y - h(X))}|W = 1} \\
& \qquad \qquad - \E{e^{it(Y - h(X))}|W = 0}  + \E{e^{it(Y - h(X))}|W = 1}  \Bigg| < \eta_n  \Bigg) = 1. 
\end{align*}
By definition of $\H_0$ this implies $\P{ \H_0 \subset \hat{\H}_n} \ra 1$, and considering $n$ large enough such that $\alpha_n > 2 \eta_n$ establishes that $\P{ \H_{\alpha_n} \cap \hat{\H}_n = \emptyset} \ra 1$. 

Obtaining convergence $\mathrm{P}$-almost surely involves some tail estimates from \cite{VW1996} and the Borel-Cantelli lemma. Assuming Assumption \ref{A:nine} and applying the argument of Lemma \ref{L:unitinterval} (for example, assuming $0 \in \H$) it is straightforward to see that the characteristic function $\E{e^{it Y}}$ exists in a neighborhood of the origin in $\C$, so that for some $\delta > 0$, 
\begin{align*}
\E{e^{\delta|Y|}} \le \E{e^{ -\delta Y}} + \E{ e^{\delta Y}} < \infty . 
\end{align*}
Hence, Markov's inequality implies that $\P{| Y| \ge \rho} \le \E{e^{\delta|Y|}} e^{-\delta \rho}$ for any $\rho \ge 0$. 

Now let $B = \sup_{ h \in \H} \norm{h}_\infty$ and set $\beta_n = n^{1/4 - \gamma/2}$. Define
\[
\F_n = \left\{ ( \beta_n + B)^{-1} e^{it (y - h(x)) \one_{|y| \le \beta_n}} \one_{w = 0}: h \in \H, t \in [0,1]\right\}
\]
be a class of functions. We claim that for every $\ve > 0$, $N\left( \ve, \F_n, \norm{\cdot}_\infty \right) \le 3\ve^{-1} N \left( \ve/2, \H, \norm{\cdot}_\infty \right) + 1$. Indeed, for $t,t' \in [0,1]$ and $h,h' \in \H$, 
\small
\begin{align*}
\left| t(y - h(x)) - t'( y - h'(x)) \right| & \le |t (h(x) - h'(x)| + |(t - t') (y - h'(x))| \le \norm{h - h'}_\infty + |t - t'|\left( |y| + B\right), 
\end{align*}
\normalsize
so that, taking $\H_{\ve/2}$ to be an $\ve/2$-cover of $\H$ and $\mathcal{T}_{\ve/2} $ to be a $\ve/2$-cover of $[0,1]$ of size at most $3/\ve$, the set 
\begin{align*}
\F_{n,\ve} \equiv \left\{ (\beta_n +B)^{-1} e^{it (y - h(x)) \one_{|y| \le \beta_n}} : h \in \H_{\ve/2}, t \in \mathcal{T}_{\ve/2} \right\} \cup \{0\}
\end{align*}
can readily be seen using Lipschitz-ness of the complex exponential function to constitute an $\ve$-cover of $\F_n$ of size at most $3\ve^{-1} |\H_{\ve/2}| + 1$ as desired. By assumption $\log N(\ve, \H, \norm{\cdot}_\infty ) \lesssim \ve^{-\omega}$, so by taking $K$ large enough we may write 
\begin{align*}
\log N(\ve, \F_n, \norm{\cdot}_\infty) \le K e^{-\omega},
\end{align*}
where $\omega \in (0,1/2)$ and the constant $K$ is uniform in $n$. Let $c> 0$ be an arbitrary constant. Conclude by Theorem 2.14.10 of \cite{VW1996} that for a constant $C$ depending only on $\omega$ and $K$, 
\small
\begin{align*}
&\P{ \sup_{\substack{t \in [0,1]\\ h \in \H}} \left| \EE{n}{e^{it(Y - h(X))\one_{|Y| \le \beta_n}} \one_{W = 0 } } - \E{ e^{it (Y - h(X)) \one_{|Y| \le \beta_n}} \one_{W = 0}} \right| > c\eta_n/2} \\
& \qquad  = \P{ \sup_{\substack{t \in [0,1]\\ h \in \H}} \sqrt{n}\left| \EE{n}{(\beta_n + B)^{-1} e^{it(Y - h(X)) \one_{|Y| \le \beta_n}} \one_{W = 0 } } - \E{ (\beta_n + B)^{-1} e^{it (Y - h(X))\one_{|Y| \le \beta_n}} \one_{W = 0}} \right| > \frac{ c \sqrt{n} \eta_n}{2(\beta_n + B)}} \\
&\qquad \le C \exp{- \left( \frac{c \sqrt{n} \eta_n}{2(\beta_n + B)} \right)^2} = C \exp{- \left( \frac{c \sqrt{n} \eta_n}{2(\beta_n + B)} \right)^2} = O \left( \exp{ - c n^{1/2 - \gamma}/4  } \right). 
\end{align*} 
\normalsize 
Moreover, 
\small
\begin{align*}
\left|\E{e^{it (Y - h(X)) \one_{|Y| \le \beta_n}} \one_{W = 0}  } - \E{ e^{it (Y - h(X)) } \one_{W = 0}  } \right| & \le \E{ \left(e^{it (Y - h(X)) \one_{|Y| \le \beta_n}} - e^{it (Y - h(X)) }\right) \one_{W = 0}  } \\
& \le 2 \P{ |Y| > \beta_n} \le 2 \E{ e^{\delta |Y|}} e^{ - \delta \beta_n}, 
\end{align*}
\normalsize
which decays faster than $c \eta_n/2$. Conclude that 
\small
\begin{align*}
\P{ \sup_{\substack{t \in [0,1]\\ h \in \H}} \left| \EE{n}{e^{it(Y - h(X))\one_{|Y| \le \beta_n}} \one_{W = 0 } } - \E{ e^{it (Y - h(X)) } \one_{W = 0}} \right| > c\eta_n} = O \left( \exp{ - c' n^{1/2 - \gamma}/4  } \right),
\end{align*}
\normalsize
where the constant $c'$ depends on $c$ and the ratio sequence $\eta_n n^{\gamma}$. 
Hence, the union bound implies
\small
\begin{align*}
&\P{ \sup_{\substack{t \in [0,1]\\ h \in \H}} \left| \EE{n}{e^{it(Y - h(X))} \one_{W = 0 } } - \E{ e^{it (Y - h(X)) } \one_{W = 0}} \right| > c\eta_n}  \\
& \qquad\le  \P{ \max_{1 \le i \le n } |Y_i| \ge \beta_n}  +\P{ \sup_{\substack{t \in [0,1]\\ h \in \H}} \left| \EE{n}{e^{it(Y - h(X))\one_{|Y| \le \beta_n}} \one_{W = 0 } } - \E{ e^{it (Y - h(X)) } \one_{W = 0}} \right| > c\eta_n} \\
& \qquad \le n\E{ e^{\delta |Y|} } e^{ - \delta \beta_n}  +  \exp{ - c' n^{1/2 - \gamma}/4 }.
\end{align*}
\normalsize
Summing over $n$ and applying the Borel-Cantelli lemma implies that $\mathrm{P}$-almost surely, 
\small
\[
 \sup_{\substack{t \in [0,1]\\ h \in \H}} \left| \EE{n}{e^{it(Y - h(X))} \one_{W = 0 } } - \E{ e^{it (Y - h(X)) } \one_{W = 0}} \right| > c\eta_n \text{ only finitely often.} 
 \] 
\normalsize
Similarly, the same holds when one replaces $\one_{W = 0 }$ with $\one_{W = 1}$. Because $\P{ W  = 0}, \P{W = 1} > 0$, straightforward application of Hoeffding's inequality and the Borel-Cantelli Lemma implies that for $\ell \in \{0,1\}$,
\small
\begin{align*}
\left|\frac{1}{\EE{n}{\one_{W = \ell}}} -  \frac{1}{\E{ \one_{W = \ell}}} \right| > c \eta_n \text{ only finitely often}.
\end{align*}
\normalsize
By the triangle inequality, 
\small
\begin{align*}
&\sup_{\substack{t \in [0,1] \\ h \in \H}}  \Bigg|\EE{n}{e^{it(Y - h(X))}|W = 0}  - \EE{n}{e^{it(Y - h(X))}|W = 1}  - \E{e^{it(Y - h(X))}|W = 0}  + \E{e^{it(Y - h(X))}|W = 1}  \Bigg|  \\
& \qquad \le \sum_{\ell =0}^1 \Bigg|\EE{n}{e^{it(Y - h(X))}|W = \ell}   - \E{e^{it(Y - h(X))}|W = \ell}    \Bigg| \\
& \qquad  \le \sum_{\ell = 0}^1 \left|\EE{n}{e^{it(Y - h(X))} \one_{W = \ell}}\right|\left|\frac{1}{\EE{n}{\one_{W = \ell}}} -  \frac{1}{\E{ \one_{W = \ell}}} \right| + \left|\frac{\EE{n}{e^{it(Y - h(X))} \one_{W = \ell}} - \E{e^{it(Y - h(X))} \one_{W = \ell}}}{\E{\one_{W = \ell}}} \right|,
\end{align*}
\normalsize
and algebra shows that the last line exceeds $\frac{4 c\eta_n}{\min\{\P{W = 0}, \P{W = 1} \}}$ only finitely often, $\mathrm{P}$-almost surely. As the constant $c$ was arbitrary, the first line of the previous display exceeds $\eta_n$ only finitely often, almost surely. Similar application of the same Theorem 2.14.10 and Borel-Cantelli Lemma implies that 
\begin{align*}
\sup_{h \in \H} | \EE{n}{Y - h(X)} - \E{Y - h(X)}|  > \eta_n \text{ only finitely often},
\end{align*}
almost surely. The second claim then follows from arguments used to establish the first claim in the presence of convergence only in probability.

\end{proof}

\subsection{Results when $X$ is continuous}

\begin{proof}[Proof of Lemma \ref{L:operator}]

Fix $\ve > 0$. Boundedness of the operator is clear. For the first claim, note that for some $K$ sufficiently large, $h \in L^\infty(\mathcal{X} \times \R )$ implies
\begin{align*}
&\limsup_{t' \ra t} \int_\mathcal{X} \int_0^{2B} \left|h(x,u)( f_0(x, t'+u) - f_1(x, t'+u)) \right|\, \mathrm{d}u \, \mathrm{d}x \\
&\quad  \le \limsup_{t' \ra t}\Big[\int_\mathcal{X} \int_0^K \one_{u \in [0,2B]} |h(x,u)| |f_0(x,t' + u) - f_1(x,t'+u)| \, \mathrm{d} u \, \mathrm{d} x \\
& \quad \qquad +\norm{h}_{L^\infty(\X \times \R )} \int_\mathcal{X} \int_K^\infty |f_0(x,t' + u) - f_1(x,t' + u)| \, \mathrm{d} u\Big] \, \mathrm{d}x \\
&\quad \le \limsup_{t' \ra t}\int_\mathcal{X} \int_0^K \one_{u \in [0,2B]} |h(x,u)| |f_0(x,t' + u) - f_1(x,t'+u)| \, \mathrm{d} u \, \mathrm{d} x  + \ve \norm{h}_{L^\infty(\X \times \R )}  \\
&\quad  = \int_\X \int_0^K \one_{u \in [0,2B]} |h(x,u)| |f_0(x,t + u) - f_1(x,t+u)| \, \mathrm{d} u\,  \mathrm{d}x  + \ve \norm{h}_{L^\infty(\X \times \R )}\\
& \quad \le \int_\X \int_0^B  |h(x,u)| |f_0(x,t + u) - f_1(x,t+u)| \, \mathrm{d} u\,  \mathrm{d}x  + \ve \norm{h}_{L^\infty(\X \times \R )},
\end{align*}
where we have used the dominated convergence theorem and compactness of $\X \times [0 , K]$. 
Repeating the argument with $\limsup$ replaced by $\liminf$ and inequalities reversed, and taking $\ve $ arbitrarily small, one quickly observes that 
\begin{align*}
&\lim_{t' \ra t} \int_\X \int_0^{2B} |h(x,u) (f_0, x,t' + u) - f_1(x,t' + u) | \, \mathrm{d} u \, \mathrm{d} x \\
&\quad =\int_\X \int_0^{2B} |h(x,u) (f_0, x,t + u) - f_1(x,t + u) | \, \mathrm{d} u \, \mathrm{d} x.
\end{align*}
Scheff\'{e}'s lemma then concludes.

For the second, note that compactness of $\mathcal{X}$ ensures that $h \in L^1(\mathcal{X} \times [0,2B])$ (using $\lambda(\mathcal{X})< \infty$). Also $f_0^t - f_1^t$ is bounded in $\mathcal{X} \times [-\ve, 2B + \ve]$ for any $\ve > 0$; then, apply continuity of $f_0, f_1$, and the dominated convergence theorem. 

The third claim follows straightforwardly by the Cauchy-Schwarz inequality and Fubini's theorem: 
\begin{align*}
\int_\R |(Th)(t)|^2 \, \mathrm{d}t & = \int_\R \left|\int_\mathcal{X}  \int_0^{2B} h(x,u) (f_0(x, t + u) - f_1(x, t + u) ) \, \mathrm{d} u \, \mathrm{d} x \right|^2 \, \mathrm{d}t \\
& \le \int_\R \left( \int_\mathcal{X} \int_0^{2B} h(x,u)^2 \, \mathrm{d} u \, \mathrm{d} x \right) \left( \int_\mathcal{X} \int_0^{2B} (f_0(x,t + u ) - f_1(x, t + u ) )^2 \, \mathrm{d} u \, \mathrm{d} x \right) \, \mathrm{d} t \\
& \le \norm{h}^2_{L^2(\mathcal{X} \times \R )}  \int_0^{2B} \int_\mathcal{X} \int_\R (f_0(x,t + u) - f_1(x, t + u))^2 \, \mathrm{d} t\, \mathrm{d} x  \, \mathrm{d} u \\
& =2B \norm{h}^2_{L^2(\mathcal{X} \times \R )} \norm{f_0 - f_1}_{L^2(\mathcal{X} \times \R )} < \infty 
\end{align*}
\end{proof}

\begin{proof}[Proof of Lemma \ref{L:indepequiv}]
The first direction of implication has already been established. Conversely, suppose that $\mathcal{V} \subset \ker{T}$. For any fixed $t \in \R$, this implies that 
\begin{align*}
\int_\mathcal{X} \int_t^{t + 2B} h(u) (f_0(x,  u ) - f_1(x,  u )) \, \mathrm{d} u \, \mathrm{d} x  = 0 
\end{align*}
whenever $h \in L^2(\R)$ and is supported on the same set as is the random variable $U$. Suppose that for some $t$, $t+B \in \supp{U}$. By continuity of $f_U(u)$, suppose without loss of generality that $[t+B, t + B + \ve] \subset \supp{U}$ for $\ve$ small enough (otherwise, the inclusion is satisfied for $[t+B - \ve, t+B]$)  . Letting $h_\ve \equiv \frac{1}{\ve} \one_{u \in [t + B , t + B + \ve ]}$, note that for any particular $x$ one has $\lim_{\ve \ra 0 } \int_t^{t + 2B} h_\ve(u) (f_0(x,u) - f_1(x,u)) = f_0(x, t + B) - f_1(x, t + B)$ by continuity of the density functions $f_w$. Noting that both $f_0$ and $f_1$ are bounded in the compact set $\mathcal{X} \times [t, t + 2B]$ and applying the dominated convergence theorem, one thus has 
\begin{align*}
f_{0,U}(t + B) - f_{1,U}(t + B) & = \int_\mathcal{X} (f_0(x, t + B) - f_1(x, t+B)) \, \mathrm{d} x \\
& = \lim_{\ve \ra 0 } \int_{\mathcal{X}} \int_t^{t + 2B} h_\ve(u) (f_0(x,u) - f_1(x,u)) \, \mathrm{d} u \, \mathrm{d} x  = 0,
\end{align*}
where we have used $f_{w,U}(u)$ to denote the marginal density function of $U$ given $W = w$. Letting $t$ vary over $\R$, one has $f_{0,U} = f_{1,U}$ and hence $U \indep W$. 
\end{proof}

\begin{proof}[Proof of Theorem \ref{T:contident}]
Suppose that Assumption \ref{A:C3}(i) holds with the other stated assumptions, and for the sake of contradiction suppose that $g$ is not point identified, so that there is some function $h \neq g$ such that: 
\begin{align*}
Y - g(X) \indep W \\
Y - h(X) \indep W
\end{align*}
and $g \neq h$. As $\E{U } = 0$ is stipulated, $h - g $ is nonconstant. Equivalently, for all $t \in \R$, 
\begin{align*}
\P{ Y - g(X) \le t | W = 0 } - \P{ Y - g(X) \le t | W = 1}  = 0 \\
\P{Y - h(X) \le t | W = 0} - \P{Y - h(X) \le t | W = 1 } = 0.
\end{align*}
Denoting $\delta\equiv h - g$, we have the relations 
\begin{align} \label{E:C1}
&\P{ U \le t |W = 0} - \P{U \le t | W = 1} = 0 \\
&\P{U + \delta(X) \le s | W = 0 } - \P{U + \delta(X) \le s| W = 1} = 0 \nonumber
\end{align}
for all pairs $(t,s) \in \R^2$.
Subtracting the first line of \eqref{E:C1} from the second and applying the law of iterated expectations, one therefore has 
\begin{align} \label{E:C2}
&\int_\mathcal{X} \Big(\big[ \P{U + \delta(x) \le t|W = 0, X = x} - \P{U \le s | W = 0, X = x}\big] f_0(x)\\ 
& \qquad - \big[ \P{U + \delta(x) \le t  | W = 1 , X = x}  - \P{U \le s| W = 1, X = x} \big] f_1(x)\Big) \, \mathrm{d}x = 0, \nonumber
\end{align}
where we have let $f_w(x)$ denote the marginal density of $X$ given $W = w$. Note that by translating the function $\delta$ by $m_\delta \equiv \max_{x \in \mathcal{X}} \delta(x)$, we have 
\begin{align*}
\P{U + \delta(x) \le t | W = w, X= x} = \P{U \le t - m_\delta - (\delta(x) -m_\delta) | W = w, X = x} 
\end{align*}
for all $w, x$, so that by replacing $s = t - m_\delta$ in \eqref{E:C2} and letting $t$ vary over the real numbers, the preceding display implies that for all $t \in \R$, 
\begin{align}
\label{E:C3}
&\int_\mathcal{X} \Big(\big[ \P{U  \le t + \delta_0(x) |W = 0, X = x} - \P{U \le  t| W = 0, X = x}\big] f_0(x)\\ 
& \qquad - \big[ \P{U \le t + \delta_0(x) | W = 1 , X = x}  - \P{U \le t| W = 1, X = x} \big] f_1(x)\Big) \, \mathrm{d}x = 0, \nonumber
\end{align}
where we have defined $\delta_0 \equiv m_\delta - \delta(x) \ge 0$. As $\norm{\delta}_\infty \le B$, $|\delta_0| \le 2B$ follows from the triangle inequality.

Now, one may write 
\begin{align*}
&\P{U \le t + \delta_0(x)| W = w, X = x } - \P{U \le t| W = w, X = x}\\
 & \qquad = \frac{1}{f_w(x)}\int_0^{\delta_0(x)} f_w(x, t + u) \, \mathrm{d} u 
\end{align*}
so that \eqref{E:C3} becomes 
\begin{align*}
T(\one_{u \in [0,\delta_0(x)]})(t) &= \int_\mathcal{X} \int_0^{2B} \one_{u \in [0,\delta_0(x)]} (f_0(x,t + u)   - f_1(x, t + u))\, \mathrm{d} u \,  \mathrm{d}x\\
& = \int_\mathcal{X} \int_0^{\delta_0(x)} (f_0(x,t + u)   - f_1(x, t + u))\, \mathrm{d} u \,  \mathrm{d}x \\
&  = 0,
\end{align*}
for all $t \in \R$. As $\one_{u \in [0,\delta_0(x)]} \in \mathcal{W}$, we have produced the desired contradiction to Assumption \ref{A:C3}(i). It follows that our stated conditions are sufficient for point identification of $g$ in $\mathcal{G}$. 

Conversely, suppose that Assumption \ref{A:C3}(i) does not hold so that there is some nonconstant $\delta(x) \in C(\mathcal{X})_+$ such that $\one_{u \in [0,\delta(x)}] \in \ker(T)$. Then by replicating our previous arguments in reverse, one deduces that \eqref{E:C3} holds with $\delta_0$ replaced by $\delta$. Then by independence of $U$ and $W$, in fact $\P{U + \delta(X) \le t | W = 0} = \P{ U + \delta(X) \le t | W = 0} = 0$ for all $t \in \R$, and one has $U + \delta(X) - \E{\delta(X)} \indep W$. Hence, letting $h(X) = g(X) - \delta(X) + \E{\delta(X)}$, one has 
\begin{align*}
&Y - h(X) \indep  W \\
&\E{Y - h(X) } = 0,
\end{align*}
which implies that $g$ is not point identified. So Assumption \ref{A:C3} is both necessary and sufficient for point identification of $g$ under our other stated assumptions and regularity conditions. 
\end{proof}

\begin{proof}[Proof of Lemma \ref{L:compeleteinstrument}]
The conditional distribution of $U, U+V$ has density $(2B)^{-1}f_U(u) \one_{s - u \in [0,2 B]}$, so that the conditional distribution of $U$ conditioning on the event $U + V = s$ is $\frac{f_U(u) \one_{s - u \in [0,2B]} }{ \int_{s}^{s-2B} f_U(u') \, \mathrm{d} u' }$. Hence, under assumptions \ref{A:C1} and \ref{A:C2}, for any rectangle $R = I \times J \subset \mathcal{X}\times \R$, one has 
\begin{align} \label{E:C5}
\P{(X,U) \in R | U + V = s} & = \E{ \one_{X \in I} \one_{U \in J} |  U + V  = s} \\
& = \int_{ s-2B}^{s} \frac{\E{ \one_{X \in I } \one_{U \in J} | U = u, U + V = s }}{ \int_{ s-2B}^{s} f_U(u') \, \mathrm{d}u'  }f_U(u) \, \mathrm{d} u \nonumber \\
& =\frac{1}{\P{U \in [s-2B,s]}} \int_{ s -2B}^{s} \one_{u \in J } \E{ \one_{X \in I } | U = u} f_U(u) \, \mathrm{d} u  \nonumber \\
& =\frac{1}{\P{U \in [s-2B,s]}} \int_{s-2B}^{s} \int_{\mathcal{X}} \one_{u \in J } \one_{x \in I} \frac{f(x,u)}{f_U(u)} f_U(u) \, \mathrm{d}x\, \mathrm{d} u  \nonumber \\
& = \frac{1}{\P{U \in [s-2B,s]}} \int_{s - 2B}^{s} \int_{\mathcal{X}} \one_{(u,x) \in R} f(x,u) \, \mathrm{d} x \, \mathrm{d} u;  \nonumber
\end{align}
furthermore, application of the Lebesgue differentiation theorem implies that the relation above holds for any measurable $R \subset \mathcal{X} \times \R$, so that conditional upon $U + V  = s$, $(X,U)$ has a distribution with density $\frac{f(x,u) \one_{u \in [s-2B, s]}}{\gamma(s)}$, where $\gamma(s)$ is an appropriate constant which is defined if $\P{U \in [s-2B, s]} > 0$. Replacing $s = t + 2B$ in \eqref{E:C5} and passing to expectations, one has 
\begin{align}
\label{E:C6} 
\E{ h(X,U -t ) | U + V  = t + 2B } & = \frac{1}{\P{U \in [t, t + 2B]} } \int_t^{t + 2B} \int_{\mathcal{X}} h(x,u -t) f(x,u) \, \mathrm{d}x \, \mathrm{d} u  \\
& = \frac{1}{\P{U \in [t, t + 2B]}} \E{ h(X, U - t) \one_{U \in [t, 2B + t] }  } \nonumber
\end{align}
whenever $\P{U \in [t,t+ 2B]} > 0$. 
Now, substituting $f(x,u)$ with $f_w(x,u)$ in \eqref{E:C6} and noting that $\P{U \in [t, t+2B]| W = 0} = \P{U \in [t,t+2B] | W = 1} = \P{U \in [t, t+2B]}$ for all $t$ by independence of $U$ and $W$, we have: 
\begin{align*}
Th(t) = & \E{ h(X, U - t) \one_{U \in [t, t + 2B]} | W = 0} -  \E{ h (X, U - t) \one_{U \in [t, t + 2B]}|W = 1} \\
=& \P{U \in [t, t + 2B]} \\
 &  \cdot \big(\E{ h(X,2B - V)| U + V = t + 2B, W = 0} - \E{ h(X,2B - V)| U + V = t + 2B, W = 1}  \big),
\end{align*}
where we have used the convention that $\frac{0}{0}= 0$. Letting $t$ vary over $\R$, Assumption \ref{A:C3}(ii) is thus the condition that for any $h \not\in \mathcal{V}$, 
\[
\E{ h(X,2B - V)| U + V = t , W = 0} - \E{ h(X,2B - V)| U + V = t , W = 1} \neq 0
\]
for some $t$ such that $\P{U \in [t, t + 2B]} > 0$ (i.e.\ such that the density of $U +V$, which is easily seen to be a continuous function, is positive at $t$). Conclude by rewriting substituting $\tilde{V} = 2B - V$ and rewriting the expectation. 
\end{proof}

\begin{proof}[Proof of Proposition \ref{P:densityapprox}]
For simplicity, we consider first the case where $B < \infty$. 
Fix a function $\gamma \in \Gamma$ and $\ve \in (0,1)$. We will approximate $\gamma$ with a function $\gamma_\ve \in \Gamma_0$ satisfying $\norm{ \gamma - \gamma_\ve}_{L^1(\X \times \R )}< 5\ve$. Recall that as $\gamma$ is the difference of continuous probability density functions, one has $\norm{\gamma}_{L^1(\X \times \R )} \le 2$. Furthermore, there exists some $K \in \R$ such that $\norm{\gamma}_{L^1(\X \times \R )}- \ve/4 < \norm{\gamma \one_{|u| \le K} }_{L^1(\X \times \R  )} $. For $j \in \N$ let $e_{2j -1}(x,u)$ be a sequence of continuous orthonormal basis functions for $L^2(\X \times [0,2B] )$ whose closed linear span is $L^2(\X \times [0,2B] )$ (e.g.\ polynomials). Furthermore, set $\tilde{\gamma}_0$ to be a continuous function on $\X \times [-K,K]$ such that $\norm{\gamma - \tilde{\gamma}_0}_{L^1(\X \times \R )} < \ve/4$ and $\int_\X \tilde{\gamma_0}(x,u) \, \mathrm{d}x = 0$ for $u \in [-K,K]$ (using the fact that $\int_\X \gamma(x,u) \, \mathrm{d}x = 0$ for all $u$). Then let $e_{2j - 2}, j \in \N$ be a sequence in $L^2(\X \times [0,2B] )$ chosen such that the function 
\begin{align*}
\tilde{\gamma}(x,u) \equiv \tilde{\gamma}_0(x,u) \one_{|u| \le K} + \sum_{j=0}^\infty e_j (x,u - K - j(2B)) \one_{u \in [K + j(2B), K + (j+1)(2B)]}
\end{align*} 
is continuous (i.e.\ a continuous interpolation between $e_{2j-3}$ and $e_{2j-1}$); it is no challenge to ensure that for every such $j$ one has $\norm{e_{2j-2}}_{L^2(\X \times [0,2B] )} \le 1$, so we make this assumption. Now let $\psi: \R \ra (0,1]$ be a continuous function such that $\psi(u) = 1$ whenever $|u| \le K$ and 
\[
\norm{ \psi(u) \one_{u \in [K+j(2B), K + (j+1)(2B)]  }}_{L^2(\X \times \R  )} \le \frac{\ve}{2^{j+2}}
\]
for all $j \ge 0$ (making use of the fact that $\X$ is compact). Finally, noting that $\mathcal{V}$ is a closed linear subspace of $L^2(\X \times [0,2B] )$, let $P_\mathcal{V}: L^2(\X \times [0,2B] ) \ra L^2(\X \times [0,2B] )$ denote projection onto $\mathcal{V}$. Then $P_\mathcal{V}$ can be written explicitly as 
\begin{align*}
P_\mathcal{V}[h](x,u) = \lambda(\X)^{-1} \int_{\X} h(x', u) \, \mathrm{d}x'. 
\end{align*}
Finally, let $\rho$ be a continuous probability density function supported on $\X \times [-K,K]$, and $\alpha \equiv \int_\X \int_\R (I - P_\mathcal{V})\tilde{\gamma}(x,u) \psi(u) \, \mathrm{d}u \, \mathrm{d}x$ (the integral exists, as we show below). Furthermore, let $\kappa = \min\left\{ 1, 2/\norm{(I - P_\mathcal{V})\tilde{\gamma}(x,u) \psi(u) - \alpha \rho   }_{L^1(\X \times \R )} \right\}$. 
We let $\gamma_\ve(x,u) \equiv \kappa[(I - P_\mathcal{V})\tilde{\gamma}(x,u) \psi(u) - \alpha \rho] = \kappa[ \psi(u) (I - P_\mathcal{V}) \tilde{\gamma}(x,u) - \alpha \rho]$. Note that by the independence assumption $U \indep W$ in the definition of $\Gamma$, it is true that $P_\mathcal{V}(\gamma) = 0$. By the Cauchy-Schwarz and triangle inequalities,
\small
\begin{align*}
\norm{(I - P_\mathcal{V})\tilde{\gamma} \psi - \gamma}_{L^1(\X \times \R )} = & \int_\X \int_\R |\gamma_\ve(x,u) - \gamma(x,u)| \, \mathrm{d} x \, \mathrm{d} u  \\
= & \int_\X \int_{-K}^K |\gamma_\ve (x,u) - \gamma(x,u)| \, \mathrm{d}x \, \mathrm{d} u \\
& + \sum_{j = 0}^\infty \int_\X \int_{K + j(2B)}^{K + (j+1)(2B)}  |\psi(u)(I-P_\mathcal{V}) e_j(x,u - K - j(2B)) - \gamma(x,u)| \, \mathrm{d}u \, \mathrm{d}x \\
& + \int_\X \int_{-\infty}^{-K} |\gamma(x,u)| \, \mathrm{d}u \, \mathrm{d} x \\
\le &  \ve/2 + \sum_{j=0}^\infty \norm{ \psi(u)(I- P_\mathcal{V}) e_j(x, u - K - j(2B)) \one_{u \in [K + j(2B), K + (j+1) (2B)]} }_{L^1(\X \times \R )} \\
\le & \ve/2 +  \sum_{j=0}^\infty \norm{\psi(u)}_{L^2(\X \times \R )} \norm{ (I-P_\mathcal{V}) e_j(x,u) }_{L^2(\X \times [0,2B]} \\
\le & \ve/2 + \sum_{j=0}^\infty \norm{\psi(u)}_{L^2(\X \times \R )} < \ve. 
\end{align*}
\normalsize
Noting that $\int_\X \int_\R \gamma \, \mathrm{d}u \, \mathrm{d}x =0$, one has $|\alpha| < \ve$ and hence the triangle inequality implies 
\[
\norm{((I - P_\mathcal{V}) \tilde{\gamma} \psi - \alpha \rho) - \gamma}_{L^1(\X \times \R )} < 2 \ve,
\]
as desired. Hence $\norm{((I - P_\mathcal{V}) \tilde{\gamma} \psi}_{L^1(\X \times \R )} < 2 + 2\ve$ and $1 - \ve < \kappa \le 1$, so a series of straightforward calculations shows $\norm{\gamma_\ve - \gamma}_{L^1(\X \times \R )} <5\ve$. Moreover, $\int_\X \int_\R \gamma_\ve \, \mathrm{d}u \, \mathrm{d}x = 0$ and by arrangement $\norm{\gamma_\ve}_{L^1(\X \times \R )} \le 2$. 

Now note that for any $h \in L^2(\X \times [0,2B] )$ if one has $T_{\psi \tilde{\gamma} }h = 0$ then letting $t = K + j(2B)$, $j = 0, 1,2,\ldots$ in the definition of $T$ implies that 
\begin{align*}
\inner{ \psi e_j, h}_{L^2(\X \times [0,2B] )} = \int_\X \int_0^{2B}\psi(u) e_j (x, u) h(x,u) \, \mathrm{d}u \, \mathrm{d}x  = 0,
\end{align*}
for all $j$. Because $\{e_j(x,u)\}_{j \ge 0}$ contains an orthonormal basis, $\psi(u) h(x,u) = 0$ (in the $L^2$ norm), and so $h(x,u) = 0$ (as $\psi(u) \neq 0$). We claim now that when $T_{\gamma_\ve}$ is a viewed as an operator from $L^2(\X \times[0,2B] )$ to $C(\R)$ one has $\ker{T_{\gamma_\ve}} = \mathcal{V}$. For if $T_{\gamma_\ve} h = 0$ then for all $j$, 
\begin{align} \label{E:innerprod}
\inner{\psi e_j, (I - P_\mathcal{V}) h}_{L^2(\X \times [0,2B] )} & = \inner{ (I - P_\mathcal{V}) \psi e_j, h}_{L^2(\X \times [0,2B] )} = T_{\gamma_\ve}[ h] (K + j (2B)) = 0. 
\end{align}
This implies $(I - P_\mathcal{V}) h = 0$, which is true if and only if $h \in \mathcal{V}$. This establishes the claim. 

Finally we must show that $\gamma_\ve \in \Gamma$, which is to say that $\gamma_\ve = f_0^\ve - f_1^\ve$ for continuous density functions $f_0^\ve$ and $f_1^\ve$. It is easy to verify that the following specifications suffice:
\begin{align*}
&f_0^\ve \equiv \gamma_\ve^+ + \left(1 - \int_\X \int_\R \gamma_\ve^+ \, \mathrm{d} u \, \mathrm{d}x \right) \rho\\
&f_1^\ve \equiv \gamma_\ve^- + \left(1 - \int_\X \int_\R \gamma_\ve^+ \, \mathrm{d} u \, \mathrm{d}x  \right) \rho.
\end{align*}

For the case where $B = \infty$ (which is to say that $\delta$ is not necessarily bounded by any fixed constant in $\R$), the idea of proof is simply to take $\{e_{2j-1}(x,u)\}_{j = 0}^\infty$ to be a dense collection of continuous functions with bounded support in $L^2(\X \times \R )$, and $\{e_{2j-2}\}_{j=0}^\infty$ a collection continuous interpolations between them. Then one quickly verifies that \eqref{E:innerprod} still holds when $L^2(\X \times [0,2B] )$ is replaced with $L^2(\X \times \R )$, and the remainder of the proof is entirely similar to the case $B < \infty$. 
\end{proof}

\begin{proof}[Proof of Corollary \ref{C:densityapprox}]
Assume $\ve < 1$. 
By supposition, $\gamma = f_0 - f_1 \in \Gamma$. By Proposition \ref{P:densityapprox}, there exists $\gamma_\ve \in \Gamma_0$ such that $\norm{\gamma_\ve - \gamma}_{L^1(\X \times \R )} < \ve/4$. Let $\zeta \equiv \gamma_\ve - \gamma$ and then 
\begin{align*}
&f_0^\ve \equiv (1 + \norm{\rho}_{L^1(\X \times \R)} + \norm{\zeta^+}_{L^1(\X \times \R)})^{-1}\left(f_0 + \zeta^+  + \rho \right) \\
&f_1^\ve \equiv (1 + \norm{\rho}_{L^1(\X \times \R)} + \norm{\zeta^+}_{L^1(\X \times \R)})^{-1} \left(f_1 + \zeta^-  + \rho\right),
\end{align*}
where $\rho$ is a smooth function on $\X \times \R$ chosen to satisfy $\norm{\rho}_{L^1(\X \times \R)} < \ve/4$ and $\int_\R \int_\X u f_0^\ve(x,u) \, \mathrm{d} x \, \mathrm{d} u = 0$. 
Then for $\ve$ sufficiently small (which may be assumed), 
\begin{align*}
\norm{f_0 - f_0^\ve}_{L^1(\X \times \R )}  \le 4/3\left( \norm{\zeta^+}_{L^1(\X \times \R )} + \norm{\zeta^+}_{L^1(\X \times \R )} \norm{f_0}_{L^1(\X \times \R )} + \norm{\rho}_{L^1(\X \times \R)}\right) \le \ve,
\end{align*}
as $\norm{\zeta^+}_{L^1(\X \times \R )} \le \norm{\zeta}_{L^1(\X \times \R )}$. 
Similar argumentation for $f_1^\ve$ suffices, with the observation that $f_0^\ve - f_1^\ve$ is a scalar multiple of $\gamma_\ve$.

\end{proof}

\begin{proof}[Proof of Proposition \ref{P:resid}]
It suffices to show that $\Gamma_1^c \cap \Gamma$ is meagre in the induced topology, i.e.\ contained in the countable union of closed nowhere dense sets. Note that $\Gamma_1^c$ consists of those elements in $\Gamma$ for which $\ker{T_\gamma} \cap \mathcal{W}_\mathrm{Lip} \neq \emptyset$. One can write $\mathcal{W}_{\mathrm{Lip}} = \bigcup_{a,b \in \N} \mathcal{W}_{a,b}$, where 
\begin{align*}
\mathcal{W}_{a,b} \equiv \left\{ \delta \in C(\X)_+: \, \norm{\delta}_{\mathrm{Lip}} \le a, \, \norm{\delta}_\infty \le b, \, \inf_{c \in \R}\norm{\delta - c }_\infty \ge b^{-1} \right\},
\end{align*}
and we have used the notations 
\begin{align*}
&\norm{h}_{\mathrm{Lip}} \equiv \sup_{x,y \in \X} \frac{|h(x) - h(y)|}{|x - y|} \\
&\norm{h}_\infty \equiv \sup_{x \in X} |h(x)|.
\end{align*}
Let $\Gamma_{a,b} \equiv \left\{ \gamma \in \Gamma: \ker{T_\gamma} \cap \mathcal{W}_{a,b}\neq \emptyset \right\}$, so that 
\[
\Gamma_1^c = \bigcup_{a,b \in \N} \Gamma_{a,b}.
\]
We show that $\Gamma_{a,b}$ is a closed set for all $a,b$. So suppose that $(\gamma_n)_{n \in \N}$ is sequence occurring in $\Gamma_{a,b}$ and converging to $\gamma$ in the $L^1$ norm. There is a matching sequence $(\delta_n)_{n \in \N}$ occurring in $\mathcal{W}_{a,b}$ such that $T_{\gamma_n} \one_{u \in [0,\delta_n(x)]}  = 0$. By the Arzel\`{a}-Ascoli Theorem, there is a function $\delta$ such that $\norm{\delta_n - \delta}_\infty \ra 0$; it is straightforward to verify then that $\delta \in \mathcal{W}_{a,b}$. Thus it suffices to show that $T_\gamma \one_{u \in [0, \delta(x)]} = 0$. But, for all $t \in \R$,  
\begin{align*}
\left| T_\gamma \one_{u \in [0,\delta]}(t) - T_{\gamma_n} \one_{ u \in [0,\delta_n]}(t) \right|  \le& \int_\X \int_0^{2B} \left| \one_{u \in [0,\delta(x)]} \gamma(x,u+t) - \one_{u \in [0,\delta_n(x)]} \gamma_n(x,u+t) \right| \, \mathrm{d}u \, \mathrm{d} x \\
\le & \int_\X \int_0^{2B}\left| \one_{u \in [0,\delta(x)]} \gamma(x,u+t) - \one_{u \in [0,\delta_n(x)]} \gamma(x,u+t) \right| \, \mathrm{d}u \, \mathrm{d} x \\
&+ \int_\X \int_0^{2B}\left| \one_{u \in [0,\delta_n(x)]} \gamma(x,u+t) - \one_{u \in [0,\delta_n(x)]} \gamma_n(x,u+t) \right| \, \mathrm{d}u \, \mathrm{d} x \\
\le&  \int_\X \int_0^{2B} \one_{u \in [\delta(x)- \norm{\delta - \delta_n}_\infty, \delta(x) +\norm{\delta- \delta_n}_\infty]} |\gamma(x,u+t)| \, \mathrm{d}u \, \mathrm{d}x\\
&+ \norm{\gamma - \gamma_n}_{L^1(\X \times \R )}.
\end{align*}
As $n \ra \infty$ the Dominated Convergence Theorem implies that the second to last line converges to $0$ and by assumption $\norm{\gamma - \gamma_n}_{L^1(\X \times \R )} \ra 0$. Hence, $T_\gamma \one_{u \in [0,\delta](t)} = 0$ and $\Gamma_{a,b}$ is closed. But note that $\Gamma_{a,b}$ is also nowhere dense, because Proposition \ref{P:densityapprox} implies that $\Gamma_{a,b}^c$ is dense in $\Gamma$ (take the case $B = \infty$). 

\end{proof}

\begin{proof}[Proof of Corollary \ref{C:densityapprox2}]
Let $\tau: \mathfrak{X} \times \mathfrak{X} \ra L^1(\X \times \R )$ be defined by $\tau: (f_0, f_1) \mapsto f_0 - f_1$. The triangle inequality implies that $\tau$ is continuous and one has $\tau^{-1}(\Gamma) = \mathfrak{F}$, so that $\mathfrak{F}$ is closed in $\mathfrak{X} \times \mathfrak{X}$ and completely metrizable. Let $\tau_0$ denote the restriction of $\tau$ to $\mathfrak{F}$, with the induced topology (so that $\tau_0$ is still continuous). Then Proposition \ref{P:resid} implies that, for some collection of dense and open subsets $\{G_a\}_{a \in \N}$ of $\Gamma$,
\begin{align*}
\mathfrak{F}_1 &= \tau^{-1} \left(  \Gamma_1 \right)
 \supset \tau^{-1} \left( \bigcap_{a \in \N} G_a\right) = \bigcap_{a \in \N} \tau^{-1}(G_a),
\end{align*}
so that $\mathfrak{F}_1$ contains a $G_\delta$ set. In addition, density of each $G_a$ in $\Gamma$ and arguments similar to those employed in the proof of Corollary \ref{C:densityapprox} implies that each $G_a$ is dense, which shows that $\mathfrak{F}_1$ is residual in the induced topology. 
\end{proof}

\begin{proof}[Proof of Corollary \ref{C:boundedsppt}]
Apply Corollary \ref{C:densityapprox} to the densities $f_0, f_1$ to conclude that there are continuous approximations $f_0', f_1'$ with unbounded support such that $\norm{f_\ell' - f_\ell}_{L^1(\X \times \R)} < \ve/16$ for $\ell = 0, 1$ and $f_0' - f_1' \in \Gamma_0$, $\int_\R \int_\X u f_0'(x,u) \, \mathrm{d} x \, \mathrm{d}u = 0$. Multiply both $f_0'$ and $f_1'$ by a suitable factor $\chi(u)$ to assume without loss of generality that both are elements of $L^2(\X \times \R)$ (see the proof of Proposition \ref{P:densityapprox}). Now let $\kappa$ be a smooth, square integrable probability density function over $\R$ which has the property that $\kappa$ is the restriction of a complex analytic function to the real line, and this holomorphic function has bounded complex derivative in some horizontal strip containing the real line, $\{z: -c < \mathfrak{Im}(z) < c\}$: the Gaussian kernel $\frac{1}{\sqrt{2\pi}} \exp{ - z^2/2}$ suffices. For real $\delta$ let $\kappa_\delta(z) \equiv \delta^{-1}\kappa(z \delta^{-1})$. Then for $\ell \in \{0,1\}$ and $x \in \X$ the mollification: 
\begin{align*}
f_0^\delta(x,u) \equiv f_0'(x,\cdot) * \kappa_\delta (u)= \int_\R f_0'(x,u') \kappa_\delta(u - u') \, \mathrm{d} u'
\end{align*}
is by virtue of the dominated convergence theorem holomorphic on a horizontal strip containing the real line when viewed as a function of $u$, and standard results imply that for $\delta$ small enough one has $\norm{f_\ell^\delta - f_\ell'}_{L^1(\X \times \R)} < \ve/16$. Now note that by compactness of $\X$ and application once again of the dominated convergence theorem, 
\begin{align} \label{E:convholo}
T_{f_0^\delta - f_1^\delta}h(t) = \int_\X \int_0^{2B} h(x,u) \left(f_0^\delta(x, t + u) - f_1^\delta(x, t + u )\right) \, \mathrm{d} u\, \mathrm{d} x 
\end{align}
extends to a holomorphic function in $t$ in the horizontal strip for $h \in L^2(\X \times [0,2B]) \subset L^1(\X \times [0,2B])$. Hence, $T_{f_0^\delta - f_1^\delta}h = 0$ if and only if \eqref{E:convholo} vanishes on a subset of $\R$ which contains one of its accumulation points. But for $t \in [-C_1, C_2  - 2B]$, we have
\small
\begin{align*}
&\int_\X \int_0^{2B} h(x,u) \left(f_0^\delta(x, t + u) - f_1^\delta(x, t + u )\right)\one_{t + u \in [-C_1, C_2]} \, \mathrm{d} u\, \mathrm{d} x \\
& \qquad =  \int_\X \int_0^{2B} h(x,u) \left(f_0^\delta(x, t + u) - f_1^\delta(x, t + u )\right) \, \mathrm{d} u\, \mathrm{d} x. 
\end{align*}
\normalsize

Accordingly, define the density functions $f_\ell''(x,u) \equiv f_\ell^\delta(x,u) \one_{u \in [-C_1, C_2]}$ and note that we have the bound $\norm{f_\ell'' - f_\ell^\delta}_{L^1(\X \times \R)} < \ve/ 8$ (the total mass of $f_\ell^\delta$ outside $\X \times [ -C_1, C_2]$ is in fact bounded by $\ve/ 4$). We have shown that $T_{f_0'' - f_1''} h = 0$ implies $T_{f_0^\delta - f_1^\delta} h (t)$ for $t \in [-C_1, C_2 - 2B]$, which implies $T_{f_0^\delta - f_1^\delta} h = 0$. The latter implies that for all $t\in \R$:
\small
\begin{align*}
&\kappa_\delta * \int_\X \int_\R h(x,u)( f_0'(x, \cdot + u ) - f_1'(x,\cdot + u)) \, \mathrm{d}u \, \mathrm{d} x\, (t) \\
&\qquad = \int_\R \kappa_\delta(t - z) \int_\X \int_\R h(x,u)(f_0'(x, z + u) - f_1'(x,z + u))  \, \mathrm{d} u \, \mathrm{d} x \, \mathrm{d} z  \\
&\qquad = \int_\X \int_\R f_0'(x,\cdot) * \kappa_\delta(u + t) h(x,u) \, \mathrm{d}u \, \mathrm{d} x = 0. 
\end{align*}
\normalsize
Lemma \ref{L:operator} guarantees that $\int_\X \int_\R h(x,u) f_0'(x, \cdot + u ) \, \mathrm{d}u \, \mathrm{d} x$ is continuous and in $L^2(\R)$ by taking Fourier transforms of both sides and applying Plancherel's theorem we find that $\int_\X \int_\R h(x,u) f_0'(x, \cdot + u ) \, \mathrm{d}u \, \mathrm{d} x  = 0$. Hence, by construction of $f_0'$ and $f_1'$, $h(x,u) \in \mathcal{V}$ and we have shown that the inclusion $\ker{T_{(f_0'' - f_1'')\one_{u \in [-C_1, C_2]}}} \subset \mathcal{V}$ holds, as desired. Note also that for all $u\in [-C_1, C_2],$
\begin{align*}
\int_\X (f_0''(x,u) - f_1''(x,u)) \, \mathrm{d} x  = \kappa_\delta*\int_\X (f_0'(x,\cdot) - f_1'(x,\cdot))\, \mathrm{d} x (u) = 0,
\end{align*}
so $f_0'' - f_1'' \in \Gamma$, as desired (note: $\norm{f_0'' - f_1''}_{L^1(\X \times \R)} < 2$ because of multiplication of the indicator $\one_{u \in [-C_1, C_2]}$. The remainder of the proof consists in setting 
\begin{align*}
f_0^\ve & = \norm{f_0''+ \rho}_{L^1(\X \times [-C_1, C_2])} (f_0'' + \rho )\\
f_1^ \ve & = \norm{f_1''+ \rho}_{L^1(\X \times [-C_1, C_2])} (f_0'' + \rho )
\end{align*}
where $\rho$ is a probability density chosen to satisfy the last condition $\int_\R \int_\X u f_0^\ve (x,u) \, \mathrm{d} x \, \mathrm{d} u = 0$ and has mass less than $\frac{\ve}{4}$ for $\ve$ small enough, similarly to the corresponding function in the proof of Corollary \ref{C:densityapprox}. 
\end{proof}

\begin{proof}[Proof of Proposition \ref{P:quantile}]
Suppose first that $W$ is boundedly complete for $X,U$. If $g': \supp{U} \ra \R$ is in the identified set then we must have the relation 
\begin{align*}
\P{U + g(X) - g'(X) \le 0|W } = \P{ Y - g'(X) \le 0 | W } \eqas \gamma(W) = \P{U \le 0 |W }.
\end{align*}
Letting $\delta = g - g'$ this implies 
\begin{align*}
\E{ \one_{U + \delta(X) \le 0 }  - \one_{U \le 0} |W } = 0 . 
\end{align*}
By bounded completeness it follows that $\delta(X) \eqas 0$, which suffices. Alternately, if $\supp{U} = \R$ then we claim $\delta$ is nonconstant; if for the sake of contradiction $g' = g -\delta$, $\delta$ some nonzero constant, was in the identified set, then
\begin{align*}
\P{U + \delta \le 0|W} = \P{Y - g'(X) \le 0 |W } \eqas \gamma(W) \eqas \P{U \le 0|W}, 
\end{align*}
so that $\P{U  \le -\delta } = \P{U\le 0}$ and the probability of $U$ lying between $-\delta$ and $0$ vanishes. Hence we may repeat the proof used above under the presumption that $\delta$ is nonconstant in $X$, and thus so is $\one_{U + \delta(X) \le 0} - \one_{U \le 0 }$, which concludes. 
\end{proof}

\bibliographystyle{plain}
\bibliography{MKbib}

\end{document}